\documentclass[11pt]{article}

\usepackage{geometry}
\usepackage{amssymb,amsmath,amsfonts,eurosym,graphicx,caption,color,setspace,sectsty,comment,footmisc,caption,pdflscape,subfigure,array,bm,listings,multirow}
\usepackage[table]{xcolor}
\usepackage[colorlinks,linkcolor=purple, urlcolor  = purple, anchorcolor=brown, citecolor=violet]{hyperref}
\usepackage{algorithm}
\usepackage{algorithmicx}
\usepackage{algpseudocode}
\usepackage{amsthm}
\usepackage{amsmath}
\usepackage[normalem]{ulem}
\usepackage{apacite}
\usepackage{mathrsfs}
\usepackage{booktabs}
\usepackage{adjustbox}
\usepackage{cleveref}
\usepackage{etoolbox} 
\usepackage{multirow} 
\usepackage{centernot}
\usepackage{cancel}

\usepackage{booktabs} 
\usepackage{array} 
\usepackage[normalem]{ulem}

\usepackage{rotating}
\usepackage{threeparttable}

\usepackage[title]{appendix}

\newcommand\iid{\stackrel{\rm i.i.d.}{\sim}}
\newcommand\p{\stackrel{p}{\rightarrow}}

\def\qed{\rule{2mm}{2mm}}
\def\independent{\perp \!\!\! \perp}

\definecolor{undercover}{gray}{0.92}
\newcommand{\uc}[1]{\cellcolor{undercover}#1}

\newtheoremstyle{myremark}
  {\topsep}   
  {\topsep}   
  {\normalfont}  
  {}          
  {\bfseries} 
  {.}         
  { }         
  {}          

\usepackage[pagewise,mathlines]{lineno}
\mathchardef\dash="2D

\newtheorem{theorem}{Theorem}[section]
\newtheorem{lemma}{Lemma}[section]
\newtheorem{definition}{Definition}

\newtheorem{proposition}{Proposition}[section]
\newtheorem{assumption}{Assumption}[section]
\usepackage{bbm} 

\Crefname{assumption}{Assumption}{Assumptions}

\theoremstyle{myremark}

\newtheorem{remark}{Remark}[section]

\AtEndEnvironment{remark}{~\qed}
\AtEndEnvironment{example}{~\qed}

\DeclareMathOperator*{\var}{Var}

\numberwithin{equation}{section}

\definecolor{skyblue}{rgb}{0.53, 0.81, 0.92}
\definecolor{steelblue}{rgb}{0.27, 0.51, 0.71}

\begin{document}
\lstset{
basicstyle=\ttfamily\small,
numbers=left,
keywordstyle= \color{ blue!70},commentstyle=\color{red!50!green!50!blue!50}
}

\title{ Nonparametric Empirical Bayes Confidence Intervals\footnote{ I am deeply grateful to Ivan Canay, Joel Horowitz, Federico Bugni, and Eric Auerbach for their guidance and encouragement throughout this project. I also thank Kirill Borusyak, Jiafeng Chen, Denis Chetverikov, Federico Crippa, Jiaying Gu, Michal Koles\'{a}r, Soonwoo Kwon, Carlos Lamarche, Amilcar Velez, and Kaspar W\"{u}thrich for helpful conversations and comments. ChatGPT provided assistance with writing and proofreading the manuscript. All remaining errors are my own.  }}
\author{Zhen Xie\\[6pt]
Department of Economics\\[6pt]
Northwestern University\\[6pt]
\url{zhenxie@u.northwestern.edu}\\[8pt]
}
\date{\today }
\maketitle


\begin{abstract}
Empirical Bayes methods can improve inference on unobservable individual effects by borrowing strength across units. This paper proposes nonparametric empirical Bayes confidence intervals (NP-EBCIs) for unobservable individual effects in a normal means model. The oracle intervals are constructed from posterior quantiles under a point-identified, fully nonparametric prior; feasible intervals replace these quantiles with nonparametric estimates. The NP-EBCIs are asymptotically exact in the sense that both their conditional and marginal coverage probabilities converge to the nominal level. The flexibility of this nonparametric construction has an unavoidable statistical cost. We demonstrate that posterior quantiles, unlike posterior means, inherit the severe ill-posedness of nonparametric deconvolution: the minimax optimal estimation rate is logarithmic. This logarithmic rate is minimax optimal for errors in the conditional coverage probability, and the resulting errors in the marginal coverage probability also vanish at the same logarithmic rate. Despite these slow asymptotic rates, simulations show that the NP-EBCIs remain close to nominal coverage when the prior is non-Gaussian, and deliver substantial length reductions relative to intervals that treat each unit in isolation.
\end{abstract}

\noindent\textbf{Keywords}: Empirical Bayes, Individual heterogeneity, Uncertainty quantification, Nonparametric methods, Ill-posed inverse problems \\[4pt]
\noindent\textbf{JEL classification codes}: C11, C14, C13

\thispagestyle{empty}

\newpage

\section{Introduction}

Characterizing individual heterogeneity has become increasingly prominent in applied economics. Much modern empirical work treats individual effects, rather than only their average, as the primary object of interest.\footnote{See \citeA{walters2024empirical,bonhomme2024estimating} for recent overviews. A growing body of empirical work studies unit-specific parameters, including neighborhood effects (\citeNP{chetty2018impactsI}), workplace heterogeneity (\citeNP{card2013workplace}), teacher value-added (\citeNP{chetty2014measuringI}), manager effects (\citeNP{fenizia2022managers}), geographic variation in health care (\citeNP{finkelstein2016sources}), bank-specific credit-supply effects (\citeNP{amiti2018much}), employer-specific discrimination (\citeNP{kline2022systemic,kline2024discrimination}), police-officer heterogeneity (\citeNP{goncalves2021few}), among many others. } This heterogeneity is often substantively important in its own right, as it describes how these economically relevant effects vary across units. Yet individual effects are unobserved and must be inferred from noisy data, often with heteroskedastic variances. Credible documentation of such heterogeneity therefore requires valid and informative uncertainty quantification for the individual effects themselves.

We consider the problem of constructing confidence intervals for unit-specific effects in a heteroskedastic normal means model. For each unit $i$, let $Y_i$ be a noisy estimate of the unobservable individual effect $\theta_i$ and satisfy $Y_i|\theta_i\sim N(\theta_i,\sigma_i^2)$ with known sampling variance $\sigma_i^2$. The naive $z$-interval $Y_i\pm z_{1-\alpha/2}\sigma_i$ has exact frequentist coverage for $\theta_i$, but it can be quite wide when $\sigma_i^2$ is large.\footnote{Here $z_{\alpha}$ is the $\alpha$-quantile of the standard normal distribution. } Motivated by this concern, \citeA{cox1975prediction} and \citeA{morris1983parametric} introduced parametric empirical Bayes confidence intervals (EBCIs) constructed from the posterior distribution of $\theta_i|Y_i$ under a common Gaussian prior for all $\theta_i$. Under this normal-normal model, the resulting parametric EBCIs borrow strength across
units and can be much shorter than the naive $z$-interval (\citeNP{morris1983parametric}). However, when the true prior is non-Gaussian, they can substantially undercover.

\citeA[hereafter AKP]{armstrong2022robust} are the first to propose EBCIs that are robust to the failure of the Gaussian prior assumption. Like Cox-Morris EBCIs, their intervals are centered at linear shrinkage estimators, but use worst-case critical values calibrated to guarantee average coverage over a class of priors characterized by certain moments of $\theta_i$. Because the full prior distribution is nonparametrically identified in the normal means model, this moment-based calibration leaves some available information unused. As a result, their robust EBCIs can be conservative when the true prior is far from the least favorable one in that moment class.

This paper proposes nonparametric empirical Bayes confidence intervals (NP-EBCIs) that fully utilize the entire prior distribution. Following \citeA{cox1975prediction,morris1983parametric}, we construct these intervals from the lower and upper quantiles of the posterior distribution of $\theta_i|Y_i$. Rather than imposing a Gaussian prior, we assume $\theta_i\sim G$ and allow the distribution $G$ to be fully nonparametric. We target the oracle posterior credible intervals under the prior $G$ and construct their feasible counterparts by estimating the posterior quantiles nonparametrically. This contrasts with \citeA{armstrong2022robust}, whose robust EBCIs are calibrated using only moment information about $\theta_i$.

The NP-EBCI construction makes posterior quantiles central to our analysis, and our first contribution is to characterize the minimax optimal rate for estimating them. We establish this rate by deriving a minimax lower bound for the estimation risk and constructing a one-step kernel-based estimator that achieves it. This optimal rate is inherently logarithmic, reflecting the fundamental limits of empirical Bayes posterior quantile estimation. This result highlights a sharp contrast with posterior mean estimation (\citeNP{jiang2009general,brown2009nonparametric}). In the normal means model, recovering the unknown prior $G$ is a severely ill-posed deconvolution problem with a logarithmic minimax rate (\citeNP{carroll1988optimal}).\footnote{ Throughout the paper, we use ``prior distribution'' and ``mixing distribution'' interchangeably for $G$. In empirical Bayes language, $G$ is the prior distribution for $\theta_i$, while the term ``mixing distribution'' reflects that the marginal law of $Y_i$ is obtained by mixing the normal distribution over $G$, see Equation (\ref{equ:convolution}). }  Although posterior mean and posterior quantiles are both nonlinear functionals of $G$, the posterior mean can circumvent the ill-posedness and be estimated at a nearly parametric rate (\citeNP{zhang1997empirical,zhang2009generalized}).\footnote{In the nonparametric empirical Bayes literature, faster rate results for the posterior mean are typically stated in terms of average mean squared errors for estimating $\theta_i$; see e.g. \citeA{jiang2009general,chen2026empirical}.   } We show that posterior quantiles cannot: because they are defined through a \textit{non-smooth} criterion function, their estimation inherits the severe ill-posedness of nonparametric deconvolution.

We next investigate the \textit{conditional} coverage properties of the feasible NP-EBCI, which relies on estimated posterior quantiles. By conditional coverage, we mean the probability that the reported interval contains \(\theta_i\) after \(Y_i\) has been observed. This conditional perspective has a long history in statistics, dating back at least to \citeA{cox1958some},\footnote{Related classic discussions of the conditional perspective include \citeA{buehler1959some,wallace1959conditional,birnbaum1962foundations,robinson1979conditional}; see also \citeA{rubin1984bayesianly,robins2000conditioning} for later discussions.} who argued that conditioning is needed so that the uncertainty statement reflects what can be learned from the realized data. Our second contribution is to characterize the minimax optimal rate at which the error in conditional coverage probability vanishes, and to show that the feasible NP-EBCI attains this rate. This optimal rate is again logarithmic, reflecting the fundamental limits of nonparametric empirical Bayes inference when targeting exact conditional coverage. Ultimately, the feasible NP-EBCI achieves asymptotically exact conditional coverage, but its coverage probability converges to the nominal level at this minimax optimal logarithmic rate.

As our third contribution, we show that the feasible NP-EBCI has asymptotically exact \textit{marginal} coverage, with coverage error that again vanishes at a logarithmic rate. By marginal coverage, we mean the ex-ante probability that the reported interval contains \(\theta_i\) under repeated sampling of both $\theta_i$ and the observed data. This notion is commonly referred to as ``empirical Bayes coverage'' (\citeNP{morris1983parametric}). Our result differs from the parametric EBCI literature, where the error in marginal coverage can vanish at a faster parametric rate under a correctly specified Gaussian prior.\footnote{See e.g. \citeA{datta2002asymptotic,chatterjee2008parametric,yoshimori2014second}.} However, their theoretical guarantees are vulnerable to misspecification when the true prior is non-Gaussian. Our result also differs from \citeA[~Theorem 4.1]{armstrong2022robust}, who show that the empirical Bayes coverage of their robust EBCI is asymptotically at least the nominal level, thus their actual coverage need not be exact even asymptotically. The feasible NP-EBCI, by contrast, achieves asymptotically exact empirical Bayes coverage, but with coverage error of logarithmic order.

This theoretical contrast motivates a comparison of the finite-sample performance of these EBCIs. In our simulations, all three EBCI procedures deliver average length reductions relative to the naive $z$-interval. AKP robust EBCIs with the second moment often exceed nominal coverage and yield the least length reductions. Cox-Morris parametric EBCIs are typically the shortest, but this efficiency can come at the cost of substantial undercoverage when the true prior is non-Gaussian, especially in low signal-to-noise ratio regimes. The feasible NP-EBCI, despite its severe ill-posedness, remains much closer to the nominal coverage level, even in small samples, while still delivering substantial length reductions.

\paragraph{Related literature.} This paper contributes to the nonparametric empirical Bayes literature on the normal means problem. The central idea in this literature is to treat many related decision problems jointly, learn prior information from related units, and then use it to construct Bayes posterior summaries or decision rules for each unit (\citeNP{robbins1956empirical,robbins1964empirical,efron2014two,efron2019bayes}, see also \citeNP{koenker2024empirical,walters2024empirical} and references therein). Much of this literature has focused on estimation of individual effects,\footnote{See \citeA{chen2026empirical} and references therein. For other settings beyond normal mean models, see e.g. \citeA{gu2017unobserved,liu2020forecasting,gilraine2020new,gaillac2024predicting,kwon2026optimal,cheng2025optimal}.  } while more recent work, including \citeA{jiang2019comment} and \citeA{koenker2020empirical}, has considered confidence intervals for individual effects. Both papers provide constructions and numerical evidence, but do not develop theoretical guarantees for their procedures. Our contribution is to establish a general theory for posterior quantile estimation and the coverage properties of posterior-quantile-based EBCIs. More broadly, because posterior quantiles are Bayes rules under asymmetric loss, these results are also relevant for other empirical Bayes decision problems that involve posterior quantiles (e.g. \citeNP{gu2023invidious}, Section 2.5 in \citeNP{walters2024empirical}).

This paper is broadly related to, but different from \citeA{ignatiadis2022confidence} and \citeA{ignatiadis2025empirical}. Specifically, \citeA{ignatiadis2022confidence} study confidence intervals for empirical Bayes estimands, such as the posterior mean, rather than confidence intervals for the individual effect $\theta_i$ itself. \citeA{ignatiadis2025empirical} study empirical partially Bayes multiple testing. As they note in Footnote 2, their $p$-values can be inverted to construct individual intervals for $\theta_i$. A key distinction is that they place a prior on the variances rather than individual effects, pooling information across units through the variance distribution. Consequently, in the homoskedastic model, their interval reduces to the naive $z$-interval (see their Example 14), while our procedure continues to exploit the common prior on the individual effects.

Our objective differs from that of the deconvolution literature. Deconvolution methods aim to recover the latent distribution of unobserved heterogeneity from noisy measurements (see e.g. \citeNP{horowitz2014ill,schennach2020mismeasured} for reviews). Our goal, by contrast, is individualized inference for $\theta_i$ itself. The role of deconvolution in our analysis is therefore technical rather than substantive: the posterior quantiles underlying our NP-EBCIs are functionals of the latent distribution, so the problem is built on the same Gaussian deconvolution structure.

Our results on minimax optimal rates for posterior quantiles connect to the literature on functional estimation in ill-posed inverse problems. For linear functionals of the mixing distribution $G$, the minimax rate can depend sharply on the smoothness of the weighting function that defines the functional (\citeNP{butucea2009adaptive,pensky2017minimax}). Our setting is different, because both posterior mean and posterior quantiles are \textit{nonlinear} functionals of the mixing distribution. But a similar insight emerges after local linearization: the two functionals differ sharply in their local dependence on the mixing distribution. The posterior mean involves a comparatively smooth weighting function and therefore can circumvent the severe ill-posedness. By contrast, the posterior quantile is locally governed by a discontinuous weighting function and therefore inherits the logarithmic difficulty of the underlying inverse problem.

\paragraph{Outline.} \Cref{section:ebci} presents the set-up, defines the oracle NP-EBCI, and contrasts it with the Cox-Morris parametric EBCI and AKP robust EBCIs. Assuming a homoskedastic normal mean model, \Cref{section:theoretical_result} derives the minimax optimal rate of estimating posterior quantiles, and establishes the conditional and marginal coverage properties of the feasible NP-EBCI. \Cref{section:practical_implementation} returns to the heteroskedastic setting and presents a feasible implementation of the NP-EBCI, including bandwidth selection. \Cref{section:simulation} uses Monte Carlo simulations to compare the finite-sample performance of NP-EBCI against existing methods. \Cref{section:conclusion} concludes. Proofs and extended discussions are collected in the Appendix.

\section{Empirical Bayes confidence intervals for individual effects}

\label{section:ebci}

\subsection{Set-up}

We consider the following normal means model:
\begin{equation}
Y_i \mid \theta_i \sim N(\theta_i,\sigma_i^2), \ \ \ \theta_i\iid G(\cdot), \ \ \ \text{for} \ i=1,...,n
\label{normal_mean}
\end{equation}
where $Y_i$ is a noisy estimate of the unobserved outcome of interest $\theta_i$ for each unit $i$. For example, $\theta_i$ may represent the value-added of teacher $i$, and $Y_i$ is an estimate constructed from the associated students' test scores. The unit-specific parameter $\theta_i$ is thought to be random, and drawn from the population distribution $G(\cdot)$. The variances $\sigma_i^2$ are known, but $G$ and ${\bm \theta}=\{\theta_1,...,\theta_n\}$ are unknown.

The model (\ref{normal_mean}) arises naturally from a hierarchical data structure. For instance, let $\tilde{Y}_{i,j}$ denote the test outcome of student $j=1,\ldots,J_{i}$ taught by teacher $i$, and suppose that these test scores are normally distributed with a teacher-specific mean and variance: $\tilde{Y}_{i,j}|\theta_i\sim N(\theta_i,\sigma_i^2)$ for $j=1,...,J_{i}$ where $\theta_i$ represents the value-added of teacher $i$. A natural estimator for $\theta_i$ is the average test score of students taught by teacher $i$: $Y_i=J_{i}^{-1}\sum_{j=1}^{J_i}\tilde{Y}_{ij}$. Then we have that $Y_i|\theta_i\sim N(\theta_i,\sigma_i^2/J_i)$ which falls into the model (\ref{normal_mean}).

The model (\ref{normal_mean}) fits into the classical measurement errors model, with $Y_i=\theta_i+\varepsilon_i$ and $\varepsilon_i\sim N(0,\sigma_i^2)$. The normal error assumption is restrictive, but standard in the empirical Bayes literature with cross-sectional data (\citeNP{efron2012large,koenker2024empirical,walters2024empirical}). It can be replaced by any other known error distribution $F_{\varepsilon_i}(\cdot)$, or even fully relaxed to an unknown distribution when panel data or other repeated measurements are available.\footnote{See, for example, \citeA{horowitz1996semiparametric} and \citeA{li1998nonparametric}, as well as the surveys by \citeA{schennach2020mismeasured} and \citeA[Section 7]{bonhomme2024estimating}, for further details. }

The mixing distribution $G(\cdot)$ is nonparametrically identified from the marginal law of $Y_i$. Assume that the mixing distribution $G(\cdot)$ has a continuous density $g(\cdot)$ with respect to Lebesgue measure. Given the known variance $\sigma_i^2$, the marginal density of $Y_i$ is given by
\begin{equation}
f_{Y_i}(y)=\int \frac{1}{\sigma_i}\phi\left(\frac{y-\theta}{\sigma_i}\right)\mathrm{d}G(\theta)
\label{equ:convolution}
\end{equation}
where $\phi(\cdot)$ is the standard normal density. For an integrable function $f:\mathbb{R}\rightarrow\mathbb{C}$, write $f^{\star}(t)=\int_{\mathbb{R}}\exp(itx)f(x)\mathrm{d}x$ for its Fourier transform. Let $f_{Y_i}^{\star}(t), g^{\star}(t)$ be the Fourier transforms of the densities $f_{Y_i}(y), g(\theta)$ respectively. The convolution theorem (Section 3.3.2, \citeNP{schennach2020mismeasured}) gives that $f_{Y_i}^{\star}(t)=g^{\star}(t)f_{\varepsilon_i}^{\star}(t)$ where $f_{\varepsilon_i}^{\star}(t)=\exp\left(-\frac{1}{2}\sigma_i^2t^2\right)$ is the Fourier transform of the normal error distribution. Since $f_{\varepsilon_i}^{\star}(t)\neq 0$ for every $t\in\mathbb{R}$, we recover $g^{\star}(t)=f_{Y_i}^{\star}(t)/f_{\varepsilon_i}^{\star}(t)$. Because characteristic functions uniquely determine probability laws, this establishes the identification of $G$.\footnote{This simplified exposition is intended to align with the theoretical analysis in \Cref{section:theoretical_result}. However, we note that the identification argument applies to any Borel probability measure $G$ (\citeNP[Proposition 8.50]{folland1999real}). }

\subsection{Empirical Bayes confidence intervals}
\label{subsection:nonparametric_ebci}

We aim to construct an individual interval $\mathrm{CI}_i$ for each unobservable individual effect $\theta_i$ with a pre-specified coverage probability. Following \citeA{cox1975prediction,morris1983parametric}, we take an empirical Bayesian approach: assume that $\theta_i\sim G$ for some prior distribution $G(\cdot)$ before any observation is taken. After observing $Y_i=y$, uncertainty about $\theta_i$ is summarized by the posterior distribution $\theta_i\mid Y_i$.

If the true prior $G$ were known, we could construct an \textit{oracle} Bayesian credible interval by taking the lower and upper $\alpha/2$ posterior quantiles.\footnote{Our interval is the ``equal-tailed credible set'' for $\theta_i$. One could alternatively construct a highest posterior density (HPD) credible set, defined as the set where the posterior density exceeds some threshold (\citeNP[Definition 5]{Berger1985}). However, a practical limitation of the HPD credible set is that it may be disjoint, making it difficult to report.    } Let $\mathcal{D}_i=(Y_i,\sigma_i)$, we denote this infeasible oracle interval by
\begin{equation}
\mathrm{CI}_i^{\mathrm{NP}*}=\Big{[} q_{G}(\alpha/2;\mathcal{D}_i)\ ,\ q_{G}(1-\alpha/2;\mathcal{D}_i) \Big{]}
\label{npci_infeasible}
\end{equation}
where the posterior quantile is defined as
\begin{equation}
q_{G}(\tau;\mathcal{D}_i)=\inf\Big{\{}  u: \operatorname{P}_G\left( \theta_i\leq  u \ | \ Y_i=y \right) \geq \tau \Big{\}} \ \ \ \text{for}  \ \tau\in(0,1).
\label{equ:posterior_quantile}
\end{equation}
The subscript $G$ is used to emphasize that these posterior quantiles are functionals of the underlying prior distribution.

Because $G$ is unknown in practice, we must distinguish between the oracle target (\ref{npci_infeasible}) and its feasible analogue. Our proposed nonparametric empirical Bayes confidence interval (NP-EBCI) replaces the oracle posterior quantiles with nonparametric estimates:
\begin{equation}
\text{CI}_i^{\text{NP}}=\Big{[}\widehat{q}_{G}(\alpha/2,\mathcal{D}_i)\ , \ \widehat{q}_{G}(1-\alpha/2;\mathcal{D}_i)\Big{]}.
\label{equ:ebci_feasible}
\end{equation}
The theoretical justification of the feasible procedure (\ref{equ:ebci_feasible}) relies on large-sample asymptotics as $n\rightarrow\infty$, in the spirit of \citeA{robbins1964empirical}. Before turning to that analysis in \Cref{section:theoretical_result}, we discuss several oracle properties of the interval (\ref{npci_infeasible}). First, the oracle interval (\ref{npci_infeasible}) admits a simple decision-theoretic characterization.

\begin{proposition}
The quantile-based posterior credible interval (\ref{npci_infeasible}) minimizes
\begin{equation*}
\mathcal{R}_i(L,U)=\operatorname{E}\left[ U(\mathcal{D}_i)-L(\mathcal{D}_i)+\frac{2}{\alpha}\left( L(\mathcal{D}_i)-\theta_i \right)_{+}+\frac{2}{\alpha}\left( \theta_i-U(\mathcal{D}_i) \right)_{+} \right]
\end{equation*}
over all measurable interval rules $(L,U)$ satisfying $L(\mathcal{D}_i)\leq U(\mathcal{D}_i)$ almost surely, where the expectation is taken over $(Y_i,\theta_i)$.
\label{prop:decision_theoretic}
\end{proposition}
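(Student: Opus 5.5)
The plan is to reduce the problem to pointwise minimization of the posterior risk, and then to use the quantile (pinball) loss characterization of quantiles. By the tower property,
\[
\mathcal{R}_i(L,U)=\operatorname{E}\Big[\operatorname{E}_G\big[\,U-L+\tfrac{2}{\alpha}(L-\theta_i)_{+}+\tfrac{2}{\alpha}(\theta_i-U)_{+}\ \big|\ \mathcal{D}_i\big]\Big],
\]
where $L=L(\mathcal{D}_i)$ and $U=U(\mathcal{D}_i)$ are fixed once we condition on $\mathcal{D}_i$. It therefore suffices to show two things. First, for almost every realization $\mathcal{D}_i=(y,\sigma_i)$, the pair of posterior quantiles $(q_G(\alpha/2;\mathcal{D}_i),q_G(1-\alpha/2;\mathcal{D}_i))$ minimizes the conditional risk
\[
r(l,u)=u-l+\tfrac{2}{\alpha}\operatorname{E}_G[(l-\theta_i)_+\mid Y_i=y]+\tfrac{2}{\alpha}\operatorname{E}_G[(\theta_i-u)_+\mid Y_i=y]
\]
over real pairs $l\le u$. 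Second, the resulting rule is measurable and admissible. Any competing rule then has pointwise conditional risk at least as large, and integrating gives the claim. I will assume $\operatorname{E}_G|\theta_i|<\infty$ so that all risks are finite; otherwise the statement is read as a comparison among rules with finite risk.

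The key step is to show that $r$ separates. Write $r(l,u)=\tfrac{2}{\alpha}\big[h_1(l)+h_2(u)\big]$ with
\[
h_1(l)=\operatorname{E}[(l-\theta)_+]-\tfrac{\alpha}{2}l,\qquad h_2(u)=\operatorname{E}[(\theta-u)_+]+\tfrac{\alpha}{2}u,
\]
where expectations are taken under the posterior. Using $(l-\theta)_+=(l-\theta)+(\theta-l)_+$, one sees that $h_1(l)$ equals $\rho_{\alpha/2}$-check-loss expected risk $\operatorname{E}[\rho_{\alpha/2}(\theta-l)]$ up to an additive constant, where $\rho_\tau(x)=x(\tau-\mathbb{1}\{x<0\})$. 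Similarly, $h_2(u)$ equals $\operatorname{E}[\rho_{1-\alpha/2}(\theta-u)]$ up to a constant.

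The standard argument applies to each term. The function $h$ is convex. Its right derivative in $l$ is $P(\theta\le l)-\alpha/2$, and its left derivative is $P(\theta<l)-\alpha/2$. Hence the minimizers of $h_1$ are exactly the $\alpha/2$ posterior quantiles, and the infimum in (\ref{equ:posterior_quantile}) is one of them. The same reasoning shows that the $(1-\alpha/2)$ quantile minimizes $h_2$. Because $g$ is continuous, the posterior has a density proportional to $\phi((y-\theta)/\sigma_i)g(\theta)$, so these quantiles are attained.

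The unconstrained minimizers automatically satisfy the constraint $l\le u$, since $\alpha/2<1-\alpha/2$ (for $\alpha<1$) implies $q_G(\alpha/2)\le q_G(1-\alpha/2)$. The constrained minimum therefore coincides with the unconstrained one, which handles the restriction to rules with $L\le U$.

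The only delicate point is measurability of $\mathcal{D}_i\mapsto q_G(\tau;\mathcal{D}_i)$. For this, I note that $\{q_G(\tau;y)\le u\}=\{P_G(\theta\le u\mid Y=y)\ge\tau\}$ by right-continuity of the posterior CDF. The map $y\mapsto P_G(\theta\le u\mid Y=y)$ is a ratio of integrals that is continuous in $y$, so the quantile map is Borel measurable. This is the main technical obstacle, and it is routine. With measurability in hand, integrating the pointwise inequality $r(L(\mathcal{D}_i),U(\mathcal{D}_i))\ge r(q_G(\alpha/2),q_G(1-\alpha/2))$ completes the proof.
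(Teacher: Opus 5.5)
Your argument is correct. The paper gives no proof of this proposition. It states it and points to Winkler (1972), Berger (1985), and the interval score of Gneiting and Raftery (2007), together with the check-loss representation of posterior quantiles in \eqref{equ:obj_Wq}, which is also stated without proof. Your route is the standard proof behind those citations:
\begin{itemize}
\item condition on $\mathcal{D}_i$;
\item split the conditional risk into an $\alpha/2$ pinball loss in $l$ and a $(1-\alpha/2)$ pinball loss in $u$;
\item observe that the two unconstrained minimizers are already ordered, so the constraint $L\le U$ is slack;
\item check Borel measurability of the quantile map through $\{q\le u\}=\{F(u\mid y)\ge\tau\}$.
\end{itemize}

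You can drop the hedge $\operatorname{E}_G|\theta_i|<\infty$, because it is never needed. The loss is nonnegative, so the tower property and the pointwise comparison hold in $[0,\infty]$ for every rule. Under the Gaussian likelihood every posterior moment is finite, since $\sup_\theta|\theta|\phi((y-\theta)/\sigma)<\infty$. Hence the additive constants in your pinball reduction are finite for every $y$. Finally, the oracle risk is finite for any $G$: the degenerate rule $L=U=Y_i$ already has risk $\frac{2}{\alpha}\operatorname{E}|Y_i-\theta_i|=\frac{2}{\alpha}\sigma_i\sqrt{2/\pi}$. Likewise, continuity of $g$ is not needed for the quantiles to be minimizers. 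The infimum-defined quantile always satisfies the subgradient condition $F(q^-)\le\tau\le F(q)$.
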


\Cref{prop:decision_theoretic} gives a direct decision-theoretic justification for using posterior quantiles as interval endpoints. The interval (\ref{npci_infeasible}) is not only an equal-tailed posterior credible interval, but also the Bayes rule for an interval-valued decision problem that trades off interval length against linear penalties for undercoverage and overcoverage. In this sense, the lower and upper posterior quantiles arise as optimal endpoints under a particular loss function, in line with the decision-theoretic treatment of Bayesian interval estimation studied by \citeA{winkler1972decision,Berger1985}. The criterion in \Cref{prop:decision_theoretic} also coincides with the interval score used to evaluate prediction intervals (\citeNP{gneiting2007strictly}).

The oracle interval (\ref{npci_infeasible}) has the usual conditional coverage property of an empirical Bayes confidence interval (EBCI) for $\theta_i$ (\citeNP[Definition 3.2]{carlin2000bayes}):
\begin{equation}
\operatorname{P}\left(\theta_i\in\mathrm{CI}_i^{\mathrm{NP}*} \ | \ Y_i\right)=1-\alpha \ \ \ \text{for each} \ i=1,...,n
\label{equ:conditional_coverage}
\end{equation}
that summarizes the uncertainty about $\theta_i$ after the data are observed. It is interpreted as a data-specific measure of uncertainty: it evaluates inference procedure based on the realized sample rather than averaging over hypothetical, unobserved outcomes. This conditional perspective goes back at least to \citeA{cox1958some} who wrote that inference should reflect what can be learned from the data that we have, and it is central to Bayesian literature which treats the observed data as known and concerns the remaining uncertainty through the conditional distribution of unknowns given knowns (\citeNP{rubin1984bayesianly}, see also \citeNP[Section 1.6]{Berger1985}).

The oracle interval (\ref{npci_infeasible}) also satisfies the marginal coverage property, often called empirical Bayes coverage (\citeNP{morris1983parametric}, \citeNP[Definition 3.1]{carlin2000bayes}). To see this, take averages over the observed data distribution:
\begin{align*}
\operatorname{P}\left( \theta_i\in\mathrm{CI}_i^{\mathrm{NP}*} \right)&=\operatorname{E}_{\mathbf{Y}}\left[ \operatorname{P}\left(\theta_i\in\mathrm{CI}_i^{\mathrm{NP}*} \ | \ Y_i\right) \right]=1-\alpha \\[7pt]
&=\operatorname{E}_{\bm \theta}\left[\operatorname{P}\left(\theta_i\in\mathrm{CI}_i^{\mathrm{NP}*} \ | \ \bm \theta\right)  \right]=1-\alpha
\end{align*}
the probability in $\operatorname{P}\left( \theta_i\in\mathrm{CI}_i \right)$ is taken over both ${\bm \theta}=(\theta_1,...,\theta_n)$ and the data $\mathbf{Y}=(Y_1,...,Y_n)$.\footnote{This notion of marginal coverage also appears in the conformal prediction literature, see e.g. \citeA{angelopoulos2023gentle}. } Thus marginal coverage is an \emph{ex ante} statement: it characterizes the uncertainty prior to observing any data. It can also be interpreted as the population average of frequentist coverage $\operatorname{P}(\theta_i\in\mathrm{CI}_i^{\mathrm{NP}*}| \bm \theta)$ that considers the uncertainty from the data. This marginal coverage statement is weaker than the conditional coverage property (\ref{equ:conditional_coverage}) because it need only hold on average and therefore may be poorly calibrated for the sample data at hand. Even so, the notion of marginal coverage is standard in the empirical Bayes literature.

We now compare the oracle NP-EBCI with the existing alternatives in the literature. The natural starting point is the naive $z$-interval $Y_i\pm z_{1-\alpha/2}\sigma_i$, which treats each unit in isolation and has exact frequentist coverage for $\theta_i$ without imposing any prior structure. The limitation is that when $\sigma_i^2$ is large, the interval can be unnecessarily wide. This concern leads naturally to the empirical Bayes confidence interval (EBCI) literature, which aims to improve inference for each unit by combining the noisy observation $Y_i$ with prior information learned from the distribution of individual effects.

\citeA{cox1975prediction} and \citeA{morris1983parametric} initiated the parametric EBCI literature by constructing individual intervals for $\theta_i$ from the posterior distribution under a Gaussian prior. In the homoskedastic case $\sigma_i^2\equiv \sigma^2$, they assume $\theta_i\iid N(0,A)$. Under this normal-normal specification, the posterior distribution takes a simple form: $\theta_i|Y_i=y\sim N(\frac{A}{A+\sigma^2}y,\frac{A\sigma^2}{A+\sigma^2})$ which yields the oracle interval
\begin{equation}
\mathrm{CI}_i^{\mathrm{Cox}_{*}}=w_{\mathrm{EB}}Y_i\pm w_{\mathrm{EB}}^{1/2}\sigma\times z_{1-\alpha/2} \ \ \ \text{where} \ w_{\mathrm{EB}}=\frac{A}{A+\sigma^2}
\label{equ:cox_ebci}
\end{equation}
A feasible version of this interval replaces the unknown variance component $A$ with an estimator. This construction can produce substantially shorter intervals than the naive $z$-interval. This gain, however, comes from the parametric prior restriction.

Like \citeA{cox1975prediction} and \citeA{morris1983parametric}, we construct intervals from the posterior distribution of $\theta_i|Y_i$. The difference is that we make no parametric assumptions about the prior distribution. Instead we treat the prior $G$ as fully nonparametric and take the oracle posterior interval in (\ref{npci_infeasible}) as the target. The interval proposed in practice is the feasible analog in (\ref{equ:ebci_feasible}), obtained by replacing the oracle posterior quantiles with nonparametric estimates.

A different alternative is the robust EBCI of \citeA{armstrong2022robust}, which targets worst-case marginal coverage over a moment class rather than the posterior interval under a point-identified prior. In the homoskedastic case $\sigma_i^2\equiv \sigma^2$, their oracle interval takes the form
\begin{equation*}
\mathrm{CI}_i^{\mathrm{AKP}_{*}}=w_{\mathrm{EB}}Y_i\pm w_{\mathrm{EB}}\sigma \times \chi(\alpha) \ \ \ \text{where} \ w_{\mathrm{EB}}=\frac{A}{A+\sigma^2}
\end{equation*}
where $A=E[Y_i^2]-\sigma^2$. The critical value $\chi(\alpha)$ is chosen such that the optimal value of the following optimization problem is at least $1-\alpha$:
\begin{equation}
\inf_{G} \ \operatorname{P}_G\left[ \theta_i \in \mathrm{CI}_i^{\mathrm{AKP}}(\chi) \right] \ \ \ \text{s.t.} \ \operatorname{E}_G[\theta_i^2]=A.
\label{equ:optimization}
\end{equation}
By construction, the marginal coverage probability of their oracle robust EBCI is at least $1-\alpha$ for all prior distributions that satisfy the second moment constraint. However, the actual coverage probability may exceed the nominal level when the true prior is not least-favorable distribution that solves (\ref{equ:optimization}). The oracle NP-EBCI, by contrast, relies on nonparametric identification of the prior distribution and fully exploits the entire distribution of $\theta_i$. Consequently, the interval (\ref{npci_infeasible}) attains exact marginal coverage at the nominal level $1-\alpha$. Moreover, the AKP robust EBCI does not target the exact posterior conditional coverage property in (\ref{equ:conditional_coverage}) which conditions on the observed data. By contrast, the oracle NP-EBCI (\ref{npci_infeasible}) satisfies this property by construction.

We make two additional remarks to clarify how certain terminology in empirical Bayes inference are understood in our setting.

\begin{remark}
\citeA{morris1983parametric} assumed that the prior belongs to a known class $\mathscr{G}$, and evaluated the performance of procedures uniformly over all $G\in\mathscr{G}$. He also suggested placing a ``second-stage'' prior on $\mathscr{G}$ to account for uncertainty about the prior itself. In our setting, $G$ is point identified. The remaining uncertainty is sampling uncertainty, because $G$ must still be estimated from the data. One could, in principle, place a prior on $\mathscr{G}$ to account for this estimation uncertainty, see also \Cref{remark_adjustement}.
\end{remark}
\begin{remark}
\citeA{morris1983parametric} discussed the robust Bayes literature; see also \citeA[Chapter 4.7]{Berger1985} for a broader treatment that considers sensitivity analysis as the ``subjective'' prior varies over an $\varepsilon$-contamination class. In our setting, the prior is point identified, so we do not represent prior uncertainty through such a class. Robust Bayes methods would nevertheless be relevant if one wished to study sensitivity to prior misspecification or contamination, which we do not pursue in this paper.
\end{remark}

\section{Theoretical results under homoskedasticity}
\label{section:theoretical_result}

In this section, we analyze the theoretical properties of the feasible NP-EBCI. First we consider nonparametric estimation of the posterior quantile: we establish its optimal rates of convergence in \Cref{subsection:rate_quantile} and propose a rate-optimal estimator in \Cref{subsection:estimator_quantile}. Second, in \Cref{subsection:ebci}, we establish the optimal rates of the errors in conditional coverage probability and show that the feasible NP-EBCI based on plug-in estimate of posterior quantile achieves this rate. Finally, we examine its marginal coverage probability and connects it to the empirical Bayes literature.

\paragraph{Notation.} For \( 1 \leq p < \infty \), the space \( L^p(\mathbb{R}) \) is defined as \( L^p(\mathbb{R}) = \{ f : \mathbb{R} \to \mathbb{C} \mid \int_{\mathbb{R}} |f(x)|^p \, \mathrm{d}x < \infty \} \), with associated norm \( \|f\|_p = \left( \int_{\mathbb{R}} |f(x)|^p \, \mathrm{d}x \right)^{1/p} \). For \( p = \infty \), we define the norm as \( \|f\|_\infty = \operatorname{ess\,sup}_{x \in \mathbb{R}} |f(x)| \). For any function \( f : \mathbb{R} \to \mathbb{C} \) such that \( f \in L^1(\mathbb{R}) \), we define its Fourier transform by $f^{\star}(t) = \int_{\mathbb{R}} \exp(itx) f(x) \, \mathrm{d}x$ for \( t \in \mathbb{R} \).

Throughout \Cref{section:theoretical_result}, we assume that $\sigma_i^2\equiv \sigma^2$ for technical convenience, which implies that $(Y_i,\theta_i)\iid \mathcal{F}$. In \Cref{subsection:rate_quantile,subsection:estimator_quantile}, we consider the posterior quantile associated with a hypothetical random variable $(Y_{n+1},\theta_{n+1})\sim \mathcal{F}$, evaluated at $Y_{n+1}=y$, and estimate it using the data $\{Y_i\}_{i=1}^n$. In \Cref{subsection:ebci,subsection:marginal_coverage}, since our goal is to construct individual intervals $\mathrm{CI}_i$ for each $\theta_i$, the relevant posterior quantile corresponds to $(Y_i,\theta_i)$ and is estimated using the leave-one-out data $\{Y_j\}_{j\neq i}$. We focus on a fixed quantile level $\tau\in(0,1)$, tailored to the objective of the application.

\subsection{Fundamental limits of estimating posterior quantiles}
\label{subsection:rate_quantile}

Both posterior mean\footnote{The posterior mean, defined as $m(y)=\operatorname{E}_G[\theta|Y=y]$, is one of the most widely studied objects in empirical Bayes literature. It is the optimal predictor of the individual parameter $\theta_i$ under the mean-square error, conditional on the observed data $Y=y$.    } and posterior quantiles are nonlinear integral functionals of the unknown mixing distribution $G$. In the normal means model, the observed distribution of $Y$ is the convolution of $G$ with the normal error distribution, so recovering $G$ from the data is a deconvolution problem. With normal errors, this inverse problem is severely ill-posed: the optimal rate for estimating the mixing density $g$ is only logarithmic (\citeNP{carroll1988optimal}, see also \citeA{horowitz2014ill} for a review).\footnote{ The normal density is a canonical example of a ``super-smooth'' density, which leads to the worst-case rates of convergence in nonparametric deconvolution. When the error density is ``ordinarily smooth'', the optimal rate is polynomial (\citeNP{fan1991optimal}); see \citeA[page 511]{schennach2020mismeasured} for further discussion.    } In contrast to this severe ill-posedness, the posterior mean can be estimated at a nearly \textit{parametric} rate (\citeNP{zhang1997empirical,zhang2009generalized}). Since the posterior quantile is also an integral functional of the unknown mixing density, it is natural to ask whether such ``integration'' can likewise lead to faster rates of convergence.

We will show that the answer is negative. Unlike the posterior mean, the posterior quantile is defined through a non-smooth criterion function. This non-smoothness prevents the associated Fourier weighting term from offsetting the exponential decay induced by the Gaussian error density. Thus, posterior quantile estimation inherits the severe ill-posedness of Gaussian deconvolution. To show that this difficulty is intrinsic, rather than a deficiency of a particular estimator, we derive a minimax lower bound on the risk of posterior quantile estimation over a functional class. This minimax result characterizes the fundamental limits of the estimation problem and provides a benchmark for evaluating the quality of an estimation procedure.

We begin by rewriting the posterior quantile in a form that exposes its non-smoothness. Although defined in (\ref{equ:posterior_quantile}) as the solution to a moment condition weighted by the normal error density, it is more useful to reframe it as the minimizer of a convex optimization problem.

\begin{proposition}
The posterior quantile (\ref{equ:posterior_quantile}) can be characterized as the minimizer of an $\ell_1$ loss function:
\begin{equation}
q_{G}(\tau;y) = \underset{q}{\arg\min} \ \int \rho_{\tau}\left( \theta-q \right)\phi\left( \frac{y-\theta}{\sigma} \right) \mathrm{d}G(\theta)
\label{equ:obj_Wq}
\end{equation}
where $\rho_{\tau}(u)=u(\tau-\mathbf{1}\{ u\leq 0 \})$ is the check function. We denote the objective function by $W(q)$.
\end{proposition}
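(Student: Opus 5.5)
The plan is to reduce the claim to the standard fact that quantiles minimize expected check loss, applied to the posterior distribution. Define the measure $\mu(\mathrm{d}\theta)=\phi\left(\frac{y-\theta}{\sigma}\right)\mathrm{d}G(\theta)$. By Bayes' rule in the homoskedastic model (\ref{normal_mean}), the posterior law of $\theta$ given $Y=y$ is
\[
\operatorname{P}_G(\theta\leq u\mid Y=y)=\frac{\mu((-\infty,u])}{\mu(\mathbb{R})},\qquad \mu(\mathbb{R})=\sigma f_Y(y)\in(0,\infty).
\]
The normalizing constant $\mu(\mathbb{R})$ does not depend on $q$. Hence minimizing $W(q)=\int\rho_\tau(\theta-q)\,\mu(\mathrm{d}\theta)$ is equivalent to minimizing the posterior expected check loss $\operatorname{E}_G[\rho_\tau(\theta-q)\mid Y=y]$. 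The task is then to show that the posterior $\tau$-quantile (\ref{equ:posterior_quantile}) minimizes this.

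First I need $W$ to be finite. Under a normalized posterior this requires $\operatorname{E}_G[|\theta|\mid Y=y]<\infty$. This holds because $\phi((y-\theta)/\sigma)$ decays like a Gaussian in $\theta$, so $\int|\theta|\phi((y-\theta)/\sigma)\,\mathrm{d}G(\theta)\leq\sup_\theta|\theta|\phi((y-\theta)/\sigma)<\infty$ for any probability measure $G$. The map $q\mapsto\rho_\tau(\theta-q)$ is convex for each $\theta$, so $W$ is convex and finite on $\mathbb{R}$.

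Next I compute one-sided derivatives. Dominated convergence is justified since $|\rho_\tau(\theta-q-h)-\rho_\tau(\theta-q)|\leq|h|$ and $\mu$ is finite. Writing $F(u)=\mu((-\infty,u])/\mu(\mathbb{R})$ for the posterior CDF, this gives
\[
\frac{W'_+(q)}{\mu(\mathbb{R})}=F(q)-\tau,\qquad \frac{W'_-(q)}{\mu(\mathbb{R})}=F(q-)-\tau .
\]
For convex $W$, $q^*$ is a minimizer if and only if $W'_-(q^*)\leq 0\leq W'_+(q^*)$, that is, $F(q^*-)\leq\tau\leq F(q^*)$. The infimum $q_G(\tau;y)=\inf\{u:F(u)\geq\tau\}$ satisfies both inequalities. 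The right-continuity of $F$ gives $F(q_G)\geq\tau$, and the definition of the infimum gives $F(u)<\tau$ for $u<q_G$, hence $F(q_G-)\leq\tau$. So $q_G(\tau;y)$ belongs to the argmin.

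The main obstacle is uniqueness, since the $\arg\min$ is a set in general: it is the interval of $q$ with $F(q-)\leq\tau\leq F(q)$. It is a singleton unless $F$ is flat at level $\tau$ on an interval. Under the maintained assumption that $G$ has a density $g$, the posterior has density proportional to $\phi((y-\theta)/\sigma)g(\theta)$. Flatness can still occur if $g$ vanishes on an interval. I would handle this in one of two ways:
\begin{itemize}
\item interpret $\arg\min$ as selecting the smallest minimizer, which is exactly the generalized inverse in (\ref{equ:posterior_quantile}); or
\item note that if $g>0$ on the support's convex hull then $F$ is strictly increasing there, so the minimizer is unique.
\end{itemize}
I would state the first convention explicitly, so the proposition holds for every $G$.
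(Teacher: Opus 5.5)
Your argument is correct. The one-sided derivatives $W'_{+}(q)=\mu(\mathbb{R})\{F(q)-\tau\}$ and $W'_{-}(q)=\mu(\mathbb{R})\{F(q-)-\tau\}$ are right, and $0\in[W'_-(q^*),W'_+(q^*)]$ is the correct optimality condition for a finite convex function. Moreover, $q_G(\tau;y)$ is the smallest solution of that condition.

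The paper gives no separate proof of this proposition. It treats the result as the standard fact that quantiles minimize expected check loss. What it actually uses downstream is the first-order (moment) condition $\int(\tau-\mathbf{1}\{\theta\le q_0\})\phi(\frac{y-\theta}{\sigma})\,\mathrm{d}G(\theta)=0$, with uniqueness attributed to \Cref{assumption_density}.

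Your subgradient formulation is the general version of that moment condition, and it buys three things:
\begin{itemize}
\item It does not require $F$ to be continuous. With atoms the moment equation may have no root, yet $q_G$ is still the smallest minimizer.
\item Your bound $\sup_\theta|\theta|\phi((y-\theta)/\sigma)<\infty$ shows $W$ is finite for every prior. This is why the Cauchy baseline in the lower-bound proofs is unproblematic.
\item It isolates the role of the density and positivity assumptions: they matter only for uniqueness.
\end{itemize}

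On uniqueness, your global condition ($g>0$ on the hull of the support) is more than you need. In the paper's setting, continuity of $g$ (\Cref{assumption2}) and $\phi(\frac{y-q_0}{\sigma})g(q_0)>0$ (\Cref{assumption_density}) make $F$ continuous with $F(q_0)=\tau$ and strictly increasing near $q_0$. Hence $F(q)>\tau$ for all $q>q_0$ and $F(q)<\tau$ for all $q<q_0$, so the argmin $\{q:F(q)=\tau\}$ is exactly $\{q_0\}$.
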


We transform the objective function (\ref{equ:obj_Wq}) into the frequency domain using Fourier basis functions. Define the weighting function $M(q,\theta;y)=\rho_{\tau}\left( \theta-q \right)\phi(\frac{y-\theta}{\sigma})$ that appears in the integral of the objective function (\ref{equ:obj_Wq}). By the Parseval-Plancherel theorem (Theorem 9.13 in \citeA{rudin1987}), the objective function $W(q)$ in \Cref{equ:obj_Wq} can be expressed as
\begin{equation}
W(q)=\int M(q,\theta;y)g(\theta)\mathrm{d}\theta=(2\pi)^{-1} \int \overline{M^{\star}(q,t;y)}g^{\star}(t) \mathrm{d}t =(2\pi)^{-1} \int  \cfrac{\overline{M^{\star}(q,t;y)}}{f_{\varepsilon}^{\star}(t)} f_{Y}^{\star}(t)\mathrm{d}t
\label{equ:obj_Wq2}
\end{equation}
where $\overline{M^{\star}(q,t;y)}=\int \exp(-it\theta)M(q,\theta;y)\mathrm{d}\theta$ is the conjugate Fourier transform of $M(q,\theta;y)$.

The Fourier representation (\ref{equ:obj_Wq2}) shows that the problem has two ingredients. First, Gaussian deconvolution is severely ill-posed because $f_{\varepsilon}^{\star}(t)$ decays exponentially as $|t|\rightarrow \infty$, so division by $f_{\varepsilon}^{\star}(t)$ amplifies high-frequency noise. Second, whether this difficulty can be mitigated depends on the decay of the Fourier weighting term. For smooth functionals, such as the posterior mean, that weighting term decays fast enough to downweigh the exponential decay at the tail, leading to a nearly parametric rate.\footnote{ A similar insight appears in the semiparametric models with measurement errors that achieve the parametric $\sqrt{n}$ rate, see \citeA[Page 512]{schennach2020mismeasured}
  for related discussion.  }  For posterior quantiles, however, the check function in (\ref{equ:obj_Wq}) is non-smooth at zero, so $\overline{M^{\star}(q,t;y)}$ decays too slowly to offset it. The posterior quantile therefore inherits the severe ill-posedness of Gaussian deconvolution.

To state the formal results, we introduce the following regularity assumptions.

\begin{assumption}
The mixing density $g\in L^1(\mathbb{R})\cap L^2(\mathbb{R})$.
\label{assumption_g_regularity}
\end{assumption}

\begin{assumption}
We assume that the mixing density $g$ belongs to the Sobolev class
\begin{equation*}
\mathcal{G}(s,L)=\left\{ g: \int \ |g^{\star}(t)|^2(t^2+1)^{s}\mathrm{d}t < L  \right\}
\end{equation*}
where $s\geq \frac{1}{2}$ is an integer smoothness parameter.
\label{assumption1}
\end{assumption}

We denote the posterior density evaluated at $\theta = u$ by
\begin{equation}
\pi(u|Y=y)=\frac{\phi\left(\frac{y-u}{\sigma}\right)g(u)}{\int \phi\left(\frac{y-\theta}{\sigma}\right)g(\theta) \mathrm{d}\theta}=\frac{\phi\left(\frac{y-u}{\sigma}\right)g(u)}{f_{Y}(y)}.
\label{equ:posterior_density}
\end{equation}

\begin{assumption}
The normalized posterior density evaluated at the target posterior quantile $q_0:=q_G(\tau;y)$ is strictly positive, $|\phi(\frac{y-q_0}{\sigma})g(q_0)|>0$.
\label{assumption_density}
\end{assumption}

\begin{assumption}
The Fourier transformation of the mixing density satisfies that  $g^{\star}(t)\in L^1(\mathbb{R})$. It implies the density $g$ is continuous and uniformly bounded as $\|g\|_{\infty}\leq (2\pi)^{-1}\|g^{\star}\|_1<\infty$.
\label{assumption2}
\end{assumption}

\begin{assumption}
The target parameter $q_G(\tau;y)$ lies in a compact parameter space.
\label{assumption4}
\end{assumption}

\Cref{assumption_g_regularity} is a regularity condition that facilitates the application of Fourier-based methods. \Cref{assumption1} is equivalent to requiring that the sum of the squared $L^2(\mathbb{R})$-norms of the first $s$ derivatives of the mixing density $g(\cdot)$ be bounded, i.e. $\sum_{l=0}^{s}\|g^{(l)}\|_2^2<L$, see e.g. \citeA[Appendix A.2]{meister2009deconvolution}. This Sobolev condition provides global control of the smoothness and is well suited for analyzing posterior quantile, a nonlinear integral functional of mixing density, while H\"{o}lder condition imposes smoothness only locally around some specific point.

\Cref{assumption_density} is the identification condition for the posterior quantile by requiring that the criterion function (\ref{equ:obj_Wq}) is non-flat at its solution, ensuring a unique minimizer. \Cref{assumption2} together with \Cref{assumption_g_regularity}, ensures that the inverse Fourier transform of $g^{\star}(t)$ is well-defined and yields a continuous mixing density $g(\cdot)$. Consequently, the posterior distribution has no atoms at the target quantile. Finally, \Cref{assumption4} ensures that the posterior quantile is well-defined and lies in a compact subset of $\mathbb{R}$.

\begin{theorem}[Minimax lower bounds]
Let \Cref{assumption_g_regularity,assumption1,assumption_density,assumption2,assumption4} hold. Then for any estimator $\hat{q}$ based on the i.i.d. data $\{Y_i\}_{i=1}^n$, we have that
\begin{align*}
\inf_{\widehat{q}} \ \sup_{g\in\mathcal{G}(s,L)} \ \operatorname{E}\Big{[} \left( \widehat{q}_G(\tau;y)-q_G(\tau;y) \right)^2\Big{]}&\geq \mathrm{const}\cdot (\log n)^{-(2s+1)/2} 
\end{align*}
for $n$ sufficiently large.
\label{rate_posterior_quantile}
\end{theorem}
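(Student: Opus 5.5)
We prove the bound with Le Cam's two-point method. We construct two mixing densities $g_0,g_1\in\mathcal{G}(s,L)$ whose posterior quantiles at $y$ differ by at least $\delta_n\asymp(\log n)^{-(2s+1)/4}$. At the same time, the induced marginal densities $f_{Y,0}=g_0*\phi_\sigma$ and $f_{Y,1}=g_1*\phi_\sigma$ must be so close that $n\,\chi^2(f_{Y,1},f_{Y,0})\to 0$, or at least stays bounded below a constant. Le Cam's lemma then gives $\inf_{\hat q}\max_{j}\operatorname{E}_j(\hat q-q_j)^2\geq \mathrm{const}\cdot\delta_n^2$, which is the claim.

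\textbf{Step 1 (base prior).} Take $g_0$ smooth, with $\|g_0\|_{\mathcal{G}(s,L)}<L/2$ and heavy enough tails, say a Cauchy-type or $(1+x^2)^{-1}$-type density. Heavy tails make $f_{Y,0}(x)\gtrsim(1+x^2)^{-1}$, which lets us control the $\chi^2$ distance by a weighted $L^2$ norm. We also require $g_0>0$ near $q_0=q_{G_0}(\tau;y)$, which gives \Cref{assumption_density}.

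\textbf{Step 2 (perturbation).} Set $g_1=g_0+\gamma_n H_n$, where $H_n(x)=h\big(a_n(x-q_0)\big)\cos\big(\omega_n (x-q_0)\big)$ with $\omega_n\asymp\sqrt{\log n}$. Here $h$ is a fixed smooth bump whose Fourier transform is compactly supported in $[-1,1]$. The perturbation must be orthogonal to constants so that $\int g_1=1$, and nonnegativity is preserved for small $\gamma_n$ because $g_0$ has heavy tails. The key design choice is that the perturbation, read through the linearization of the quantile, shifts the posterior CDF at $q_0$. By the implicit function theorem applied to the first-order condition $\int(\tau-\mathbf 1\{\theta\le q\})\phi(\tfrac{y-\theta}{\sigma})g(\theta)\,\mathrm d\theta=0$,
\[
q_1-q_0\approx \frac{\gamma_n\int(\tau-\mathbf 1\{\theta\le q_0\})\phi(\tfrac{y-\theta}{\sigma})H_n(\theta)\,\mathrm d\theta}{\phi(\tfrac{y-q_0}{\sigma})g_0(q_0)}.
\]
The indicator is the discontinuous weight discussed after (\ref{equ:obj_Wq2}). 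Its Fourier transform decays only like $|t|^{-1}$, so the numerator is of order $\gamma_n\omega_n^{-1}$, whereas a smooth weight would have made it exponentially small. To make the numerator nonzero I would use $\sin$ instead of $\cos$, or shift the phase, so that the integral against the step function does not cancel. I would take $a_n$ fixed.

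\textbf{Step 3 (calibration).} The Sobolev constraint gives $\gamma_n^2\omega_n^{2s}\lesssim 1$, so $\gamma_n\asymp\omega_n^{-s}$. Hence $|q_1-q_0|\asymp\omega_n^{-s-1}$. For the $\chi^2$ distance, the Fourier transform of $H_n$ is supported near $\pm\omega_n$, so $\|(g_1-g_0)*\phi_\sigma\|_2^2\lesssim\gamma_n^2e^{-\sigma^2(\omega_n-1)^2}$. Choosing $\omega_n=c\sqrt{\log n}$ with $c$ large makes $n\chi^2\to0$; the weighted version follows from the polynomial tails of $H_n$ and the lower bound on $f_{Y,0}$. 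This yields $\delta_n^2\asymp(\log n)^{-(s+1)}$. However, the statement claims $(\log n)^{-(2s+1)/2}=(\log n)^{-(s+1/2)}$, which corresponds to the numerator being of order $\gamma_n\omega_n^{-1/2}$. That is what one gets from the $L^2$ pairing of $\overline{M^\star}\sim|t|^{-1}$ against a perturbation spread over a frequency band of width $\asymp\omega_n$, rather than a single bump. So I would instead take $H_n$ with $H_n^\star$ supported on $\{|t|\in[\omega_n,2\omega_n]\}$ and aligned in phase with $\overline{M^\star(q_0,t;y)}/|M^\star|$. By Cauchy--Schwarz this is extremal: the numerator becomes $\gamma_n(\int_{\omega_n}^{2\omega_n}t^{-2}\,\mathrm dt)^{1/2}$ with $\gamma_n^2\omega_n^{2s}\cdot\omega_n\lesssim L$ after normalization. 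This gives exactly $\delta_n\asymp\omega_n^{-(s+1/2)}$.

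\textbf{Step 4 (main obstacle).} The main difficulty is making the linearization in Step 2 rigorous. The remainder from the nonlinear dependence of $q$ on $g$ must be $o(\delta_n)$, and we must keep $g_1\ge0$, $g_1^\star\in L^1$, and $q_1$ in the compact set, all uniformly in $n$. I would handle this with a direct argument. The posterior CDF at $q$ is continuously differentiable in $q$, with derivative bounded below near $q_0$ by \Cref{assumption_density} and continuity (\Cref{assumption2}). The perturbation changes this CDF by an explicit $\gamma_n$-linear term plus a term from the normalizing constant $f_Y(y)$. That second term is exponentially small, since $\int\phi(\tfrac{y-\theta}{\sigma})H_n(\theta)\,\mathrm d\theta$ involves $e^{-\sigma^2\omega_n^2/2}$. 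A mean-value argument then gives $|q_1-q_0|\ge c\,\delta_n$.
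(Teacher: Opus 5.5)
Your final construction (once you drop the fixed-width modulated bump, which, as you note, only gives $(\log n)^{-(s+1)}$) is essentially the paper's proof. It uses a Cauchy baseline, a perturbation with $H^\star(t)\propto\omega_n^{-s-1/2}(-i\,\mathrm{sgn}(t))e^{itq_0}k^\star(|t|/\omega_n)$ on the band $|t|\in[\omega_n,2\omega_n]$, the $\chi^2$ bound via $f_{Y,0}\gtrsim(1+y^2)^{-1}$ and Parseval, $\omega_n\asymp\sqrt{\log n}$, and a mean-value step on the first-order condition. Three bookkeeping points need fixing. First, match the phase to the indicator weight $\psi=(\tau-\mathbf{1}\{\theta\le q_0\})\phi$, whose transform is $\sim\phi(\frac{y-q_0}{\sigma})e^{-itq_0}/(it)$, not to $M^\star\sim t^{-2}$; that choice makes $H$ even about $q_0$ and kills the $1/t$ pairing. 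Second, with band amplitude $\asymp\omega_n^{-s-1/2}$ the numerator is $\asymp\omega_n^{-s-1/2}\int_{\omega_n}^{2\omega_n}t^{-1}\,\mathrm{d}t\asymp\omega_n^{-s-1/2}$, whereas your displayed pair of formulas would give $\omega_n^{-s-1}$. Third, because the unnormalized first-order condition is linear in $g$, the identity $|q_1-q_0|=|N|/[\phi(\frac{y-q^*}{\sigma})g_1(q^*)]$ with $N=\int(\tau-\mathbf{1}\{\theta\le q_0\})\phi H$ is exact. So your Step 4 needs only the trivial upper bound on $\phi g_1$, not a remainder estimate or a lower bound on the derivative.
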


\Cref{rate_posterior_quantile} establishes a minimax lower bound for estimating posterior quantiles uniformly over the Sobolev space $\mathcal{G}(s,L)$. In particular, it shows that no estimator can attain a faster rate of convergence than logarithmic uniformly over this function class. This lower bound reflects that posterior quantile estimation, due to the \textit{non-smoothness} of its criterion function, inherits the intrinsic difficulty of nonparametric deconvolution. The next subsection shows that this lower bound is attained by a feasible estimator. Hence the logarithmic rate is the minimax optimal rate for estimating posterior quantiles in the homoskedastic normal-means model.

\Cref{rate_posterior_quantile}, together with the nearly parametric rate result for posterior mean estimation (\citeNP{zhang1997empirical,zhang2009generalized}), highlights a broader point in functional estimation for ill-posed inverse problems: the minimax optimal rate is determined \textit{jointly} by the degree of ill-posedness of the inverse operator and the regularity of the target functional. Posterior quantiles are too irregular to circumvent the difficulty of the underlying severely ill-posed deconvolution problem, and their minimax rate therefore remains logarithmic.

\begin{remark}
It is worthwhile to mention that as with the monotonicity property of the posterior mean (\citeNP{houwelingen1983monotone,koenker2014convex}), the posterior quantile $q_{G}(\tau;y)$ is non-decreasing in $y$ for a broader class of error distributions $F_{\varepsilon}$ that includes the normal (\Cref{prop:quantile_mlr}). However, this monotonicity property does not improve the rate of convergence of the posterior quantile in the normal means model. The reason is that the monotonicity holds automatically for any valid densities, it does not impose additional restrictions on the risk when deriving the lower bound. Moreover, the posterior quantile $q_{G}(\tau;y)$ is non-decreasing in $\tau$,\footnote{This is known as the ``quantile non-crossing'' property, see \citeA[Chapter 2.5]{koenker2005quantile}} but this likewise does not improve the rate of convergence for the same reason above.
\end{remark}

\begin{remark}
\citeA{escanciano2023irregular} shows that in many structural models, quantile functionals of nonparametric unobserved heterogeneity have \textit{infinite} efficiency bounds. This occurs because the non-smoothness of the influence function violates the necessary condition in \citeA{van1991differentiable}. Our result in \Cref{rate_posterior_quantile} also illustrates the implications of this non-smoothness for statistical performance but focuses on characterizing minimax optimal rates.
\end{remark}

\subsection{A rate-optimal estimator of posterior quantiles}

\label{subsection:estimator_quantile}

In this subsection, we propose a one-step estimator for the posterior quantile that achieves the optimal rate established in \Cref{rate_posterior_quantile}.

Observe that the representation (\ref{equ:obj_Wq2}) expresses the posterior quantile directly as a function of the data, and naturally suggests an estimator for the posterior quantile, defined as the minimizer of the empirical counterpart of the objective function. In this representation, the only unknown component is $f_{Y}^{\star}(t)$. Therefore, to construct the empirical analog $W_n(q)$, we replace $f_Y^{\star}(t)$ with $\hat{f}^{\star}_Y(t)K^{\star}(h_nt)$, where $\hat{f}^{\star}_Y(t)=n^{-1}\sum_{i=1}^n \exp(it Y_i)$ is the empirical characteristic function, and $K^{\star}(t)$ denotes the Fourier transformation of a kernel function $K(\cdot)$ with bandwidth $h_n$. The kernel serves as a regularization device to ensure the existence of the integral when $f_Y^{\star}(t)$ is replaced by its empirical analog. The resulting kernel estimator of the posterior quantile is then
\begin{equation}
\widehat{q}_{G}(\tau;y)=\underset{q}{\arg\min} \ (2\pi)^{-1} \int  \cfrac{\overline{M^{\star}(q,t;y)}}{f_{\varepsilon}^{\star}(t)} \left(\widehat{f}^{\star}_Y(t)K^{\star}(h_n t) \right) \mathrm{d}t
\label{equ:estimator_quantile}
\end{equation}
where the sample objective function is denoted by $W_n(q)$.

Our proposed estimator (\ref{equ:estimator_quantile}) is a one-step procedure that bypasses the need to estimate the mixing distribution $G(\cdot)$ as an intermediate step. This estimation strategy is in the spirit of $f$-modeling for posterior mean estimation, which involves estimating only the marginal density $f_Y(y)$ and its derivative (\citeNP{robbins1956empirical,efron2011tweedie,efron2014two}). Another advantage of this direct estimation is that the regularization parameter can be tuned directly for the target functional, rather than for an intermediate estimate of the mixing distribution.

Our estimator for the posterior quantile is, by construction, similar to the classical deconvolution kernel density estimator (\citeNP{carroll1988optimal,stefanski1990deconvolving}). However, as discussed in \Cref{subsection:rate_quantile}, the posterior quantile  is a nonlinear integral functional of the deconvolution density. This leads to two key distinctions and challenges of the theoretical analysis. First, our estimator is defined as the solution to a Fourier integral, rather than an integral itself. Second, the integrand involves a weighting function $\overline{M^{\star}(q,t;y)}$ whose (non-)smoothness crucially affects the convergence rate of the estimator.

We then formally establish the theoretical properties of the proposed estimator $\hat{q}_{G}(\tau;y)$.

\begin{assumption}
\begin{itemize}
\item[(i)] The kernel satisfies that $K\in L^1(\mathbb{R})\cap L^2(\mathbb{R})$ with $K_{h_n}=h_n^{-1}K(\cdot/h_n)$. Its Fourier transformation $K^{\star}(t)$ is supported on $[-1,1]$.
\item[(ii)]Let $K:\mathbb{R}\rightarrow\mathbb{R}$ be a bounded function that satisfies $\int v^d K(v)\mathrm{d}v=0$ for $d=1,...,s$, $\int |v|^{s+1}|K(v)|\mathrm{d}v<\infty$.
\end{itemize}
\label{assumption3}
\end{assumption}

\Cref{assumption3} (i) requires that the Fourier transform of the kernel function be bounded, integrable and have compact support. \Cref{assumption3} (ii) requires $K$ to be higher-order kernel, that is a device designed to increase the rate of convergence of the asymptotic bias. Here we use a kernel of the order $s+1$, which is one degree higher than the smoothness of the mixing density.

\begin{theorem}[Consistency and rates of convergence]
Let \Cref{assumption_g_regularity,assumption1,assumption_density,assumption2,assumption3,assumption4} hold. Setting the bandwidth $h_n\asymp(\log n)^{-1/2}$, we have that
\begin{equation*}
\Big{|}\widehat{q}_G(\tau;y)-q_G(\tau;y)\Big{|}=O_p\left((\log n)^{-(2s+1)/4}\right).
\end{equation*}
\label{quantile_estimator}
\end{theorem}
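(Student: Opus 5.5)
The argument treats $\widehat q_G(\tau;y)$ as an M-estimator with a convex, non-smooth criterion, combined with a bias--variance analysis of the deconvolution kernel. The key observation is that the sample objective has a direct-space form. By Parseval, the Fourier integral $W_n(q)$ in (\ref{equ:estimator_quantile}) equals
\[
W_n(q)=\int M(q,\theta;y)\,\widehat g_n(\theta)\,\mathrm{d}\theta ,
\]
where $\widehat g_n$ is the classical deconvolution kernel estimator with bandwidth $h_n$. Its expectation is $\int M(q,\theta;y)\,(K_{h_n}*g)(\theta)\,\mathrm{d}\theta$. Differentiating in $q$ (in the subgradient sense) gives a sample score
\[
S_n(q)=\int \big(\mathbf{1}\{\theta\le q\}-\tau\big)\,\phi\!\left(\tfrac{y-\theta}{\sigma}\right)\widehat g_n(\theta)\,\mathrm{d}\theta ,
\]
which is a smoothed posterior CDF minus $\tau$, up to the factor $f_Y(y)$. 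The population score $S(q)$ has derivative $\phi(\frac{y-q}{\sigma})g(q)$ at $q_0$. By \Cref{assumption_density}, together with continuity of $g$ from \Cref{assumption2}, this derivative is bounded away from zero near $q_0$.

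\textbf{Step 1 (consistency).} First I show $\sup_{q\in\Theta}|W_n(q)-W(q)|\to_p 0$ over the compact set of \Cref{assumption4}. Since $W(q)$ is convex with a unique minimizer by \Cref{assumption_density}, the standard argmin theorem for convex criteria gives $\widehat q\to_p q_0$.

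\textbf{Step 2 (linearization).} On a neighbourhood of $q_0$, a mean value expansion of the score $S$ gives
\[
|\widehat q-q_0|\lesssim \sup_{q\in\Theta}|S_n(q)-S(q)| .
\]
It is cleaner to work with monotone scores: $S_n$ need not be monotone because $\widehat g_n$ can be negative, but it is continuous, and $\widehat q$ is a zero of it or near one. So it suffices to bound $\sup_q |S_n(q)-S(q)|$, and I split this into bias and stochastic terms.

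\textbf{Step 3 (bias).} The bias is
\[
\int w_q(\theta)\,\big(K_{h_n}*g-g\big)(\theta)\,\mathrm{d}\theta ,\qquad w_q(\theta)=\big(\mathbf{1}\{\theta\le q\}-\tau\big)\phi\!\left(\tfrac{y-\theta}{\sigma}\right).
\]
In the Fourier domain this is $(2\pi)^{-1}\int \overline{w_q^\star(t)}\,g^\star(t)\,(K^\star(h_nt)-1)\,\mathrm{d}t$. Because of the jump in $w_q$, we have $|w_q^\star(t)|\lesssim (1+|t|)^{-1}$. The higher-order kernel of \Cref{assumption3} gives $|K^\star(h t)-1|\lesssim \min(1,|ht|^{s+1})$. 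Applying Cauchy--Schwarz with the Sobolev bound from \Cref{assumption1},
\[
|\text{bias}|\lesssim \Big(\int_{|t|\gtrsim 1/h}(1+t^2)^{-1-s}\,\mathrm{d}t\Big)^{1/2}\sqrt{L}\asymp h^{s+1/2}.
\]
With $h_n\asymp(\log n)^{-1/2}$ this is $(\log n)^{-(2s+1)/4}$, which is exactly the target rate. The low-frequency part, where $h|t|\le 1$, is handled with the same weight and gives the same order. This Sobolev computation is precisely where the non-smoothness of $w_q$ determines the rate.

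\textbf{Step 4 (variance).} The stochastic term is
\[
(2\pi)^{-1}\int \overline{w_q^\star(t)}\,\frac{\widehat f_Y^\star(t)-f_Y^\star(t)}{f_\varepsilon^\star(t)}\,K^\star(h_nt)\,\mathrm{d}t .
\]
Its variance is at most $n^{-1}\big(\int_{|t|\le 1/h}|w_q^\star(t)|\,e^{\sigma^2t^2/2}\,\mathrm{d}t\big)^2\lesssim n^{-1}e^{\sigma^2/h^2}$. Choosing $h_n=c(\log n)^{-1/2}$ with $c$ large enough that $\sigma^2/c^2<1$ makes this polynomially small, $n^{-1+\sigma^2/c^2}$, hence negligible relative to the bias.

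\textbf{Main obstacle.} The hardest part is making the stochastic bound uniform in $q$ over $\Theta$. I would handle this with a chaining or bracketing argument. The process $q\mapsto \int w_q(\theta)\,\widehat g_n(\theta)\,\mathrm{d}\theta$ is Lipschitz in $q$ with constant $\|\widehat g_n\|_\infty\lesssim h^{-1}e^{\sigma^2/(2h^2)}$. So a polynomial-size grid over the compact $\Theta$, combined with a union bound and Chebyshev's inequality (or Bernstein's inequality, since the summands are bounded by the same factor), suffices, with only logarithmic losses that the polynomial slack absorbs. I would also check that the non-monotonicity of $S_n$ does not spoil the argmin. It does not: by convexity of $W$ near $q_0$, uniform closeness of $W_n$ and $W$ on a shrinking neighbourhood, with curvature bounded below there, yields the rate. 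This is done by comparing $W_n(q_0\pm\delta_n)$ with $W_n(q_0)$ for $\delta_n$ a large multiple of $(\log n)^{-(2s+1)/4}$.
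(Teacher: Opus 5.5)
Your proof is correct, but the argmin step takes a different route from the paper's.

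The paper never controls the score uniformly in $q$ at the rate $r_n^{-1}$. It works as follows:

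1. It uses Knight's identity to write $r_n^2\{W_n(q_0+r_n^{-1}\delta)-W_n(q_0)\}=-r_n\delta\Psi_n(q_0)+r_n^2W_{n2}(\delta)$.
2. It shows $r_n^2W_{n2}(\delta)\to\frac{\delta^2}{2}\phi(\frac{y-q_0}{\sigma})g(q_0)$ uniformly on compacts. The only uniform ingredient is $\sup_\theta|\widehat g(\theta)-g(\theta)|=o_p(1)$.
3. It applies the Hjort--Pollard nearness-of-argmins lemma to get the Bahadur-type expansion $r_n(\widehat q-q_0)=[\phi(\frac{y-q_0}{\sigma})g(q_0)]^{-1}r_n\Psi_n(q_0)+o_p(1)$.

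The bias--variance analysis, essentially your Steps 3--4 with a sharper variance bound $n^{-1}h_n^3e^{\sigma^2/h_n^2}$, is then needed only at the single point $q_0$.

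You instead use the first-order condition $S_n(\widehat q)=0$ and strict monotonicity of the population score to get $|\widehat q-q_0|\lesssim\sup_q|S_n(q)-S(q)|$. This requires uniform-in-$q$ bounds, which are cheap here for two reasons. The $(1+|t|)^{-1}$ decay of $w_q^\star$ is uniform over compact $q$, and the stochastic term is polynomially small once $c>\sigma$.

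What each route buys:
- Your route is more elementary and avoids needing convexity of $W_n$, which only holds locally where $\widehat g>0$.
- It also makes explicit the consistency step that the paper's local-convexity argument tacitly relies on.
- It extends transparently to uniformity in $y$.
- However, it delivers only the rate, not the linear expansion the paper later reuses, for example to get the centred term $A_{n1}=0$ in the marginal-coverage proof.

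Three minor points:
- In Step 1, the convex argmin theorem does not apply, since $W_n$ need not be convex. The standard uniform-convergence argument on compact $\Theta$, with a unique minimizer of the continuous $W$, does the job instead.
- Your grid in Step 4 has polynomially many points, so the losses are not ``only logarithmic.'' Chebyshev plus a union bound then needs $c^2>\tfrac{3}{2}\sigma^2$, whereas Hoeffding or Bernstein works for any $c>\sigma$. You can skip the grid altogether: bound the stochastic part by $\sup_{|t|\le h_n^{-1}}|\widehat f_Y^\star(t)-f_Y^\star(t)|\cdot\sup_q\int|w_q^\star(t)||K^\star(h_nt)|/f_\varepsilon^\star(t)\,dt$ and use the Cs\"org\H{o} bound the paper already uses for uniform consistency of $\widehat g$.
- Your closing paragraph is redundant given Step 2. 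If you keep it, it must compare increments, bounded by $\delta_n\sup_q|S_n(q)-S(q)|$, and use local convexity of $W_n$ rather than of $W$. Level closeness is not enough: the bias of $W_n$ itself is in general only $O(h_n^{s+3/2})$, which is not $o(\delta_n^2)=o(h_n^{2s+1})$ when $s>1/2$.
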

\vspace{-2.5em}
\noindent\Cref{quantile_estimator} establishes that our proposed kernel estimator for the posterior quantile is consistent and achieves the optimal rate specified in \Cref{rate_posterior_quantile}. The slow rate of convergence reflects the difficulty of the estimation problem, and is not an indicator that our estimator is deficient.

The key step in the proof of \Cref{quantile_estimator} is to derive the following asymptotic expansion
\begin{equation}
r_n\left(\widehat{q}-q_0\right)=-r_n\left[ \phi\left(\frac{y-q_0}{\sigma}\right)g(q_0) \right]^{-1}\frac{\partial W_n(q_0)}{\partial q}+o_p(1)
\label{equ:a_expansion}
\end{equation}
where $r_n$ denotes the rate of convergence, $\hat{q}=\hat{q}_G(\tau;y)$ is the estimated posterior quantile, and $q_0=q_{G}(\tau;y)$ its population counterpart. This expansion is analogous to the classical Bahadur-Kiefer representation of the sample quantiles (\citeNP{bahadur1966note,kiefer1967bahadur}), and more generally to the linearization of M-estimators (see e.g. Section 3.2.4 in \citeNP{van1996weak}).

However, establishing the asymptotic expansion in \eqref{equ:a_expansion} is delicate when the functional of interest is possibly irregular, as is often the case in ill-posed inverse problems. One can invoke stochastic equicontinuity and empirical process arguments to control the nonlinear remainder term,\footnote{See Assumption 3.5 in \citeA{chen2015sieve} for a formal statement of the required high-level conditions.  } but this approach has a direct implication on the rates of the functional estimation.\footnote{The issue arises for regular functionals in severely ill-posed inverse problems. Consider a two-step procedure for estimating the posterior mean, where a deconvolution kernel density estimator is used as a plug-in. In this context, stochastic equicontinuity arguments typically bound the nonlinear remainder term by requiring uniform control over the first-stage estimate, i.e. $\|\hat{g}-g\|_{\infty}$, which results in a suboptimal convergence rate for the posterior mean itself.   } We instead exploit the convexity of the sample criterion in (\ref{equ:estimator_quantile}) and apply the convexity lemma of \citeA{pollard1991asymptotics,hjort2011asymptotics}, thereby avoiding a separate stochastic equicontinuity argument. The key step is to establish a uniform quadratic approximation of the localized criterion on compact sets; the convexity lemma then implies that the sample argmin is close to the argmin of this approximation. Thus, the proof relies on the local uniform control already established for the deconvolution kernel density estimator, rather than additional high-level empirical process conditions.

 Having established the asymptotic expansion (\ref{equ:a_expansion}), the rate of convergence is governed by the leading term $\partial W_n(q_0)/\partial q$. The same Fourier-space heuristic described after (\ref{equ:obj_Wq2}) applies here: once the criterion function is linearized, this term behaves like a functional with a non-smooth weighting function. Its Fourier coefficient therefore decays too slowly to offset the exponential decay of $f_{\varepsilon}^{\star}(t)$ as $|t|\rightarrow\infty$, resulting in a logarithmic convergence rate. For comparison, Appendix \ref{appendix:posterior_mean} studies the analogous kernel-based estimator for the posterior mean. There the leading term is associated with a smooth weighting function whose Fourier coefficient does offset this decay, so with a suitable bandwidth choice the estimator achieves a nearly parametric rate.

\begin{remark}[Nonsmooth criterion function]
Our setting presents an ill-posed inverse problem characterized by a non-smooth criterion function. Instead of tackling the non-smoothness componentwise, we linearize $W_n(q)$ around the true parameter. This builds on the insight that the non-smoothness may ``average out'' in the limit (\citeNP{huber1967behavior,pollard1985new}; \citeNP[Chapter 7]{newey1994large}; \citeNP{chen2003estimation}). However, unlike regular models, the (non)-smoothness of the criterion function crucially affects the rate of convergence of our estimator.\footnote{ The reason is that the second-order expansion of our estimator includes the difference between deconvolution density and its sample counterpart. To ensure this term converges, one must choose the bandwidth at a logarithm rate, which renders the leading term and thus the estimator converges at a logarithmic rate. In a regular model such as quantile regression, the leading term in its asymptotic expansion converges at a parameter rate. Although the second-order term involves the difference between regression error density and its sample counterpart, it does not affect the parameter rate of the leading term. As a result, quantile regression estimator achieves the parametric rate. } Moreover, since our estimator already attains its optimal rate, smoothing the criterion -- as in \citeA{horowitz1998bootstrap} -- cannot improve convergence rates without additional assumptions.
\end{remark}

\begin{remark}[Alternative two-step estimators]
Another widely used estimation strategy in the empirical Bayes literature is called ``$g$-modeling'' (\citeNP{koenker2019gmodeling}). It proceeds in two steps. First, one obtains an estimator of the mixing distribution $\hat{G}$ by maximizing generalized or sieve maximum likelihood (\citeNP{jiang2009general,koenker2014convex, efron2016empirical}). Second, the posterior quantile $q_{\hat{G}}(\tau;y)$ is computed based on plug-in $\hat{G}$. We do not expect these two-step procedures to improve the rate of convergence due to \Cref{rate_posterior_quantile}.
\label{remark:two_step}
\end{remark}

\subsection{Conditional coverage of the feasible NP-EBCI}
\label{subsection:ebci}

We now examine the conditional coverage of the feasible NP-EBCI. Because the target is the conditional coverage probability $\operatorname{P}_{G}(\theta_i\in\mathrm{CI}_i \mid Y_i=y)$, we evaluate $\mathrm{CI}_i$ at the realized observation $Y_i$, while estimating its endpoints from the remaining sample $\mathbf{Y}_{-i}:=\{Y_j\}_{j\neq i}$. This leave-one-out construction separates the observation being conditioned on from the data used to learn the prior.

Let $\hat{q}_G(\tau;Y_i)$ denote the leave-one-out analogue of the posterior quantile estimator in \Cref{equ:estimator_quantile}. We define the feasible NP-EBCI as
\begin{equation}
\text{CI}_i^{\text{NP}}=\Big{[}\hat{q}_{G}(\alpha/2;Y_i)\ , \ \hat{q}_{G}(1-\alpha/2;Y_i)\Big{]}.
\label{equ:ebci}
\end{equation}

To evaluate accuracy, we consider the error in conditional coverage probability, that is the difference between the actual and nominal coverage probabilities. The next theorem characterizes the best possible rate at which this error can vanish uniformly over the function class $\mathcal{G}(s,L)$.

\begin{theorem}
Suppose \Cref{assumption_g_regularity,assumption1,assumption_density,assumption2,assumption4} hold. Let $\mathscr{C}$ denote the class of all two-sided intervals constructed based on the data $\mathbf{Y}_{-i}=\{Y_j\}_{j\neq i}$, that is the collection of $[\ell(\mathbf{Y}_{-i}),U(\mathbf{Y}_{-i})]$ with $\ell(\mathbf{Y}_{-i})\leq U(\mathbf{Y}_{-i})$ almost surely, then we have that
\begin{equation*}
\inf_{{\mathrm{CI}_i}\in\mathscr{C}} \ \sup_{g\in\mathcal{G}(s,L)} \  \operatorname{E}_{\mathbf{Y}_{-i}}\Big{[}\left(\operatorname{P}_G\left(\theta_i \in \mathrm{CI}_i(\mathbf{Y}_{-i})\ | \ Y_i=y\right)-(1-\alpha) \right)^2 \Big{]}\geq \mathrm{const}\cdot (\log n)^{-(2s+1)/2}
\end{equation*}
for $n$ sufficiently large.
\label{ebci_coverage_lower_bound}
\end{theorem}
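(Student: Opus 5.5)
The plan is a two-point (Le Cam) lower bound that mirrors the construction behind \Cref{rate_posterior_quantile}, with the squared quantile distance replaced by the squared coverage error. Take two mixing densities $g_0,g_1\in\mathcal{G}(s,L)$ satisfying \Cref{assumption_g_regularity,assumption_density,assumption2,assumption4}. Let $g_1=g_0+\delta_n\psi_n$, where $\psi_n$ is a perturbation whose Fourier transform is supported at frequencies $|t|\asymp b_n$ with $b_n\asymp(\log n)^{1/2}$. Because of this frequency localisation, the induced marginals $f_{Y,0}=g_0*\phi_\sigma$ and $f_{Y,1}=g_1*\phi_\sigma$ satisfy $\chi^2(f_{Y,1},f_{Y,0})\lesssim \delta_n^2 e^{-\sigma^2 b_n^2}$. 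Choosing $b_n^2=c\log n$ with $c\sigma^2>1$ makes $n\chi^2\to 0$. The Sobolev constraint then allows $\delta_n\asymp b_n^{-s}$. Since $\psi_n$ is concentrated at scale $b_n^{-1}$ near $q_0$, the change in the posterior CDF at $q_0$, namely $\int^{q_0}\phi(\tfrac{y-\theta}{\sigma})\delta_n\psi_n(\theta)d\theta/f_Y(y)$, is of order $\delta_n b_n^{-1/2}\asymp(\log n)^{-(2s+1)/4}$. This gives the rate in the statement. It is precisely the non-smooth step weight $\mathbf 1\{\theta\le q_0\}$, with Fourier transform decaying only like $1/t$, that makes this shift non-negligible. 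I reuse this construction from \Cref{rate_posterior_quantile} and take it as given.

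\textbf{Reduction to a testing problem.} Denote by $F_j(u)=\operatorname{P}_{G_j}(\theta\le u\mid Y=y)$ the two posterior CDFs. I arrange the perturbation so that only one tail moves. Concretely, $\psi_n$ is localised near the lower quantile $q_0^{L}=q_{G_0}(\alpha/2;y)$, it is orthogonal to $\phi(\tfrac{y-\cdot}{\sigma})$ so that $f_Y(y)$ is unchanged, and it is negligible near the upper endpoint. Then $F_1(u)-F_0(u)=\Delta_n$ uniformly on a neighbourhood of $q_0^L$, with $|\Delta_n|\asymp(\log n)^{-(2s+1)/4}$, while $F_1$ and $F_0$ essentially agree near the upper endpoint. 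For any interval $[\ell,U]$ I write
\[
C_j(\ell,U)=F_j(U)-F_j(\ell^-).
\]

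\textbf{Separation of coverage errors.} The key step is to show that no single interval can be accurate for both priors:
\[
|C_0(\ell,U)-(1-\alpha)|+|C_1(\ell,U)-(1-\alpha)|\ge c\Delta_n \quad\text{for every pair } \ell\le U.
\]
A natural first attempt is to write $C_1-C_0=(F_1-F_0)(U)-(F_1-F_0)(\ell)$. This fails in general, because coverage is a two-dimensional choice: the analyst could shift both endpoints to compensate. Single-tail localisation does not fix this on its own. If $F_1-F_0$ is a constant $\Delta_n$ on a neighbourhood containing both endpoints, the difference vanishes. To block compensation, I need $F_1-F_0$ to vary between the two tails. So I design $\psi_n$ to be mass-shifting across the posterior: it moves posterior mass of order $\Delta_n$ from around $q_0^L$ to around $q_0^U$. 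Then $(F_1-F_0)(u)\approx\Delta_n$ for $u$ between the two quantiles and $\approx0$ outside both.

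\textbf{Main obstacle.} This structural step is where I expect the difficulty to lie. Intervals whose endpoints both sit between the quantiles, or both sit outside them, can still match. I would exclude them by a case analysis. Either the interval is far from the oracle interval under both priors, in which case the coverage error is of order one because the posterior density is bounded below by \Cref{assumption_density}, or the difference in coverage is at least of order $\Delta_n$. If the case analysis fails, I would fall back on a three-point or Assouad-type construction that perturbs the two tails independently. With four hypotheses, at least one of them must be missed by order $\Delta_n$.

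\textbf{Conclusion.} Le Cam's inequality then gives
\[
\max_j \operatorname{E}_j\big[(C_j-(1-\alpha))^2\big]\ge \tfrac{c}{4}\Delta_n^2\,\big(1-\mathrm{TV}(P_0^{\otimes (n-1)},P_1^{\otimes (n-1)})\big)\gtrsim(\log n)^{-(2s+1)/2}.
\]
The total variation factor stays bounded away from $1$ because $n\chi^2\to0$, which completes the argument.
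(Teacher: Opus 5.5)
You correctly identify endpoint compensation as the crux. But the separation inequality you need is false for every pair admissible in a two-point scheme, so no case analysis or choice of $\psi_n$ can rescue it.

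Let $H=g_1-g_0$. The requirement $n\chi^2=O(1)$ gives $\int|H^\star(t)|^2e^{-\sigma^2t^2}\,dt\lesssim n^{-1}$. Hence the part of $H^\star$ on $|t|\le\tau_n:=\sigma^{-1}\sqrt{(1-\kappa)\log n}$ has $L^2$ norm $O(n^{-\kappa/2})$, while membership in $\mathcal{G}(s,L)$ bounds the remaining part by $\tau_n^{-s}$. Let $q_0(\cdot)$ and $\pi_0$ be the $G_0$ posterior quantile function and density at $y$, and set $\delta=F_1-F_0$. Let $E(a)=\delta(q_0(a+1-\alpha))-\delta(q_0(a))$ be the $G_1$-coverage error of the $G_0$-exact interval $I(a)=[q_0(a),q_0(a+1-\alpha)]$, $a\in[0,\alpha]$. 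Substituting $b=F_0(x)$ gives
\[
\int_0^\alpha E(a)^2\,da\;\le\;2\int\delta(x)^2\,\pi_0(x)\,dx\;\le\;2\|\pi_0\|_\infty\,\|\delta\|_2^2 .
\]
Up to polynomially small terms, $\delta$ is the antiderivative of $\phi(\frac{y-\cdot}{\sigma})H/f_Y(y)$, whose spectrum lies in $|t|\gtrsim\tau_n$. So $\|\delta\|_2\lesssim\tau_n^{-1}\tau_n^{-s}\asymp(\log n)^{-(s+1)/2}$. Hence $|E(a^*)|\lesssim(\log n)^{-(s+1)/2}=o(\Delta_n)$ for some $a^*$. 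The data-free interval $I(a^*)$ is then exact under $G_0$ and $o(\Delta_n)$-accurate under $G_1$, so the two-point risk is $o(\Delta_n^2)$.

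The pointwise shift $\Delta_n$ behind \Cref{rate_posterior_quantile} is genuine, but it lives on a set of length $O((\log n)^{-1/2})$, and the analyst can move the tail split off it. Your specific design fails for two further reasons:
\begin{itemize}
\item Making $\delta\approx\pm\Delta_n$ on all of $[q_0^L,q_0^U]$ means moving mass across an $O(1)$ distance. That is a low-frequency perturbation, so $n\chi^2\to\infty$.
\item Even granting it, $E(a)$ takes opposite signs for $a<\alpha/2$ and $a>\alpha/2$. So an interval next to the equal-tailed one is accurate under both priors.
\end{itemize}
The four-hypothesis fallback also fails. With all perturbations within $O((\log n)^{-1/2})$ of $q_0^L$ and $q_0^U$, the interval $I(\alpha/4)$ keeps both endpoints a fixed distance away and is accurate under all four.

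The paper takes a different route to escape exactly this problem. It applies Fano's inequality over $M\asymp\sqrt{\log n}$ alternatives $g_m=g_0+\bar H_T+\widetilde H_{m,T}$, whose coverage errors $D_m(\cdot;y)$ are meant to change sign only in $m$-specific windows, so that the accuracy sets $\mathcal{A}_m$ are pairwise disjoint. Importing that construction will not repair your argument, however. The display above applies verbatim to any pair $G_m,G_{m'}$: take $G_m$ as the base, and note that $g_{m'}-g_m$ has Fourier support in $T\le|t|\le 2T$. So every two alternatives share an interval that is $O(T^{-s-1})$-accurate for both, which contradicts $\mathcal{A}_m\cap\mathcal{A}_{m'}=\varnothing$.

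Consistently with this, Step 3 of the paper needs $C_2(A)=4A+10C<c_1$. But $|D_m|\le 2\sup_x|F_{G_m}(x\mid y)-F_{G_0}(x\mid y)|$ forces $c_1\le 2C$, where $C$ is the constant in the uniform posterior-CDF bound of Step 2. So that condition can never hold.

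More generally, take any family whose marginals are pairwise $\chi^2$-close at level $O(1/n)$, and fix one member as the base. The randomized rule that draws $a$ uniformly from $[0,\alpha]$ and reports the base's exact interval $I(a)$ has risk $O((\log n)^{-(s+1)})$ under every member. Le Cam, Fano and Assouad bounds also hold for randomized rules, so none of them can deliver $(\log n)^{-(2s+1)/2}$ from such a family. Establishing the stated rate would need a genuinely different idea, and as things stand it should be regarded as unproven.
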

\Cref{ebci_coverage_lower_bound} provides a minimax lower bound on the errors in conditional coverage probabilities. The infimum is taken over all two-sided intervals constructed from the data, and the supremum is over all mixing densities satisfying \Cref{assumption1}. Thus, if conditional coverage is the target criterion, no leave-one-out two-sided interval can have a coverage error that vanishes faster than a logarithmic rate. In this sense, \Cref{ebci_coverage_lower_bound} reveals the fundamental limits on the accuracy of nonparametric empirical Bayes inference if the conditional coverage is the target criterion.

The proof of \Cref{ebci_coverage_lower_bound} relies on Fano's inequality (\citeNP{scarlett2019introductory}). Unlike the lower bound construction for mean square error, where alternative data distributions are separated by the induced target parameters, the separation here is based on the conditional coverage probabilities. We construct many alternative mixing densities around a baseline such that the induced observed laws of $\mathbf{Y}_{-i}$ are statistically indistinguishable, while the conditional coverage errors are bounded away from zero at the desired rate. Hence, if an interval had conditional coverage error uniformly smaller than this rate, it could decode which alternative generated the data. But Fano's inequality precludes this possibility because the alternatives are statistically close.

\begin{remark}
The minimax rates of errors in coverage probabilities of \textit{frequentist} confidence intervals have been considered in \citeA{hall1995uniform,calonico2022coverage}. Their notion of coverage is different from ours. In their setting, the parameter is fixed and coverage refers to the sampling probability that the interval contains this fixed parameter. In our setting, the object of inference is the random, unobservable individual effect $\theta_i$, and \Cref{ebci_coverage_lower_bound} concerns about the conditional coverage probability.
\end{remark}

We next show that the feasible NP-EBCI (\ref{equ:ebci}) achieves the rate in the lower bound established in \Cref{ebci_coverage_lower_bound}. Because we construct the feasible interval by replacing infeasible posterior quantiles with their estimators, our task is to translate the theoretical results of posterior quantile estimation (\Cref{quantile_estimator}) into a statement about conditional coverage. To do so, we impose one additional regularity condition on the posterior density at the relevant quantiles.

\begin{assumption}
We assume that the posterior density (\ref{equ:posterior_density}) evaluated at the posterior quantile is finite, i.e. $\pi(q(\tau;y)|Y=y)<\infty$.
\label{assumption_posterior_density}
\end{assumption}
\begin{theorem}
Let \Cref{assumption_g_regularity,assumption1,assumption_density,assumption2,assumption3,assumption4,assumption_posterior_density} hold. Setting the bandwidth $h_n\asymp(\log n)^{-1/2}$, the proposed individual interval (\ref{equ:ebci}) is a $100(1-\alpha)\%$ asymptotic conditional empirical Bayes confidence interval:
\begin{equation*}
\operatorname{P}\Big{(}\theta_i\in\mathrm{CI}_i^{\mathrm{NP}} \ \big{|} \ Y_i\Big{)}=1-\alpha+O_p\left((\log n)^{-(2s+1)/4}\right)
\end{equation*}
that holds uniformly over $g\in\mathcal{G}(s,L)$.
\label{ebci_condition_coverage}
\end{theorem}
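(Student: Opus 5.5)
Because the endpoints of $\mathrm{CI}_i^{\mathrm{NP}}$ are built from the leave-one-out sample $\mathbf{Y}_{-i}$, which is independent of $(Y_i,\theta_i)$, I will condition on $\mathbf{Y}_{-i}$ and treat the estimated endpoints as fixed numbers. Write $\Pi(u\mid y)=\operatorname{P}_G(\theta_i\le u\mid Y_i=y)$ for the posterior distribution function. Under \Cref{assumption2} the posterior has a continuous density, so it has no atoms, and the conditional coverage equals
\begin{equation*}
\Pi\big(\hat q_G(1-\alpha/2;y)\mid y\big)-\Pi\big(\hat q_G(\alpha/2;y)\mid y\big)
\end{equation*}
on the event $\hat q_G(\alpha/2;y)\le \hat q_G(1-\alpha/2;y)$. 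The oracle endpoints are the true posterior quantiles, and by continuity $\Pi(q_0(\tau)\mid y)=\tau$. Hence the coverage error is
\begin{equation*}
\big[\Pi(\hat q(1-\alpha/2))-\Pi(q_0(1-\alpha/2))\big]-\big[\Pi(\hat q(\alpha/2))-\Pi(q_0(\alpha/2))\big].
\end{equation*}
The argument then reduces to controlling each bracket by a constant times $|\hat q(\tau)-q_0(\tau)|$.

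For the Lipschitz step I will use the mean value theorem: $|\Pi(\hat q)-\Pi(q_0)|=\pi(\tilde q\mid y)\,|\hat q-q_0|$ for some $\tilde q$ between $\hat q$ and $q_0$. To bound $\pi(\tilde q\mid y)$ I will rely on global bounds rather than only the value at $q_0$ given by \Cref{assumption_posterior_density}. By \Cref{assumption2} and the Sobolev bound, $\|g\|_\infty\le(2\pi)^{-1}\|g^\star\|_1$ uniformly over the class, and $\phi\le(2\pi)^{-1/2}$. So $\pi(u\mid y)\le \|g\|_\infty/(\sqrt{2\pi}\,\sigma f_Y(y))$ for every $u$. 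For the denominator, \Cref{assumption_density} together with continuity of $g$ shows that $f_Y(y)$ is bounded away from zero, and \Cref{assumption_posterior_density} gives the same control locally. If uniformity over $\mathcal{G}(s,L)$ needs an explicit lower bound on $f_Y(y)$, I would read \Cref{assumption_density,assumption_posterior_density} as uniform bounds over the class, which is how the statement's ``uniformly over $g$'' must be interpreted anyway.

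Next I will apply \Cref{quantile_estimator} at both levels $\tau=\alpha/2$ and $\tau=1-\alpha/2$. It applies to the leave-one-out estimator because it uses $n-1$ i.i.d.\ observations, and $\log(n-1)\asymp\log n$. This gives $|\hat q(\tau)-q_0(\tau)|=O_p((\log n)^{-(2s+1)/4})$, and combining it with the Lipschitz bound yields the claimed rate. The event that the endpoints cross has probability tending to zero, since $q_0(\alpha/2)<q_0(1-\alpha/2)$ strictly by \Cref{assumption_density}. On that vanishing event I define the interval by sorting the endpoints, which leaves the $O_p$ statement unaffected.

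The main obstacle is uniformity over $g\in\mathcal{G}(s,L)$. \Cref{quantile_estimator} is stated pointwise in $g$, so I must recheck that its constants depend on $g$ only through $L$, $\|g^\star\|_1$, the compact parameter set, and the lower bound on the posterior density at $q_0$. Specifically, I will verify that the bias term $h_n^{s+1/2}$-type bound uses only the Sobolev radius $L$. I will also verify that the variance term, of order $n^{-1/2}\exp(\sigma^2/h_n^2)$ times polynomial factors, is uniform over the class. Finally, the convexity-lemma step needs the curvature $\phi((y-q_0)/\sigma)g(q_0)$ bounded below uniformly. With these uniform constants, the $O_p$ bound holds with a Markov-type tail bound uniform in $g$, which completes the proof.
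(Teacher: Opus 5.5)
Your proposal is correct and follows essentially the same route as the paper: a mean-value expansion of the posterior distribution function around the true posterior quantiles, combined with the rate from \Cref{quantile_estimator} at $\tau=\alpha/2$ and $\tau=1-\alpha/2$. The differences are minor refinements: you bound $\pi(\tilde q\mid y)$ globally by $\|g\|_\infty/(\sqrt{2\pi}\,\sigma f_Y(y))$ rather than arguing $\pi(q^*\mid y)\to\pi(q_0\mid y)$ in probability via consistency and continuity, and you explicitly address endpoint crossing and uniformity over $\mathcal{G}(s,L)$, both of which the paper's proof leaves implicit (you correctly note that uniformity requires uniform versions of the pointwise assumptions).
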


\Cref{ebci_condition_coverage} shows that the conditional coverage probability of the feasible NP-EBCI (\ref{equ:ebci}) asymptotically fully adapts to the nominal $1-\alpha$, but differs at a logarithmic rate. The logarithmic conditional coverage error comes from that the endpoints themselves can only be estimated at a logarithmic rate. Nevertheless, \Cref{ebci_coverage_lower_bound,ebci_condition_coverage} together establish the optimal rate at which the worst-case conditional coverage probability converges to the nominal level, and the feasible NP-EBCI (\ref{equ:ebci}) attains this rate.

\begin{remark}
The feasible NP-EBCI (\ref{equ:ebci}) replaces the infeasible endpoints with their empirical analogs. A limitation is that it does not take into account the sampling uncertainty from estimating posterior quantile. \citeA{laird1987empirical,carlin1990approaches} propose adjustments to incorporate this source of uncertainty in parametric EBCIs within the normal-normal model. Their methods can be adapted to our setting where the prior is fully nonparametric and may improve the finite-sample performance of coverage probabilities. However, such adjustments cannot improve the rate of errors in conditional coverage probabilities due to \Cref{ebci_coverage_lower_bound}.
\label{remark_adjustement}
\end{remark}

\subsection{Marginal coverage of the feasible NP-EBCI}
\label{subsection:marginal_coverage}

We turn to the analysis of the marginal coverage probability of the feasible NP-EBCI (\ref{equ:ebci}). As this coverage statement is defined with respect to the joint distribution of both $\theta$ and $\{Y_i\}_{i=1}^n$, our analysis requires additional regularity conditions that hold uniformly for $y\in\mathcal{Y}\subset\mathbb{R}$.

\begin{assumption}
The normalized posterior density evaluated at $q(\tau;y)$, i.e. $\phi(y-q(\tau;y)/\sigma)g(q(\tau;y))$ is bounded below uniformly over $y\in\mathcal{Y}$. That is, there exists $\underline{\pi}>0$ such that $\inf_{y\in\mathcal{Y}}|\phi(y-q(\tau;y)/\sigma)g(q(\tau;y))|\geq \underline{\pi}$.
\label{assumption_uniform_bound_posterior_density}
\end{assumption}

\begin{assumption}
The expected posterior density evaluated at posterior quantile $q(\tau;y)$ is finite, that is $\int  \pi\left(q(\tau;y)|Y=y\right)\mathrm{d}F_{Y}(y)<\infty.$
\label{assumption_expected_density}
\end{assumption}

\Cref{assumption_uniform_bound_posterior_density} is a standard condition used to establish $L_2$ or uniform convergence in the quantile regression literature (see, e.g., Assumption 3 in \citeA{horowitz2005nonparametric}, Condition D1 in \citeA{belloni2011}). \Cref{assumption_expected_density} is weaker than the uniform version of \Cref{assumption_posterior_density}, requiring only that the expectation be finite.

\begin{theorem}
Let \Cref{assumption_g_regularity,assumption1,assumption2,assumption3,assumption4,assumption_uniform_bound_posterior_density,assumption_expected_density} hold. Setting the bandwidth $h_n\asymp(\log n)^{-1/2}$, the marginal coverage probability of individual confidence interval (\ref{equ:ebci}) satisfies that
\begin{equation*}
\operatorname{P}\Big{(} \theta_i\in\mathrm{CI}_i^{\mathrm{NP}}  \Big{)}=1-\alpha+O\left((\log n)^{-(2s+1)/4}\right)
\end{equation*}
that holds uniformly over $g\in\mathcal{G}(s,L)$, and the probability measure $\operatorname{P}$ is with respect to the joint distribution of ${\bm \theta}=(\theta_1,...,\theta_n)$ and the data $\mathbf{Y}=(Y_1,...,Y_n)$.
\label{ebci_coverage}
\end{theorem}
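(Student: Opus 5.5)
We will reduce marginal coverage to an integrated version of the conditional coverage error, and control that integral using a version of the quantile-estimation bound from \Cref{quantile_estimator} that holds uniformly in $y$. Write $\Pi(\cdot\mid y)$ for the posterior CDF of $\theta_i$ given $Y_i=y$, and $q_\tau(y)=q_G(\tau;y)$.

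\textbf{Step 1: express the coverage error through the estimated endpoints.} Because the endpoints $\hat q_G(\tau;Y_i)$ are built from the leave-one-out sample $\mathbf{Y}_{-i}$, which is independent of $(Y_i,\theta_i)$, we can condition on $\mathbf{Y}_{-i}$ and $Y_i=y$. This gives
\begin{equation*}
\operatorname{P}\left(\theta_i\in\mathrm{CI}_i^{\mathrm{NP}}\right)-(1-\alpha)=\operatorname{E}\Big[\big\{\Pi(\hat q_{1-\alpha/2}(Y_i)\mid Y_i)-\Pi(q_{1-\alpha/2}(Y_i)\mid Y_i)\big\}-\big\{\Pi(\hat q_{\alpha/2}(Y_i)\mid Y_i)-\Pi(q_{\alpha/2}(Y_i)\mid Y_i)\big\}\Big],
\end{equation*}
with an extra term $-\operatorname{P}(\hat q_{\alpha/2}>\hat q_{1-\alpha/2})$-type corrections if the interval is empty. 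These corrections are of the same order by the monotonicity of the CDF, so we can ignore them. The mean value theorem then bounds each bracket by
\begin{equation*}
\sup_{u\text{ between }q,\hat q}\pi(u\mid y)\,\lvert\hat q_\tau(y)-q_\tau(y)\rvert .
\end{equation*}
We split the posterior density as $\pi(q_\tau(y)\mid y)$ plus a local modulus. By \Cref{assumption2}, $g$ is uniformly continuous, and so is $\phi$. Hence the modulus term is $o(1)$ times $\lvert\hat q-q\rvert$ once it is divided by $f_Y(y)$, and we handle it on the same event.

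\textbf{Step 2: get a uniform-in-$y$, $L_1$ version of \Cref{quantile_estimator}.} The goal is
\begin{equation*}
\sup_{y\in\mathcal{Y}}\operatorname{E}_{\mathbf{Y}_{-i}}\lvert\hat q_\tau(y)-q_\tau(y)\rvert=O\big((\log n)^{-(2s+1)/4}\big),
\end{equation*}
uniformly over $\mathcal{G}(s,L)$. To obtain it, we revisit the expansion (\ref{equ:a_expansion}). \Cref{assumption_uniform_bound_posterior_density} makes the inverse Hessian $[\phi((y-q_0)/\sigma)g(q_0)]^{-1}$ bounded by $\underline{\pi}^{-1}$ uniformly in $y$. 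The leading term $\partial W_n(q_0)/\partial q$ splits into two parts:
\begin{itemize}
\item a bias part, equal to the $h_n$-smoothed deconvolved weighting integral, bounded by $C h_n^{s+1/2}$ through the Sobolev condition, the higher-order kernel in \Cref{assumption3}, and Cauchy--Schwarz in Fourier space;
\item a variance part with second moment of order $n^{-1}\int_{|t|\le 1/h_n}|\overline{M^\star}|^2/|f_\varepsilon^\star|^2\,dt\lesssim n^{-1}e^{\sigma^2/h_n^2}$.
\end{itemize}
Both bounds hold uniformly in $y$, because $\overline{M^\star(q,t;y)}$ depends on $y$ only through a phase and a Gaussian factor, and $q$ ranges over the compact set of \Cref{assumption4}. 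With $h_n=c(\log n)^{-1/2}$ and $c$ chosen so that $e^{\sigma^2/h_n^2}\le n^{1/2}$, the variance part is polynomially small and the bias part dominates, at order $(\log n)^{-(2s+1)/4}$. Convexity (Pollard's convexity lemma, as in the proof of \Cref{quantile_estimator}) converts the bound on the leading term into a bound on $\hat q-q_0$. Here we must keep track of constants so that the remainder is uniformly negligible.

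\textbf{Step 3: integrate over $y$.} Combining Steps 1 and 2, the coverage error is bounded by
\begin{equation*}
2\int \pi(q_\tau(y)\mid y)\,\operatorname{E}\lvert\hat q_\tau(y)-q_\tau(y)\rvert\,dF_Y(y)+\text{(remainder)},
\end{equation*}
which is at most $C(\log n)^{-(2s+1)/4}\int\pi(q_\tau(y)\mid y)\,dF_Y(y)$. This integral is finite by \Cref{assumption_expected_density}. Values of $y$ outside $\mathcal{Y}$, for example in the tails, are handled by taking $\mathcal{Y}$ large enough that the remaining contribution, which is at most $2F_Y(\mathcal{Y}^c)$, is negligible. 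Alternatively, we read $\mathcal{Y}$ as the support, which is the intended reading.

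\textbf{Main obstacle.} The main obstacle is Step 2. \Cref{quantile_estimator} gives only a pointwise $O_p$ rate, while Step 3 needs a bound on the expectation that is uniform in $y$. Two things are required. First, the linearization remainder must be controlled in expectation and not merely in probability. We would do this by truncating: on the high-probability event where $\sup_q\lvert W_n-\operatorname{E}W_n\rvert$ is polynomially small, the convexity argument applies; off that event we use the trivial bound $\lvert\Pi(\cdot)-\Pi(\cdot)\rvert\le 1$ together with a Bernstein/Hoeffding tail bound. The tail bound is available because the summands of $W_n$ are bounded by $C h_n^{-1}e^{\sigma^2/(2h_n^2)}$. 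Second, this exponential inequality must be made uniform over $(q,y)$ in a compact set. We would do this with a finite grid and Lipschitz continuity of $W_n$ in $(q,y)$.
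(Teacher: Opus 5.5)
Your argument is correct in substance but takes a different route from the paper's.

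The paper keeps the \emph{signed} linearization. After the same leave-one-out and mean-value step, it substitutes the $y$-uniform expansion $\hat q-q\approx-[\phi g]^{-1}(\Psi_{n1,-i}+\Psi_{n2})$ and splits the error into $A_{n1}+A_{n2}$. The stochastic piece $A_{n1}$ vanishes \emph{exactly}, because $\Psi_{n1,-i}(y)$ is centered and $\mathbf{Y}_{-i}\independent Y_i$. Only the integrated bias remains: $|A_{n2}|\le\underline{\pi}^{-1}\sup_y|\Psi_{n2}(y)|\int\pi(q(y)\mid y)\,dF_Y(y)=O(h_n^{s+1/2})$.

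You put absolute values inside the expectation and therefore need an $L_1$ rate, $\sup_y\operatorname{E}|\hat q_\tau(y)-q_\tau(y)|=O(r_n^{-1})$. You give up the cancellation and rely instead on the variance part being polynomially small. That holds for every $h_n=c(\log n)^{-1/2}$ with $c>\sigma$, which the linearization needs anyway, so your choice $c\ge\sqrt2\sigma$ is stronger than required.

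The paper's route yields the structural fact that the marginal error is pure integrated bias. Yours is more rigorous at the step the paper skips. The paper turns $o_p$ statements (the uniform expansion and $\sup_y|\pi(q^*(y)\mid y)-\pi(q(y)\mid y)|=o_p(1)$) into a deterministic $o(r_n^{-1})$ bound on an expectation, with no uniform-integrability argument. Your truncation (a good event, with the trivial bound $1$ off it) makes that step explicit, and it needs only an $O(r_n^{-1})$ remainder rather than $o(r_n^{-1})$.

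Two repairs are needed.

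First, your first option for $y\notin\mathcal{Y}$ does not give the rate: with $\mathcal{Y}$ fixed, the leftover $2F_Y(\mathcal{Y}^c)$ is a constant, not $O(r_n^{-1})$. You must take $\mathcal{Y}$ to be the support, which is what the paper does implicitly. Once you do, dividing the modulus by $f_Y(y)$ fails, because $f_Y$ is not bounded below on unbounded support. Use \Cref{assumption_uniform_bound_posterior_density} instead:
$\pi(u\mid y)\le\pi(q_\tau(y)\mid y)\{1+\omega(|u-q_\tau(y)|)/\underline{\pi}\}$,
where $\omega$ is a $y$-uniform modulus of continuity of $u\mapsto\phi((y-u)/\sigma)g(u)$. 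Every bracket is then at most $C\pi(q_\tau(y)\mid y)|\hat q_\tau(y)-q_\tau(y)|$, with no good event required, and \Cref{assumption_expected_density} finishes the argument.

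Second, a grid over compact $(q,y)$ does not cover unbounded $y$, and Hoeffding is more than you need, since the bad event only has to have probability $o(r_n^{-1})$. Two elementary bounds suffice:
\begin{itemize}
\item Markov's inequality applied to $\sup_\theta|\hat g-\operatorname{E}\hat g|\le(2\pi)^{-1}\int|\hat f_Y^\star-f_Y^\star||K^\star(h_nt)|/|f_\varepsilon^\star(t)|\,dt$, whose mean is $O(n^{-1/2}h_n^{-1}e^{\sigma^2/(2h_n^2)})$;
\item Chebyshev's inequality applied to $r_n\Psi_{n1}$.
\end{itemize}
Both give polynomially small bad-event probabilities, and the first bound is automatically uniform in $(q,y)$.

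Incidentally, the $y$-uniformity of your bias and variance bounds comes from the $y$-uniform $C/(1+|t|)$ bound in \Cref{propo:fourier_decay}. The transform $\overline{\psi^\star}$ depends on $y$ through the cut point $q_\tau(y)-y$ as well, not only through a phase.
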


For the oracle NP-EBCI, marginal coverage follows immediately from conditional coverage by iterated expectations. \Cref{ebci_coverage}, however, concerns the feasible NP-EBCI whose endpoints are estimated. To obtain the marginal coverage result for this feasible interval, we need to control the errors in conditional coverage uniformly over $y\in\mathcal{Y}$ so they can be averaged over the distribution of $Y_i$. The averaging does not improve the convergence rate, since the dominant error still comes from estimating posterior quantiles, which suffers from the severe ill-posedness. Hence the errors in the coverage probability of the feasible NP-EBCI (\ref{equ:ebci}) vanish at the same logarithmic rate under both marginal and conditional coverage.

\Cref{ebci_coverage} gives the accuracy assessment about the feasible NP-EBCI under a fully nonparametric prior. The resulting slow logarithmic rate again arises from the intrinsic difficulty of nonparametric estimation in ill-posed inverse problems. It is useful to compare this result with the two alternatives discussed above: Cox-Morris parametric EBCIs and robust EBCIs of \citeA{armstrong2022robust}.

We first compare \Cref{ebci_coverage} with the marginal coverage result for Cox-Morris parametric EBCIs. Under a correctly specified Gaussian prior, Cox-Morris EBCIs can achieve a parametric rate for errors in marginal coverage probabilities; see e.g. Theorem 1 in \citeA{yoshimori2014second}. This advantage, however, relies on correct specification of the normal-normal model. When the true prior is non-Gaussian, especially in low signal-to-noise ratio regimes, Cox-Morris EBCIs can substantially undercover. The feasible NP-EBCI avoids this source of parametric misspecification by leaving the prior fully nonparametric. The price is that its marginal coverage error vanishes only at a logarithmic rate.

We then compare our result with AKP robust EBCIs. The oracle NP-EBCI targets exact marginal coverage under the true, point-identified prior. AKP instead calibrate their robust EBCI to guarantee marginal coverage over a moment class of priors, and their Theorem 4.1 implies that, under exchangeability, the feasible interval satisfies  $\lim\inf_{n\rightarrow\infty} \operatorname{P}(\theta_i\in \mathrm{CI}_i^{\mathrm{AKP}})\geq 1-\alpha $ (\citeNP{armstrong2022robust}). Because this calibration relies only on finite-dimensional moments, rather than on the full prior distribution, the statistical problem is simpler than posterior quantile estimation. Furthermore, their Theorem 4.1 requires only the convergence of the relevant moments, while our \Cref{ebci_coverage} imposes a stronger continuity condition on the prior distribution. The gain of AKP is robustness over the moment class; the cost is conservatism, since their EBCIs need not be asymptotically exact. The NP-EBCI instead targets exact marginal coverage under the true point-identified prior, but pays the logarithmic price of the severely ill-posed inverse problem.

Taken together, \Cref{ebci_coverage} clarifies the trade-offs behind the feasible NP-EBCI. Relative to the Cox-Morris parametric EBCI, it sacrifices the faster coverage error rates achievable under a correctly specified Gaussian prior in exchange for robustness to prior misspecification. Relative to the AKP robust EBCI, it gives up worst-case moment-class robustness in exchange for asymptotically exact marginal coverage under a point-identified prior. \Cref{section:simulation} examines how these trade-offs play out in finite samples.

\section{Practical Implementation under heteroskedasticity }
\label{section:practical_implementation}

This section returns to the baseline heteroskedastic normal mean model in (\ref{normal_mean}) and describes how to implement the proposed NP-EBCI when units have known but unequal variances $\sigma_i^2$. \Cref{section:theoretical_result} imposed the homoskedastic restriction $\sigma_i^2\equiv \sigma^2$ to focus on the optimal rate results. In empirical applications, however, heteroskedasticity is the natural default. Accordingly, in this section, we outline the practical implementation of our approach and discuss bandwidth selection.

\subsection{Baseline implementation  }
\label{subsection:baseline_implementation}

We now describe the baseline implementation of the proposed NP-EBCI under heteroskedastic Gaussian noise. Let $\mathcal{D}_i=(Y_i,\sigma_i)$. The oracle posterior quantile can be written as
\begin{equation*}
q_G(\tau;\mathcal{D}_i)=\underset{q}{\arg\min} \ \int \overline{M^{\star}(q,t;\mathcal{D}_i)}g^{\star}(t)\mathrm{d}t
\end{equation*}
where $\overline{M^{\star}(q,t;\mathcal{D}_i)}=\int \exp(-it\theta)\rho_{\tau}(\theta-q)\phi\left( (Y_i-\theta)/\sigma_i\right)\mathrm{d}\theta.$ The Fourier transform of the mixing density $g^{\star}(t)$ can be identified from the following population identity
\begin{equation}
\frac{1}{n-1}\sum_{j\neq i} \operatorname{E}\left[ \exp(it Y_j) \right]=g^{\star}(t)\left[\frac{1}{n-1}\sum_{j\neq i} \exp\left(-\frac{1}{2}\sigma_j^2t^2\right)\right].
\label{equ:average_characteristic}
\end{equation}
Here, the leave-one-out form is used for estimation rather than identification: it allows us to estimate the posterior quantile from the other units, while $\mathcal{D}_i$ enters only through the evaluation step.

The representation (\ref{equ:average_characteristic}) leads to a natural plug-in estimator. Define
\begin{equation*}
\hat{f}_Y^{\star}(t)=\frac{1}{n-1}\sum_{j\neq i} \exp(itY_j), \quad \bar{f}^{\star}_{\varepsilon}(t)=\frac{1}{n-1}\sum_{j\neq i} \exp\left(-\frac{1}{2}\sigma_j^2t^2\right).
\end{equation*}
For a given bandwidth $h_n$, we estimate the posterior quantile by
\begin{equation*}
\widehat{q}_G(\tau;\mathcal{D}_i)=\underset{q}{\arg\min} \ \int \ \cfrac{\overline{M^{\star}(q,t;\mathcal{D}_i)}}{\bar{f}^{\star}_{\varepsilon}(t)}\left(\hat{f}_Y^{\star}(t)K^{\star}(h_nt)\right)\ \mathrm{d}t
\end{equation*}
As in the homoskedastic case, the kernel $K^{\star}(h_nt)$ regularizes the Fourier integral. In practice, we use the compactly supported flat-top spectral kernel (\citeNP{politis1999multivariate}) but incorporate a cosine taper (\citeNP{harris1978use}) to smooth the transition from the flat-top region to zero and avoid a hard spectral cutoff. We  discuss the choice of the bandwidth parameter $h_n$ in \Cref{subsection:bandwidth_choice}.

Using the estimated posterior quantiles, we construct the NP-EBCI
\begin{equation}
\text{CI}_i=\Big{[}\widehat{q}_{G}(\alpha/2;\mathcal{D}_i)\ , \ \widehat{q}_{G}(1-\alpha/2;\mathcal{D}_i)\Big{]}, \quad i=1,...,n.
\label{equ:individual_CI_i}
\end{equation}

So far, we have presented the baseline implementation for heteroskedastic variances treated as fixed and known. The same construction also covers the case when $\sigma_i^2$ is random across units and independent with $\theta_i$. To see this, under $\theta_i\independent \sigma_i$, we have $\operatorname{E}[\exp(it Y_i)|\sigma_i]=g^{\star}(t)\exp\left(-\frac{1}{2}\sigma_i^2t^2\right)$, so averaging over the realized units gives the same population identity in (\ref{equ:average_characteristic}). In this sense, the baseline implementation can also be viewed as conditioning on observed heteroskedastic variances. Appendix \ref{subsection:precision_dependene} discusses the extension to the case in which $\theta_i$ and $\sigma_i^2$ may be correlated.

\subsection{Empirical choice of the bandwidth parameter}
\label{subsection:bandwidth_choice}

We choose a common bandwidth $h_n $ to minimize average interval length subject to an average coverage requirement.\footnote{One could, in principle, choose a separate bandwidth for each $\mathrm{CI}_i$ by minimizing the error of conditional coverage probability of that interval. This selection method is natural but computationally impractical because it requires solving $n$ distinct bandwidth selection problems.} Since this criterion depends on the unknown mixing distribution $G$, we implement it by a plug-in rule that replaces $G$ with its nonparametric maximum likelihood estimator (NPMLE) $\hat{G}(\cdot)$ (\citeNP{soloff2025multivariate}). In this respect, our bandwidth selection method is close in spirit to plug-in or rule-of-thumb bandwidth choice in nonparametric kernel density estimation (see e.g. \citeNP[Section 2.7]{li2007nonparametric}). However, unlike these methods relying on preliminary pilot values, our procedure uses the plug-in NPMLE $\hat{G}$, which itself is a consistent estimator of the mixing distribution $G$ (\citeNP{kiefer1956consistency,pfanzagl1988consistency}) and requires \textit{no} additional smoothing tuning parameter (\citeNP{polyanskiy2020self}).

It is important to distinguish the role of the NPMLE in our implementation from the two-step plug-in approach discussed in \Cref{remark:two_step}. Our feasible NP-EBCI is based on the one-step kernel estimator for posterior quantiles. The NPMLE is only used for bandwidth selection. We do not suggest estimating posterior quantiles with the NPMLE plug-in. In simulations not reported here, the NP-EBCI based on these plug-in estimates can substantially undercover, which aligns with the simulation evidence in \citeA{koenker2020empirical}. One plausible explanation is that the NPMLE has discrete support (\citeNP{lindsay1983geometry}), so plugging it directly into the posterior quantile can yield an overly concentrated posterior distribution and hence intervals that are too short.

The bandwidth selection procedure is detailed as follows:

\noindent\textbf{Step 1.} Partition the units $\{1,\ldots,n\}$ into $V$ disjoint folds $\mathcal{I}_1,\dots,\mathcal{I}_V$, and let $\mathcal{I}_v^{c}=\{1,\ldots,n\}\setminus \mathcal I_v$. For each fold $v=1,...,V$, we compute the leave-$v$-out nonparametric maximum likelihood estimator for the mixing distribution $\hat{G}^{(-v)}(\cdot)$.

\noindent\textbf{Step 2.} Let $\mathcal{H}_n$ be a grid of candidate bandwidths. For each $h\in\mathcal{H}_n$, each fold $v=1,\dots,V$ and each held-out unit $i\in\mathcal{I}_v$, we estimate the lower and upper endpoints $\hat{q}_h^{(-v)}(\alpha/2;\mathcal{D}_i)$ and $\hat{q}_h^{(-v)}(1-\alpha/2;\mathcal{D}_i)$ on the fold $\mathcal{I}_v^c$, but evaluate at the held-out unit $\mathcal{D}_i$.

\noindent\textbf{Step 3.} For each $h\in\mathcal{H}_n$, we compute the average conditional coverage and average length of intervals under the out-of-fold prior estimate $\hat{G}^{(-v)}(\cdot)$:
\begin{align*}
\widehat{C}(h)&=\frac{1}{n}\sum_{v=1}^V\sum_{i\in\mathcal{I}_v} \operatorname{P}_{\hat{G}^{-v}}\left( \theta_i \in \left[ \hat{q}_h^{(-v)}(\alpha/2;\mathcal{D}_i)\ , \ \hat{q}_h^{(-v)}(1-\alpha/2;\mathcal{D}_i) \right] \  \Big{|} \  \mathcal{D}_i \right) \\
\widehat{L}(h)&=\frac{1}{n}\sum_{v=1}^V\sum_{i\in\mathcal{I}_v}  \left[ \hat{q}_h^{(-v)}(1-\alpha/2;\mathcal{D}_i)-\hat{q}_h^{(-v)}(\alpha/2;\mathcal{D}_i) \right]
\end{align*}

\noindent\textbf{Step 4.} We set the bandwidth that yields the shortest average-length interval among those that satisfy the nominal coverage constraint:
\[
\hat h
=
\underset{h\in\mathcal H_n}{\arg\min}\ \widehat L(h)
\qquad\text{s.t.}\qquad
\widehat C(h)\ge 1-\alpha.
\]
Appendix \ref{section:computation_npebci} provides further strategies that can substantially reduce the computational cost of this bandwidth selection procedure.

\section{Monte Carlo Simulations}
\label{section:simulation}

Both NP-EBCI and the AKP robust EBCI move beyond the parametric Gaussian prior, but use the information about the prior distribution in different ways. NP-EBCI targets the oracle posterior interval under a point-identified nonparametric prior and thus exploits the entire distribution, but suffers from the severe ill-posedness. The AKP robust EBCI, by contrast, calibrate intervals to guarantee worst-case coverage uniform over a moment class of priors, so the resulting intervals may be conservative when the true prior is not least favorable. This section uses Monte Carlo simulations to examine how these differences appear in finite samples. We compare the feasible NP-EBCI with AKP (using the second moment) and the Cox-Morris parametric EBCI, and summarize performance by average marginal coverage and average length reduction relative to the naive $z$-interval.

\subsection{Monte Carlo Design}

We generate simulated data from the heteroskedastic normal means model (\ref{normal_mean}) with known variances:
\begin{equation*}
Y_i|{\bm \theta} \sim N(\theta_i,\sigma_i^2), \ \ \ \theta_i\iid G(\cdot), \ \ \ \text{for} \ i=1,...,n.
\end{equation*}
We consider sample sizes $n\in\{100,200,500,1000\}$ and target signal-to-noise ratios
\begin{equation*}
\mathrm{SNR}=\cfrac{\var(\theta_i)}{\frac{1}{n}\sum_{i=1}^n \sigma_i^2} \ \in\{0.1,0.2,0.4,1\}.
\end{equation*}
For each design cell $(G,n,\mathrm{SNR})$, we first draw $\sigma_i^2 \iid \mathrm{Lognormal}(0,1)$, $i=1,...,n$, and rescale it to attain the target SNR, and then keep the resulting variance vector fixed across all Monte Carlo replications. We then generate 500 independent replications by redrawing only $\{Y_i,\theta_i\}_{i=1}^n$.

We consider eight designs for the mixing distribution $G$:

\noindent\textbf{Design 1.} Gaussian: $\theta_i\sim N(0,1)$

\noindent\textbf{Design 2.} Laplace: $\theta_i\sim \text{Laplace}(0,1/\sqrt{2})$

\noindent\textbf{Design 3.} Student-$t$: $\theta_i\sim \sqrt{0.2}t_{2.5}$

\noindent\textbf{Design 4.} Bimodal Gaussian mixture: $\theta_i\sim 0.5N(-5/\sqrt{29},4/29)+0.5N(5/\sqrt{29},4/29)$

\noindent\textbf{Design 5.} Spike-and-slab: $\theta_i\sim 0.9 N(0,0.05^2)+0.1N(0,3.159^2)$

\noindent\textbf{Design 6.} 3-point distribution that places masses $(0.25,0.5,0.25)$ on $(-\sqrt{2},0,\sqrt{2})$

\noindent\textbf{Design 7.} The least favorable distribution for AKP robust EBCI

\noindent\textbf{Design 8.} The least favorable distribution for Cox-Morris parametric EBCI

All designs for the mixing distribution $G$ are normalized to have unit variance, so differences across designs are driven by the shape of $G$. Design 1 is the Gaussian benchmark under which the Cox-Morris normal-normal model is correctly specified. Designs 2 and 3 introduce heavier tails and hence represent progressively larger departures from the Gaussian benchmark. Designs 1 and 3 have supersmooth mixing densities in the Fourier sense, so they are relatively favorable cases for nonparametric deconvolution (\citeNP{lacour2006rates}), and thus for feasible NP-EBCI. Design 4 is included to examine how prior multimodality affects the posterior quantiles that form NP-EBCI. Design 5 is a nearly sparse prior with most mass concentrated near zero and a small fraction of large effects.

Designs 6-8 are discrete priors, thus violate the absolute continuity condition on $G$ (\Cref{assumption_g_regularity,assumption1,assumption2}), which underlies our theoretical analysis of NP-EBCI. Design 7 is the least favorable prior for the AKP robust EBCI (Design 5, \citeNP{armstrong2022robust}). Among priors with the same second moment, it attains the smallest marginal coverage and therefore comes closest to the worst-case coverage bound. Design 8 is the least favorable prior for Cox-Morris parametric EBCI (Design 6, \citeNP{armstrong2022robust}), under which the interval attains its lowest marginal coverage.

For each EBCI method, we focus on the average marginal coverage and the average length reduction relative to the naive $z$-interval, both averaged across units $i=1,\dots,n$ and across all Monte Carlo replications. The nominal coverage level is $1-\alpha=95\%$ throughout.

\subsection{Results and Discussion}

\begin{figure}[h]
\begin{center}
\includegraphics[scale=0.65]{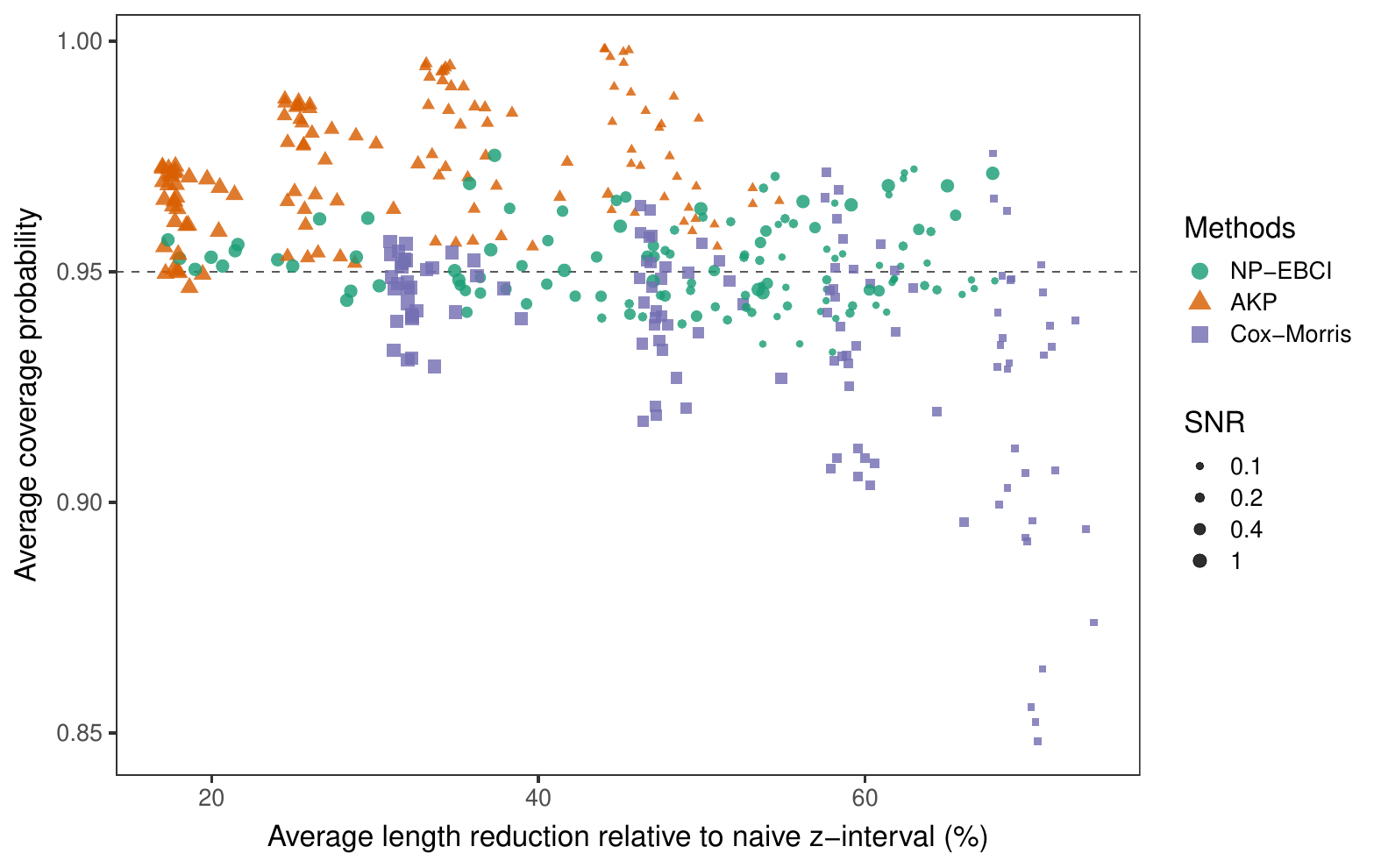}
\end{center}
\caption{\label{fig:coverage_length}\emph{Average coverage probability and length reduction across simulation designs.}}
\vspace{1em}
\begin{minipage}{\textwidth}
\footnotesize
\noindent\textit{Note.} Each point corresponds to one DGP $\times$ sample size $\times$ SNR cell. The dashed horizontal line is the nominal coverage level $1-\alpha=95\%$.
\end{minipage}
\end{figure}

\Cref{fig:coverage_length} summarizes the full simulation grid; each point corresponds to one $\text{DGP}\times \text{sample size}\times \text{SNR}$ cell.\footnote{Monte Carlo standard errors are small: across the three EBCI procedures, they are at most 0.17 percentage points for average coverage and 0.45 percentage points for average length reduction.} First, all three EBCI procedures deliver substantial average length reductions relative to the naive $z$-interval, reflecting the classical empirical Bayes gain from borrowing strength across related units (\citeNP{morris1983parametric}). Second, \Cref{fig:coverage_length} reveals a clear ordering among the three EBCI procedures in terms of average coverage and length. AKP (second moment) is conservative: many of its points lie above the nominal $95\%$ coverage line, but its length reductions are comparatively modest. Cox-Morris parametric EBCI often delivers the largest length reductions, but many of these gains come with substantial undercoverage. NP-EBCI achieves large length reductions while keeping empirical marginal coverage much closer to the nominal level.

\Cref{fig:coverage_length} further shows that the average coverage and length of EBCIs depend strongly on the signal-to-noise ratio (SNR). Lower-SNR cells tend to exhibit larger average length reductions for all three EBCIs, which is consistent with the fact that empirical Bayes pooling is most valuable when noise is large relative to signal (\citeNP{morris1983parametric}). These same cells also generate the sharpest differences in coverage across methods. At SNR=$0.1$, AKP tends to lie above the nominal $95\%$ line, Cox-Morris falls below it in many cells, and NP-EBCI remains much closer to the nominal $95\%$ line. This pattern motivates a closer examination of the low-SNR regime in \Cref{tab:main_hard_regime_cov_lenred}.

\begin{table}[h]
\centering
\caption{Average coverage probabilities and average length reductions for SNR $=0.1$}
\label{tab:main_hard_regime_cov_lenred}
\begin{threeparttable}
\footnotesize
\setlength{\tabcolsep}{6pt}
\renewcommand{\arraystretch}{1.15}
\begin{tabular}{lcccccc}
\toprule
& \multicolumn{3}{c}{Average coverage} & \multicolumn{3}{c}{Average length reduction (\%)} \\
\cmidrule(lr){2-4}\cmidrule(lr){5-7}
DGP & NP-EBCI & AKP & Cox--Morris & NP-EBCI & AKP & Cox--Morris \\
\midrule
\multicolumn{7}{l}{\textit{Panel A. } $n=100$, SNR $=0.1$} \\[2pt]
Gaussian      & 0.947 & 0.964 & \uc{0.848} & 53.7 & 49.2 & 70.6 \\[2pt]
Laplace       & 0.940 & 0.961 & \uc{0.864} & 54.6 & 49.6 & 70.9 \\[2pt]
$t_{2.5}$     & 0.953 & 0.965 & \uc{0.874} & 58.1 & 54.7 & 74.0 \\[2pt]
Bimodal       & 0.960 & 0.971 & \uc{0.856} & 54.7 & 48.5 & 70.2 \\[2pt]
Spike--slab   & 0.954 & 0.965 & 0.939 & 58.6 & 53.1 & 72.9 \\[2pt]
3-point       & 0.953 & 0.961 & \uc{0.852} & 54.9 & 48.9 & 70.4 \\[2pt]
Least favorable for AKP    & 0.947 & 0.960 & 0.934 & 55.1 & 50.8 & 71.4 \\[2pt]
Least favorable for Cox-Morris    & 0.934 & 0.955 & \uc{0.907} & 53.7 & 51.0 & 71.6 \\[2pt]
\midrule
\multicolumn{7}{l}{\textit{Panel B. } $n=1000$, SNR $=0.1$} \\[2pt]
Gaussian      & 0.949 & 0.997 & 0.941 & 61.8 & 44.4 & 68.1 \\[2pt]
Laplace       & 0.941 & 0.990 & 0.934 & 61.3 & 44.6 & 68.3 \\[2pt]
$t_{2.5}$     & 0.946 & 0.988 & 0.952 & 66.7 & 48.3 & 70.8 \\[2pt]
Bimodal       & 0.972 & 0.998 & 0.976 & 63.0 & 44.0 & 67.8 \\[2pt]
Spike--slab   & 0.948 & 0.973 & 0.949 & 67.9 & 45.7 & 69.0 \\[2pt]
3-point       & 0.970 & 0.998 & 0.966 & 62.3 & 44.1 & 67.9 \\[2pt]
Least favorable for AKP    & 0.952 & 0.963 & \uc{0.929} & 63.8 & 44.5 & 68.1 \\[2pt]
Least favorable for Cox-Morris    & 0.941 & 0.983 & \uc{0.900} & 57.3 & 44.5 & 68.2 \\
\bottomrule
\end{tabular}
\vspace{0.4em}
\begin{tablenotes}[flushleft]
\footnotesize
\item \textit{Note.} Nominal average confidence level is $1-\alpha=95\%$. Average CI length reduction is measured relative to the naive $z$-interval. For each DGP, average coverage and average length reduction refer to averages across units $i=1,...,n$ and across 500 Monte Carlo repetitions. The cells shaded with gray colors highlight the undercoverage.
\end{tablenotes}
\end{threeparttable}
\end{table}

\Cref{tab:main_hard_regime_cov_lenred} focuses on the regime $\text{SNR}=0.1$. Panel A reports the results for small samples $n=100$. NP-EBCI maintains average coverage near the nominal $95\%$ level across all designs (0.934 to 0.960), while reducing the average length by $53\%$ to $58.6\%$ relative to the naive $z$-interval. Although Cox-Morris EBCIs achieve larger average length reductions ($70.2\%$ to $74.0\%$), these gains come with substantial undercoverage in most cells, even in the Gaussian design, coverage falls to 0.848. One possible explanation is that when the prior variance $A$ is small relative to $\sigma_i^2$, the linear shrinkage factor $w_i=A/(A+\sigma_i^2)$ is also correspondingly small. Consequently, if the estimate $\hat{A}$ is biased downward in small samples, both the estimated posterior means and variances shrink aggressively toward zero, producing artificially narrow intervals (see e.g. \citeA{li2010adjusted}).

Panel B of \Cref{tab:main_hard_regime_cov_lenred} reports the results for large samples $n=1000$. NP-EBCI continues to maintain average coverage close to the nominal $95\%$ level (0.941 to 0.972), while reducing average length by $57.3\%$ to $67.9\%$. By contrast, AKP (second moment) substantially overcovers in nearly every design (0.973 to 0.998) except its own least-favorable design, and yields modest length reductions ($44.0\%$-$48.3\%$). While Cox-Morris EBCI produces the shortest intervals (reductions of $67.8\%$ to $70.8\%$) and improves coverage compared to Panel A, it still undercovers markedly in some specific least-favorable designs. Furthermore, as \citeA{armstrong2022robust} emphasize, the Cox-Morris EBCI's empirical Bayes coverage can drop to 0.74 for the nominal $95\%$ level in the low-SNR limit.

\begin{figure}[h]
\begin{center}
\includegraphics[scale=0.43]{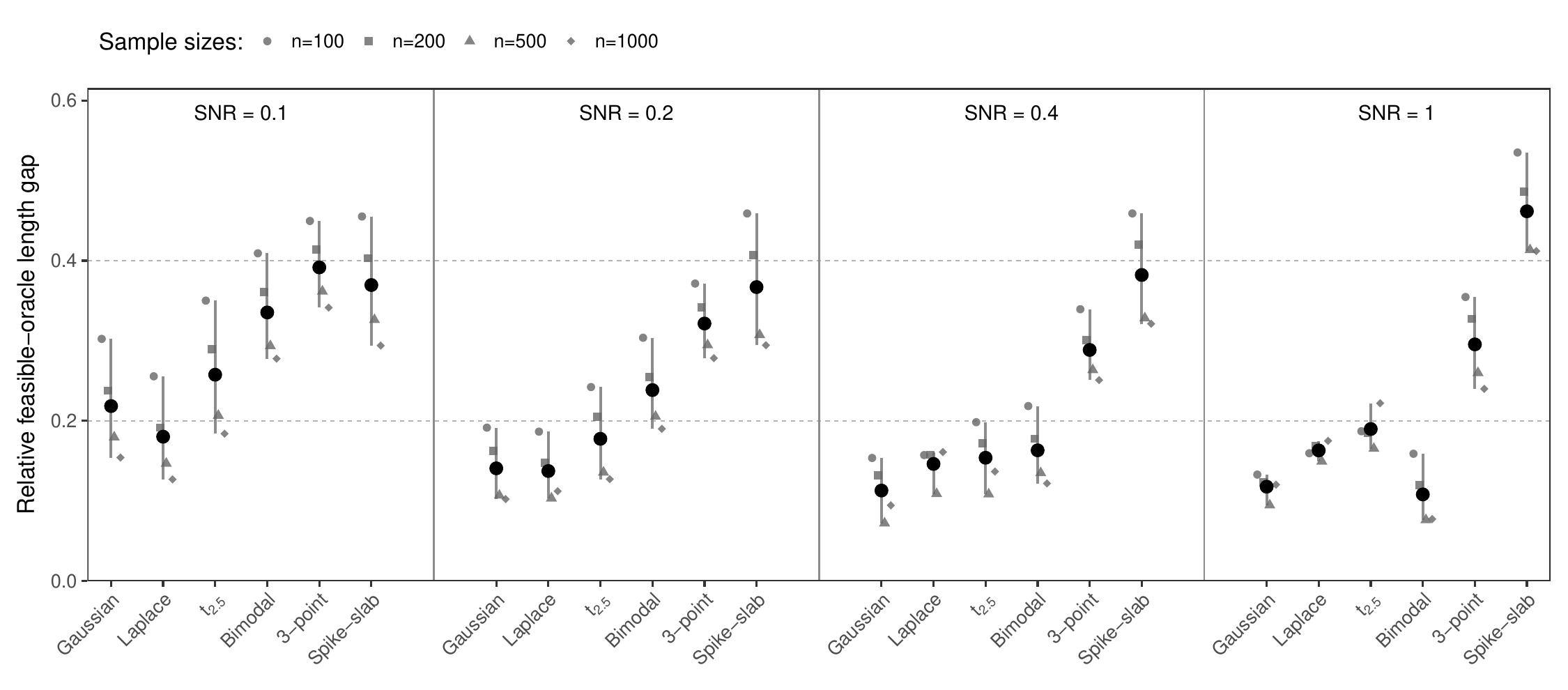}
\end{center}
\caption{\label{fig:oracle_length_gap}\emph{Relative feasible-oracle length gap of NP-EBCI across prior designs and SNRs.}}
\vspace{1em}
\begin{minipage}{\textwidth}
\footnotesize
\noindent\textit{Note.} The $y$-axis ``oracle length gap'' is defined in (\ref{equ:def_oracle_length_gap}). Gray markers correspond to $n=100,200,500,1000$. The black circle is the average across these sample sizes, and the vertical segment gives the range across them.
\end{minipage}
\end{figure}

The strong finite-sample performance of feasible NP-EBCI in average marginal coverage may seem at odds with the logarithmic ill-posedness established in \Cref{ebci_coverage}. It is useful to notice that the bandwidth choice in \Cref{subsection:bandwidth_choice} is designed to minimize average interval length subject to an average coverage requirement, rather than to minimize the mean squared errors of estimating posterior quantiles. Such coverage-oriented regularization may keep average marginal coverage close to the nominal level, with the finite-sample cost of ill-posedness showing up primarily as excess length relative to oracle NP-EBCI rather than as severe undercoverage. For this reason, we compare the average length of feasible NP-EBCI with that of its oracle counterpart in \Cref{fig:oracle_length_gap}.

In \Cref{fig:oracle_length_gap}, we define ``relative feasible-oracle length gap'' as
\begin{equation}
\left[\overline{\text{Length}}\left(\mathrm{CI}_i^{\mathrm{NP}}\right)-\overline{\text{Length}}\left(\mathrm{CI}_i^{\mathrm{NP}^{*}} \right)\right]\Big{/}\ \overline{\text{Length}}\left(\mathrm{CI}_i^{\mathrm{NP}}\right)
\label{equ:def_oracle_length_gap}
\end{equation}
where $\overline{\text{Length}}(\mathrm{CI}_i^{\mathrm{NP}})$ and $\overline{\text{Length}}(\mathrm{CI}_i^{\mathrm{NP}^{*}} )$ denote, respectively, the average lengths of the feasible NP-EBCI in \Cref{equ:individual_CI_i} and its oracle counterpart in \Cref{npci_infeasible}.


\Cref{fig:oracle_length_gap} first shows that the relative feasible-oracle length gap of NP-EBCI changes only marginally with sample size $n$. This aligns with the logarithmic ill-posedness of posterior quantile estimation being driven primarily by the structure of the inverse problem rather than by the sample size. Notably, this feasible-oracle gap varies significantly more across prior designs than across SNRs. In every SNR panel, the gap is small for the smooth priors but largest for the discrete and spike-and-slab priors. Although the gap varies slightly with SNR within a given design, the ordering across designs is quite stable. This pattern suggests that the relevant ill-posedness is governed mainly by the shape of $G$: smooth priors are comparatively favorable, while priors with sharp local structure, such as point masses or narrow spikes, make posterior quantiles harder to estimate.

\section{Conclusion}
\label{section:conclusion}

This paper proposes empirical Bayes confidence intervals based on posterior quantiles for unobservable individual effects under a fully nonparametric prior, and establishes their theoretical properties. On the one hand, the feasible NP-EBCI achieves asymptotically exact conditional and marginal coverage. On the other hand, this flexibility comes at an unavoidable statistical cost: because posterior quantiles inherit the severe ill-posedness of nonparametric deconvolution, the relevant errors in quantile estimation and in coverage probabilities vanish at logarithmic rates. Thus the accuracy of nonparametric empirical Bayes inference is fundamentally limited by the difficulty of underlying inverse problem. Despite these slow asymptotic rates, the simulations show that the feasible NP-EBCI remains much closer to nominal coverage, even in small samples, is less conservative than the AKP robust EBCI, and delivers substantial length reductions relative to the naive $z$-interval.

While this paper focuses on empirical Bayes confidence intervals, an important direction for future work is to study other empirical Bayes decision problems that involve posterior quantiles. As \citeA{walters2024empirical} emphasizes, posterior quantiles can be more appropriate when the objective is to identify units in the tails of the distribution while controlling the risk of costly selection errors. This concern arises naturally in empirical Bayes compound decision problems such as ranking and selection; see e.g. \citeA{gu2023invidious}. An interesting question for future research is whether our minimax results can be extended to characterize the statistical difficulty of such tail-oriented empirical Bayes decision problems.

A useful feature of the NP-EBCI construction is that it relies on point identification of the distribution of individual effect $G$, rather than on the normal means structure alone. This suggests another direction for future work: extending posterior-quantile-based empirical Bayes inference to a broader class of latent-variable models in which $G$ remains point identified (\citeNP{schennach2020mismeasured}). Such extensions would help clarify which parts of our theory are specific to Gaussian deconvolution and which reflect a broader difficulty of posterior-quantile-based empirical Bayes inference.

\bibliography{empirical_bayes_ref}

\newpage
\begin{appendices}
\section{Proofs of main results}

The additional notation will be used throughout the Appendix: the symbol $*$ denotes the convolution of two functions, defined by \((f * g)(x) = \int_{\mathbb{R}} f(x - y)g(y)\,\mathrm{d}y\), provided that \( f, g \in L^1(\mathbb{R}) \). For sequences \( a_n \) and \( b_n \), we write \( a_n \lesssim b_n \) if there exists a constant \( C > 0 \) such that \( a_n \leq C b_n \) for all sufficiently large \( n \), and \( a_n \gtrsim b_n \) if \( a_n \geq C^{-1} b_n \) for all sufficiently large \( n \). We write \( a_n \asymp b_n \) when both \( a_n \lesssim b_n \) and \( a_n \gtrsim b_n \) hold, meaning that \( a_n \) and \( b_n \) are of the same order up to constants.

\subsection{Analytical properties of the weighting function for posterior quantile }

The posterior quantile $q_0:=q(\tau;y)$ is the solution to the moment condition
\begin{equation*}
\int\left( \tau-\mathbf{1}\{\theta\leq q_0\}\right)\phi\left(\frac{y-\theta}{\sigma}\right)\mathrm{d}G(\theta)=0.
\end{equation*}
In this subsection, we characterize the decay behavior of the Fourier coefficients of the weighting function $\psi(q_0,\theta;y)=\left( \tau-\mathbf{1}\{\theta\leq q_0\}\right)\phi\left(\frac{y-\theta}{\sigma}\right)$:
\begin{equation}
\overline{\psi^{\star}(q_0,t;y)}:=\int \exp(-it\theta)\left( \tau-\mathbf{1}\{\theta\leq q_0\}\right)\phi\left(\frac{y-\theta}{\sigma}\right)\mathrm{d}\theta
\label{equ:weighting_function}
\end{equation}
Its decay behavior plays a crucial role in our analysis of convergence rates of the posterior quantile.


\begin{proposition}
The Fourier coefficients of the weighting function $\overline{\psi^{\star}(q_0(y),t;y)}$ has the following asymptotic expansion as $|t|\rightarrow\infty$.
\begin{equation*}
\overline{\psi^{\star}(q_0(y),t;y)}=\sigma \tau \exp(-ity)\exp\left(-\frac{1}{2}t^2\sigma^2\right)+\frac{\exp(-itq_0(y))}{it}\phi\left(\frac{y-q_0(y)}{\sigma}\right)+o(t^{-1})
\end{equation*}
where the leading term is of order $t^{-1}$, and the remainder term $o(t^{-1})$ holds uniformly over $y \in \mathbb{R}$.
\label{propo:fourier_decay}
\end{proposition}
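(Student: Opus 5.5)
The plan is to compute $\overline{\psi^{\star}(q_0,t;y)}$ essentially in closed form. Write the weighting function as $\tau\,\phi\left(\frac{y-\theta}{\sigma}\right)-\mathbf{1}\{\theta\leq q_0\}\,\phi\left(\frac{y-\theta}{\sigma}\right)$ and treat the two pieces separately.

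\emph{First piece.} This is an exact Gaussian Fourier transform. Substituting $\theta=y+\sigma u$ gives
\begin{equation*}
\tau\int \exp(-it\theta)\phi\left(\frac{y-\theta}{\sigma}\right)\mathrm{d}\theta=\tau\sigma\exp(-ity)\exp\left(-\tfrac{1}{2}\sigma^2t^2\right),
\end{equation*}
which is exactly the first term of the proposition. It involves no error term and is trivially uniform in $y$.

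\emph{Second piece: integration by parts.} Write $\varphi_y(\theta):=\phi\left(\frac{y-\theta}{\sigma}\right)$ and use $\exp(-it\theta)=\frac{\mathrm{d}}{\mathrm{d}\theta}\left[\exp(-it\theta)/(-it)\right]$ for $t\neq 0$. The boundary term at $-\infty$ vanishes because $\varphi_y$ decays. Hence
\begin{equation*}
-\int_{-\infty}^{q_0}\exp(-it\theta)\varphi_y(\theta)\mathrm{d}\theta=\frac{\exp(-itq_0)}{it}\varphi_y(q_0)-\frac{1}{it}\int_{-\infty}^{q_0}\exp(-it\theta)\varphi_y'(\theta)\mathrm{d}\theta .
\end{equation*}
The first term on the right is the claimed $t^{-1}$ term, since $\varphi_y(q_0)=\phi\left(\frac{y-q_0(y)}{\sigma}\right)$. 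It remains to show that the last term is $o(t^{-1})$ uniformly in $y$, i.e.\ that $R(t,y):=\int_{-\infty}^{q_0(y)}\exp(-it\theta)\varphi_y'(\theta)\mathrm{d}\theta\to 0$ uniformly in $y$.

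\emph{Uniform control of the remainder.} This is the one step needing care, because Riemann--Lebesgue alone gives only pointwise decay, whereas the statement asks for uniformity in $y$ even though the upper limit $q_0(y)$ moves with $y$. I would integrate by parts once more in the same way:
\begin{equation*}
R(t,y)=-\frac{\exp(-itq_0)}{it}\varphi_y'(q_0)+\frac{1}{it}\int_{-\infty}^{q_0}\exp(-it\theta)\varphi_y''(\theta)\mathrm{d}\theta .
\end{equation*}
Consequently
\begin{equation*}
|R(t,y)|\leq |t|^{-1}\left(\sup_\theta|\varphi_y'(\theta)|+\int|\varphi_y''(\theta)|\mathrm{d}\theta\right).
\end{equation*}
By translation invariance, $\varphi_y'(\theta)=-\sigma^{-1}\phi'\left(\frac{y-\theta}{\sigma}\right)$ and $\varphi_y''(\theta)=\sigma^{-2}\phi''\left(\frac{y-\theta}{\sigma}\right)$. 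Therefore the sup-norm of $\varphi_y'$ and the $L^1$-norm of $\varphi_y''$ equal $\sigma^{-1}\|\phi'\|_\infty$ and $\sigma^{-1}\|\phi''\|_1$. These are finite constants independent of $y$ and of the location $q_0(y)$. This yields $R(t,y)=O(|t|^{-1})$ uniformly, so the remainder in the second piece is $O(t^{-2})=o(t^{-1})$ uniformly over $y\in\mathbb{R}$.

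\emph{Order of the leading term.} Finally, the Gaussian term is $O(\exp(-\sigma^2t^2/2))=o(t^{-1})$. Hence the dominant term is exactly the $\frac{\exp(-itq_0(y))}{it}\phi\left(\frac{y-q_0(y)}{\sigma}\right)$ piece. It is of exact order $t^{-1}$ whenever $\phi\left(\frac{y-q_0}{\sigma}\right)>0$, which always holds since $\phi>0$. This is the source of the slow decay: it is produced entirely by the jump of the indicator at $q_0$.
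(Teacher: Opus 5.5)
Your proof is correct and rests on the same idea as the paper's: integrate by parts, so that the jump of the indicator at $q_0$ produces the $\frac{\exp(-itq_0)}{it}\phi\bigl(\frac{y-q_0}{\sigma}\bigr)$ term.

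The difference is where you stop. The paper integrates by parts indefinitely, writing the transform as an infinite Hermite series $\sum_{k\ge 1}(it)^{-k}\sigma^{-(k-1)}H_{k-1}(z)\phi(z)$, and bounds each $k\ge 2$ term by $C_{k-1}|t|^{-k}$. That series is only asymptotic; for instance $H_{2m}(0)=(-1)^m(2m-1)!!$, so it diverges at $z=0$. Because the constants $C_{k}$ grow, termwise bounds alone do not control the sum.

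You instead stop after two integrations by parts and bound the explicit integral remainder by $\sigma^{-1}\bigl(\|\phi'\|_\infty+\|\phi''\|_1\bigr)|t|^{-2}$. This gives a rigorous $O(t^{-2})$ remainder that is uniform in $y$, and indeed in any location $q_0$. That is exactly the form $|r(t;y)|\le Ct^{-2}$ the paper later uses in the proof of the minimax lower bound.
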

\vspace{-2em}
\begin{proof}
We carry out infinite integration by parts for $\overline{\psi^{\star}(q_0,t;y)}$ since $\phi(\cdot)$ is infinitely smooth:
\begin{align*}
\overline{\psi^{\star}(q_0,t;y)}&=\sigma\tau \exp(-ity)\exp\left(-\frac{1}{2}t^2\sigma^2\right) +\exp(-itq_0(y))\sum_{k=1}^{\infty}\frac{(-1)^{k-1}}{(it)^{k}\sigma^{k-1}}H_{k-1}\left(\frac{y-q_0(y)}{\sigma}\right)\phi\left(\frac{y-q_0(y)}{\sigma}\right).
\end{align*}
where $H_k(\cdot)$ is the $k$-th Hermite polynomial, e.g. $H_0(z)=1, H_1(z)=z, H_2(z)=z^2-1$.

An important property is that $\sup_{z\in\mathbb{R}} |H_k(z)|\phi(z)<C_k$ for any $k\in\mathbb{N}_{+}$. Thus for each $k\geq 2$:
\begin{align*}
\left|\exp(-itq_0(y))\frac{(-1)^{k-1}}{(it)^{k}\sigma^{k-1}}H_{k-1}\left(\frac{y-q_0(y)}{\sigma}\right)\phi\left(\frac{y-q_0(y)}{\sigma}\right)\right|\leq \frac{C_{k-1}}{(it)^k\sigma^{k-1}}=O(t^{-k})
\end{align*}
that holds for all $y\in\mathbb{R}$. Then we obtain the desired results.
\end{proof}
\begin{remark}
The leading $t^{-1}$ decay rate arises from the discontinuity at $\theta = q_0$, which governs the asymptotic behavior of the Fourier coefficients $\overline{\psi^{\star}(q_0,t;y)}$.
\end{remark}
\begin{remark}
The asymptotic expansion in \Cref{propo:fourier_decay} reflects the key features of Fourier coefficients for functions with discontinuities. The second term decays at the slower rate of $t^{-1}$, reflecting the impact of the discontinuity introduced by the indicator function $\mathbf{1}\{\theta\leq q_0\}$. This slower decay rate reflects the Gibbs phenomenon which arises in the Fourier analysis of functions with jump discontinuities.
\end{remark}

\subsection{Proof of \Cref{rate_posterior_quantile}}
\label{subsection:proof_rate_posterior_quantile}

We follow the classical two-point method of \citeA{lecam1973convergence}, see also \citeA{fan1991optimal,butucea2009adaptive,pensky2017minimax} for related lower bound arguments in deconvolution problems. The proof has three parts. First, we construct two mixing densities $g_0$ and $g_1=g_0+H_T$ in $\mathcal{G}(s,L)$, where $H_T$ is a local perturbation concentrated at frequencies of order $T$. Second, we show that the corresponding posterior quantiles are separated by order $T^{-s-1/2}$. Third, we show that the induced data distributions $((F_Y)_0, (F_Y)_1)$ are statistically indistinguishable in the sense that their $\chi^2$ distance is of order $1/n$. \Cref{prop:A2_two_point} then yields the desired minimax lower bound.

The non-standard step in our proof is to translate the perturbation in the mixing densities into a separation of their posterior quantiles. This step handles the nonlinearity of the functional by relying on the local behavior of the objective function around its minimizer and is where the non-smoothness enters.

\begin{proposition}
Assume two densities $g_0, g_1\in\mathcal{G}(s,L)$. Then, with respect to an arbitrary estimator $\widehat{q}$ of $q_{G}(\tau;y)$ based on i.i.d. data $\{Y_i\}_{i=1}^n$ having the density $f_{Y}=g*f_{\varepsilon}$ that satisfies that
\begin{equation*}
\sup_{g\in\mathcal{G}(s,L)} \ \operatorname{E}\Big{[} \widehat{q}_G(\tau;y)-q_G(\tau;y) \Big{]}^2\geq \mathrm{const} \cdot \Big{|} q_{G_1}(\tau;y)-q_{G_0}(\tau;y) \Big{|}^2
\end{equation*}
for sufficiently large $n$, if we have that
\begin{equation}
\chi^2\left( (f_{Y})_1,(f_{Y})_0 \right)=\int \cfrac{\left( (f_{Y})_1(y)-(f_{Y})_0(y) \right)^2}{(f_{Y})_0(y)}\ \mathrm{d}y =O(1/n)
\label{equ:divergence}
\end{equation}
where $(f_{Y})_1=g_1*f_{\varepsilon}, (f_{Y})_0=g_0*f_{\varepsilon}$.
\label{prop:A2_two_point}
\end{proposition}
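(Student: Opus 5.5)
This is Le Cam's two-point method. Write $\Delta=|q_{G_1}(\tau;y)-q_{G_0}(\tau;y)|$ and let $P_j^n$ be the law of the i.i.d. sample $\{Y_i\}_{i=1}^n$ under $(f_Y)_j$, for $j=0,1$.

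First, reduce estimation to testing. For any estimator $\widehat q$, bound the supremum over $\mathcal{G}(s,L)$ below by the average of the two risks under $g_0$ and $g_1$. Define the event $A=\{|\widehat q-q_{G_0}|\geq \Delta/2\}$. On $A^c$, the triangle inequality gives $|\widehat q-q_{G_1}|\geq \Delta/2$. Chebyshev's inequality then yields
\begin{equation*}
\tfrac12\left(\operatorname{E}_0(\widehat q-q_{G_0})^2+\operatorname{E}_1(\widehat q-q_{G_1})^2\right)\geq \tfrac{\Delta^2}{8}\left(P_0^n(A)+P_1^n(A^c)\right)\geq \tfrac{\Delta^2}{8}\left(1-\|P_0^n-P_1^n\|_{TV}\right),
\end{equation*}
since $\inf_A\{P_0^n(A)+P_1^n(A^c)\}=1-\|P_0^n-P_1^n\|_{TV}$.

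The remaining task is to show that the total variation distance between the product measures stays bounded away from one. Use tensorization of the $\chi^2$ divergence for product measures:
\begin{equation*}
1+\chi^2(P_1^n,P_0^n)=\left(1+\chi^2((f_Y)_1,(f_Y)_0)\right)^n\leq \left(1+\tfrac{c}{n}\right)^n\leq e^{c},
\end{equation*}
where the inequality uses (\ref{equ:divergence}) with some constant $c$ for $n$ large. Then combine $\|P_0^n-P_1^n\|_{TV}\leq \sqrt{\chi^2(P_1^n,P_0^n)}/2$ with the Bretagnolle--Huber bound $\|P-Q\|_{TV}\leq \sqrt{1-\exp(-KL)}$ and $KL\leq\log(1+\chi^2)\leq c$. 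This gives $\|P_0^n-P_1^n\|_{TV}\leq\sqrt{1-e^{-c}}<1$. Plugging this in yields the claim with $\mathrm{const}=(1-\sqrt{1-e^{-c}})/8$, uniformly in $\widehat q$ and for all sufficiently large $n$.

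The main obstacle is choosing the divergence bound. Only $O(1/n)$ is assumed, not $o(1/n)$, so the crude bound $\sqrt{\chi^2}/2$ on the total variation can exceed one and is useless. The Bretagnolle--Huber route, with KL controlled through $\log(1+\chi^2)$, avoids this because it keeps the total variation strictly below one for any finite $c$.

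Two integrability points also need checking. Since $g_0,g_1$ are densities, $(f_Y)_0>0$ everywhere as a Gaussian convolution, so the $\chi^2$ divergence is well defined. The posterior quantiles $q_{G_j}(\tau;y)$ are deterministic functionals of $g_j$, so they can be treated as fixed targets under each hypothesis.
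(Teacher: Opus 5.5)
Your proof is correct. It is the same Le Cam two-point argument the paper relies on: the paper's proof of this proposition simply defers to Meister's Proposition 2.12. Your reduction (average of two risks, Chebyshev on $A=\{|\widehat q-q_{G_0}|\ge\Delta/2\}$, then $1-\|P_0^n-P_1^n\|_{TV}$) is that standard argument.

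The only difference is how you keep the testing affinity away from zero. You use $\chi^2$ tensorization, $\mathrm{KL}\le\log(1+\chi^2)$ and Bretagnolle--Huber. The textbook route instead lower-bounds $\int\min(\mathrm{d}P_0^n,\mathrm{d}P_1^n)$ by $\tfrac12\bigl(1-\tfrac12 H^2\bigr)^{2n}$, using $H^2\le\chi^2$. Both give a constant that depends only on the constant in the $O(1/n)$ bound.
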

\begin{proof}[Proof of \Cref{prop:A2_two_point}]
The proof follows similar arguments to those used in Proposition 2.12 of \citeA{meister2009deconvolution} to derive the lower bound for the deconvolution density $g(\cdot)$.
\end{proof}

We now consider two candidate deconvolution densities
\begin{equation}
g_0(\theta)=\frac{\lambda}{\pi(\lambda^2+\theta^2)} \ \ \ \text{and} \ \ \
g_1(\theta)=g_0(\theta)+H_T(\theta)
\label{equ:density_pair}
\end{equation}
where $g_0(\theta)$ is a fixed baseline density and $H_T(\theta)$ is a local perturbation that will be specified later.

The baseline density $g_0$ is Cauchy with Fourier transformation $g_0^{\star}(t)=\exp(-\lambda|t|)$. Moreover
\begin{equation*}
\int |g_0^{\star}(t)|^2(1+t^2)^s\mathrm{d}t=\int \exp(-2\lambda|t|)(1+t^2)^s\mathrm{d}t\rightarrow 0
\end{equation*}
as $\lambda\rightarrow\infty$. Therefore, we may choose $\lambda$ large enough so that $\int  |g_0^{\star}(t)|^2(1+t^2)^s\mathrm{d}t\leq L/4$ so that $g_0\in\mathcal{G}(s,L)$. Let $q_0=q_{G_0}(\tau;y)$. Since $g_0$ is continuous and strictly positive, the posterior quantile $q_0$ is well defined and unique.

To construct a localized least-favorable perturbation, we use the standard notation $C_c^{\infty}(E)$ for the space of infinitely differentiable functions on $\mathbb{R}$ whose support is compact and contained in $E$; see \citeA[Chapter 9.1, p.~282]{folland1999real}. Choose a nonnegative function $k^{\star}\in C_c^{\infty}((1,2))$ that is not identically zero, and define
\begin{equation*}
\xi(u):=-i\,\text{sgn}(u)\,k^{\star}(|u|),
\qquad
h(x):=\frac1{2\pi}\int e^{-iux}\xi(u)\,du.
\end{equation*}
Since \(\operatorname{supp}(k^\star)\subset(1,2)\), there exists some $\varepsilon>0$ such that $k^{\star}(u)=0$ for all $u\notin[1+\varepsilon,2-\varepsilon]$. Since $k^{\star}(|u|)=0$ for $|u|<1+\varepsilon$, we have that $\xi(u)=0$ for $|u|<1+\varepsilon$. So near $u=0$, $\xi$ is just the zero function. Away from 0, both $\mathrm{sgn}(u)$ and $k^{\star}(|u|)$ are smooth. Therefore \(\xi\in C_c^\infty(\mathbb R)\). In particular, $\operatorname{supp}(\xi)\subset [-2,-1]\cup[1,2]$.

Moreover, since \(k^\star\) is real-valued, it follows that $\xi(-u)=\overline{\xi(u)}$ for all $u\in\mathbb R,$ and therefore the inverse Fourier transform \(h\) is real-valued. Recall that the Schwartz class (\citeNP[~p.237]{folland1999real}) consists of smooth functions\footnote{Here $C^{\infty}(E)$ means the smooth functions on $E$: all partial derivatives of every finite order exist and are continuous, see \citeA[~p.235]{folland1999real}. } whose derivative decay faster than any polynomial grows at infinity:
\begin{equation}
\mathcal{S}(\mathbb{R})=\left\{ f\in C^{\infty}(\mathbb{R}): \ \sup_{x\in\mathbb{R}^n} \ (1+|x|)^N|\partial^{\alpha}f(x)|<\infty \quad \text{for all} \ N,\alpha \right\}
\label{equ:schwartz}
\end{equation}
clearly $C_c^{\infty}\subset \mathcal{S}$. Since $\xi\in C_c^{\infty}(\mathbb{R})$, it follows that $\xi\in\mathcal{S}(\mathbb{R})$ as well. By \citeA[~Corollary 8.23, Corollary 8.28]{folland1999real}, the Fourier transform maps $\mathcal{S}(\mathbb{R})$ into itself, and the inverse Fourier transform does as well. In particular, taking $N=2$ and $\alpha=0$ in \eqref{equ:schwartz}, there exists a constant $C_h>0$ such that
\begin{equation*}
|h(x)|\leq C_h(1+x^2)^{-1} \quad \quad \text{for all} \ x\in\mathbb{R}.
\end{equation*}
Finally, Fourier inversion gives \(h^\star=\xi\), so $\int h(x)\,dx = h^\star(0)=\xi(0)=0$. For $T\ge 1$ and a small constant $\nu>0$, we define $H_T(\theta):=\nu T^{-s+1/2}h\bigl(T(\theta-q_0)\bigr)$. Then its Fourier transform is
\begin{equation}
H_T^{\star}(t)
=
\nu T^{-s-1/2}(-i\,\text{sgn}(t))\exp(itq_0)k^{\star}(|t|/T).
\label{equ:H_T_fourier}
\end{equation}

\subsubsection*{Step 1. Verifying that $g_1(\cdot)\in\mathcal{G}(s,L)$ for sufficiently large $T$. }

First, $g_1$ integrates to one. Indeed, using the change of variable $u=T(\theta-q_0)$,
\begin{equation*}
\int_{\mathbb{R}} H_T(\theta)\mathrm{d}\theta=\nu T^{-s+1/2}\int_{\mathbb{R}} h\bigl(T(\theta-q_0)\bigr) \mathrm{d}\theta=\nu T^{-s-1/2}\int_{\mathbb{R}} h\bigl(u\bigr) \mathrm{d}u=0.
\end{equation*}
Therefore $\int_{\mathbb{R}}g_1(\theta)\mathrm{d}\theta=\int_{\mathbb{R}}g_0(\theta)\mathrm{d}\theta+\int_{\mathbb{R}} H_T(\theta)\mathrm{d}\theta=1$.

Second, $g_1(\theta)$ is nonnegative if $\nu$ is chosen small enough. Since $q_0$ is fixed, there exists a constant $C_0$ such that $1+\theta^2\leq C_0(1+(\theta-q_0)^2)$ for all $\theta\in\mathbb{R}$. Hence
\begin{equation*}
(1+\theta^2)|H_T(\theta)|
\le C_0C_h\nu T^{-s+1/2}\frac{1+(\theta-q_0)^2}{1+T^2(\theta-q_0)^2}
\le C\nu T^{-s+1/2}
\le C\nu
\end{equation*}
because $T\geq 1$ and $s\geq 1/2$. Therefore
\begin{equation*}
|H_T(\theta)|\leq \frac{C\nu}{1+\theta^2} \quad \text{for all} \ \theta\in\mathbb{R}.
\end{equation*}
On the other hand, $g_0(\theta)\geq \left[\pi\lambda(1+\theta^2)\right]^{-1} $ for $\lambda\geq 1$, thus $|H_T(\theta)|\leq C\nu\pi\lambda g_0(\theta)$. If we choose $\nu>0$ so small that $C\nu\pi\lambda\leq 1/2$, then $|H_T(\theta)|\leq \frac{1}{2}g_0(\theta)$ for all $\theta\in\mathbb{R}$. Consequently, $g_1(\theta)=g_0(\theta)+H_T(\theta)\geq g_0(\theta)-|H_T(\theta)|\geq \frac{1}{2}g_0(\theta)\geq 0$ for all $\theta\in\mathbb{R}$. Thus $g_1$ is a density.

Third, we verify that $g_1\in\mathcal{G}(s,L)$. Since $g_1^{\star}=g_0^{\star}+H_T^{\star}$, we have that $|g_1^{\star}(t)|^2\leq 2|g_0^{\star}(t)|^2+2|H_T^{\star}(t)|^2$ and hence
\begin{equation*}
\int_{\mathbb{R}}|g_1^{\star}(t)|^2(1+t^2)^s\,\mathrm{d}t
\le
2\int_{\mathbb{R}}|g_0^{\star}(t)|^2(1+t^2)^s\,\mathrm{d}t
+
2\int_{\mathbb{R}}|H_T^{\star}(t)|^2(1+t^2)^s\,\mathrm{d}t.
\end{equation*}
By construction, the first term is at most $L/2$. For the second term,
\begin{align*}
\int_{\mathbb{R}}|H_T^{\star}(t)|^2(1+t^2)^s\,\mathrm{d}t
&=\nu^2T^{-2s-1}\int_{\mathbb{R}} |\xi(t/T)|^2(1+t^2)^s\,\mathrm{d}t=\nu^2 T^{-2s}\int_{\mathbb{R}} |\xi(u)|^2(1+T^2u^2)^s\mathrm{d}u \\[7pt]
&=\nu^2\int_{\mathbb{R}} |\xi(u)|^2(T^{-2}+u^2)^s\mathrm{d}u \leq \int_{\mathbb{R}} |\xi(u)|^2(1+u^2)^s\mathrm{d}u.
\end{align*}
Because $\xi\in C_c^{\infty}(\mathbb{R})$, the integral is finite. Hence, after possibly decreasing $\nu$, we may ensure that $\int |H_T^{\star}(t)|^2(1+t^2)^s\mathrm{d}t\leq L/4.$ Therefore $g_1\in\mathcal G(s,L)$.

\subsubsection*{Step 2. Lower bound for the distance $|q_{G_1}(\tau;y)-q_{G_0}(\tau;y)|$.}

Here $q_{G_1}(\tau;y), q_{G_0}(\tau;y)$ are posterior quantiles that correspond to densities $g_1, g_0$ defined in \Cref{equ:density_pair}. For notational convenience, write $q_j=q_{G_j}(\tau;y)$ for $j=0,1$. For $j=0,1$, define
\[
\Psi_j(q):=\int \Big( \mathbf{1}\{\theta\leq q\}-\tau \Big)\phi\left(\frac{y-\theta}{\sigma}\right)g_j(\theta)\mathrm{d}\theta.
\]
Then the first-order conditions give that \(\Psi_j(q_j)=0\) for \(j=0,1\).

We first show that $q_1$ lies in a small neighborhood of $q_0$. Since $\Psi_0'(q)=\phi\left(\frac{y-q}{\sigma}\right)g_0(q)$, the positivity assumption at \(q_0\) implies that \(\Psi_0'(q_0)>0\). Hence there exists
\(\delta>0\) such that $m_\delta:=\min\{-\Psi_0(q_0-\delta),\,\Psi_0(q_0+\delta)\}>0$. Moreover, because \(g_1-g_0=H_T\),
\begin{align*}
\sup_{q\in\mathbb R}\big|\Psi_1(q)-\Psi_0(q)\big|
\leq & \sup_{q\in\mathbb{R}} \ \int \phi\left(\frac{y-\theta}{\sigma}\right)|H_T(\theta)|\mathrm{d}\theta     \\[7pt]
\leq &\left\|\phi\left(\frac{y-\cdot}{\sigma}\right)\right\|_{\infty}\|H_T\|_1
\leq C\nu T^{-s-1/2}.
\end{align*}
Therefore, for all sufficiently large \(T\), $\Psi_1(q_0-\delta)<0<\Psi_1(q_0+\delta)$. Since \(q\mapsto \Psi_1(q)\) is increasing, it follows that $q_1\in(q_0-\delta,q_0+\delta)$.

Next, subtracting $\Psi_0(q_0)=0$ from $\Psi_1(q_1)=0$ gives
\begin{align*}
\int \left(\mathbf{1}\{\theta\leq q_0\}-\tau\right)\phi\left(\frac{y-\theta}{\sigma}\right)\left(g_1(\theta)-g_0(\theta)\right)\mathrm{d}\theta
&=
\int \Big( \mathbf{1}\{\theta\leq q_0\}-\mathbf{1}\{\theta\leq q_1\} \Big)\phi\left(\frac{y-\theta}{\sigma}\right)g_1(\theta)\mathrm{d}\theta.
\end{align*}
By the mean-value theorem, there exists \(q^{*}\) between \(q_0\) and \(q_1\) such that
\begin{align*}
\int \Big( \mathbf{1}\{\theta\leq q_0\}-\mathbf{1}\{\theta\leq q_1\} \Big)\phi\left(\frac{y-\theta}{\sigma}\right)g_1(\theta)\mathrm{d}\theta
=
(q_0-q_1)\phi\left(\frac{y-q^*}{\sigma}\right)g_1(q^*).
\end{align*}
Hence
\begin{equation}
q_1-q_0=-\left[ \phi\left(\frac{y-q^*}{\sigma}\right)g_1(q^*)  \right]^{-1}
\left[\int \Big( \mathbf{1}\{\theta\leq q_0\}-\tau \Big)\phi\left(\frac{y-\theta}{\sigma}\right)\left(g_1(\theta)-g_0(\theta)\right)\mathrm{d}\theta\right].
\label{equ:q1q0}
\end{equation}
The denominator in \eqref{equ:q1q0} is bounded away from zero. Indeed, because \(q^*\in[q_0-\delta,q_0+\delta]\), \(g_1\geq g_0/2\), and since $g_1\geq g_0/2$ on this compact interval for all
sufficiently large \(T\), while \(g_0\) is continuous and strictly positive there, there exists \(c_1>0\) such that
\begin{equation}
\phi\left(\frac{y-q^*}{\sigma}\right)g_1(q^*)\geq c_1.
\label{equ:step4_den_lb}
\end{equation}

It remains to lower bound the numerator in \eqref{equ:q1q0}. Define $\psi(q_0,\theta;y)=(\mathbf{1}\{\theta\leq q_0\}-\tau)\phi(\frac{y-\theta}{\sigma})$ and write
\[
I_T:=\int \psi(q_0,\theta;y)H_T(\theta)\mathrm{d}\theta=\frac{1}{2\pi}\int \overline{\psi^{\star}(q_0,t;y)}\,H_T^{\star}(t)\mathrm{d}t.
\]

By \Cref{propo:fourier_decay}, we have that $\overline{\psi^{\star}(q_0,t;y)}=\phi(\frac{y-q_0}{\sigma})\frac{\exp(-itq_0)}{it}+r(t;y)$ where $|r(t;y)|\leq Ct^{-2}$. Using the expression of $H_T^{\star}(t)$ from \eqref{equ:H_T_fourier}, we obtain that $I_T=I_{T1}+I_{T2}$ where
\begin{align*}
I_{T1}&=\frac{\nu}{2\pi}T^{-s-1/2}\phi\left(\frac{y-q_0}{\sigma}\right)
\int \frac{\exp(-itq_0)}{it}(-i\,\mathrm{sgn}(t))\exp(itq_0)k^{\star}(|t|/T)\,\mathrm{d}t  \\[7pt]
I_{T2}&=\frac{\nu}{2\pi}T^{-s-1/2}
\int r(t;y)(-i\,\mathrm{sgn}(t))\exp(itq_0)k^{\star}(|t|/T)\,\mathrm{d}t
\end{align*}
We bound these two terms separately. For the first term,
\begin{align*}
|I_{T1}|&=\frac{\nu}{2\pi}T^{-s-1/2}\phi\left(\frac{y-q_0}{\sigma}\right)
\int \frac{k^{\star}(|t/T|)}{|t|}\mathrm{d}t\overset{(a)}{=}\frac{\nu}{\pi}T^{-s-1/2}\phi\left(\frac{y-q_0}{\sigma}\right)
\int_T^{2T} \frac{k^{\star}(t/T)}{|t|}\mathrm{d}t \\[7pt]
&\overset{(b)}{=}\frac{\nu}{\pi}T^{-s-1/2}\phi\left(\frac{y-q_0}{\sigma}\right)
\int_1^{2} \frac{k^{\star}(u)}{u}\mathrm{d}u.
\end{align*}
where (a) uses $\mathrm{supp}(k^{\star})\subset(1,2)$, and (b) follows from the change of variable $u=t/T$. Since $k^{\star}$ is nonnegative and not identically zero, $\int_1^{2} \frac{k^{\star}(u)}{u}\mathrm{d}u>0$. Therefore, $|I_{T1}|\geq c_2\nu T^{-s-1/2}$ for some constant $c_2>0$.

For the second term,
\begin{align*}
|I_{T2}|&=C\nu T^{-s-1/2}\int |r(t;y)| k^{\star}(|t|/T)\,\mathrm{d}t\leq C\nu T^{-s-1/2}\int_{T}^{2T}t^{-2}\mathrm{d}t\leq C\nu T^{-s-3/2},
\end{align*}
thus $I_{T2}$ is smaller than $I_{T1}$ by an additional factor $T^{-1}$. Hence, by the triangle inequality, $|I_T|\geq |I_{T1}|-|I_{T2}|\geq c_3\nu T^{-s-1/2}$ for all sufficiently large $T$ where $c_3>0$.

Combining this bound with \eqref{equ:q1q0} and \eqref{equ:step4_den_lb}, we conclude that $|q_1-q_0|\geq C \nu T^{-s-1/2}$.

\subsubsection*{Step 3. Upper bound for chi-divergence $\chi^2\left( (f_{Y})_1,(f_{Y})_0 \right)$.  }

Let $u_T(y)=(f_{Y})_1(y)-(f_{Y})_0(y)=H_T*f_{\varepsilon}(y)$, where $f_{\varepsilon}$ is the $N(0,\sigma^2)$ density.

We first show that $(f_{Y})_0$ is bounded below by a Cauchy tail. Choose $\eta>0$ such that $p_{\eta}:=\int_{|z|\leq \eta} f_{\varepsilon}(z)\mathrm{d}z>0$. Then
\begin{equation*}
(f_{Y})_0(y)=\int_{\mathbb{R}} g_0(y-z)f_{\varepsilon}(z)\mathrm{d}z\geq \int_{|z|\leq \eta} g_0(y-z)f_{\varepsilon}(z)\mathrm{d}z\geq p_{\eta} \inf_{|z|\leq \eta} g_0(y-z).
\end{equation*}
Moreover, for $|z|\leq \eta$, $1+(y-z)^2\leq C_{\eta}(1+y^2)$, so $g_0(y-z)=\frac{\lambda}{\pi(\lambda^2+(y-z)^2)}\geq \frac{c_{\eta}}{1+y^2}$. Hence $(f_Y)_0(y)\geq \frac{c}{1+y^2}$ for all $y\in\mathbb{R}$. Therefore
\begin{equation*}
\chi^2\left( (f_{Y})_1,(f_{Y})_0 \right)=\int_{\mathbb{R}} \cfrac{u_T(y)^2}{(f_{Y})_0(y)}\mathrm{d}y\leq C\int_{\mathbb{R}} (1+y^2)u_T(y)^2\mathrm{d}y.
\end{equation*}
Then it suffices to show that $S_T:=\int_{\mathbb{R}} (1+y^2)u_T(y)^2\mathrm{d}y=O(n^{-1})$.

By Parseval's identity, we have that
\begin{equation}
\int u_T(y)^2\mathrm{d}y=\frac{1}{2\pi} \int |H_T^{\star}(t)|^2\exp\left(-\sigma^2 t^2\right)\mathrm{d}t\leq C\nu^2 T^{-2s}\exp(-\sigma^2 T^2).
\label{appendix:equ_integral_1}
\end{equation}
Also, since the Fourier transformation of $yu_T(y)$ is $i\partial_t(u_T^{\star}(t))$,
\begin{equation*}
\int_{\mathbb R}y^2u_T(y)^2\mathrm{d}y
=
\frac{1}{2\pi}\int
\left|
\partial_t\Big(H_T^{\star}(t)\exp(-\sigma^2t^2/2)\Big)
\right|^2
\mathrm{d}t.
\end{equation*}
On the support of $H_T^{\star}$, namely $T\leq |t|\leq 2T$, we have that $|H_T^{\star}(t)|\leq C\nu T^{-s-1/2}$, and $|\partial_tH_T^{\star}(t)|\leq C\nu T^{-s-1/2}.$ Hence
\begin{equation*}
\left|
\partial_t\Big(H_T^{\star}(t)\exp(-\sigma^2t^2/2)\Big)
\right|
\leq
C\nu T^{-s+1/2}\exp(-\sigma^2T^2/2).
\end{equation*}
Since the support has length $O(T)$, it follows that
\begin{equation}
\int_{\mathbb R}y^2u_T(y)^2\mathrm{d}y
\leq
C\nu^2T^{-2s+2}\exp(-\sigma^2T^2).
\label{appendix:equ_integral_2}
\end{equation}
Consequently, $S_{T}\leq C\nu^2T^{-2s+2}\exp(-\sigma^2T^2)$. Now set $T=T_n:=\sigma^{-1}\sqrt{ (1+\eta)\log n }$ for any fixed $\eta>0$, then we have that $S_{T_n}\lesssim Cn^{-(1+\eta)}(\log n)^{-s+1}$, and therefore $S_T=O(n^{-1})$. Hence $\chi^2\!\left((f_Y)_1,(f_Y)_0\right)=O(n^{-1}).$

By Proposition~\ref{prop:A2_two_point},
\[
\sup_{g\in\mathcal G(s,L)} \operatorname{E}\!\left[\bigl(\hat q-q_G(\tau;y)\bigr)^2\right]
\ge
c\,|q_1-q_0|^2
\ge
c\,T_n^{-2s-1}
\asymp
(\log n)^{-(2s+1)/2}.
\]
This proves Theorem~\ref{rate_posterior_quantile}.

\subsection{Proof of \Cref{quantile_estimator}}
\label{appendix:proof_theorem3_2}

For notational convenience, let $\widehat{q}=\widehat{q}_G(\tau;y)$ denote the estimated posterior quantile, and $q_0=q_{G_0}(\tau;y)$ its population counterpart. The proof proceeds in two main steps. First, we establish the following asymptotic expansion in \Cref{subsubsection:expansion}
\begin{equation}
r_n\left(\widehat{q}-q_0\right)=r_n\left[ \phi\left(\frac{y-q_0}{\sigma}\right)g(q_0) \right]^{-1}\Psi_{n}(q_0)+o_p(1)
\label{equ:expansion}
\end{equation}
with a suitable sequence $r_n\rightarrow\infty$ as $n\rightarrow\infty$, and the term $\Psi_n(q_0)$\footnote{Note that its population counterpart $\Psi(q_0)=(2\pi)^{-1}\int \overline{\psi^{\star}(q_0,t;y)}g^{\star}(t)\mathrm{d}t=\int \left(\tau-\mathbf{1}\{\theta\leq q_0\}\right)\phi\left(\frac{y-\theta}{\sigma}\right)g(\theta)\mathrm{d}\theta\equiv 0$, which corresponds to the moment condition. In this sense, $\Psi_n(q_0)$ can be viewed as a sample moment function, and $\widehat{q}_n$ as the associated $Z$-estimator. } is given by
\begin{equation}
\Psi_n(q_0)=(2\pi)^{-1} \int  \cfrac{\overline{\psi^{\star}(q_0,t;y)}}{f_{\varepsilon}^{\star}(t)} \left(\widehat{f}^{\star}_Y(t)K^{\star}(h_n t) \right) \mathrm{d}t
\label{equ:Psi_n_q0}
\end{equation}
where the Fourier coefficient $\overline{\psi^{\star}(q_0,t;y)}$ is defined in \Cref{equ:weighting_function}. Second we establish the properties of the leading term $\Psi_n(q_0)$, namely its consistency and rate of convergence, in \Cref{subsubsection:analysis_Psi_n}.

\subsubsection{Asymptotic expansion}
\label{subsubsection:expansion}

Define that $\widehat{\delta}_n=r_n(\widehat{q}-q_0)$ and the localized, rescaled function $\widetilde{W}_n(\delta):=r_n^2\left\{W_n(q_0 + r_n^{-1} \delta) - W_n(q_0)\right\}$ for $\delta\in\mathbb{R}$. Observe that $\widehat{\delta}_n$ is the unique minimizer of $\widetilde{W}_n$. We now derive a quadratic approximation to $\widetilde{W}_n$ on compact sets, and then use Lemma 2 of \citeA{hjort2011asymptotics} to pass from the approximation of the criterion to the approximation of its minimizer.

By \Cref{proposition_c1}, we can write
\begin{equation*}
\rho_{\tau}\left( \theta-(r_n^{-1}\delta+q_0) \right)-\rho_{\tau}\left( \theta-q_0 \right)=-(r_n^{-1}\delta)\Big{(}\tau-\mathbf{1}\{\theta-q_0\leq 0\}\Big{)}+\Delta
\end{equation*}
where
\begin{equation*}
\Delta=\int_0^{r_n^{-1}\delta}\Big{(} \mathbf{1}\{\theta-q_0\leq s\}-\mathbf{1}\{\theta-q_0\leq 0\} \Big{)}\mathrm{d}s.
\end{equation*}
Then we can decompose the objective function $\widetilde{W}_n(\delta)=-r_n\delta\Psi_n(q_0)+r_n^2W_{n2}(\delta)$ where
\begin{equation}
W_{n2}(\delta)=(2\pi)^{-1}\int \cfrac{\int \exp(-it\theta)\Delta\phi(\frac{y-\theta}{\sigma})\mathrm{d}\theta}{f_{\varepsilon}^{\star}(t)}\left(\widehat{f}^{\star}_Y(t)K^{\star}(h_n t) \right) \mathrm{d}t.
\label{equ:W_n2}
\end{equation}

By Parseval-Plancherel theorem, we have that $W_{n2}(\delta)=\int \Delta\phi\left(\frac{y-\theta}{\sigma}\right)\widehat{g}(\theta)\mathrm{d}\theta$ where
\begin{equation}
\widehat{g}(\theta)=(2\pi)^{-1}\int \cfrac{\exp(-it\theta)}{f_{\varepsilon}^{\star}(t)}\left(\widehat{f}^{\star}_Y(t)K^{\star}(h_n t) \right) \mathrm{d}t
\label{equ:deconvolution_estimator}
\end{equation}
is algebraically the classical deconvolution kernel density estimator.

We calculate $W_{n2}(\delta)$ as follows
\allowdisplaybreaks
\begin{align*}
W_{n2}(\delta)&=\int \left[\int_0^{r_n^{-1}\delta}\Big{(} \mathbf{1}\{\theta-q_0\leq s\}-\mathbf{1}\{\theta-q_0\leq 0\} \Big{)}\mathrm{d}s\right]\phi\left(\frac{y-\theta}{\sigma}\right)\hat{g}(\theta)\mathrm{d}\theta \\[7pt]
&\overset{(a)}{=}r_n^{-1}\int\left[ \int_0^{\delta}\Big{(} \mathbf{1}\{\theta-q_0\leq r_n^{-1}z\}-\mathbf{1}\{\theta-q_0\leq 0\} \Big{)}\mathrm{d}z  \right]\phi\left(\frac{y-\theta}{\sigma}\right)\hat{g}(\theta)\mathrm{d}\theta \\[7pt]
&\overset{(b)}{=}r_n^{-1}\int_0^{\delta} \left[ \int \Big{(} \mathbf{1}\{\theta-q_0\leq r_n^{-1}z\}-\mathbf{1}\{\theta-q_0\leq 0\} \Big{)}\phi\left(\frac{y-\theta}{\sigma}\right)\hat{g}(\theta)\mathrm{d}\theta  \right]\mathrm{d}z \\[7pt]
&=r_n^{-1}\int_0^{\delta}\int_{q_0}^{q_0+r_n^{-1}z}\phi\left(\frac{y-\theta}{\sigma}\right)\hat{g}(\theta)\mathrm{d}\theta \mathrm{d}z
\overset{(c)}{=}r_n^{-1}\int_{q_0}^{q_0+r_n^{-1}\delta}\int_{r_n(\theta-q_0)}^{\delta} \phi\left( \frac{y-\theta}{\sigma} \right)\hat{g}(\theta)\mathrm{d}z \mathrm{d}\theta \\[7pt]
&=r_n^{-1}\int_{q_0}^{q_0+r_n^{-1}\delta}\left[ \delta-r_n(\theta-q_0)\right]\phi\left( \frac{y-\theta}{\sigma} \right)\hat{g}(\theta)\mathrm{d}\theta  \\[7pt]
&\overset{(d)}{=}r_n^{-2}\int_0^{\delta} (\delta-u) \phi\left( \frac{y-q_0-r_n^{-1}u}{\sigma} \right)\hat{g}(q_0+r_n^{-1}u)\mathrm{d}u
\end{align*}
where (a) uses changes of variable $s=r_n^{-1}z$, (b) and (c) switch the order of integration, (d) uses changes of variable again $u=r_n(\theta-q_0)$.

We  now show that $r_n^2W_{n2}(\delta)$ is asymptotically quadratic, uniformly on compact sets. By \Cref{assumption_density,assumption2,assumption4} and continuity of $g$, there exists $\eta>0$ such that for $I=[q_0-\eta,q_0+\eta]$, $c_I=\inf_{q\in I} \ \phi(\frac{y-q}{\sigma})g(q)>0$.

Fix $M>0$. Since $r_n\rightarrow\infty$, for all sufficiently large $n$, $q_0+r_n^{-1}u\in I $ whenever $|u|\leq M$. By the uniform consistency of $\hat{g}$ (\Cref{prop:uniform_consist}) and continuity of $q\mapsto \phi((y-q)/\sigma)g(q)$, we have that
\begin{equation*}
\sup_{|u|\leq M} \ \left|\phi\left( \frac{y-q_0-r_n^{-1}u}{\sigma} \right)\hat{g}(q_0+r_n^{-1}u)-\phi\left(\frac{y-q_0}{\sigma}\right)g(q_0)  \right|=o_p(1).
\end{equation*}
Therefore,
\begin{align*}
&\sup_{|\delta|\leq M} \ \left| r_n^2W_{n2}(\delta)-\frac{\delta^2}{2}\phi\left(\frac{y-q_0}{\sigma}\right)g(q_0) \right| \\[7pt]
=& \sup_{|\delta|\leq M} \ \left| \int_{0}^{\delta} (\delta-u)\left[\phi\left( \frac{y-q_0-r_n^{-1}u}{\sigma} \right)\hat{g}(q_0+r_n^{-1}u)-\phi\left(\frac{y-q_0}{\sigma}\right)g(q_0)  \right]\mathrm{d}u \right| \\[7pt]
\leq & M^2\sup_{|u|\leq M} \ \left|\phi\left( \frac{y-q_0-r_n^{-1}u}{\sigma} \right)\hat{g}(q_0+r_n^{-1}u)-\phi\left(\frac{y-q_0}{\sigma}\right)g(q_0)  \right|=o_p(1).
\end{align*}

Define the quadratic approximation $\widetilde{W}_n^0(\delta)=-r_n\delta\Psi_n(q_0)+\frac{\delta^2}{2}\phi\left(\frac{y-q_0}{\sigma}\right)g(q_0)$. Then for every fixed $M>0, $
\begin{equation*}
\sup_{|\delta|\leq M} \ \left| \widetilde{W}_n(\delta)-\widetilde{W}_n^0(\delta) \right|=o_p(1).
\end{equation*}

Recall that $\widehat{\delta}_n$ is the minimizer of $\widetilde{W}_n(\delta)$, and let
\begin{equation*}
\widehat{\delta}_0=\underset{\delta\in\mathbb{R}}{\arg\min} \ \widetilde{W}_n^0(\delta)=r_n\left[\phi\left(\frac{y-q_0}{\sigma}\right)g(q_0)\right]^{-1}\Psi_n(q_0)
\end{equation*}
where the minimizer is assumed to be unique. The next subsubsection shows that $r_n\Psi_n(q_0)=O_p(1)$, hence $\widehat{\delta}_0=O_p(1)$.

We can now compare the argmins of $\widetilde{W}_n$ and $\widetilde{W}_n^0$. First by \Cref{prop:uniform_consist}, $\sup_{q\in I} |\hat{g}(q)-g(q)|=o_p(1)$. Hence $\inf_{q\in I} \phi(\frac{y-q}{\sigma})\widehat{g}(q)\geq c_I/2>0$ with probability tending to one. On this event, we have that
\begin{equation*}
\Psi_n'(q)=-\phi\left(\frac{y-q}{\sigma}\right)\widehat{g}(q)\leq -\frac{c_I}{2}<0 \quad \quad \text{for} \ q\in I
\end{equation*}
so $\Psi_n$ is strictly decreasing on $I$, and therefore $W_n$ is strictly convex on $I$.

Fix $\varepsilon>0$ and $\eta>0$. Since $\widehat{\delta}_0=O_p(1)$, there exists $M>0$ such that $\operatorname{P}(|\widehat{\delta}_0|>M)\leq \varepsilon$ for all sufficiently large $n$. By Lemma 2 of \citeA{hjort2011asymptotics},
\[
\operatorname{P}\bigl(|\widehat\delta_n-\widehat\delta_0|\ge \eta\bigr)
\le
\operatorname{P}\!\left(
\sup_{|\delta-\widehat\delta_0|\le \eta}
\left|
\widetilde W_n(\delta)-\widetilde W_n^0(\delta)
\right|
\ge
\frac12
\inf_{|\delta-\widehat\delta_0|=\eta}
\Bigl[
\widetilde W_n^0(\delta)-\widetilde W_n^0(\widehat\delta_0)
\Bigr]
\right).
\]
Because $\widetilde W_n^0$ is quadratic,
\[
\inf_{|\delta-\widehat\delta_0|=\eta}
\Bigl[
\widetilde W_n^0(\delta)-\widetilde W_n^0(\widehat\delta_0)
\Bigr]
=
\frac{\eta^2}{2}\phi\!\left(\frac{y-q_0}{\sigma}\right)g(q_0).
\]
Therefore,
\[
\begin{aligned}
\operatorname{P}\bigl(|\widehat\delta_n-\widehat\delta_0|\ge \eta\bigr)
\le
\operatorname{P}(|\widehat\delta_0|>M)+
\operatorname{P}\!\left(
\sup_{|\delta|\le M+\eta}
\left|
\widetilde W_n(\delta)-\widetilde W_n^0(\delta)
\right|
\ge
\frac{\eta^2}{4}\phi\!\left(\frac{y-q_0}{\sigma}\right)g(q_0)
\right).
\end{aligned}
\]
The second probability tends to zero by the uniform approximation established above, while the first is at most $\varepsilon$. Since $\varepsilon>0$ is arbitrary, we conclude that $\widehat\delta_n-\widehat\delta_0=o_p(1)$.

Equivalently,
\begin{equation}
r_n(\widehat q-q_0)
=
\left[\phi\!\left(\frac{y-q_0}{\sigma}\right)g(q_0)\right]^{-1}r_n\Psi_n(q_0)
+
o_p(1),
\end{equation}
which is the desired asymptotic expansion. \qed

\subsubsection{Analysis of $\Psi_n(q_0)$}
\label{subsubsection:analysis_Psi_n}

We decompose $\Psi_n(q_0)=\Psi_{n1}+\Psi_{n2}$ where
\begin{equation}
\begin{aligned}
\Psi_{n1}&=(2\pi)^{-1} \int  \frac{\overline{\psi^{\star}(q_0,t;y)}}{f_{\varepsilon}^{\star}(t)} \left(\widehat{f}_{Y}^{\star}(t)- f_{Y}^{\star}(t) \right)K^{\star}(h_nt) \mathrm{d}t, \\[7pt]
\Psi_{n2}&=(2\pi)^{-1} \int  \frac{\overline{\psi^{\star}(q_0,t;y)}}{f_{\varepsilon}^{\star}(t)}f_{Y}^{\star}(t)\left(K^{\star}(h_nt)-1\right) \mathrm{d}t.
\end{aligned}
\label{equ:equs1}
\end{equation}
The first term $\Psi_{n1}$ captures the effects of random sampling error in estimating $f_{Y}^{\star}(t)$. The second term $\Psi_{n2}$ is a nonstochastic bias arising from the kernel regularization.


\noindent\underline{Analysis of $\Psi_{n2}$.}

By \Cref{propo:fourier_decay}, there exists a constant $C>0$ such that
\[
|\psi^\star(q_0,t;y)|\le \frac{C}{1+|t|}
\qquad\text{for all } t\in\mathbb R.
\]
We split $\Psi_{n2}(q_0)=\Psi_{n2}^{\le}(q_0)+\Psi_{n2}^{>}(q_0)$, where the first term integrates over $|t|\le h_n^{-1}$ and the second over $|t|>h_n^{-1}$.
Because $K^\star$ is supported on $[-1,1]$, we have $K^\star(h_nt)=0$ whenever $|t|>h_n^{-1}$.
Hence
\[
|\Psi_{n2}^{>}(q_0)|
\le C\int_{|t|>h_n^{-1}}\frac{|g^\star(t)|}{1+|t|}\,dt.
\]
By Cauchy-Schwarz inequality and \Cref{assumption1},
\[
|\Psi_{n2}^{>}(q_0)|
\le
C
\Biggl(\int |g^\star(t)|^2(1+t^2)^s\,dt\Biggr)^{1/2}
\Biggl(\int_{|t|>h_n^{-1}}(1+t^2)^{-(s+1)}\,dt\Biggr)^{1/2}
=
O\bigl(h_n^{s+1/2}\bigr).
\]

For the low-frequency part $\Psi_{n2}^{\le}(q_0)$, first we aim to establish that $|K^{\star}(h_nt)-1|\lesssim |h_nt|^{s+1}$. Expanding $K^{\star}(h_nt)$ around $t=0$:
\begin{equation*}
K^{\star}(h_nt)=1+\frac{(h_nt)^{s+1}}{(s+1)!}\int (-iv)^{s+1}K(v)\exp(-i\xi h_ntv)\mathrm{d}v
\end{equation*}
for some $\xi\in[0,1]$. Then,
\begin{align*}
\left| K^{\star}(h_nt)-1 \right|^2&\lesssim  |h_nt|^{2(s+1)}\left[\int \Big{|}(-iv)^{s+1}K(v)\exp(-i\xi h_ntv)\Big{|}\mathrm{d}v \right]^2 \\[7pt]
&\overset{(a)}{\lesssim} |h_nt|^{2(s+1)}\left[ \int |v|^{s+1}|K(v)|\mathrm{d}v \right]^2 \overset{(b)}{\lesssim} |h_nt|^{2(s+1)}
\end{align*}
where (a) is obtained from $|\exp(-it\xi h_ntv)|=1$; (b) follows from \Cref{assumption3} (ii).
 Therefore,
\[
|\Psi_{n2}^{\le}(q_0)|
\le
Ch_n^{s+1}
\int_{|t|\le h_n^{-1}}
\frac{|t|^{s+1}|g^\star(t)|}{1+|t|}
\,dt.
\]
Applying Cauchy-Schwarz inequality again,
\begin{align*}
|\Psi_{n2}^{\le}(q_0)|
&\le
Ch_n^{s+1}
\Biggl(\int |g^\star(t)|^2(1+t^2)^s\,dt\Biggr)^{1/2}
\Biggl(
\int_{|t|\le h_n^{-1}}
\frac{|t|^{2(s+1)}}{(1+t^2)^{s+1}}\,dt
\Biggr)^{1/2}.
\end{align*}
It remains to bound the last integral. On $|t|\leq 1$, the integrand is bounded. On $1\leq |t|\leq h_n^{-1}$, $\frac{|t|^{2(s+1)}}{(1+t^2)^{s+1}}\lesssim 1$. Hence
\[
\int_{|t|\le h_n^{-1}}
\frac{|t|^{2(s+1)}}{(1+t^2)^{s+1}}\,dt
=
O\bigl(h_n^{-1}\bigr).
\]
Therefore we have $|\Psi_{n2}^{\le}(q_0)|=O(h_n^{s+1/2})$, and conclude that $\Psi_{n2}(q_0)=O(h_n^{s+1/2})$.

\noindent\underline{Analysis of $\Psi_{n1}$.}

Observe that $\Psi_{n1}$ is equal to an average of i.i.d. term:
\begin{equation*}
\Psi_{n1}=n^{-1}\sum_{i=1}^n (2\pi)^{-1}\int\frac{\overline{\psi^{\star}(q_0,t;y)}}{f_{\varepsilon}^{\star}(t)} \Big{(} \exp(itY_i)-f_{Y}^{\star}(t) \Big{)}K^{\star}(h_nt)\mathrm{d}t.
\end{equation*}

The variance term of $\Psi_{n1}$ can be computed as
\begin{equation}
\begin{aligned}
\var\left(\Psi_{n1}\right)&\leq \frac{1}{4\pi^2n}\operatorname{E}\left| \int \frac{\overline{\psi^{\star}(q_0,t;y)}}{f_{\varepsilon}^{\star}(t)}\exp(it Y_i)K^{\star}(h_nt)\mathrm{d}t \right|^2\\[7pt]
\end{aligned}
\label{appendix:var_psi}
\end{equation}
Letting $\omega(q_0,t;y)=\overline{\psi^{\star}(q_0,t;y)}\exp(ity)/f_{\varepsilon}^{\star}(t)$, the first term of the right-hand side of \Cref{appendix:var_psi} can be computed as
\begin{align}
\text{First term of Equ. (\ref{appendix:var_psi})}&=\frac{1}{4\pi^2 n}\operatorname{E}\left| \int \omega(q_0,t;y)\exp(-it(y-Y_i))K^{\star}(h_nt)\mathrm{d}t \right|^2 \notag \\[7pt]
&=\frac{1}{4\pi^2 n}\int \left| \int \omega(q_0,t;y)\exp(-itz)K^{\star}(h_nt)\mathrm{d}t \right|^2 f_{Y}(y-z)\mathrm{d}z \notag \\[7pt]
&\leq \frac{1}{4\pi^2n} \Big{\|} f_{Y} \Big{\|}_{\infty} \int \left| \int \omega(q_0,t;y)\exp(-it z) K^{\star} (h_nt)\mathrm{d}t \right|^2 \mathrm{d}z \notag \\[7pt]
&\overset{(a)}{=}\frac{1}{2\pi n}\Big{\|} f_{Y} \Big{\|}_{\infty}  \int \left| \omega(q_0,t;y)K^{\star} (h_nt) \right|^2 \mathrm{d}t \label{appendix:equ_l2}
\end{align}
where (a) is obtained from Parseval's identity, and $\|f_{Y}\|_{\infty}$ is finite since that
\begin{equation*}
\sup_{y\in\mathbb{R}}\ \Big{|} f_{Y}(y) \Big{|} \leq \int \left( \sup_{y\in\mathbb{R}} |g(y-z)| \right)\mathrm{d}F_{\varepsilon}(z)=\Big{\|} g \Big{\|}_{\infty}<\infty.
\end{equation*}


We calculate that $\omega(q_0,t;y)=\sigma\tau +\eta(t)$ where
\begin{equation}
\eta(t)=\exp\left(\frac{1}{2}t^2\sigma^2\right)\left[ -\frac{\exp(-it(q_0-y))}{it}\phi\left(\frac{y-q_0}{\sigma}\right)+o(t^{-1}) \right].
\label{equ:eta_t}
\end{equation}
Then we have that
\begin{equation}
\int \left| \omega(q_0,t;y)K^{\star} (h_nt) \right|^2 \mathrm{d}t \leq  2\sigma^2 \int \left| K^{\star}(h_nt) \right|^2\mathrm{d}t+2\int \left| \eta(t)K^{\star}(h_nt) \right|^2 \mathrm{d}t.
\label{equ:appendix_e13}
\end{equation}

For the first term in \Cref{equ:appendix_e13}, by Parseval's identity and \Cref{assumption3}, we have that $\int \left| K^{\star}(h_nt) \right|^2 \mathrm{d}t$ $=h_n^{-1}\int\left| K^{\star}(t) \right|^2 \mathrm{d}t=  2\pi h_n^{-1}\int \left| K(z) \right|^2\mathrm{d}z$$=O(h_n^{-1})$.

It remains to bound the second term in \Cref{equ:appendix_e13}
\[
I_{n2}(h_n):=\int \left| \eta(t)K^{\star}(h_nt) \right|^2dt .
\]
Choose $T>0$ large enough that, by \eqref{equ:eta_t},
\[
|\eta(t)|
\le
C\,\exp\!\left(\frac{1}{2}t^2\sigma^2\right)|t|^{-1}
\qquad\text{for all }|t|\ge T .
\]
Since $\eta$ is continuous on compact sets and $K^\star$ is bounded, we may split $I_{n2}(h_n)$ as
\[
I_{n2}(h_n)
=
\int_{|t|\le T}\left| \eta(t)K^{\star}(h_nt) \right|^2dt
+
\int_{T<|t|\le h_n^{-1}}\left| \eta(t)K^{\star}(h_nt) \right|^2dt
=: I_{n2}^{(1)}(h_n)+I_{n2}^{(2)}(h_n),
\]
where the upper limit $h_n^{-1}$ comes from the support condition on $K^\star(h_nt)$. The first term satisfies $I_{n2}^{(1)}(h_n)=O(1).$ For the second term, the above bound on $\eta(t)$ yields
\begin{align*}
I_{n2}^{(2)}(h_n)
&\le
C\int_{T<|t|\le h_n^{-1}}
\frac{\exp(t^2\sigma^2)}{t^2}|K^{\star}(h_nt)|^2dt \\[7pt]
&\le
C h_n\int_{Th_n\le |u|\le 1}
\frac{\exp(u^2h_n^{-2}\sigma^2)}{u^2}|K^{\star}(u)|^2du
\end{align*}
with the change of variable $u=h_nt$.

Fix any $\delta\in(0,1)$. Split this integral into the interior region $Th_n\le |u|\le 1-\delta$ and the boundary region $1-\delta\le |u|\le 1$. On the interior region,
\begin{align*}
h_n\int_{Th_n\le |u|\le 1-\delta}
\frac{\exp(u^2h_n^{-2}\sigma^2)}{u^2}|K^{\star}(u)|^2du
&\le
C h_n\exp\!\bigl(\sigma^2(1-\delta)^2h_n^{-2}\bigr)
\int_{Th_n\le |u|\le 1-\delta}u^{-2}du \\
&=
O\!\left(\exp\!\bigl(\sigma^2(1-\delta)^2h_n^{-2}\bigr)\right).
\end{align*}
Since $h_n\to 0$, it follows that $\exp\!\bigl(\sigma^2(1-\delta)^2h_n^{-2}\bigr)
=
o\!\left(h_n^3\exp(\sigma^2h_n^{-2})\right).$

On the boundary region, the contributions from neighborhoods of $u=1$ and $u=-1$ are the same, so it suffices to consider $u\in[1-\delta,1]$. By boundedness of $K^\star$,
\begin{align*}
h_n\int_{1-\delta\le |u|\le 1}
\frac{\exp(u^2h_n^{-2}\sigma^2)}{u^2}|K^{\star}(u)|^2\mathrm{d}u
&\le
C h_n\int_{1-\delta}^{1}\exp(u^2h_n^{-2}\sigma^2)\,\mathrm{d}u \\[7pt]
&=
Ch_n\exp(\sigma^2h_n^{-2})
\int_{0}^{\delta}
\exp\!\left(-\sigma^2(2v-v^2)h_n^{-2}\right)\mathrm{d}v \\[7pt]
&\le
Ch_n\exp(\sigma^2h_n^{-2})
\int_{0}^{\delta}
\exp(-\sigma^2 v h_n^{-2})\,\mathrm{d}v \\[7pt]
&=
O\!\left(h_n^3\exp(\sigma^2h_n^{-2})\right)
\end{align*}
where $u=1-v$, and $2v-v^2\ge v$ for $v\in[0,\delta]\subset[0,1]$.

Combining the interior and boundary bounds, we obtain $
I_{n2}^{(2)}(h_n)
=
O\!\left(h_n^3\exp(\sigma^2h_n^{-2})\right)$ and thus
$
I_{n2}(h_n)
=
O(1)+O\!\left(h_n^3\exp(\sigma^2h_n^{-2})\right)
=
O\!\left(h_n^3\exp(\sigma^2h_n^{-2})\right),
$ since $h_n^3\exp(\sigma^2h_n^{-2})\to\infty$ as $h_n\to 0$.

Substituting this and the bound for the first term into \Cref{equ:appendix_e13}, we get
\[
\int \left| \omega(q_0,t;y)K^{\star}(h_nt) \right|^2dt
=
O(h_n^{-1})
+
O\!\left(h_n^3\exp(\sigma^2h_n^{-2})\right).
\]
Because $h_n^{-1}=o\!\left(h_n^3\exp(\sigma^2h_n^{-2})\right)$, it follows that
\[
\int \left| \omega(q_0,t;y)K^{\star}(h_nt) \right|^2dt
=
O\!\left(h_n^3\exp(\sigma^2h_n^{-2})\right).
\]
Hence, by \Cref{appendix:equ_l2}, $\var(\Psi_{n1})
=
O\!\left(n^{-1}h_n^3\exp(\sigma^2h_n^{-2})\right).$ Combining this with $\Psi_{n2}^2=O(h_n^{2s+1})$, we obtain
\[
\operatorname{E}\Bigl[\Psi_n(q_0)-\Psi(q_0)\Bigr]^2
=
\var(\Psi_{n1})+\Psi_{n2}^2
\le
O\!\left(n^{-1}h_n^3\exp(\sigma^2h_n^{-2})\right)
+
O(h_n^{2s+1}).
\]
Now choose $h_n=c(\log n)^{-1/2}$ for some $c>\sigma$. Then
\[
n^{-1}h_n^3\exp(\sigma^2h_n^{-2})
=
n^{-1+\sigma^2/c^2}(\log n)^{-3/2}
=
o\!\left((\log n)^{-(2s+1)/2}\right),
\]
so
\[
\operatorname{E}\Bigl[\Psi_n(q_0)-\Psi(q_0)\Bigr]^2
=
O\!\left((\log n)^{-(2s+1)/2}\right).
\]
This choice of $h_n$ also guarantees the uniform consistency of $\hat g(\cdot)$, which is needed in the analysis of $W_{n2}(\delta)$ in \eqref{equ:W_n2}, see\Cref{prop:uniform_consist}.

Finally, since $\Psi(q_0)=0$, let $M_{\varepsilon}=\sqrt{C/\varepsilon}$. For every $\varepsilon>0$, Chebyshev's inequality yields
\[
\operatorname{P}\left(
(\log n)^{(2s+1)/4}\bigl|\Psi_n(q_0)\bigr|>M_{\varepsilon}
\right)
\le
\frac{
(\log n)^{(2s+1)/2}\operatorname{E}\bigl[\Psi_n(q_0)-\Psi(q_0)\bigr]^2
}{
M_{\varepsilon}^2
}
\le
\frac{C}{M_{\varepsilon}^2}
\le
\varepsilon.
\]
Therefore $\Psi_n(q_0)=O_p\!\left((\log n)^{-(2s+1)/4}\right)$. \qed

\subsubsection{Results for $\widehat{q}_G(\tau;y)$}
\label{subsubsection:result_qG}

Let $r_n=(\log n)^{(2s+1)/4}$. From the asymptotic expansion established above, we obtain that
\begin{align*}
(\log n)^{(2s+1)/4}\left(\widehat{q}_G(\tau;y)-q_G(\tau;y)\right)=\underbrace{-(\log n)^{(2s+1)/4}\left[ \phi\left(\frac{y-q_0}{\sigma}\right)g(q_0) \right]^{-1}\Psi_{n}(q_0)}_{O_p(1)}+o_p(1) \\[7pt]
\end{align*}
\vspace{-5em}

\noindent which completes the proof, establishing both consistency and the convergence rate of $\widehat{q}_G(\tau;y)$.  \qed

\subsection{Proof of \Cref{ebci_coverage_lower_bound}}

\label{subsection:proof_minimax_coverage}

The proof reduces the lower bound problem for conditional coverage errors to a multiple testing problem. We construct a collection of alternative deconvolution densities $\{g_{m}\}_{m=1}^M$ around the baseline $g_0$ such that the induced conditional coverage error has a specific ``step-function'' structure. Under this structure, if an interval had conditional coverage error uniformly smaller than the claimed rate, it could decode which alternative generated the data. The proof then shows that such decoding is impossible because the alternatives are statistically indistinguishable.

To formalize this argument, we fix a baseline density $g_0\in\mathcal{G}(s,L/2)$ introduced in Appendix \ref{subsection:proof_rate_posterior_quantile}. Then define
\begin{equation*}
I(\beta;y):=\Big{[} q_{G_0}(\beta;y) \ , \ q_{G_0}(\beta+1-\alpha;y) \Big{]}, \quad \quad \text{for} \ \beta\in[0,\alpha]
\end{equation*}
so that $\operatorname{P}_{G_0}\left( \theta\in I(\beta;y)\mid Y=y\right)=1-\alpha$. Thus every posterior interval under $G_0$ with conditional coverage $1-\alpha$ can be indexed by its left-tail probability $\beta\in[0,\alpha]$. The restriction $\beta\in[0,\alpha]$ ensures that the right endpoint is a valid posterior quantile since $\beta+1-\alpha\in[0,1]$. For a general mixing distribution $G$, write $f_{G}(y)=\int \phi(\frac{y-\theta}{\sigma})\mathrm{d}G(\theta)$ for the marginal density of $Y$ evaluated at $y$.

The proof has four steps. Step 1 constructs the alternatives, Step 2 shows that their corresponding data distributions are statistically indistinguishable, Step 3 shows that a single interval can have conditional coverage error at the claimed rate only over a local window of alternatives, and Step 4 converts this into a multiple testing problem and applies Fano's inequality.

\subsubsection{Step 1. Constructing many alternatives}
\label{subsubsection:minimax_coverage_step1}

Step 1 constructs a collection of mixing densities $\{g_{m}\}_{m=1}^M\subset\mathcal{G}(s,L)$ that are well separated in the target criterion: under alternative $m$, the associated conditional coverage error changes sign across an $m$-specific transition window of width $O(T^{-1})$, and outside that window it is uniformly separated from zero by an order of $T^{-s-1/2}$. Because of this uniform separation, if an interval has conditional coverage error of the order $T^{-s-1/2}$, then under alternative $m$ its left endpoint must lie inside the corresponding transition window.

To index these alternatives, we first choose a compact interval $\mathcal{J}\subset(0,\alpha)$ such that $\mathrm{dist}(\mathcal{J},\mathcal{J}+1-\alpha)=\inf_{\beta\in\mathcal{J},\beta'\in\mathcal{J}+1-\alpha} |\beta-\beta'|>0$. Next, choose $\Delta>0$ so that \Cref{lemma:sign_function_property,lem:oscillatory-sums,lem:localized-perturbations} apply on $\mathcal{J}$, and let $\beta_1$ be the left-endpoint of $\mathcal{J}$. We define a grid of points $\beta_m:=\beta_1+\Delta(m-1)/T$ for $m=1,...,M$, where $M$ is the largest integer such that $\beta_M\in \mathcal{J}$. Because $\mathcal{J}$ has a fixed positive length, we have $M\asymp T$.

We now define the alternatives. For each $m=1,...,M$, let $g_{m}:=g_0+\bar{H}_T+\widetilde{H}_{m,T}$ where $g_0$ is the baseline Cauchy density, $\bar{H}_T$ is a common perturbation defined in Appendix \ref{subsubsection:construction_common}, and $\tilde{H}_{m,T}$ is an $m$-specific cumulative perturbation that determines the location of the sign change inside $\mathcal{J}$. Following Appendix \ref{subsection:proof_rate_posterior_quantile}, we set $\theta_m:=q_{G_0}(\beta_m;y)$, and $H_{m,T}:=\nu T^{-s+1/2}h(T(\theta-\theta_m))$ whose Fourier transform is $H_{m,T}^{\star}(t)=\nu T^{-s-1/2}(-i\,\text{sgn}(t))\exp(it\theta_m)k^{\star}(|t|/T)$. We then define the cumulative perturbations
\begin{equation*}
\widetilde{H}_{m,T}:=\varepsilon\left( \sum_{j<m}H_{j,T}-\sum_{j\geq m} H_{j,T} \right)
\end{equation*}
where $\varepsilon>0$ is sufficiently small. Thus each alternative is obtained by adding to the baseline density a common perturbation $\bar H_T$ and an $m$-dependent signed cumulative perturbation $\widetilde H_{m,T}$.

As verified in Appendix \ref{subsubsection_verifying_alternative}, these alternatives are admissible densities within the target function class; that is, $g_{m}\in\mathcal{G}(s,L)$ for all $m=1,\dots,M$ whenever $T$ is sufficiently large.

We now define the conditional coverage errors under each alternative. For $\beta\in[0,\alpha]$, let $D_{m}(\beta;y)=\operatorname{P}_{G_{m}}\left(\theta\in I(\beta;y)\mid Y=y\right)-(1-\alpha)$. To characterize where the sign of $D_{m}(\beta;y)$ can change, partition $\mathcal{J}$ into cells
\[
B_k=
\left[\beta_k-\frac{\Delta}{2T}\ , \ \beta_k+\frac{\Delta}{2T}\right]\cap\mathcal J, \quad \quad k=1,...,M.
\]
so that $\mathcal{J}=\bigcup_{k=1}^{M} B_k$.

The key output of Step 1 is a sign-separation property for $D_{m}(\beta;y)$: away from a uniformly bounded neighborhood of the index $m$, the sign of $D_{m}(\beta;y)$ is fixed and its magnitude is of order $T^{-s-1/2}$. Specifically, Appendix \ref{subsubsection_sign_separation} shows that there exist constants $c_1>0$ and $r\geq 1$, independent of $m$ and $T$, such that for all sufficiently large $T$, the error $D_{m}(\beta;y)$ is uniformly positive on the left-hand cells:
\begin{equation}
D_{m}(\beta;y)\ge c_1T^{-s-1/2}
\qquad
\text{for }\beta\in \bigcup_{k\le m-r}B_k,
\label{equ:minimax_Dmt_left}
\end{equation}
and uniformly negative on the right-hand cells:
\begin{equation}
D_{m}(\beta;y)\le -c_1T^{-s-1/2}
\qquad
\text{for }\beta\in \bigcup_{k\geq m+r}B_k.
\label{equ:minimax_Dmt_right}
\end{equation}
Consequently, its sign changes only within the transition region:
\[
\mathcal{T}_{m}:=\bigcup_{m-r<k<m+r} B_k.
\]
Then $\beta\in \mathcal{J}\setminus \mathcal{T}_{m}$ implies that $|D_{m}(\beta;y)|\geq c_1T^{-s-1/2}.$


For each $m$, $D_{m}(\beta;y)$ changes sign across an $m$-specific transition region $\mathcal{T}_{m}$ of uniform width $O(T^{-1})$. Outside of $\mathcal{T}_{m}$, it is uniformly bounded away from zero by the order of $T^{-s-1/2}$. As a result, the family $\{D_{m}(\cdot;y)\}_{m=1}^M$ behaves like a collection of step functions with a transition region of uniformly bounded width.


\subsubsection{Step 2. The alternatives are statistically close}

\label{subsubsection:minimax_coverage_step2}


We now show that the alternatives constructed in Step 1 are statistically close. First, we bound the Kullback-Leibler divergence between the laws of $\mathbf{Y}_{-i}$ under different alternatives. Second, because the target of the lower bound is the conditional coverage probability at a fixed $y$, we show that the corresponding posterior distributions are also uniformly close.

We use two notions of divergence: the Kullback-Leibler divergence, defined in \Cref{definition:KL_divergence}, and the $\chi^2$ divergence, introduced in \Cref{prop:A2_two_point}.

\begin{definition}[Kullback-Leibler divergence, Definition 2.5, \protect\citeA{tsybakov2008}]
The Kullback-Leibler (KL) divergence between two probability measures $P$ and $Q$ is defined by
\begin{equation*}
\mathrm{KL}\left(P,Q\right)=\begin{cases}
\int \log \frac{\mathrm{d}P}{\mathrm{d}Q}\mathrm{d}P,  \quad &\text{if} \ P\ll Q \\[7pt]
+\infty                                  ,   \quad &\text{otherwise}
\end{cases}
\end{equation*}
where $P\ll Q$ means that $P$ is absolutely continuous with respect to $Q$, and $\frac{\mathrm{d}P}{\mathrm{d}Q}$ is the Radon-Nikodym derivative.
\label{definition:KL_divergence}
\end{definition}

Let $\operatorname{P}_{m}$ denote the law of $\mathbf{Y}_{-i}=\{Y_j\}_{j\neq i}$ under $g_{m}$ and let $(f_{Y})_{m}$ be the corresponding marginal density. Since $\mathbf{Y}_{-i}$ consists of $n-1$ i.i.d. observations, the joint divergence tensorizes over the marginals. Using the inequality (2.27) in \citeA{tsybakov2008}, we obtain
\begin{align*}
\mathrm{KL}(\operatorname{P}_{m},\operatorname{P}_{m'})&=(n-1)\mathrm{KL}\Big{(}(f_{Y})_{m},(f_{Y})_{m'}\Big{)} \leq(n-1)\chi^2\Big{(}(f_{Y})_{m},(f_{Y})_{m'}\Big{)} \\[7pt]
&\leq 2(n-1)\int_{\mathbb R}
\frac{\big((f_Y)_{m}(y)-(f_Y)_{m'}(y)\big)^2}{(f_Y)_0(y)}\,\mathrm{d}y
\end{align*}
The last inequality holds because Step 1 gives $g_{m}\geq c'g_0$ so that $(f_{Y})_{m}(y)\geq c'(f_{Y})_0(y)$ for some constant $c'>0$.

Define $u_{m,m'}:=(f_Y)_{m}-(f_{Y})_{m'}=(g_{m}-g_{m'})*f_\varepsilon.$ By \Cref{lemma:step2_weighted_l2_bound}, we have that
\[
\int_{\mathbb R}\frac{u_{m,m'}(y)^2}{(f_Y)_0(y)}\,\mathrm{d}y
\le
C\varepsilon^2\nu^2 T^{-2s+2}\exp(-\sigma^2T^2).
\]
Choosing $T=T_n=\sigma^{-1}\sqrt{(1+\eta)\log n}$ as in Appendix~\ref{subsection:proof_rate_posterior_quantile}, we obtain
\[
(n-1)\int_{\mathbb R}
\frac{u_{m,m'}(y)^2}{(f_Y)_0(y)}\,\mathrm{d}y \leq C(\log n)^{-s+1}n^{-\eta}
\le C
\quad\quad\text{for all } \ m,m'.
\]

Therefore
\begin{equation}
\sup_{m,m'} \ \mathrm{KL}\left(\operatorname{P}_{m},\operatorname{P}_{m'}\right)\le C.
\label{equ:minimax_A.11}
\end{equation}
Thus the full leave-one-out sample $\mathbf{Y}_{-i}$ cannot distinguish the alternatives well.

We next give an analogous argument for errors in conditional coverage probability at the fixed value $y$. Let $F_G(\cdot \mid y)$ denote the posterior distribution of $\theta_i$ given $Y_i=y$:
\begin{equation*}
F_G(x \mid y)=\cfrac{ \int \mathbf{1}\{\theta\leq x\}\phi\left(\frac{y-\theta}{\sigma}\right)\mathrm{d}G(\theta) }{ \int \phi\left(\frac{y-\theta}{\sigma}\right)\mathrm{d}G(\theta)}.
\end{equation*}
By \Cref{lemma:step2_posterior_closedness}, the posterior distributions induced by the alternatives are uniformly close to the baseline posterior:
\begin{equation}
\sup_m \ \sup_{x\in\mathbb R} \ \left| F_{G_{m}}(x \mid y)-F_{G_0}(x\mid y) \right|\leq CT^{-s-1/2}.
\label{equ:minimax_A.12}
\end{equation}
Consequently, let $I=[\ell,u]\subset\mathbb{R}$ be any interval. Then $\operatorname{P}_G(\theta\in I \mid Y=y)=F_G(u\mid y)-F_G(\ell\mid y)$, so we have that
\begin{equation}
\Big{|}\operatorname{P}_{G_{m}}(\theta\in I \mid Y=y)-\operatorname{P}_{G_0}(\theta \in I \mid Y=y)\Big{|}\leq 2CT^{-s-1/2}.
\label{equ:minimax_A.13}
\end{equation}
Hence, at the fixed value $y$, the posterior mass assigned to any interval deviates from the baseline by at most $O(T^{-s-1/2})$ across the alternatives.


\subsubsection{Step 3. Accuracy under an alternative forces localization}

This step turns the sign pattern from Step 1 into a localization statement for any candidate interval. Fix an arbitrary interval $I=[\ell,u]$. We show that if $I$ has a conditional coverage error of order at most $T^{-s-1/2}$ under alternative $m$, the corresponding baseline tail index $\beta$ must lie in the $m$-specific transition region $\mathcal{T}_{m}$. Since these transition regions have uniformly bounded widths and are eventually disjoint when the indices are sufficiently separated, a single interval can have a conditional coverage error of $O(T^{-s-1/2})$ for only finitely many alternatives.

Let $\beta_0(I):=\operatorname{P}_{G_0}(\theta\in(-\infty,\ell] \mid Y=y)$. Suppose that $|\operatorname{P}_{G_{m}}(\theta \in I\mid Y=y)-(1-\alpha)|\leq AT^{-s-1/2}$ for some $m$. Then by \eqref{equ:minimax_A.13}, setting $C_1(A)=A+2C$, we have
\begin{equation}
\big|\operatorname{P}_{G_0}(\theta\in I\mid Y=y)-(1-\alpha)\big|\le C_1(A)T^{-s-1/2}.
\label{equ:minimax_A.15}
\end{equation}

Define $\beta(I):=\beta_0(I)\wedge \alpha$ and  $\widetilde I:=I(\beta(I),y)$. We first claim that
\begin{equation}
0\le \beta_0(I)-\beta(I)\le C_1(A)T^{-s-1/2}.
\label{equ:minimax_A.16}
\end{equation}
Because $I=[\ell,u]$ is an interval, we have that $\operatorname{P}_{G_0}(\theta\in I\mid Y=y)\le \operatorname{P}_{G_0}(\theta\geq \ell \mid Y=y)=1-\beta_0(I)$. Combining this with \eqref{equ:minimax_A.15} gives $1-\beta_0(I)\ge 1-\alpha-C_1(A)T^{-s-1/2}$, which is equivalent to \eqref{equ:minimax_A.16}.

Let $F_{G_0}(\cdot\mid y)$ denote the posterior distribution under $G_0$, and write
$\widetilde I=[\widetilde\ell,\widetilde u]$. By construction, $F_{G_0}(\ell\mid y)=\beta_0(I)$ and $F_{G_0}(\widetilde\ell\mid y)=\beta(I)$. Furthermore, $ F_{G_0}(u\mid y)=\beta_0(I)+\operatorname{P}_{G_0}(\theta\in I\mid Y=y)$ and $F_{G_0}(\widetilde u\mid y)=\beta(I)+(1-\alpha)$. The continuity of $F_{G_0}(\cdot\mid y)$ yields
\begin{align*}
\operatorname{P}_{G_0}\left(\theta\in I\triangle\widetilde I \mid Y=y\right)
& \ \leq
\left|F_{G_0}(\ell\mid y)-F_{G_0}(\widetilde\ell\mid y)\right|
+
\Big{|}F_{G_0}(u\mid y)-F_{G_0}(\widetilde u\mid y)\Big{|} \\[7pt]
&\ \le
2|\beta_0(I)-\beta(I)|
+
\big|\operatorname{P}_{G_0}(\theta\in I\mid Y=y)-(1-\alpha)\big| \\[7pt]
&\ \le 3C_1(A)T^{-s-1/2}.
\end{align*}
Hence, by \eqref{equ:minimax_A.12}, $\operatorname{P}_{G_{m}}(\theta\in I\triangle\widetilde I\mid Y=y)\le (3C_1(A)+4C)T^{-s-1/2}.$ Combined with $|P_{G_{m}}(\theta \in I \mid Y=y)-(1-\alpha)|\leq AT^{-s-1/2}$, the triangle inequality implies $|\operatorname{P}_{G_{m}}(\theta\in\widetilde{I}\mid Y=y)-(1-\alpha)|\le (4A+10C)T^{-s-1/2}.$ Since $\widetilde I=I(\beta(I);y)$, setting $C_2(A):=4A+10C$, this becomes
\begin{equation}
\big{|}D_{m}(\beta(I);y)\big{|}\le C_2(A)T^{-s-1/2}.
\label{equ:minimax_A.18}
\end{equation}

We next show that $\beta(I)$ must lie in $\mathcal{J}$. If $\beta(I)\notin \mathcal{J}$, then \Cref{lemma:common_perturbation} (ii) together with the decomposition $D_{m}=\bar{D}_T+\Gamma_{\tilde{H}_{m,T}}+\tilde{R}_{m}$ implies that
\begin{align*}
|D_{m}(\beta(I);y)|&\geq |\bar{D}_T(\beta(I);y)|-|\Gamma_{\tilde{H}_{m,T}}(\beta(I);y)|-|\tilde{R}_{m}(\beta(I);y)| \\[7pt]
&\geq \left(c\eta-C\varepsilon \nu-o(1)\right) T^{-s-1/2}.
\end{align*}
Choose $\varepsilon>0$ sufficiently small so that $C\varepsilon\nu\leq c\eta/4$. Then for all sufficiently large $T$, $|D_{m}(\beta(I);y)|\geq (c\eta/2)T^{-s-1/2}$ which contradicts \eqref{equ:minimax_A.18} whenever $C_2(A)<c\eta/2$. Hence $\beta(I)\in\mathcal{J}$.


Once $\beta\in \mathcal{J}$, the sign-separation argument from Step 1 applies. In particular, if $\beta\in \mathcal{J}\setminus \mathcal{T}_{m}$, it follows that $|D_{m}(\beta;y)|\geq c_1T^{-s-1/2}$ where $c_1$ is the constant from
(\ref{equ:minimax_Dmt_left})-(\ref{equ:minimax_Dmt_right}). Now choose $A>0$ such that $C_2(A)<c_1$, which ensures $\beta(I)\in\mathcal{T}_{m}$. Thus, whenever $I$ has a conditional coverage error of order $AT^{-s-1/2}$ under alternative $m$, its corresponding baseline tail index $\beta(I)$ must lie in the $m$-specific transition region $\mathcal{T}_{m}$.

Finally, we show that any fixed interval $I$ can satisfy this coverage error bound for at most $2r+1$ alternatives. If $|m-m'|>2r+1$, the two transition regions are disjoint. To see this, assume without loss of generality that $m'>m+2r+1$. Then
\begin{equation*}
\beta_{m'+r}-\frac{\Delta}{2T}-\left(\beta_{m+r}+\frac{\Delta}{2T}\right)=\frac{\Delta}{T}\left(m'-m-2r-1\right)>0
\end{equation*}
so $\mathcal{T}_{m}\cap \mathcal{T}_{m'}=\varnothing$. Thus the right endpoint of $\mathcal{T}_{m}$ lies strictly to the left of the left endpoint of $\mathcal{T}_{m'}$. Since $\beta(I)$ is a single point, it can belong to at most $2r+1$ transition regions.

\subsubsection{Step 4. Reduction to multiple testing}

This step completes the proof by using a local Fano argument. First, we construct a finite local packing of alternatives. Choose a subset $\mathcal{M}_T\subset\{1,...,M\}$ such that $|m-m'|>2r+1$ for all distinct $m,m'\in\mathcal{M}_T$. Because $r$ is fixed and $M\asymp T$, such a subset can be chosen with $|\mathcal{M}_T|\asymp T$.

For each $m\in\mathcal{M}_T$, we define the errors in conditional coverage probability of a realized interval $I$ under $G_m$ by $\mathrm{ECP}_{m}(I;y)=\operatorname{P}_{G_{m}}\left( \theta_i\in I \mid Y_i=y \right)-(1-\alpha) $. We define the $AT^{-s-1/2}$-accuracy set under $G_m$ by
\begin{equation*}
\mathcal{A}_m=\left\{ I: |\mathrm{ECP}_{m}(I;y)|\leq  AT^{-s-1/2} \right\}.
\end{equation*}
Step 3 implies the following separation property of the finite packing:
\begin{equation*}
\mathcal{A}_m\cap \mathcal{A}_{m'}=\varnothing, \quad\quad \text{for all distinct} \ m,m'\in\mathcal{M}_T.
\end{equation*}
Thus the family $\{G_m: m\in\mathcal{M}_T\}$ is separated in the sense that no single realized interval can lie in the $AT^{-s-1/2}$-accuracy sets of two distinct alternatives in $\mathcal{M}_T$.


We now associate each interval rule with a decoder $\hat{m}=\hat{m}(\mathrm{CI}_i(\mathbf{Y}_{-i}))$: if $\mathrm{CI}_i(\mathbf{Y}_{-i})\in\mathcal{A}_m$ for some $m\in\mathcal{M}_T$, set $\hat{m}=m$; otherwise, set $\hat{m}$ to be an arbitrary fixed element of $\mathcal{M}_T$. This map is well-defined because the collection $\{\mathcal{A}_m: m\in\mathcal{M}_T\}$ is pairwise disjoint.



Now fix $m\in\mathcal{M}_T$. If the true alternative is $m$ and $\hat{m}\neq m$, then the realized interval cannot be $AT^{-s-1/2}$-accurate under $G_{m}$. Hence
\begin{equation*}
\operatorname{P}_{m}\left( \hat{m}\neq m \right)\leq \operatorname{P}_{m}\left( \big{|}\mathrm{ECP}_{m}(\mathrm{CI}_i(\mathbf{Y}_{-i});y) \big{|}>AT^{-s-1/2}  \right).
\end{equation*}
Using the inequality $x^2\geq a^2\mathbf{1}\{|x|>a\}$, we obtain
\begin{equation*}
\operatorname{E}_{m}\Big{[}\mathrm{ECP}_{m}(\mathrm{CI}_i(\mathbf{Y}_{-i});y)^2\Big{]}\geq A^2T^{-2s-1}\operatorname{P}_{m}\left(\hat{m}\neq m\right).
\end{equation*}
Taking the supremum over $m\in\mathcal{M}_T$ yields
\begin{equation}
\sup_{m\in\mathcal{M}_T} \ \operatorname{E}_{m}\Big{[}\mathrm{ECP}_{m}(\mathrm{CI}_i(\mathbf{Y}_{-i});y)^2\Big{]} \geq A^2T^{-2s-1}\sup_{m\in\mathcal{M}_T} \operatorname{P}_{m}\left(\hat{m}\neq m\right).
\label{equ:minimax_A.30}
\end{equation}
Let $\hat{\mathscr{M}}_T$ denote the class of decoders induced by interval rules in $\mathscr{C}$, and let $\mathscr{M}_T$ denote the class of all measurable decoders mapping $\mathbf{Y}_{-i}$ to $\mathcal{M}_T$. Because $\hat{\mathscr{M}}_T\subseteq\mathscr{M}_T$, we have
\begin{equation*}
\inf_{\hat{m}\in\hat{\mathscr{M}}_T} \ \sup_{m\in\mathcal{M}_T} \ \operatorname{P}_{m}\left(\hat{m}\neq m\right)\geq \inf_{\hat{m}\in\mathscr{M}_T} \ \sup_{m\in\mathcal{M}_T} \ \operatorname{P}_{m}\left(\hat{m}\neq m\right)
\end{equation*}

Taking the infimum in \eqref{equ:minimax_A.30} over all interval rules in $\mathscr{C}$ produces the lower bound
\begin{equation}
\inf_{\mathrm{CI}_i\in\mathscr{C}} \ \sup_{m\in\mathcal{M}_T} \ \operatorname{E}_{m}\Big{[}\mathrm{ECP}_{m}(\mathrm{CI}_i(\mathbf{Y}_{-i});y)^2\Big{]} \geq A^2T^{-2s-1} \inf_{\hat{m}\in\hat{\mathscr{M}}_T} \ \sup_{m\in\mathcal{M}_T} \ \operatorname{P}_{m}\left(\hat{m}\neq m\right).
\label{equ:testing_error}
\end{equation}

It remains to lower bound the testing error probability on the right-hand side of (\ref{equ:testing_error}). By Fano's inequality (Theorem 1, \citeNP{scarlett2019introductory}), we have that
\begin{align*}
\sup_{m\in\mathcal{M}_T} \ \operatorname{P}_{m}\left(\hat{m}\neq m\right)\geq \frac{1}{|\mathcal{M}_T|}\sum_{m\in\mathcal{M}_T} \operatorname{P}_m\left(\hat{m}\neq m\right)\geq 1-\cfrac{I(V;\mathbf{Y}_{-i})+\log 2}{\log |\mathcal{M}_T|}
\end{align*}
where $I(V;\mathbf{Y}_{-i})$ is the mutual information between the random index $V\sim \text{Unif}(\mathcal{M}_T)$ and the observed data $\mathbf{Y}_{-i}$, that is defined and upper bounded as follows:
\begin{align*}
I(V;\mathbf{Y}_{-i})&:=\mathrm{KL}\Big{(} \operatorname{P}_{V,\mathbf{Y}_{-i}}, \operatorname{P}_{V}\otimes \operatorname{P}_{\mathbf{Y}_{-i}}  \Big{)}\overset{(a)}{=}\frac{1}{|\mathcal{M}_T|}\sum_{m\in\mathcal{M}_T}\mathrm{KL}\left( \operatorname{P}_{\mathbf{Y}_{-i}|V=m}, \operatorname{P}_{\mathbf{Y}_{-i}} \right)  \\[7pt]
&\overset{(b)}{=}\frac{1}{|\mathcal{M}_T|}\sum_{m\in\mathcal{M}_T} \mathrm{KL}\left(\operatorname{P}_m, \operatorname{P}_{\mathbf{Y}_{-i}}\right), \quad \quad \text{here} \ \operatorname{P}_{\mathbf{Y}_{-i}}=\frac{1}{|\mathcal{M}_T|}\sum_{m'\in\mathcal{M}_T}\operatorname{P}_{m'}\\[7pt]
&\overset{(c)}{\leq} \frac{1}{|\mathcal{M}_T|^2}\sum_{m,m'\in\mathcal{M}_T}\mathrm{KL}\left(\operatorname{P}_m,\operatorname{P}_{m'}\right)\leq \sup_{m,m'\in\mathcal{M}_T} \ \mathrm{KL}\left(\operatorname{P}_{m},\operatorname{P}_{m'}\right)\overset{(d)}{\le} C
\end{align*}
where (a) follows from Lemma 4, Eq. (23) of \citeA{scarlett2019introductory}, (b) follows because $\mathbf{Y}_{-i}|V=m\sim \operatorname{P}_m$, (c) is obtained from Lemma 4, Eq. (26) of \citeA{scarlett2019introductory}, (d) follows from the uniform bound on pairwise Kullback-Leibler divergences (\ref{equ:minimax_A.11}).



Then we have that
\begin{equation}
\inf_{\hat{m}\in\mathscr{M}_T} \ \sup_{m\in\mathcal{M}_T} \ \operatorname{P}_{m}\left(\hat{m}\neq m\right) \geq 1-\cfrac{C+\log 2}{\log |\mathcal{M}_T|}.
\label{equ:minimax_A.31}
\end{equation}
Because $|\mathcal{M}_T|\asymp T\rightarrow\infty$, the right-hand side of (\ref{equ:minimax_A.31}) is bounded away from zero; in particular, it is at least $1/2$ for all sufficiently large $T$.

 Combining (\ref{equ:minimax_A.30})-(\ref{equ:minimax_A.31}), and using $\{g_{m}: m\in\mathcal{M}_T\}\subset \mathcal{G}(s,L)$, we conclude that
\[
\inf_{\mathrm{CI}_i\in\mathscr{C}}\sup_{g\in\mathcal G(s,L)}
E\Big[\big(\operatorname{P}_G(\theta_i\in \mathrm{CI}_i(\mathbf{Y}_{-i})\mid Y_i=y)-(1-\alpha)\big)^2\Big]
\ge cT^{-2s-1}
\]
for some constant $c>0$. Letting $T\asymp \sqrt{\log n}$ proves the theorem.

\subsection{Proof of \Cref{ebci_condition_coverage}}

Fix $Y_i=y$ and $\tau\in(0,1)$, write $\widehat{q}=\widehat{q}_G(\tau;y)$ and $q_0=q_{G}(\tau;y)$. First we consider the conditional coverage probability $\operatorname{P}(\theta\leq \widehat{q} \ |\ Y_i=y)$ for a fixed $\tau\in(0,1)$.

Because the posterior distribution $\operatorname{P}(\theta\leq u|Y=y)$ is differentiable in a neighborhood around $q_0$ with the posterior density (\ref{equ:posterior_density}), the mean-value theorem gives that
\begin{equation}
\operatorname{P}\left(\theta\leq \widehat{q} \ | \ Y_i=y\right)=\operatorname{P}\left(\theta\leq q_0 \ | \ Y_i=y\right)+\pi(q^{*}|Y_i=y)\left(\widehat{q}-q_0\right)
\label{equ:appendix_cond_coverage}
\end{equation}
where $q^{*}$ lies between $\widehat{q}$ and $q_0$.  Notice that since $\widehat{q}\p q_0$ by \Cref{quantile_estimator} and $|q^{*}-q_0|\leq |\widehat{q}-q_0|$, it must be that $q^{*}\p q_0$. By \Cref{assumption1} and $s>1/2$, the mixing density $g(\cdot)$ and thus $\pi(\cdot|Y=y)$ is continuous, the continuous mapping theorem yields that $\pi(q^{*}|Y_i=y)\p \pi(q_0|Y_i=y)$. By $\operatorname{P}\left(\theta\leq q_0 \ | \ Y_i=y\right)=\tau$ and \Cref{quantile_estimator}, we obtain that
\begin{align*}
\operatorname{P}\left(\theta_i\leq \hat{q}_G(\tau;Y_i) \ | \ Y_i=y \right)&=\tau+\Big{(}\pi(q_0|Y_i=y)+o_p(1)\Big{)}\left(\widehat{q}-q_0\right) \\[7pt]
&=\tau+O_p\left((\log n)^{-(2s+1)/4}\right).
\end{align*}
Applying this result with $\tau=\alpha/2$ and $\tau=1-\alpha/2$ obtains the desired results.

\subsection{Proof of \Cref{ebci_coverage}}

Only in this subsection, for $\tau\in (0,1)$, we write $q(y):=q_G(\tau;y)$, $\hat{q}_{-i}(y):=\hat{q}_G(\tau;y)$, where $\hat{q}_{-i}(y)$ is computed from the leave-one-out sample $\mathbf{Y}_{-i}=\{Y_j\}_{j\neq i}$. Also let $r_n:=(\log n)^{(2s+1)/4}$. It suffices to prove that
\begin{equation*}
\operatorname{P}\Big{(} \theta_i\leq \hat{q}_{-i}(Y_i) \Big{)}=\tau
+O\!\left((\log n)^{-(2s+1)/4}\right).
\end{equation*}

Applying this with $\tau=\alpha/2$ and $\tau=1-\alpha/2$ gives the desired result for
\begin{equation*}
\mathrm{CI}_i^{\mathrm{NP}}=\Big{[}\hat{q}_{G}(\alpha/2;Y_i)\ , \ \hat{q}_{G}(1-\alpha/2;Y_i)\Big{]}.
\end{equation*}

By iterated expectations and the independence of $\mathbf{Y}_{-i}$ and $(\theta_i,Y_i)$,
\begin{equation}
\begin{aligned}
\operatorname{P}\left( \theta_i\leq \hat{q}_{-i}(Y_i) \right)-\tau &=\operatorname{E}\left[\operatorname{P}\left( \theta_i\leq \hat{q}_{-i}(Y_i) \mid Y_i,\mathbf{Y}_{-i} \right)-\operatorname{P}\left( \theta_i\leq q(Y_i) \mid Y_i \right)\right] \\[7pt]
&=\operatorname{E}\left[\operatorname{P}\left( \theta_i\leq \hat{q}_{-i}(Y_i) \mid Y_i \right)-\operatorname{P}\left( \theta_i\leq q(Y_i) \mid Y_i \right)\right].
\end{aligned}
\label{eq:iterated-marginal}
\end{equation}
Therefore, by the mean-value theorem, there exists $q^{*}(y)$ between $q(y)$ and $\hat{q}_{-i}(y)$ such that
\begin{equation}
\operatorname{P}\left( \theta_i\leq \hat{q}_{-i}(Y_i) \right)-\tau=\operatorname{E}\Big{[} \pi(q^{*}(Y_i)|Y_i)\left(\hat{q}_{-i}(Y_i)-q(Y_i)\right) \Big{]}
\label{eq:mean_value_theorem}
\end{equation}
Next we use the same argument as in Appendix \ref{appendix:proof_theorem3_2}, but uniformly over $y\in\mathcal{Y}$. Specifically \Cref{assumption_uniform_bound_posterior_density} gives that
\begin{equation*}
\phi\!\left(\frac{y-q(y)}{\sigma}\right)g(q(y))\ge \underline{\pi}
\qquad\text{for all }y\in\mathcal{Y}
\end{equation*}
and \Cref{prop:uniform_consist} gives the uniform consistency of $\hat{g}$. Hence the proof of the asymptotic expansion in Appendix \ref{appendix:proof_theorem3_2} yields
\begin{equation}
\sup_{y\in\mathcal{Y}} \ \left| r_n\left(\hat{q}_{-i}(y)-q(y)\right)+r_n\left[ \phi\left(\frac{y-q(y)}{\sigma}\right)g(q(y)) \right]^{-1}\Big{(} \Psi_{n1,-i}(y)+\Psi_{n2}(y) \Big{)} \right|=o_p(1)
\label{equ:uniform_equ_1}
\end{equation}
where $\Psi_{n1,-i}(y)$ and $\Psi_{n2}(y)$ are the stochastic and bias terms from the decomposition in Appendix \ref{appendix:proof_theorem3_2} with $\widehat f_Y^\star$ formed from $Y_{-i}$.

Moreover, the same argument also gives that $\sup_{y\in\mathcal{Y}} |\hat{q}_{-i}(y)-q(y)|=o_p(1)$. Since $|q^{*}(y)-q(y)|\leq |\hat{q}_{-i}(y)-q(y)|$, it follows that $\sup_{y\in\mathcal{Y}} |q^{*}(y)-q(y)|=o_p(1)$. Because $g$ is continuous by \Cref{assumption2}, the posterior density $\pi(\cdot|Y=y)$ is continuous in its first argument. Hence we have that
\begin{equation}
\sup_{y\in\mathcal{Y}} \ \Big{|} \pi\left(q^{*}(y) \mid Y=y\right)-\pi\left( q(y) \mid Y=y \right) \Big{|}=o_p(1).
\label{equ:uniform_equ_2}
\end{equation}

Substituting \eqref{equ:uniform_equ_1} and \eqref{equ:uniform_equ_2} into \eqref{eq:mean_value_theorem}, we obtain that
\begin{equation}
\operatorname{P}\Big{(} \theta_i\leq \hat{q}_{-i}(\tau ;Y_i) \Big{)}-\tau
=
A_{n1}+A_{n2}+o\!\left(r_n^{-1}\right),
\label{eq:An-decomposition}
\end{equation}
where
\begin{equation*}
A_{n1}
:=
\operatorname{E}\!\left[
\frac{\pi(q(Y_i)\mid Y_i)}
{\phi\!\left(\frac{Y_i-q(Y_i)}{\sigma}\right)g(q(Y_i))}
\Psi_{n1,-i}(Y_i)
\right], \quad
A_{n2}
:=
\operatorname{E}\!\left[
\frac{\pi(q(Y_i)\mid Y_i)}
{\phi\!\left(\frac{Y_i-q(Y_i)}{\sigma}\right)g(q(Y_i))}
\Psi_{n2}(Y_i)
\right].
\end{equation*}

We first consider $A_{n1}$. For every fixed $y$, $\operatorname{E}\!\left[\Psi_{n1,-i}(y)\mid Y_i=y\right]=0,$ because $\Psi_{n1,-i}(y)$ is the centered empirical-process term built from $\mathbf{Y}_{-i}$, and $\mathbf{Y}_{-i}$ is independent of $Y_i$. Hence $A_{n1}=0$.

We next consider $A_{n2}$. \Cref{assumption_uniform_bound_posterior_density} implies that
\begin{equation}
|A_{n2}|\leq \underline{\pi}^{-1}\operatorname{E}\left[ \pi(q(Y_i)\mid Y_i) |\Psi_{n2}(Y_i)| \right].
\label{equ:lower_bound_AN2}
\end{equation}

Now, the same argument used to analyze $\Psi_{n2}(q_0)$ in Appendix \ref{appendix:proof_theorem3_2} applies uniformly over $y\in\mathcal{Y}$ because \Cref{propo:fourier_decay} gives the $t^{-1}$ bound on the Fourier coefficient of the weighting function uniformly over $y\in\mathcal{Y}$. Therefore
\begin{equation}
\sup_{y\in\mathcal{Y}} \ |\Psi_{n2}(y)|
=
O\!\left(h_n^{\,s+1/2}\right).
\label{eq:uniform-bias-bound}
\end{equation}
Combining \eqref{equ:lower_bound_AN2} and \eqref{eq:uniform-bias-bound}, we have
\[
|A_{n2}|
\le
\underline{\pi}^{-1}
\sup_{y\in\mathcal{Y}}|\Psi_{n2}(y)|
\int \pi(q(\tau;y)\mid Y_i=y)\,dF_Y(y).
\]
By \Cref{assumption_expected_density}, the integral on the right-hand side is finite. Therefore $A_{n2}=O\!(h_n^{\,s+1/2}).$

Since $h_n\asymp (\log n)^{-1/2}$, we have that $h_n^{\,s+1/2}=O\!\left((\log n)^{-(2s+1)/4}\right).$ Substituting the results for $A_{n1}$ and $A_{n2}$ into (\ref{eq:An-decomposition}) gives the desired result. \qed

\section{Computational details of NP-EBCI}
\label{section:computation_npebci}

The practical implementation in \Cref{section:practical_implementation}, especially the bandwidth selection procedure in \Cref{subsection:bandwidth_choice}, can be computationally demanding. The procedure requires the computation of two posterior-quantile endpoints for each candidate bandwidth $h\in\mathcal{H}_n$, each fold $v$, and each held-out unit $i\in\mathcal{I}_v$. In large samples, this repeated evaluation can be costly.

The purpose of this section is therefore computational. We outline two strategies to speed up the computation. First, for each fold and each candidate bandwidth, the deconvolution objects are common to all held-out units and can therefore be computed once and reused throughout interval construction. Second, since the leave-fold-out NPMLE has a discrete structure, the coverage criterion in \Cref{subsection:bandwidth_choice} reduces to finite weighted sums over its support points. Consequently, the bandwidth search avoids repeated simulation or additional high-dimensional numerical integration.

Fix a fold $v\in\{1,...,V\}$ and a candidate bandwidth $h\in\mathcal{H}_n$. Define
\begin{equation*}
\hat{f}_Y^{\star(-v)}(t)=\frac{1}{|\mathcal{I}_v^{c}|}\sum_{j\in \mathcal{I}_v^c} \exp(itY_j), \quad \quad
\bar{f}_{\varepsilon}^{\star(-v)}(t)=\frac{1}{|\mathcal{I}_v^{c}|}\sum_{j\in \mathcal{I}_v^c} \exp\left(-\frac{1}{2}\sigma_j^2t^2\right),
\end{equation*}
and define the fold-specific deconvolution weight
\begin{equation*}
\hat{g}_{v,h}^{\star}(t):=\cfrac{\hat{f}_Y^{\star(-v)}(t) }{ \bar{f}_{\varepsilon}^{\star(-v)}(t) } K^{\star}(ht).
\end{equation*}
Then, for each held-out unit $i\in\mathcal{I}_v$, the posterior quantile estimator in Step 2 of \Cref{subsection:bandwidth_choice} can be written as
\begin{equation}
\hat{q}_h^{(-v)}\left(\tau;\mathcal{D}_i\right)=\underset{q}{\arg\min} \ \int \overline{ M^{\star}(q,t;\mathcal{D}_i) }\ \hat{g}_{v,h}^{\star}(t) \mathrm{d}t.
\label{equ:q_compute_integrand}
\end{equation}

A key computational feature is that $\hat{g}_{v,h}^{\star}(t)$ depends on the training sample $\mathcal{I}_v^c$ and on $h$, but not on the held-out unit $i$. Hence, within each fold $v$, the same deconvolution weight can be used for all units $i\in\mathcal{I}_v$. In implementation, we \textit{pre-compute} $\hat{g}_{v,h}^{\star}(t)$ once for each $(v,h)$ and then reuse it to evaluate the one-dimensional objective for all held-out units in that fold.

The remaining computational bottleneck is the repeated evaluation of $\overline{ M^{\star}(q,t;\mathcal{D}_i) }$ over many candidate values of $q$, many held-out units $i$, and many bandwidths $h$. We compute these transforms in batches. Specifically, we stack the integrands in (\ref{equ:q_compute_integrand}) for many held-out units into a single large matrix, rewrite the complex exponential terms in real arithmetic, and evaluate the resulting objective by matrix multiplication. This replaces repeated scalar quadrature problems with a small number of batched linear algebra operations. In practice, we evaluate the objective on a moderate grid of candidates $q$-values centered near $Y_i$, locate the minimizing grid point, and then apply a local quadratic refinement to obtain the posterior quantile estimates.

We next turn to Step 3 of \Cref{subsection:bandwidth_choice}. For each fold $v$, let the leave-fold-out NPMLE be
\begin{equation*}
\hat{G}^{(-v)}(A)=\sum_{\ell=1}^{L_v}\hat{p}_{\ell v} \mathbf{1}\{ \hat{\theta}_{\ell v} \in A \},\quad \quad \quad \hat{p}_{\ell v}\geq 0, \quad \sum_{\ell=1}^{L_v}\hat{p}_{\ell v}=1
\end{equation*}
for any Borel set $A\subset\mathbb{R}$. The discreteness of $\hat{G}^{(-v)}$ implies that, for each held-out unit $i\in \mathcal{I}_v$, the posterior distribution under $\hat{G}^{(-v)}$ is also discrete. Its posterior atom probabilities, given $\mathcal{D}_i=(Y_i,\sigma_i)$, are
\begin{equation*}
\hat{\pi}_{\ell v}(\mathcal{D}_i)=\cfrac{\hat{p}_{\ell v}\phi( (Y_i-\hat{\theta}_{\ell v})/\sigma_i )}{ \sum_{m=1}^{L_v} \hat{p}_{mv}\phi( (Y_i-\hat{\theta}_{mv})/\sigma_i ) }, \quad \quad \ell=1,...,L_v.
\end{equation*}
Therefore the average conditional coverage under the out-of-fold prior estimate $\hat{G}^{(-v)}(\cdot)$ is
\begin{equation*}
\hat{C}(h)=\frac{1}{n}\sum_{v=1}^{V}\sum_{i\in\mathcal{I}_v} \sum_{\ell=1}^{L_v} \hat{\pi}_{\ell v}(\mathcal{D}_i) \mathbf{1}\left\{ \hat{q}_h^{(-v)}(\alpha/2;\mathcal{D}_i) \leq \hat{\theta}_{\ell v}\leq \hat{q}_h^{(-v)}(1-\alpha/2;\mathcal{D}_i)\right\}
\end{equation*}
which is a finite weighted sum over the discrete support of the NPMLE. Hence, once the endpoints have been computed, evaluating $\hat{C}(h)$ requires no additional numerical integration.

Finally, for Step 4 of the bandwidth selection procedure in \Cref{subsection:bandwidth_choice}, one can replace the hard feasibility rule with a penalized criterion:
\begin{equation*}
\hat{h}=\underset{h\in\mathcal{H}_n}{\arg\min} \ \left\{ \cfrac{\hat{L}(h)}{ L_{\mathrm{naive}} }+\lambda\left( \max\{ (1-\alpha)-\hat{C}(h), 0 \} \right)^2 \right\},
\end{equation*}
where $L_{\mathrm{naive}}=\frac{1}{n}\sum_{i=1}^n 2z_{1-\alpha/2}\sigma_i$ is the average length of the naive $z$-interval. The soft penalty continues to favor short intervals among candidates whose estimated coverage is close to the nominal level. But unlike the hard-feasibility rule, it remains well defined even when no candidate bandwidth satisfies the nominal coverage requirement, and it is often more numerically stable in finite samples.

\section{Extended discussions}

\subsection{Monotonicity of posterior quantiles }
\label{appendix:posterior_quantile}

\begin{proposition}
Consider the model $Y_i=\theta_i+\varepsilon_i$ where $\theta_i\sim G(\cdot), \varepsilon_i\sim F_{\varepsilon}(\cdot)$ and $\theta_i\independent \varepsilon_i$. The posterior quantile $q_{G}(\tau;y)$ is non-decreasing in $y$ if $f_{\varepsilon}(\cdot)$ is log-concave on $\mathbb{R}$.
\label{prop:quantile_mlr}
\end{proposition}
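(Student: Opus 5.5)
The plan is to show that the family of posterior distributions $\{\theta\mid Y=y\}_{y}$ is stochastically increasing in $y$. Non-decreasing posterior quantiles then follow directly from the definition (\ref{equ:posterior_quantile}). The route is the monotone likelihood ratio (MLR) property, obtained from log-concavity of $f_\varepsilon$.

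\textbf{Step 1 (posterior form).} By Bayes' rule the posterior measure is
\[
\mathrm{d}\Pi(\theta\mid y)=\frac{f_\varepsilon(y-\theta)\,\mathrm{d}G(\theta)}{\int f_\varepsilon(y-u)\,\mathrm{d}G(u)}.
\]
This requires $f_Y(y)>0$; this holds whenever $f_\varepsilon>0$, and otherwise we restrict to $y$ with $f_Y(y)>0$. Take $y_1<y_2$. The Radon--Nikodym derivative of $\Pi(\cdot\mid y_2)$ with respect to $\Pi(\cdot\mid y_1)$ is proportional to
\[
\theta\mapsto f_\varepsilon(y_2-\theta)/f_\varepsilon(y_1-\theta).
\]
Note that $G$ cancels, so no assumption on the prior is needed.

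\textbf{Step 2 (MLR from log-concavity).} Let $\ell=\log f_\varepsilon$, which is concave. For $\theta_1<\theta_2$ set $a=y_1-\theta_2$, $d=y_2-\theta_1$, and let $b=y_1-\theta_1$, $c=y_2-\theta_2$. Then $a<b,c<d$ and $a+d=b+c$. Concavity gives
\[
\ell(a)+\ell(d)\le \ell(b)+\ell(c),
\]
which is the standard rearrangement consequence of concavity. Equivalently,
\[
f_\varepsilon(y_2-\theta_2)f_\varepsilon(y_1-\theta_1)\ge f_\varepsilon(y_2-\theta_1)f_\varepsilon(y_1-\theta_2).
\]
So the kernel $f_\varepsilon(y-\theta)$ is TP$_2$ in $(y,\theta)$, and the likelihood ratio above is non-decreasing in $\theta$.

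The main obstacle is handling zeros of $f_\varepsilon$. A log-concave density may vanish outside an interval, so $\ell$ takes the value $-\infty$ there. I would verify the product inequality directly in that case. Its support is convex, so if the right side is positive then $b$ and $c$ lie between $a$ and $d$, which are both in the support. Hence the left side is positive too, and the concave inequality applies.

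\textbf{Step 3 (MLR implies first-order stochastic dominance).} I would show
\[
F(x\mid y_2)\le F(x\mid y_1)\quad\text{for every }x.
\]
Integrate the TP$_2$ inequality over $\theta_1\le x<\theta_2$ against $\mathrm{d}G\otimes\mathrm{d}G$. This gives
\[
\Pi_2(-\infty,x]\,\Pi_1(x,\infty)\le \Pi_1(-\infty,x]\,\Pi_2(x,\infty),
\]
up to the common normalizing constants. Write $F_j=F(x\mid y_j)$. The inequality reads $F_2(1-F_1)\le F_1(1-F_2)$, which simplifies to $F_2\le F_1$. This integrated form avoids any need for densities of $G$, so it also covers discrete priors.

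\textbf{Step 4 (conclusion).} Since $F(u\mid y_2)\ge\tau$ implies $F(u\mid y_1)\ge\tau$, we have
\[
\{u:F(u\mid y_2)\ge\tau\}\subseteq\{u:F(u\mid y_1)\ge\tau\}.
\]
Taking infima gives $q_G(\tau;y_1)\le q_G(\tau;y_2)$. The Gaussian error case is covered because $\phi$ is log-concave.
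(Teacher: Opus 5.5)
Your proof is correct and follows the same overall route as the paper. Log-concavity gives the monotone likelihood ratio property of the location family, which gives first-order stochastic ordering of the posteriors, which gives ordered quantiles. The differences are in how two of these steps are carried out.

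For the MLR step, the paper cites Proposition 2.3 of Saumard and Wellner, whereas you derive the TP$_2$ inequality directly from concavity of $\log f_\varepsilon$. You also handle the case where $f_\varepsilon$ vanishes off an interval.

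For stochastic dominance, the paper works with posterior densities. It uses the fact that $\pi(\theta\mid y_2)/\pi(\theta\mid y_1)$ is non-decreasing, so $\pi(\cdot\mid y_2)-\pi(\cdot\mid y_1)$ changes sign at most once, from negative to positive. Integrating up to $u$ then gives $F(u\mid y_2)\le F(u\mid y_1)$. You instead integrate the TP$_2$ inequality over $\{\theta_1\le x<\theta_2\}$ against $G\otimes G$.

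Your version is somewhat more general and more careful. It never uses a Lebesgue density for the prior. The paper's proof writes $\pi(\theta\mid y)\propto g(\theta)f_\varepsilon(y-\theta)$, but the statement only assumes $\theta_i\sim G$. So your argument also covers discrete priors, such as the three-point and least-favorable designs used in the simulations. It also avoids the likelihood ratio being undefined where $f_\varepsilon(y_1-\theta)=0$.

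The paper's single-crossing argument is more visual. However, it implicitly assumes $G$ has a density and, in effect, that the error density has full support.
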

\begin{proof}
Given $Y=y$, the posterior density of $\theta$ is $\pi(\theta|y)\propto g(\theta)f_{\varepsilon}(y-\theta)$. Since $f_{\varepsilon}$ is log-concave, the location family $\{f_{\varepsilon}(y-\theta)\}$ satisfies the monotone likelihood ratio property, i.e. for any $y_1<y_2$, the ratio $\theta \mapsto f_{\varepsilon}(y_2-\theta)/f_{\varepsilon}(y_1-\theta)$ is non-decreasing in $\theta$ (Proposition 2.3, \citeA{saumard2014log}). Therefore
\begin{equation*}
\frac{\pi(\theta|y_2)}{\pi(\theta|y_1)}=\cfrac{\int g(u)f_{\varepsilon}(y_1-u)\mathrm{d}u}{\int g(u)f_{\varepsilon}(y_2-u)\mathrm{d}u}\cdot \cfrac{f_{\varepsilon}(y_2-\theta)}{f_{\varepsilon}(y_1-\theta)}
\end{equation*}
is also non-decreasing in $\theta$.

Since both $\pi(\theta|y_1)$ and $\pi(\theta|y_2)$  integrate to one, the function $\pi(\theta|y_2)-\pi(\theta|y_1)$ can change sign at most once, and if it changes sign, it does so from negative to positive. Hence there exists $c$ such that $\pi(\theta|y_2)\leq \pi(\theta|y_1)$ for $\theta\leq c$ and $\pi(\theta|y_2)\geq \pi(\theta|y_1)$ for $\theta\geq c$. It follows that
\begin{equation*}
\operatorname{P}_G\left(\theta\leq u | Y=y_2\right)-\operatorname{P}_G\left(\theta\leq u | Y=y_1\right)=\int_{-\infty}^u \left( \pi(\theta|y_2)-\pi(\theta|y_1) \right)\mathrm{d}\theta\leq 0.
\end{equation*}
Thus, for each $u$, the posterior distribution $\operatorname{P}_{G}(\theta\leq u|Y=y)$ is non-increasing in $y$, which in turn implies that the posterior quantile $q_G(\tau;y)$ is non-decreasing in $y$.
\end{proof}

\subsection{Fourier-based kernel estimator for posterior means}
\label{appendix:posterior_mean}

In this subsection, we consider the Fourier-based kernel estimator for the posterior mean (\citeNP{zhang1997empirical}) analogous to the procedure (\ref{equ:estimator_quantile}) for the posterior quantile. Our purpose is not to propose an estimator for the posterior mean per se, but to facilitate a comparison of its rate of convergence with that of the posterior quantile.

Recall that the posterior mean $m_G(y)=\operatorname{E}_G[\theta|Y=y]:=m_0$. Let the estimator $\hat{m}_G(y)$ be \Cref{equ:estimator_quantile} but replace the check function with $\ell_2$ loss. Following the strategy in \Cref{subsubsection:expansion}, one can establish the following asymptotic expansion $r_n(\hat{m}-m_0)=-r_nf_Y^{-1}(y)\Psi_n(m_0)+o_p(1)$ with a sequence $r_n\rightarrow\infty$ as $n\rightarrow\infty$, that will be specified later. The leading term $\Psi_n(m_0)$ is
\begin{equation*}
\Psi_n(m_0)=(2\pi)^{-1} \int  \cfrac{\overline{\psi^{\star}(m_0,t;y)}}{f_{\varepsilon}^{\star}(t)} \left(\widehat{f}^{\star}_Y(t)K^{\star}(h_n t) \right) \mathrm{d}t
\end{equation*}
where the weighting function for the posterior mean is $\overline{\psi^{\star}(m_0,t;y)}:=\int \exp(-it\theta)\left( \theta-m_0\right)\phi\left(\frac{y-\theta}{\sigma}\right)\mathrm{d}\theta$.

We focus on the analysis of $\Psi_n(m_0)$ for the posterior mean and compare it with the corresponding term for the posterior quantile. First, we demonstrate that the weighting function $\overline{\psi^{\star}(m_0,t;y)}$ is able to cancel out the exponential decay of $f_{\varepsilon}^{\star}(t)$ in \Cref{prop:decay_rate_pm}.
\begin{proposition}
It calculates that $\overline{\psi^{\star}(m_0,t;y)}=\sigma\left[(y-m_0)-i\sigma^2 t\right]\exp\left(-ity-\frac{1}{2}\sigma^2 t\right)$.
\label{prop:decay_rate_pm}
\end{proposition}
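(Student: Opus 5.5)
The plan is a direct computation: write $\overline{\psi^{\star}(m_0,t;y)}$ as a combination of the Fourier transform of a Gaussian and of its first moment, and obtain the latter by differentiating the former in $t$. No asymptotic expansion is needed here, in contrast with \Cref{propo:fourier_decay}. The integrand $(\theta-m_0)\phi(\frac{y-\theta}{\sigma})$ is smooth and has no discontinuity, so its transform is available in closed form.

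\textbf{Step 1 (zeroth moment).} Substituting $\theta=y+\sigma z$ and using the symmetry of $\phi$ together with the standard identity $\int e^{-isz}\phi(z)\,\mathrm{d}z=e^{-s^2/2}$, I will show that
\begin{equation*}
A(t):=\int e^{-it\theta}\phi\left(\frac{y-\theta}{\sigma}\right)\mathrm{d}\theta=\sigma\exp\left(-ity-\tfrac{1}{2}\sigma^2t^2\right).
\end{equation*}
This is the same term that appears as the first summand in \Cref{propo:fourier_decay}.

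\textbf{Step 2 (first moment).} Since $\theta\phi(\frac{y-\theta}{\sigma})$ is integrable with Gaussian tails, I can differentiate under the integral sign by dominated convergence. This gives
\begin{equation*}
\int e^{-it\theta}\,\theta\,\phi\left(\frac{y-\theta}{\sigma}\right)\mathrm{d}\theta=i\,A'(t)=i\sigma\left(-iy-\sigma^2t\right)e^{-ity-\frac12\sigma^2t^2}=\sigma\left(y-i\sigma^2t\right)e^{-ity-\frac12\sigma^2t^2}.
\end{equation*}

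\textbf{Step 3 (combine).} By linearity, $\overline{\psi^{\star}(m_0,t;y)}=iA'(t)-m_0A(t)=\sigma\left[(y-m_0)-i\sigma^2t\right]\exp\left(-ity-\tfrac12\sigma^2t^2\right)$, which is the claimed formula.

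The computation shows that the exponent should read $-\tfrac12\sigma^2t^2$; the displayed $-\tfrac12\sigma^2 t$ appears to be a typo. With the corrected exponent, dividing by $f_\varepsilon^\star(t)=e^{-\sigma^2t^2/2}$ leaves a term that grows only linearly in $|t|$, which is the cancellation referred to before the proposition.

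The main obstacle is bookkeeping rather than analysis: the sign convention of the conjugate transform $e^{-it\theta}$ and the factor $i$ when differentiating. To guard against a sign slip, I will check the result at $t=0$. There it should reduce to $\sigma(y-m_0)=\int(\theta-m_0)\phi(\frac{y-\theta}{\sigma})\,\mathrm{d}\theta$, and this is indeed true because both sides equal $\sigma y-\sigma m_0$.
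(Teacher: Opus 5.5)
Your proposal is correct: $A(t)=\sigma e^{-ity-\sigma^2t^2/2}$, the first-moment transform equals $iA'(t)$ by differentiating under the integral (with dominating function $|\theta|\phi((y-\theta)/\sigma)$), and linearity gives the formula. This is the same direct calculation the paper asserts without writing it out. You are also right that the exponent must be $-\tfrac12\sigma^2t^2$, which the paper itself confirms in two places: the first summand of \Cref{propo:fourier_decay}, and the next display, where dividing by $f_{\varepsilon}^{\star}(t)$ leaves only $\sigma(y-m_0-i\sigma^2t)e^{-ity}$.
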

Using the explicit expression in (\ref{prop:decay_rate_pm}), we observe that the Fourier-based kernel estimator for the posterior mean is directly linked to the classical kernel density estimator for $f_Y(y)$ and its derivative. In particular, the leading term can be rewritten as follows:
\begin{align*}
\Psi_n(m_0)&=(2\pi)^{-1} \int \sigma\left(y-m_0-i\sigma^2 t\right)\exp\left(-ity\right)  \left(\widehat{f}^{\star}_Y(t)K^{\star}(h_n t) \right) \mathrm{d}t \\[7pt]
&=\frac{\sigma}{2\pi n}\sum_{j=1}^n \int \left( y-m_0-i\sigma^2t \right)\exp\left(-it(y-Y_j)\right)K^{\star}(h_nt)t\\[7pt]
&\overset{(a)}{=}\frac{\sigma}{2\pi nh_n}\sum_{j=1}^n \int \left(y-m_0-i\sigma^2\frac{u}{h_n}\right)\exp\left( -iuh_n^{-1}(y-Y_j) \right)K^{\star}(u)\mathrm{d}u \\[7pt]
&=\frac{\sigma}{2\pi n} \sum_{j=1}^n \left[ (y - m_0) \int \exp\left(-iu\frac{y - Y_j}{h_n}\right) K^{\star}(u) \, \mathrm{d}u - \frac{\sigma^2 i}{h_n^2} \int u \exp\left(-iu\frac{y - Y_j}{h_n}\right) K^{\star}(u) \, \mathrm{d}u \right] \\[7pt]
&\overset{(b)}{=}\frac{\sigma}{n}\sum_{j=1}^n \Biggl[  (y-m_0)K\left(\frac{y-Y_i}{h_n}\right)+\frac{\sigma^2}{h_n^2}K'\left(\frac{y-Y_i}{h_n}\right) \Biggr]
\end{align*}
where (a) uses the change-of-variable $t=h_n^{-1}u$; (b) obtains by inverse Fourier transform formulas $K(z)=(2\pi)^{-1}\int \exp(-itz)K^{\star}(t)\mathrm{d}t$ and $K'(z)=(2\pi)^{-1}\int \exp(-it z)(-it) K^{\star}(t)\mathrm{d}t$.

Similarly, we can obtain the expression of its popular counterpart:
\begin{equation*}
\Psi(m_0)=(2\pi)^{-1}\sigma \int \left(y-m_0-i\sigma^2 t\right)\exp\left(-ity\right)  f_{Y}^{\star}(t) \mathrm{d}t=\sigma (y-m_0)f_{Y}(y)+\sigma^3f_{Y}'(y)\equiv 0,
\end{equation*}
and $\Psi_n(m_0)-\Psi(m_0)$ can be represented as the following
\begin{equation*}
\sigma (y-m_0)\left( \frac{1}{n}\sum_{i=1}^n K\left(\frac{y-Y_i}{h_n}\right)-f_{Y}(y) \right)+\sigma^3\left(\frac{1}{nh_n^2}K'\left(\frac{y-Y_i}{h_n}\right)-f_{Y}'(y)\right).
\end{equation*}
  The convergence rate analysis follows directly from standard results on kernel density estimation (see e.g. \citeA{ullah1999nonparametric,li2007nonparametric}; \citeA[Appendix]{horowitz2009semiparametric} for textbook treatments). We note that the optimal rates of convergence of $\Psi_n(m_0)-\Psi(m_0)$ coincides with that for estimating the derivative of a density, namely $n^{-(r-1)/(2r+1)}$, where $r$ is the smoothness of $f_{Y}(y)$ (\citeNP{stone1980optimal}). This smoothness $r$ in our setting is characterized in \Cref{prop:pm_smoothness}.


\begin{proposition}
Under the normal location model (\ref{normal_mean}) and the smoothness assumption for $g\in\mathcal{G}(s,L)$ (\ref{assumption1}), we have that $f_{Y}(y)$ is super-smooth in the sense that
\begin{equation*}
 \int \ |f_{Y}^{\star}(t)|^2(t^2+1)^{r}\mathrm{d}t < \infty, \ \ \ \text{for all} \ r\geq 0.
\end{equation*}
\label{prop:pm_smoothness}
\end{proposition}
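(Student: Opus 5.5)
Under \Cref{assumption1} the Sobolev control on $g^{\star}$ transfers directly to $f_Y^{\star}$, because the Gaussian factor kills every polynomial weight. The plan is to go through the convolution identity $f_Y^{\star}(t)=g^{\star}(t)f_{\varepsilon}^{\star}(t)$ with $f_{\varepsilon}^{\star}(t)=\exp(-\sigma^2t^2/2)$. Here $f_Y$ is the marginal density of $Y$, i.e.\ $\sigma^{-1}\phi(\cdot/\sigma)*g$; the paper's weight $\phi((y-\theta)/\sigma)$ omits the constant $\sigma^{-1}$, which is irrelevant for smoothness.

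\textbf{Step 1 (pointwise bound).} Since $g$ is a probability density, $|g^{\star}(t)|\le \int g=1$ for all $t$. Hence
\begin{equation*}
|f_Y^{\star}(t)|^2(1+t^2)^r=|g^{\star}(t)|^2\exp(-\sigma^2t^2)(1+t^2)^r\le \exp(-\sigma^2t^2)(1+t^2)^r .
\end{equation*}

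\textbf{Step 2 (integrability).} The right-hand side is integrable for every $r\ge 0$, because the Gaussian decays faster than any polynomial grows. One way to see this is that $\int \exp(-\sigma^2t^2)(1+t^2)^r\,\mathrm{d}t$ is a finite combination of Gaussian moments when $r$ is an integer; for non-integer $r$, bound $(1+t^2)^r$ by $(1+t^2)^{\lceil r\rceil}$. This gives the claim.

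\textbf{Step 3 (sharper, optional).} To make the dependence on \Cref{assumption1} explicit, one can instead write
\begin{equation*}
\int |f_Y^{\star}(t)|^2(1+t^2)^r\,\mathrm{d}t\le \sup_t\Big[(1+t^2)^{r-s}\exp(-\sigma^2t^2)\Big]\int |g^{\star}(t)|^2(1+t^2)^s\,\mathrm{d}t .
\end{equation*}
The supremum is finite and the last integral is less than $L$, so the bound holds uniformly over $\mathcal{G}(s,L)$.

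There is no real obstacle. The only point to check is that the Fourier transforms are well defined, which \Cref{assumption_g_regularity} guarantees ($g\in L^1$ gives $f_Y\in L^1$), and that the convolution theorem applies in that setting.
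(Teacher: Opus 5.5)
Your proof is correct, but Steps 1--2 take a different and more elementary route than the paper. The paper's proof is exactly your optional Step 3. It writes the integrand as $|g^{\star}(t)|^2(t^2+1)^s\,w_r(t)$ with $w_r(t)=(t^2+1)^{r-s}\exp(-\sigma^2t^2)$. It notes that $w_r$ is continuous and vanishes at infinity, so $W_r=\sup_t w_r(t)<\infty$, and concludes $f_Y\in\mathcal{G}(r,W_rL)$.

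Your Steps 1--2 instead use only the trivial bound $|g^{\star}(t)|\le 1$. This shows the Sobolev condition is not actually needed for the proposition. Supersmoothness of $f_Y$ holds for any prior $G$, even one without a density, since characteristic functions are bounded by one. Your bound is $\int \exp(-\sigma^2t^2)(1+t^2)^r\,\mathrm{d}t$, which depends only on $(r,\sigma)$ and is uniform over all priors, not just over $\mathcal{G}(s,L)$.

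What the paper's route buys is a constant proportional to the Sobolev radius $L$. This states the result as membership in a Sobolev ball tied to the assumed class. For the qualitative claim, and even for uniformity over $\mathcal{G}(s,L)$, your Step 1--2 argument already suffices.
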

\vspace{-3.5em}
\begin{proof}
Given that $f_{Y}^{\star}(t)=g^{\star}(t)\exp\left(-\frac{1}{2}\sigma^2t^2\right)$. For any $r\geq 0$, we have that
\begin{equation}
 \int \ |f_{Y}^{\star}(t)|^2(t^2+1)^{r}\mathrm{d}t=\int \ |g^{\star}(t)|^2(t^2+1)^{s}\underbrace{\left[\left(t^2+1\right)^{r-s}\exp\left(-\sigma^2 t^2\right)\right]}_{w_r(t)}\mathrm{d}t.
 \label{equ:appendix_integral}
\end{equation}
The function $w_r(t)$ is continuous and vanishes as $|t|\rightarrow\infty$, thus it must achieve its supremum at some finite point. That is, $W_r=\sup_{t\in\mathbb{R}} w_r(t)<\infty$. Then the right-hand side of \Cref{equ:appendix_integral} is not greater than $W_r\int |g^{\star}(t)|^2(1+t^2)^s\mathrm{d}t<W_rL=:L_r<\infty$. Hence $f_{Y}\in \mathcal{G}(r,L_r)$ for all $r\geq 0$ and thus supersmooth.
\end{proof}
\vspace{-1em}
By \Cref{prop:pm_smoothness}, it's clear that the optimal rates of convergence of $\Psi_n(m_0)-\Psi(m_0)$ -- and thus the posterior mean -- is $O_p(n^{-1/2})$. To let the kernel estimator match this parametric rate, one should use an infinite-order kernel. For instance, the sinc kernel $K(z)=\sin(z)/\pi z$ has infinite order,\footnote{The fourier transformation of the sinc kernel is $K^{\star}(t)=\mathbf{1}_{[-1,1]}(t)$. It is flat at $t=0$, that enables to ensure that all moments of the kernel vanish, i.e. $\int z^qK(z)\mathrm{d}z=i^{q}\frac{\mathrm{d}^qK^{\star}(t)}{\mathrm{d}t^q}\big{|}_{t=0}=0$ for any $q\geq 1$. } see e.g. \citeA[Page 42]{meister2009deconvolution} for more details.

\begin{remark}
The point of parametric rate of the posterior mean in the normal location model has been previously made in the empirical Bayes literature. \citeA[Theorem 2]{zhang1997empirical} showed that the kernel estimator for the $k$-th derivative of marginal density $\hat{f}^{(k)}(\cdot)$ converges at parametric rate (up to a logarithm factor) in $L_2$ and $L_{\infty}$ norm. By Tweedie's formula, one can obtain that the posterior mean shall also convergence at parametric rate. \citeA[Theorem 1]{zhang2009generalized} showed that the generalized maximum likelihood estimator (GMLE) for the marginal density $f_{Y}(y)$ converges at parametric rate (up to a logarithm factor) in Hellinger distance given that the mixing distribution has compact support or has a heavier-than-normal exponential tail.
\end{remark}

\subsection{Extension to precision dependence}
\label{subsection:precision_dependene}

The baseline model in \Cref{subsection:baseline_implementation} treats the sampling variance $\sigma_i^2$ as observed and
assumes precision independence, in the sense that $\theta_i \perp\!\!\!\perp \sigma_i$. Following \citeA{chen2026empirical}, a tractable relaxation is to model the conditional distribution of $\theta_i$ given $\sigma_i$ as a location-scale family.

Specifically, suppose
\[
Y_i \mid \theta_i,\sigma_i \sim N(\theta_i,\sigma_i^2), \qquad
\theta_i = m(\sigma_i) + s(\sigma_i)\xi_i,
\]
where $m(\cdot)$ and $s(\cdot)>0$ are unknown functions, $\xi_i \sim G(\cdot)$, and
$\xi_i \perp\!\!\!\perp \sigma_i$. Under this specification, precision dependence operates through the conditional location $m(\sigma_i)$ and conditional scale $s(\sigma_i)$, while the shape of condition distribution of $\theta_i\mid \sigma_i$, govern by $G$, is common across values of $\sigma_i$.

Define the normalized variables
\begin{equation}
Z_i = \frac{Y_i - m(\sigma_i)}{s(\sigma_i)}, \qquad
v_i = \frac{\sigma_i}{s(\sigma_i)},
\qquad \widetilde{\mathcal{D}}_i = (Z_i,v_i).
\label{equ:transformation}
\end{equation}
Then
\begin{equation}
Z_i = \xi_i + \eta_i, \qquad \eta_i \mid v_i \sim N(0,v_i^2),
\label{equ:normalized_model}
\end{equation}
and, by construction, $\xi_i \perp\!\!\!\perp v_i$. Hence the transformed data satisfy the
same heteroskedastic normal means model as in \Cref{subsection:baseline_implementation}, but with latent effect $\xi_i$
and known heteroskedastic variances $v_i^2$.

Let $\mathcal{D}_i=(Y_i,\sigma_i)$. In the precision-dependent setting, the posterior quantile for $\theta_i$
is defined conditional on $\mathcal{D}_i$:
\[
q_{\theta_i}(\tau;\mathcal{D}_i)
=
\inf\left\{
u : P(\theta_i \le u \mid \mathcal{D}_i)\ge \tau
\right\},
\qquad \tau \in (0,1).
\]
Because $\theta_i = m(\sigma_i)+s(\sigma_i)\xi_i$ is strictly increasing in $\xi_i$, posterior
quantiles transform equivariantly:
\[
q_{\theta_i}(\tau;\mathcal{D}_i)
=
m(\sigma_i)+s(\sigma_i)\,q_{\xi_i}(\tau;\widetilde{\mathcal{D}}_i),
\]
where $q_{\xi_i}(\tau;\widetilde{\mathcal{D}}_i)$ denotes the posterior quantile of $\xi_i$ under the normalized
model (\ref{equ:normalized_model}).

Accordingly, the oracle NP-EBCI under precision dependence is
\[
\mathrm{CI}_i^{\mathrm{NP}*,\mathrm{CLOSE}}
=
\Big[
q_{\theta_i}(\alpha/2;\mathcal{D}_i),\ q_{\theta_i}(1-\alpha/2;\mathcal{D}_i)
\Big]
\]
or, equivalently,
\[
\mathrm{CI}_i^{\mathrm{NP}*,\mathrm{CLOSE}}
=
\Big[
m(\sigma_i)+s(\sigma_i)\,q_{\xi_i}(\alpha/2;\widetilde{\mathcal{D}}_i),\
m(\sigma_i)+s(\sigma_i)\,q_{\xi_i}(1-\alpha/2;\widetilde{\mathcal{D}}_i)
\Big].
\]
By construction, the oracle NP-EBCI achieves exact conditional coverage:
\[
\operatorname{P}\!\left(
\theta_i \in \mathrm{CI}_i^{\mathrm{NP}*,\mathrm{CLOSE}} \ \Big{|} \  \mathcal{D}_i
\right)
=
1-\alpha.
\]

In practice, we replace the unknown functions $m(\cdot)$ and $s(\cdot)$ with their respective estimators, $\hat m(\cdot)$ and $\hat s(\cdot)$. We then construct the empirical normalized variables:
\[
\hat Z_i = \frac{Y_i-\hat m(\sigma_i)}{\hat s(\sigma_i)}, \qquad
\hat v_i = \frac{\sigma_i}{\hat s(\sigma_i)}, \qquad
\widehat{\widetilde{\mathcal{D}}}_i = (\hat Z_i,\hat v_i).
\]
Applying the baseline heteroskedastic posterior-quantile estimator in \Cref{subsection:baseline_implementation} to the normalized sample $\{(\hat Z_j,\hat v_j)\}_{j=1}^n$, we obtain $\hat q_{\xi_i}(\tau;\hat{\tilde{\mathcal{D}}}_i)$. The feasible NP-EBCI is then
\[
\mathrm{CI}_i^{\mathrm{NP},\mathrm{CLOSE}}
=
\Big[
\hat m(\sigma_i)+\hat s(\sigma_i)\,\hat q_{\xi_i}(\alpha/2;\widehat{\widetilde{\mathcal{D}}}_i),\
\hat m(\sigma_i)+\hat s(\sigma_i)\,\hat q_{\xi_i}(1-\alpha/2;\widehat{\widetilde{\mathcal{D}}}_i)
\Big].
\]
We leave the comprehensive theoretical analysis of this NP-EBCI under precision dependence for future work.

\section{Auxiliary results}

\begin{proposition}
Let $\rho_{\tau}(u)=u(\tau-\mathbf{1}\{ u\leq 0 \})$ be the check function. Then, for all $x, y \in \mathbb{R}$,
\begin{equation*}
\rho_{\tau}(x-y)-\rho_{\tau}(x)=-y\left( \tau-\mathbf{1}\{x\leq 0\} \right)+\int_0^{y} \left( \mathbf{1}\{x\leq s\}-\mathbf{1}\{x\leq 0\} \right)\mathrm{d}s.
\end{equation*}
\label{proposition_c1}
\end{proposition}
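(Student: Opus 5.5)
This identity is pure calculus, so I would verify it directly, splitting on the sign of $x$ and the position of $y$. I would not try to be clever. Write $\rho_\tau(u)=\tau u-u\mathbf{1}\{u\le 0\}$. Then the left-hand side equals
\[
-\tau y-(x-y)\mathbf{1}\{x\le y\}+x\mathbf{1}\{x\le 0\}.
\]
The first term on the right, $-y(\tau-\mathbf{1}\{x\le 0\})$, cancels the $-\tau y$. The claim therefore reduces to
\[
\int_0^y\bigl(\mathbf{1}\{x\le s\}-\mathbf{1}\{x\le 0\}\bigr)\mathrm{d}s=(x-y)\bigl(\mathbf{1}\{x\le 0\}-\mathbf{1}\{x\le y\}\bigr),
\]
with the usual orientation convention $\int_0^y=-\int_y^0$ when $y<0$.

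The key step is to evaluate the integral case by case. When $y\ge 0$, the integrand is nonzero only for $s\in[0,y]$ with $0<x\le s$, where it equals $1$. So the integral is $(y-x)\mathbf{1}\{0<x\le y\}$. The right-hand side also equals $(y-x)$ exactly when $0<x\le y$, and it vanishes otherwise: if $x\le 0$ both indicators equal $1$, and if $x>y$ both equal $0$.

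When $y<0$, the integral is $-\int_y^0(\mathbf{1}\{x\le s\}-\mathbf{1}\{x\le 0\})\mathrm{d}s$. For $s\in[y,0]$ the integrand equals $-1$ exactly when $s<x\le 0$, so the integral is $(0-\max(x,y))$ restricted to $y<x\le 0$, which gives $(x-y)\mathbf{1}\{y<x\le 0\}$. On the right-hand side, $\mathbf{1}\{x\le 0\}-\mathbf{1}\{x\le y\}$ equals $1$ exactly when $y<x\le 0$, so both sides agree.

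The only place I expect any care to be needed is the boundary cases $x=0$ and $x=y$, together with the sign convention for $y<0$. These only affect sets of measure zero in $s$, and the indicator conventions above agree with the defined $\rho_\tau$. This proposition is Knight's identity, so I could alternatively cite Knight (1998); still, the direct case check above is short and self-contained, and I would present that.
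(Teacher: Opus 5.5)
Your proof is correct, but it takes a different route from the paper.

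\textbf{The paper's route.} The paper proves the proposition in one line. It writes $\rho_\tau(u)=(\tau-\tfrac12)u+\tfrac12|u|$, so the linear part contributes $-(\tau-\tfrac12)y$. The absolute-value part is handled by Knight's (1998) identity $|x-y|-|x|=-y\,\mathrm{sgn}(x)+2\int_0^y(\mathbf{1}\{x\le s\}-\mathbf{1}\{x\le 0\})\,\mathrm{d}s$. The two linear pieces then combine to $-y(\tau-\mathbf{1}\{x\le 0\})$ because $\tfrac12-\tfrac12\mathrm{sgn}(x)=\mathbf{1}\{x\le 0\}$.

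\textbf{Comparison.} The paper's route is shorter and reuses a standard result, but it leaves a boundary point implicit. Knight's identity is usually stated for $x\neq 0$ with $\mathrm{sgn}(x)=\mathbf{1}\{x>0\}-\mathbf{1}\{x<0\}$. It reproduces the proposition at $x=0$ only under the convention $\mathrm{sgn}(0)=-1$, which the paper never states. Your route reduces the claim to $\int_0^y(\mathbf{1}\{x\le s\}-\mathbf{1}\{x\le 0\})\,\mathrm{d}s=(x-y)(\mathbf{1}\{x\le 0\}-\mathbf{1}\{x\le y\})$ and then splits into cases. This is self-contained and covers every pair $(x,y)$ exactly, including $x=0$ and $x=y$.

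\textbf{Two small fixes to your write-up.}
\begin{itemize}
\item In the $y<0$ case, the intermediate expression $(0-\max(x,y))$ is the length of the set on which the integrand vanishes. The integrand equals $-1$ on $[y,x)$, which has length $x-y$. So $-\int_y^0$ gives $x-y$, which is the correct value you then state.
\item The boundary cases are not a ``measure zero in $s$'' matter. Since $x$ and $y$ are fixed and the identity must hold pointwise, what makes it work is your choice of strict versus non-strict inequalities in the case split: $x\le 0$, $0<x\le y$, $x>y$ for $y\ge 0$, and $x\le y$, $y<x\le 0$, $x>0$ for $y<0$. Say that explicitly rather than appealing to null sets.
\end{itemize}
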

\vspace{-3.5em}
\begin{proof}
It follows from the identity from \citeA{knight1998limiting} and the reformulation of the check function $\rho_{\tau}(u) = (\tau - \tfrac{1}{2}) u + \tfrac{1}{2}|u|$.
\end{proof}

\subsection{Uniform consistency of deconvolution kernel density estimator}

Uniform consistency of the deconvolution kernel density estimator has been considered in the literature. In particular, \citeA{taylor1990strongly} establish a stronger uniform strong-consistency result for a broader class of deconvolution estimators. \Cref{prop:uniform_consist} is provided only as a self-contained specialization tailored to the deconvolution kernel density estimator used in our proof.

\begin{proposition}
Let $\hat{g}(\theta)$ be the deconvolution kernel density estimator defined in \Cref{equ:deconvolution_estimator}. Let \Cref{assumption_g_regularity,assumption1,assumption2,assumption3} hold.  Setting $h_n=c(\log n)^{-1/2}$ with any constant $c\geq\sigma$, we have that
\begin{equation*}
\sup_{\theta\in\mathbb{R}} \ \big{|} \hat{g}(\theta)-g(\theta)  \big{|} \p 0.
\end{equation*}
\label{prop:uniform_consist}
\end{proposition}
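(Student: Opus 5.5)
The plan is the standard bias--variance decomposition for deconvolution kernel density estimators, using only the Fourier-side assumptions. Write $\hat g(\theta)-g(\theta)=\{\hat g(\theta)-\operatorname{E}\hat g(\theta)\}+\{\operatorname{E}\hat g(\theta)-g(\theta)\}$. From \eqref{equ:deconvolution_estimator}, $\operatorname{E}\hat g(\theta)=(2\pi)^{-1}\int e^{-it\theta}g^{\star}(t)K^{\star}(h_nt)\,\mathrm{d}t$, because $\operatorname{E}\hat f_Y^{\star}(t)=f_Y^{\star}(t)=g^{\star}(t)f_\varepsilon^{\star}(t)$. Fourier inversion is valid by \Cref{assumption2}, so the bias satisfies
\[
\sup_\theta|\operatorname{E}\hat g(\theta)-g(\theta)|\le(2\pi)^{-1}\int|g^{\star}(t)|\,|K^{\star}(h_nt)-1|\,\mathrm{d}t .
\]
This tends to zero by dominated convergence. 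The integrand is bounded by $(\|K^{\star}\|_\infty+1)|g^{\star}(t)|$, which is integrable by \Cref{assumption2}. Pointwise, $K^{\star}(h_nt)\to K^{\star}(0)=\int K=1$; this is implicitly a normalization in \Cref{assumption3}, and $K^{\star}$ is continuous because $K\in L^1$. If a rate is wanted, one can instead use the Taylor bound $|K^{\star}(h_nt)-1|\lesssim|h_nt|^{s+1}$ from the analysis of $\Psi_{n2}$ together with Cauchy--Schwarz and \Cref{assumption1}. Consistency only needs the dominated-convergence argument.

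For the stochastic term, the bound is uniform in $\theta$ and deterministic in form:
\[
\sup_\theta|\hat g(\theta)-\operatorname{E}\hat g(\theta)|\le(2\pi)^{-1}\int_{|t|\le h_n^{-1}}\frac{|\hat f_Y^{\star}(t)-f_Y^{\star}(t)|\,|K^{\star}(h_nt)|}{|f_\varepsilon^{\star}(t)|}\,\mathrm{d}t ,
\]
which uses the support condition in \Cref{assumption3}(i). Taking expectations and applying Jensen gives $\operatorname{E}|\hat f_Y^{\star}(t)-f_Y^{\star}(t)|\le n^{-1/2}$ for every $t$, since each summand $e^{itY_j}$ has modulus one. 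Hence
\[
\operatorname{E}\sup_\theta|\hat g-\operatorname{E}\hat g|\le C n^{-1/2}\int_{|t|\le h_n^{-1}}e^{\sigma^2t^2/2}\,\mathrm{d}t\le C n^{-1/2}h_n^{-1}\ldots
\]
More precisely the integral is $O\!\left(h_n\,e^{\sigma^2h_n^{-2}/2}\right)$, since for the Gaussian-type integral $\int_0^{A}e^{\sigma^2t^2/2}\,\mathrm{d}t\lesssim A^{-1}e^{\sigma^2A^2/2}$. Markov's inequality then turns this expectation bound into convergence in probability.

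The key step, which is also the main obstacle, is choosing the bandwidth so that $n^{-1/2}h_ne^{\sigma^2h_n^{-2}/2}\to0$. With $h_n=c(\log n)^{-1/2}$ we get $e^{\sigma^2h_n^{-2}/2}=n^{\sigma^2/(2c^2)}$, so the bound is of order $n^{-1/2+\sigma^2/(2c^2)}(\log n)^{-1/2}$. This vanishes whenever $c\ge\sigma$: for $c>\sigma$ the exponent is strictly negative, and in the boundary case $c=\sigma$ the exponent is $0$, leaving the factor $(\log n)^{-1/2}\to0$. That is exactly why the statement allows $c\ge\sigma$. The delicate point is to keep the factor $h_n$ from the Gaussian integral instead of bounding it crudely by $h_n^{-1}e^{\sigma^2h_n^{-2}/2}$, because the cruder bound fails at $c=\sigma$. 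Combining the two terms gives $\sup_\theta|\hat g-g|\to0$ in probability, since $h_n\to0$ drives the bias to zero.
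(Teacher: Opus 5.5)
Your proof is correct. It shares the paper's skeleton: the same stochastic/bias split, the same dominated-convergence argument for the bias, and the same Gaussian-integral bound $O(h_n e^{\sigma^2h_n^{-2}/2})$ obtained via $t^2\le At$ on $[0,A]$. You also correctly flag that $K^{\star}(0)=\int K=1$ is an unstated normalization, and that integrability of $g^{\star}$ comes from \Cref{assumption2}.

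The genuine difference is in the stochastic term. The paper applies H\"older's inequality to pull out $\sup_{|t|\le h_n^{-1}}|\hat f_Y^{\star}(t)-f_Y^{\star}(t)|$. It then imports the rate $O_p(\sqrt{\log(h_n^{-1})/n})$ from Theorem 1 of \citeA{csorgHo1983long}, a result on uniform convergence of the empirical characteristic function over a growing frequency band. You instead note that the bound on $\sup_\theta|\hat g-\operatorname{E}\hat g|$ involves only an integral of $|\hat f_Y^{\star}-f_Y^{\star}|$. So you can move the expectation inside by Tonelli, use the elementary pointwise bound $\operatorname{E}|\hat f_Y^{\star}(t)-f_Y^{\star}(t)|\le n^{-1/2}$ (valid for any law of $Y$), and finish with Markov's inequality.

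What your route buys is a self-contained argument with no external uniform-in-frequency theorem. It is also slightly sharper, giving $O_p\bigl(n^{\sigma^2/(2c^2)-1/2}(\log n)^{-1/2}\bigr)$ without the paper's $\sqrt{\log\log n}$ factor. At the boundary $c=\sigma$ this is $(\log n)^{-1/2}$ rather than $\sqrt{\log\log n/\log n}$. What the paper's route buys is a stronger intermediate fact, uniform control of the empirical characteristic function on $|t|\le h_n^{-1}$, which could be reused or strengthened to almost-sure statements. That extra strength is not needed for this proposition.

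One small cleanup: delete the stray display ending in ``$\le Cn^{-1/2}h_n^{-1}\ldots$'', since the precise bound you state immediately afterward replaces it.
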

\vspace{-2.5em}
\begin{proof}[Proof of \Cref{prop:uniform_consist}]
We decompose $\hat{g}(\theta)-g(\theta)=S_{n1}(\theta)+S_{n2}(\theta)$ where
\begin{align*}
S_{n1}(\theta)&=(2\pi)^{-1}\int \frac{\exp(-it\theta)}{f_{\varepsilon}^{\star}(t)}\left(\hat{f}_{Y}^{\star}(t)-f_{Y}^{\star}(t)\right)K^{\star}(h_nt)\mathrm{d}t \\[7pt]
S_{n2}(\theta)&=(2\pi)^{-1}\int \frac{\exp(-it\theta)}{f_{\varepsilon}^{\star}(t)}f_{Y}^{\star}(t)\big{(} K^{\star}(h_nt)-1 \big{)}\mathrm{d}t
\end{align*}
where $S_{n1}(\theta)$ is the stochastic deviation of the estimator from its expectation, $S_{n2}(\theta)$ is the bias term. Note that $\sup_{\theta\in\mathbb{R}} |\hat{g}(\theta)-g(\theta)|\leq \sup_{\theta\in\mathbb{R}} |S_{n1}(\theta)|+\sup_{\theta\in\mathbb{R}} |S_{n2}(\theta)|$.

To analyze the bias term $S_{n2}(\theta)$, note first that
\begin{align*}
\sup_{\theta\in\mathbb{R}} \ \left| S_{n2}(\theta) \right| &\leq (2\pi)^{-1}\int \big{|}g^{\star}(t)\big{|} \big{|}K^{\star}(h_nt)-1   \big{|}\mathrm{d}t 
\end{align*}
Since $K\in L^1(\mathbb{R})$, the function $K^{\star}$ is bounded and continuous, and $K^{\star}(0)=1$. Thus $K^{\star}(h_nt)\rightarrow 1$ for every fixed $t$, while $|K^{\star}(h_nt)-1|\leq \|K^{\star}\|_{\infty}+1$. By \Cref{assumption_g_regularity}, $g^{\star}\in L^1(\mathbb{R})$. Therefore the dominated convergence theorem yields that $\sup_{\theta\in\mathbb{R}}|S_{n2}(\theta)|=o(1)$.

For the first term $S_{n1}(\theta)$, since $K^{\star}$ is supported on $[-1,1]$, we have that
\begin{align}
\sup_{\theta\in \mathbb{R} } \ \left| S_{n1}(\theta) \right|&=\sup_{\theta\in \mathbb{R} } \ \left| (2\pi)^{-1}\int  \cfrac{\exp(-it\theta)}{f_{\varepsilon}^{\star}(t)}\left(\widehat{f}^{\star}_Y(t)-f^{\star}_Y(t)\right)K^{\star}(h_n t)  \mathrm{d}t \right| \notag \\[7pt]
&\leq \sup_{|t|\leq h_n^{-1}} \  \left|\widehat{f}^{\star}_Y(t)-f^{\star}_Y(t)  \right| \times (2\pi)^{-1} \int_{-h_n^{-1}}^{h_n^{-1}} \frac{|K^{\star}(h_nt)|}{|f_{\varepsilon}^{\star}(t)|}\mathrm{d}t \label{equ:C1_S_n1}
\end{align}
where the last line obtains from H\"{o}lder's inequality. By Theorem 1 in \citeA{csorgHo1983long}, if $\log (h_n^{-1})/n\rightarrow 0$, then
\begin{equation*}
\sup_{|t|\leq h_n^{-1}} \  |\widehat{f}^{\star}_Y(t)-f^{\star}_Y(t)  |=O_p\left( \sqrt{\frac{ \log(h_n^{-1}) }{n}} \right).
\end{equation*}
Meanwhile, for the second product in \Cref{equ:C1_S_n1}, we have that
\begin{align*}
(2\pi)^{-1} \int_{-h_n^{-1}}^{h_n^{-1}} \frac{|K^{\star}(h_nt)|}{|f_{\varepsilon}^{\star}(t)|}\mathrm{d}t &\leq \frac{\|K\|_{\infty}}{2\pi}\int_{-h_n^{-1}}^{h_n^{-1}}\exp\left(\frac{\sigma^2t^2}{2}\right)\mathrm{d}t=\frac{\|K\|_{\infty}}{2\pi h_n}\int_{-1}^1 \exp\left( \frac{\sigma^2u^2}{2h_n^2} \right)\mathrm{d}u \\[7pt]
&\leq \frac{\|K\|_{\infty}}{\pi h_n}\int_{0}^1 \exp\left( \frac{\sigma^2u}{2h_n^2} \right)\mathrm{d}u=O\left( h_n\exp\left(\frac{\sigma^2}{2h_n^2}\right) \right).
\end{align*}
Then if $h_n=c(\log n)^{-1/2}$ with $c\geq \sigma$, we obtain that
\begin{equation*}
\sup_{\theta\in\mathbb{R}}\ |S_{n1}(\theta)|=O_p\left( h_n\exp\left(\frac{\sigma^2}{2h_n^2}\right)\sqrt{ \frac{\log(h_n^{-1})}{n} } \right)=O_p\left( n^{\sigma^2/(2c^2)-1/2}\sqrt{\frac{\log\log n}{\log n}} \right)=o_p(1).
\end{equation*}
Combining the results for $S_{n1}(\theta)$ and $S_{n2}(\theta)$, we obtain that $\sup_{\theta\in\mathbb{R}} |\hat{g}(\theta)-g(\theta)|=o_p(1)$, as desired.
\end{proof}

\subsection{Technical Lemmas for the proof of \Cref{ebci_coverage_lower_bound}}

\label{appendix:technical_lemma_minimax}

This subsection collects the auxiliary results used in Appendix \ref{subsection:proof_minimax_coverage}. The lemmas establish several ingredients of the lower-bound argument: the construction
of the common perturbation $\bar{H}_T$, the verification that the alternatives $g_{m}$ belong to $\mathcal G(s,L)$, the sign separation of the induced conditional coverage errors, and the bounds showing that the alternatives remain statistically close. Throughout this subsection, we maintain the notation of Appendix \ref{subsection:proof_minimax_coverage}.

\subsubsection{Construction of common perturbation $\bar{H}_T$}
\label{subsubsection:construction_common}

Consider $\mathcal{J}=[\beta_{L},\beta_{U}]\subset(0,\alpha)$ defined in Appendix \ref{subsection:proof_minimax_coverage}. Choose a constant $c_0>0$ such that for all sufficiently large $T$,
\begin{equation*}
\mathcal{J}_{T}:=\left[\beta_L-\frac{c_0}{T},\beta_U+\frac{c_0}{T}\right]\subset (0,\alpha) \quad \text{and} \quad \mathrm{dist}\left(\mathcal{J}_T,\mathcal{J}_T+1-\alpha\right)>0.
\end{equation*}

Let $\theta_L=q_{G_0}(\beta_L;y), \theta_U=q_{G_0}(\beta_U+1-\alpha;y)$. We construct the common perturbation $\bar{H}_T$ as follows:
\begin{equation*}
\bar{H}_T(\theta):=\eta T^{-s+1/2}\Big{[} \kappa_{-}\left( T(\theta-\theta_L) \right)-\kappa_{+}\left(T(\theta-\theta_U)\right)\Big{]}
\end{equation*}
where $\eta>0$ is a sufficiently small constant, and $\kappa_-,\kappa_+\in C_c^\infty(\mathbb R)$ are nonnegative functions satisfying
\[
\int \kappa_-(u)\,du=\int \kappa_+(u)\,du=1,
\qquad
\mathrm{supp}(\kappa_-)\subset[-2,-1],
\qquad
\mathrm{supp}(\kappa_+)\subset[1,2].
\]
\begin{lemma}
\label{lemma:common_perturbation}

\noindent (i) Define $\bar{g}_T:=g_0+\bar{H}_T$ where $g_0$ is the baseline Cauchy density, then $\int \bar{H}_T(\theta)\mathrm{d}\theta=0, \bar{g}_T(\theta)\geq c\,g_0(\theta)\geq 0$ for all $\theta\in\mathbb R$ for some constant $c>0$ independent of $T$, and $\bar{g}_T\in G(s,3L/4).$

\noindent (ii) Define the conditional coverage error of the interval $I(\beta;y)$ under the common perturbation:
\begin{equation*}
\bar{D}_T(\beta;y):=\operatorname{P}_{\bar{G}_T}\left( \theta\in I(\beta;y) \mid Y=y \right)-(1-\alpha)
\end{equation*}
where $\bar{G}_T$ has density $\bar{g}_T$. Then there exists a constant $c>0$ such that
\begin{align*}
\bar D_T(\beta;y)\geq c\eta T^{-s-1/2}
\qquad &\text{for all }\beta\le \beta_L-\frac{c_0}{T}, \\[10pt]
\bar D_T(\beta;y)\leq -\,c\eta T^{-s-1/2}
\qquad & \text{for all }\beta\ge \beta_U+\frac{c_0}{T}.
\end{align*}
\end{lemma}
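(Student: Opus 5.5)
For part (i), the plan is to reuse the argument from Step 1 of the proof of \Cref{rate_posterior_quantile}. The mean-zero property follows from the change of variable $u=T(\theta-\theta_L)$ (resp.\ $\theta_U$):
\[
\int \bar H_T=\eta T^{-s-1/2}\Big(\int\kappa_--\int\kappa_+\Big)=0.
\]
For nonnegativity, $\kappa_\pm$ are compactly supported and bounded, so $|\kappa_\pm(u)|\le C(1+u^2)^{-1}$. Since $\theta_L,\theta_U$ are fixed, the same inequality $1+\theta^2\le C_0(1+(\theta-\theta_{L/U})^2)$ gives
\[
(1+\theta^2)|\bar H_T(\theta)|\le C\eta T^{-s+1/2}\le C\eta .
\]
Combined with $g_0(\theta)\ge[\pi\lambda(1+\theta^2)]^{-1}$, this yields $|\bar H_T|\le \tfrac12 g_0$ once $\eta$ is small, hence $\bar g_T\ge g_0/2$. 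For the Sobolev bound, $\bar H_T^\star(t)=\eta T^{-s-1/2}[e^{it\theta_L}\kappa_-^\star(t/T)-e^{it\theta_U}\kappa_+^\star(t/T)]$. The rescaling computation from Step 1 gives $\int|\bar H_T^\star|^2(1+t^2)^s\le C\eta^2\int(|\kappa_-^\star|^2+|\kappa_+^\star|^2)(1+u^2)^s\,du$, which is finite because $\kappa_\pm^\star$ are Schwartz. Taking $\eta$ small and using $g_0\in\mathcal G(s,L/8)$ (choose $\lambda$ large) together with $|a+b|^2\le2|a|^2+2|b|^2$ gives $\bar g_T\in\mathcal G(s,3L/4)$.

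For part (ii), I write the posterior probability under $\bar G_T$ as a ratio and linearize. Let $N(\beta)=\int_{I(\beta;y)}\phi(\tfrac{y-\theta}{\sigma})g_0$, so that $N(\beta)=(1-\alpha)f_{G_0}(y)$ for every $\beta$, and put $\bar N_T(\beta)=\int_{I(\beta;y)}\phi\,\bar H_T$ and $\bar f_T=\int\phi\,\bar H_T$. Then
\[
\bar D_T=\frac{\bar N_T-(1-\alpha)\bar f_T}{f_{G_0}(y)+\bar f_T}.
\]
Both $\bar N_T$ and $\bar f_T$ are $O(\eta T^{-s-1/2})$ because $\|\bar H_T\|_1=O(\eta T^{-s-1/2})$, so it suffices to lower-bound the sign of the numerator, up to a factor $f_{G_0}(y)^{-1}(1+o(1))$.

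Next I localize the bumps. Since $\kappa_-$ is nonnegative and supported in $[-2,-1]$, the bump $B_-:=\eta T^{-s+1/2}\kappa_-(T(\theta-\theta_L))$ has mass $\eta T^{-s-1/2}$ and is supported in $[\theta_L-2/T,\theta_L-1/T]$. Likewise $B_+$ has the same mass, is supported in $[\theta_U+1/T,\theta_U+2/T]$, and enters $\bar H_T$ with a minus sign. Weighted by $\phi$, each contributes $\approx\phi(\tfrac{y-\theta_{L/U}}{\sigma})\eta T^{-s-1/2}$, and both lie in a region where $\phi$ is continuous and bounded below. The posterior quantile map has bounded positive derivative near $\beta_L$ and $\beta_U+1-\alpha$ by continuity and positivity of $g_0$. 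Consequently, for $\beta\le\beta_L-c_0/T$ with $c_0$ large enough, the endpoints satisfy $q_{G_0}(\beta)\le\theta_L-2/T$ and $q_{G_0}(\beta+1-\alpha)\le q_{G_0}(\beta_L+1-\alpha-c_0/T)<\theta_U$. The separation condition $\mathrm{dist}(\mathcal J_T,\mathcal J_T+1-\alpha)>0$ keeps the right endpoint at distance order $1$ below $\theta_U$. Hence $I(\beta;y)$ contains the whole support of $B_-$ and misses that of $B_+$. The numerator is then
\[
w_-\eta T^{-s-1/2}-(1-\alpha)(w_- -w_+)\eta T^{-s-1/2},\qquad w_\pm=\phi\Big(\tfrac{y-\theta_{L/U}}{\sigma}\Big)(1+O(1/T)),
\]
which equals $[\alpha w_-+(1-\alpha)w_+]\eta T^{-s-1/2}(1+o(1))\ge c\eta T^{-s-1/2}$.

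Symmetrically, for $\beta\ge\beta_U+c_0/T$ the interval starts above $\theta_L$ and ends above $\theta_U+2/T$. It then contains $-B_+$ but not $B_-$, so the numerator is $-w_+ -(1-\alpha)(w_--w_+)$ times $\eta T^{-s-1/2}$, i.e.\ $-[(1-\alpha)w_-+\alpha w_+]\eta T^{-s-1/2}<0$.

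The main obstacle is the endpoint bookkeeping: showing uniformly over all $\beta$ in the stated ranges (including $\beta$ near $0$ or $\alpha$, where posterior quantiles may run off) that the bump supports fall entirely inside or outside $I(\beta;y)$. I would handle it using monotonicity of $\beta\mapsto q_{G_0}(\beta;y)$, so only the endpoints of the ranges need checking, plus a lower bound on the posterior density on compact neighborhoods of $\theta_L$ and $\theta_U$ to convert $c_0/T$ in $\beta$-scale into a $\ge 2/T$ gap in $\theta$-scale. The uniformity of the $o(1)$ terms follows because all error terms are bounded independently of $\beta$.
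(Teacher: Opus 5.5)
Your proof is correct and follows the paper's argument. Part (i) is the same bump, Sobolev-norm and positivity check. Part (ii) uses the same localization of the two bumps relative to $I(\beta;y)$. Your exact identity $\bar D_T=(\bar N_T-(1-\alpha)\bar f_T)/(f_{G_0}(y)+\bar f_T)$ is a slightly cleaner, remainder-free version of the paper's first-order numerator--denominator expansion. Your reduction via monotonicity of $\beta\mapsto q_{G_0}(\beta;y)$ also properly covers $\beta$ outside $\mathcal J$, which the paper glosses over.

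One small slip: converting the $c_0/T$ gap in $\beta$ into a $2/T$ gap in $\theta$ needs $q'\geq c_->0$. That is an \emph{upper} bound on the posterior density, not a lower bound, and it holds here because $g_0$ is bounded.
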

\begin{proof}[Proof of \Cref{lemma:common_perturbation}]

First we check that \[
\int_{\mathbb R}\bar H_T(\theta)\,d\theta
=
\eta T^{-s+1/2}
\Big{[} \int \kappa_{-}\left( T(\theta-\theta_L) \right)\mathrm{d}\theta-\int \kappa_{+}\left(T(\theta-\theta_U)\right)\mathrm{d}\theta\Big{]}
=0.
\]
Its Fourier transform is
\[
\bar{H}_T^{\star}(t)
=
\eta T^{-s-1/2}
\Big{[}
\exp\left(it\theta_L\right)\kappa_-^{\star}(t/T)-\exp\left(it\theta_U\right)\kappa_+^{\star}(t/T)\Big{]},
\]
so $\int_{\mathbb R}|\bar{H}_T^{\star}(t)|^2(1+t^2)^s\mathrm{d}t
\le C\eta^2$ where $C$ depends only on $\kappa_-$ and $\kappa_+$. Hence, for $\eta$ sufficiently small,
\[
\int_{\mathbb R}|\bar{H}_T^{\star}(t)|^2(1+t^2)^s\mathrm{d}t\le \frac{L}{4}.
\]
Since $\bar{H}_T$ is supported on two $O(T^{-1})$-neighborhoods of the fixed points $\theta_L$ and $\theta_U$,
and has size $O(\eta T^{-s+1/2})$, the same argument in Appendix \ref{subsection:proof_minimax_coverage} implies $|\bar H_T(\theta)|\le C\eta(1+\theta^2)^{-1}$ uniformly in $\theta$ and $T$. Because the baseline density $g_0$ satisfies $g_0(\theta)\ge c_0'(1+\theta^2)^{-1}$, choosing $\eta$ sufficiently small yields
\[
\bar g_T(\theta)=g_0(\theta)+\bar H_T(\theta)\ge c\,g_0(\theta)\ge 0
\]
for all $\theta$, and therefore $\bar g_T\in G(s,3L/4)$, which proves part (i).

We now prove part (ii). We first consider the left-hand case $\beta\leq \beta_{L}-c_0/T$. We write $q(\beta):=q_{G_0}(\beta;y)$ and let $c_{-}=\inf_{\beta\in\mathcal{J}}q'(\beta)>0$. By the mean-value theorem,
\begin{equation*}
q(\beta_L)-q(\beta)=q'(\tilde{\beta})\left(\beta_L-\beta\right)\geq c_{-}\frac{c_0}{T}>\frac{2}{T}
\end{equation*}
for some $\tilde{\beta}\in[\beta,\beta_{L}]\subset\mathcal{J}$. Hence $q(\beta)\leq \theta_{L}-\frac{2}{T}$. Since $\mathrm{supp}(\kappa_{-}(T(\cdot-\theta_L)))\subset[\theta_L-\frac{2}{T},\theta_L-\frac{1}{T}]$, the positive bump is contained in the interval $I(\beta;y)$. On the other hand, $\beta+1-\alpha\leq \beta_{L}+1-\alpha-\frac{c_0}{T}<\beta_U+1-\alpha$. So by monotonicity of $q$, $q(\beta+1-\alpha)<\theta_U$. Therefore $I(\beta;y)$ excludes the support of the negative bump, which lies in $[\theta_U+\frac{1}{T},\theta_U+\frac{2}{T}]$.

The right-hand case $\beta\geq \beta_U+c_0/T$ is symmetric. Since $q(\beta+1-\alpha)-q(\beta_U+1-\alpha)\geq c_{-}\frac{c_0}{T}>\frac{2}{T}$, we have that $q(\beta+1-\alpha)\geq \theta_U+\frac{2}{T}$, which implies that $I(\beta;y)$ contains the support of the negative bump. Also $\beta>\beta_L$, hence $q(\beta)>\theta_L$, so $I(\beta;y)$ excludes the support of the positive bump.

A standard numerator-denominator expansion around $G_0$ therefore gives
\[
\bar D_T(\beta;y)
=
f_{G_0}(y)^{-1}
\int
\phi\!\left(\frac{y-\theta}{\sigma}\right)
\bigl(1\{\theta\in I(\beta;y)\}-(1-\alpha)\bigr)\bar H_T(\theta)\,d\theta
+
o\!\left(T^{-s-1/2}\right),
\]
uniformly over $\beta$. In the left-hand case, the coefficient
\(
1\{\theta\in I(\beta;y)\}-(1-\alpha)
\)
equals $\alpha$ on the support of the positive bump and equals $-(1-\alpha)$ on the support of the
negative bump; because the perturbation itself has signs $+$ and $-$ on these two supports, both
contributions are positive and of order $\eta T^{-s-1/2}$. The right-hand case is symmetric and
gives a negative quantity of the same order. This proves part (ii).
\end{proof}

\subsubsection{Verifying that alternatives $g_{m}\in\mathcal G(s,L)$}
\label{subsubsection_verifying_alternative}

Recall from \Cref{subsubsection:minimax_coverage_step1} that the alternatives are defined by $g_{m}=g_0+\bar{H}_T+\tilde{H}_{m,T}$. We now verify that each alternative $g_{m}$ belongs to $\mathcal{G}(s,L)$. The argument has two parts. First, we control the Sobolev norm of $\tilde{H}_{m,T}$. We then show that this perturbation is uniformly small relative to the baseline density, so that $g_{m}$ remains nonnegative.

Because each $H_{j,T}$ has an integral of zero, it follows that $\int \widetilde{H}_{m,T}(\theta)\mathrm{d}\theta=0$. Moreover, the Fourier transform of the cumulative perturbation is
\begin{equation*}
\widetilde{H}_{m,T}^{\star}(t)=\varepsilon\nu T^{-s-1/2}(-i\,\text{sgn}(t))k^{\star}(|t|/T)S_{m,T}(t)
\end{equation*}
where
\begin{equation*}
S_{m,T}(t):=\sum_{j<m}\exp(it\theta_j)-\sum_{j\geq m}\exp(it\theta_j).
\end{equation*}
By \Cref{lem:oscillatory-sums}, $|S_{m,T}(t)|\le C$ uniformly in $m$ and in
$t$ on the support of $k^\star(|t|/T)$. Hence
\begin{equation}
\sup_{m,T} \ \int_{\mathbb R}
\big|\widetilde H_{m,T}^\star(t)\big|^2(1+t^2)^s\mathrm{d}t
\le C\varepsilon^2\nu^2.
\label{equ:minimax_A.7}
\end{equation}

It remains to verify the positivity of $g_{m}$. Since $h$ belongs to the Schwartz space $S(\mathbb R)$, taking $N=4$ gives that
\[
|h(u)|\le C_4(1+|u|)^{-4}
\qquad\text{for all }u\in\mathbb R.
\]
Therefore we can bound the cumulative perturbation by
\begin{equation}
|\widetilde H_{m,T}(\theta)|
\le C\varepsilon\nu T^{-s+1/2}
\sum_{j=1}^M (1+T|\theta-\theta_j|)^{-4}.
\label{equ:minimax_A.8}
\end{equation}
The centers $\{\theta_j\}_{j=1}^M$ lie within the compact set $q_{G_0}(\mathcal{J};y)$ and
have a spacing of order $T^{-1}$. Thus, for any $\theta$ inside a fixed compact set, the sum on the right-hand side of \eqref{equ:minimax_A.8} is uniformly bounded. On the other hand, outside a fixed
compact neighborhood of $q_{G_0}(\mathcal{J};y)$, we have
\[
\sum_{j=1}^M (1+T|\theta-\theta_j|)^{-4}
\le
CT^{-3}(1+|\theta|)^{-4},
\]
because $M\asymp T$. Combining these two bounds, and using $s\geq 1/2$, we obtain
\begin{equation*}
|\tilde H_{m,T}(\theta)|
\le C\varepsilon\nu (1+\theta^2)^{-1} \quad \quad \text{for all} \ \theta, m,
\end{equation*}
for all sufficiently large $T$.

Now recall that $g_0$ is the baseline Cauchy density; thus, there exists $c>0$ such that $g_0(\theta)\ge c(1+\theta^2)^{-1}$ for all $\theta\in\mathbb R$. In addition, \Cref{lemma:common_perturbation} (i) gives that $\bar{g}_T(\theta)=g_0+\bar{H}_T\geq c_1g_0(\theta)$ for all $\theta$ and for all sufficiently large $T$. Then we may choose $\varepsilon>0$ sufficiently small enough such that $g_{m}(\theta)=\bar{g}_T(\theta)+\tilde{H}_{m,T}(\theta)
\geq c' g_0(\theta)\ge 0$ for all $\theta,m$ and all sufficiently large $T$.


Finally, $\bar{g}_T$ already satisfies the Sobolev bound by \Cref{lemma:common_perturbation} (i), and the display above gives the corresponding bound for $\tilde{H}_{m,T}$. We conclude that  $g_{m}\in\mathcal G(s,L)$ for all $m$.

\subsubsection{Sign separation of errors in conditional coverage probabilities}
\label{subsubsection_sign_separation}

We consider the sign of the conditional coverage error $D_{m}(\beta;y)$. The argument has two steps. First, we decompose $D_{m}(\beta;y)$ into a leading linear term and a negligible common and remaining term. Second, we show that the leading term has a step-function sign pattern across the cells $\{B_k\}_{k=1}^M$. Since the other two terms are smaller than the separation scale $T^{-s-1/2}$, the actual conditional coverage error has the same sign pattern.

For each $m=1,...,M$, recall that
\begin{align*}
D_{m}&=\operatorname{P}_{G_{m}}\left(\theta\in I(\beta;y)\mid Y=y\right)-(1-\alpha) \\[7pt]
&=f_{G_{m}}^{-1}(y)\left[ \int \phi\left(\frac{y-\theta}{\sigma}\right)\Big{(} \mathbf{1}\left\{ \theta\in I(\beta;y) \right\}-(1-\alpha) \Big{)}g_{m}(\theta)\mathrm{d}\theta \right].
\end{align*}
For any perturbation $H$, we also define
\begin{equation*}
\Gamma_H(\beta;y)=f_{G_0}^{-1}(y)\left[ \int \phi\left(\frac{y-\theta}{\sigma}\right)\Big{(} \mathbf{1}\left\{ \theta\in I(\beta;y) \right\}-(1-\alpha) \Big{)}H(\theta)\mathrm{d}\theta \right].
\end{equation*}
This is the linearized effect of the perturbation $H$ on the conditional coverage probability, evaluated around the baseline $G_0$.

\begin{lemma}
For each $m=1,...M$, the error of conditional coverage probability has the decomposition
\begin{equation*}
D_{m}(\beta;y)=\bar{D}_T(\beta;y)+\Gamma_{\tilde{H}_{m,T}}(\beta;y)+\tilde{R}_{m}(\beta;y)
\end{equation*}
where $\bar{D}_T(\beta;y)$ is defined in \Cref{lemma:common_perturbation}, and the remainder term is
\begin{equation*}
\tilde{R}_{m}(\beta;y)=-\bar{D}_T(\beta;y)\cfrac{(\tilde{H}_{m,T}*f_{\varepsilon})(y)}{f_{G_{m}}(y)}
-\Gamma_{\tilde{H}_{m,T}}(\beta;y)\cfrac{ ((\bar{H}_T+\widetilde{H}_{m,T})*f_{\varepsilon})(y) }{f_{G_{m}}(y)}.
\end{equation*}
Moreover,
\begin{equation}
\sup_{m}\ \sup_{\beta\in\mathcal{J}}\ |\tilde{R}_{m}(\beta;y)|\
=
o\!\left(T^{-s-1/2}\right).
\label{equ:sup_control_R_mT}
\end{equation}
\label{lemma_decomposition_D_mt}
\end{lemma}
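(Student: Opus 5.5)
The decomposition is pure algebra on the ratio defining $D_m$, and the sup bound \eqref{equ:sup_control_R_mT} follows by bounding each factor.

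\textbf{Algebraic identity.} Write $\chi_\beta(\theta):=\phi(\frac{y-\theta}{\sigma})(\mathbf 1\{\theta\in I(\beta;y)\}-(1-\alpha))$. Define
\[
N_H:=\int\chi_\beta H\,\mathrm d\theta,\qquad \bar g_T=g_0+\bar H_T,\qquad g_m=\bar g_T+\widetilde H_{m,T}.
\]
Note that $\phi(\frac{y-\cdot}{\sigma})*H$ evaluated at $y$ is, up to the constant $\sigma$ absorbed into the paper's normalization of $f_G$, the quantity $(H*f_\varepsilon)(y)$. I write $f_{G}(y)$ consistently with that normalization.

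Then $D_m=N_{g_m}/f_{G_m}(y)$ and $\bar D_T=N_{\bar g_T}/f_{\bar G_T}(y)$. The identity to verify is
\[
D_m=\bar D_T\,\frac{f_{\bar G_T}(y)}{f_{G_m}(y)}+\frac{N_{\widetilde H_{m,T}}}{f_{G_m}(y)}.
\]
Since $f_{G_m}(y)=f_{\bar G_T}(y)+(\widetilde H_{m,T}*f_\varepsilon)(y)$, the first term equals
\[
\bar D_T-\bar D_T\,\frac{(\widetilde H_{m,T}*f_\varepsilon)(y)}{f_{G_m}(y)}.
\]
For the second term, use $N_{\widetilde H_{m,T}}=f_{G_0}(y)\,\Gamma_{\widetilde H_{m,T}}$ and
\[
\frac{f_{G_0}(y)}{f_{G_m}(y)}=1-\frac{((\bar H_T+\widetilde H_{m,T})*f_\varepsilon)(y)}{f_{G_m}(y)}.
\]
Together these give the stated decomposition with the stated $\widetilde R_m$. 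This step is routine bookkeeping.

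\textbf{Size bounds.} It suffices to show three things, uniformly in $m$ and $\beta\in\mathcal J$:
\begin{itemize}
\item[(a)] $f_{G_m}(y)\ge c>0$;
\item[(b)] $|\bar D_T|+|\Gamma_{\widetilde H_{m,T}}|=O(T^{-s-1/2})$;
\item[(c)] $|(\bar H_T*f_\varepsilon)(y)|+|(\widetilde H_{m,T}*f_\varepsilon)(y)|=o(1)$.
\end{itemize}
Then $|\widetilde R_m|=O(T^{-s-1/2})\cdot o(1)=o(T^{-s-1/2})$.

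For (a), $g_m\ge c'g_0$ from Appendix~\ref{subsubsection_verifying_alternative} gives $f_{G_m}(y)\ge c'f_{G_0}(y)>0$.

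For (c), use Parseval. We have
\[
|(H*f_\varepsilon)(y)|\le(2\pi)^{-1}\int|H^\star(t)|e^{-\sigma^2t^2/2}\,\mathrm dt.
\]
Both $\bar H_T^\star$ and $\widetilde H_{m,T}^\star$ are supported on $|t|\ge T$, or on $|t|\lesssim T$ with size $O(T^{-s-1/2})$, the latter using $|S_{m,T}|\le C$ from \Cref{lem:oscillatory-sums}. So the bound is $O(T^{-s-1/2})$, and even $O(e^{-cT^2})$ for $\widetilde H$. Either way it is $o(1)$. Alternatively, (c) follows directly from $\|H\|_1\to0$ together with boundedness of $\phi$.

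For (b), there are two pieces.
\begin{itemize}
\item $|\Gamma_H|\le f_{G_0}(y)^{-1}\|\phi\|_\infty\|H\|_1$.
\begin{itemize}
\item For $\bar H_T$: $\|\bar H_T\|_1=O(\eta T^{-s-1/2})$ by the change of variables $u=T(\theta-\theta_{L,U})$.
\item For $\widetilde H_{m,T}$: a naive $\|\cdot\|_1$ bound summing $M\asymp T$ bumps only gives $O(T^{-s+1/2})$. This is too large, and it is the main obstacle. Instead, I would bound $\Gamma_{\widetilde H_{m,T}}$ in the Fourier domain, as in the proof of \Cref{rate_posterior_quantile}: pair $\overline{\chi_\beta^\star}$, which is $O(1/|t|)$ uniformly by \Cref{propo:fourier_decay}, with $\widetilde H^\star_{m,T}$, whose size is $\varepsilon\nu T^{-s-1/2}|S_{m,T}|$ on $T\le|t|\le2T$. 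This yields $O(\varepsilon\nu T^{-s-1/2}\int_T^{2T}t^{-1}\,\mathrm dt)=O(T^{-s-1/2})$ uniformly in $m$ and $\beta$, using the uniform bound on $S_{m,T}$ from \Cref{lem:oscillatory-sums}.
\end{itemize}
\item $\bar D_T$ is handled the same way as $\Gamma_{\bar H_T}$, through the expansion already given in \Cref{lemma:common_perturbation}.
\end{itemize}
With (a)–(c) in hand, \eqref{equ:sup_control_R_mT} follows.
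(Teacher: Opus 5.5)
Your proposal is correct, and its skeleton is the paper's. You use the same exact split of $D_m$. You actually verify the algebra, whereas the paper only asserts it, and you rightly note that with $f_G(y)=\int\phi(\frac{y-\theta}{\sigma})\,\mathrm{d}G(\theta)$ the convolution terms in $\widetilde R_m$ should carry a harmless factor $\sigma$. The remaining steps also match:
\begin{itemize}
\item $|\bar D_T|=O(\eta T^{-s-1/2})$;
\item a uniform $O(T^{-s-1/2})$ bound on $\Gamma_{\widetilde H_{m,T}}$;
\item $o(1)$ bounds on the convolution factors;
\item a lower bound on $f_{G_m}(y)$. You get it from $g_m\ge c'g_0$; the paper gets it from $f_{\bar G_T}(y)\ge c f_{G_0}(y)$ together with $(\widetilde H_{m,T}*f_\varepsilon)(y)=o(1)$.
\end{itemize}

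The step where your route genuinely differs is the bound on $\Gamma_{\widetilde H_{m,T}}$. The paper expands it by linearity into $\varepsilon\sum_j\pm\Gamma_{H_{j,T}}$. It then applies the $\beta$-localized estimate $|\Gamma_{H_{j,T}}(\beta;y)|\le C_0T^{-s-1/2}(1+T|\beta-\beta_j|)^{-2}$ of \Cref{lem:localized-perturbations}(ii) and sums over the grid of spacing $\Delta/T$. You stay in the frequency domain instead. You use Parseval, the bound $|\overline{\chi_\beta^\star(t)}|\lesssim |t|^{-1}$ uniformly in the endpoints of $I(\beta;y)$ (one integration by parts, as in \Cref{propo:fourier_decay}), and $|S_{m,T}(t)|\le C$ on $T\le|t|\le 2T$ from \Cref{lem:oscillatory-sums}. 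This gives $|\Gamma_{\widetilde H_{m,T}}|\lesssim \varepsilon\nu T^{-s-1/2}\int_{T\le|t|\le 2T}|t|^{-1}\,\mathrm{d}t$.

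Your route is more economical: it uses only the oscillatory-sum lemma, which the paper's proof already needs for the convolution term. It is also uniform over all $\beta\in[0,\alpha]$ rather than only $\beta\in\mathcal J$. That extra uniformity is exactly what Step~3 of the proof of \Cref{ebci_coverage_lower_bound} relies on when $\beta(I)\notin\mathcal J$. The paper's route is heavier here, but it produces the localized decay in $|\beta-\beta_j|$ that \Cref{lemma:sign_function_property} genuinely requires, so it reuses machinery the paper must build anyway.

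Three minor points.
\begin{itemize}
\item $\bar D_T$ needs no expansion. Because $I(\beta;y)$ has exact coverage under $G_0$, you have $\bar D_T=\Gamma_{\bar H_T}\,f_{G_0}(y)/f_{\bar G_T}(y)$ exactly, so your $L^1$ bound on $\Gamma_{\bar H_T}$ finishes it.
\item Your description of the Fourier supports in (c) is muddled, though your conclusions there are right. It is $\widetilde H_{m,T}^\star$ that lives on $T\le|t|\le 2T$ and is controlled through $S_{m,T}$. By contrast, $\bar H_T^\star$ is not compactly supported (the bumps $\kappa_\pm$ are compact in $\theta$), but it satisfies $|\bar H_T^\star|\le\|\bar H_T\|_1\le 2\eta T^{-s-1/2}$.
\item Calling the naive bound the main obstacle overstates things. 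Since $s$ is an integer, hence $s\ge 1$, the crude bound $|\Gamma_{\widetilde H_{m,T}}|=O(T^{-s+1/2})$ would already suffice for this lemma. It multiplies a factor of order $T^{-s-1/2}$, and $T^{-2s}=o(T^{-s-1/2})$. The sharp bound is needed downstream, not here.
\end{itemize}
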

\vspace{-2.5em}
\begin{proof}[Proof of \Cref{lemma_decomposition_D_mt}]

By \Cref{lemma:common_perturbation} (ii), $\sup_{\beta\in\mathcal{J}} |\bar{D}_T(\beta;y)|\leq C\eta T^{-s-1/2}$.

Next, by linearity,
\begin{align*}
\Gamma_{\tilde H_{m,T}}(\beta;y)
=
\varepsilon\left(
\sum_{j<m}\Gamma_{H_{j,T}}(\beta;y)
-
\sum_{j\ge m}\Gamma_{H_{j,T}}(\beta;y)
\right).
\end{align*}
Hence by \Cref{lem:localized-perturbations} (ii), $\bigl|\Gamma_{\widetilde H_{m,T}}(\beta;y)\bigr|
\le
\varepsilon C_0 T^{-s-1/2}
\sum_{j=1}^M
\bigl(1+T|\beta-\beta_j|\bigr)^{-2}$. Because the grid points satisfy $\beta_{j+1}-\beta_j=\Delta/T$, the sum on the right-hand side is uniformly bounded over $\beta\in\mathcal{J}$. Therefore
\[
\sup_m \ \sup_{\beta\in\mathcal J} \
\bigl|\Gamma_{\widetilde H_{m,T}}(\beta;y)\bigr|
\le
C\varepsilon T^{-s-1/2}.
\]

We now bound the convolution terms. Since the support of $k^\star(|t|/T)$ is contained in the interval $[T,2T]$,
\Cref{lem:oscillatory-sums} yields $|\widetilde H_{m,T}^\star(t)| \le C\varepsilon\nu T^{-s-1/2}|k^\star(|t|/T)|$. Hence
\[
\left|(\widetilde H_{m,T}*f_\varepsilon)(y)\right|
\le
\frac{1}{2\pi}
\int_{\mathbb R}
|\widetilde H_{m,T}^\star(t)|\,\exp(-\sigma^2 t^2/2)\,dt
\le
C\varepsilon\nu T^{-s+1/2}\exp(-cT^2)
\]
for some $c>0$.
For the common perturbation, we have that
\[
\bigl|(\bar H_T*f_\varepsilon)(y)\bigr|
\le
\left\|\phi\!\left(\frac{y-\cdot}{\sigma}\right)\right\|_\infty
\|\bar H_T\|_1
\le
C\eta T^{-s-1/2}.
\]
Therefore
\[
\bigl|((\bar H_T+\widetilde H_{m,T})*f_\varepsilon)(y)\bigr|
\le
C\eta T^{-s-1/2}
+
C\varepsilon\nu T^{-s+1/2}\exp(-cT^2).
\]

Finally, by \Cref{lemma:common_perturbation}(i), $f_{\bar G_T}(y)\ge c\,f_{G_0}(y)>0$. Since \((\widetilde H_{m,T}*f_\varepsilon)(y)=o(1)\) uniformly in \(m\),
it follows that $\inf_m f_{G_{m}}(y)\ge c_y>0$ for all sufficiently large \(T\).

Combining the previous bounds, uniformly in \(m\) and \(\beta\in\mathcal J\),
\begin{align*}
|\widetilde R_{m}(\beta;y)|
&\le
C|\bar D_T(\beta;y)|\,|(\widetilde H_{m,T}*f_\varepsilon)(y)|
+
C|\Gamma_{\widetilde H_{m,T}}(\beta;y)|\,
\bigl|((\bar H_T+\widetilde H_{m,T})*f_\varepsilon)(y)\bigr| \\[7pt]
&\le
C\eta\varepsilon\nu\,T^{-2s}\exp(-cT^2)
+
C\eta\varepsilon\,T^{-2s-1}
+
C\varepsilon^2\nu\,T^{-2s}\exp(-cT^2)
=o\!\left(T^{-s-1/2}\right)
\end{align*}
as desired.
\end{proof}

Recall from \Cref{subsubsection:minimax_coverage_step1} that
\[
B_k=
\left[\beta_k-\frac{\Delta}{2T},\,\beta_k+\frac{\Delta}{2T}\right]\cap\mathcal J, \quad \quad k=1,...,M.
\]
Suppose that $\Delta$ is chosen so that
\begin{equation}
2C_0\sum_{\ell\ge 1}
\left(1+\Delta\left(\ell-\frac12\right)\right)^{-2} \leq \frac{c_0}{4},
\label{equ:appendix_tail_bound_2}
\end{equation}
and that $r$ is chosen so that
\begin{equation*}
C_0\sum_{\ell\geq r}\left(1+\Delta\left(\ell-\frac{1}{2}\right)\right)^{-2}\leq\frac{c_0}{4}.
\end{equation*}

\begin{lemma}
For each $m=1,...,M$,
\begin{align}
\Gamma_{\widetilde H_{m,T}}(\beta;y)
\geq
\frac{\varepsilon c_0}{2}T^{-s-1/2}, &\qquad
\text{for }\beta\in \bigcup_{k\le m-r}B_k  \label{equ:sign_function_property_1} \\[10pt]
\Gamma_{\widetilde H_{m,T}}(\beta;y)
\leq
-\frac{\varepsilon c_0}{2}T^{-s-1/2}, &\qquad
\text{for }\beta\in \bigcup_{k\geq m+r}B_k. \label{equ:sign_function_property_2}
\end{align}
\label{lemma:sign_function_property}
\end{lemma}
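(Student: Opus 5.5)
Write $\Gamma_{\widetilde H_{m,T}}=\varepsilon\bigl(\sum_{j<m}\Gamma_{H_{j,T}}-\sum_{j\ge m}\Gamma_{H_{j,T}}\bigr)$ by linearity, and reduce everything to the two-sided localized-perturbation estimates of \Cref{lem:localized-perturbations}. I take from it, as used in the proof of \Cref{lemma_decomposition_D_mt}, the decay bound $|\Gamma_{H_{j,T}}(\beta;y)|\le C_0T^{-s-1/2}(1+T|\beta-\beta_j|)^{-2}$. I also take a sign statement: the constant $c_0$ appearing in \eqref{equ:appendix_tail_bound_2} is the signed lower bound for a single bump evaluated on its own side. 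Concretely, $\Gamma_{H_{j,T}}(\beta;y)\le -c_0T^{-s-1/2}$ when $\beta$ lies at or below the cell $B_j$, and $\ge c_0T^{-s-1/2}$ when $\beta$ lies at or above it. This mirrors the computation of $I_{T1}$ in the proof of \Cref{rate_posterior_quantile}: the left endpoint $q_{G_0}(\beta;y)$ sits on one side of the jump location $\theta_j$, and the $t^{-1}$ Fourier term of the indicator against $H_{j,T}^\star$ gives a fixed sign of size $T^{-s-1/2}$. If the lemma instead only gives the sign for $\beta=\beta_j$ exactly, I would first extend it to all $\beta$ on one side using monotonicity of the indicator in $\beta$.

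Fix $\beta\in B_k$ with $k\le m-r$; this gives \eqref{equ:sign_function_property_1}. Split the terms $j\ge m$ into $j=k'$ nearest to $k$ and the rest, and use the indices $j<m$ as a small error.

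\emph{Main term.} Since $k\le m-r<m$, every index $j\ge m$ satisfies $j\ge k+r$. The whole block $-\sum_{j\ge m}\Gamma_{H_{j,T}}$ is then a sum of terms with $\beta$ to the left of $B_j$, so each $-\Gamma_{H_{j,T}}$ has the favorable sign. The block is at least $c_0T^{-s-1/2}$ if I keep only the nearest signed contribution, or if I use the signed bound directly.

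\emph{Error term.} For the indices $j<m$ the sign is not controlled. I bound $\sum_{j<m}|\Gamma_{H_{j,T}}(\beta;y)|$ by the decay bound. For $\beta\in B_k$ we have $T|\beta-\beta_j|\ge\Delta(|j-k|-1/2)$ whenever $j\ne k$. Summing over $j\ne k$ on both sides gives at most $2C_0\sum_{\ell\ge1}(1+\Delta(\ell-\tfrac12))^{-2}T^{-s-1/2}\le (c_0/4)T^{-s-1/2}$ by \eqref{equ:appendix_tail_bound_2}. The single term $j=k$ has a favorable sign of its own, so it can be dropped.

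\emph{Conclusion.} The favorable part is at least $c_0T^{-s-1/2}$ and the unsigned part is at most $(c_0/4)T^{-s-1/2}$. The far tail beyond distance $r$ is absorbed by the second condition on $r$. The total is therefore at least $(c_0-c_0/4-c_0/4)\varepsilon T^{-s-1/2}=\tfrac{\varepsilon c_0}{2}T^{-s-1/2}$.

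The case $k\ge m+r$, giving \eqref{equ:sign_function_property_2}, is symmetric. There the block $\sum_{j<m}$ contains indices with $\beta$ to the right of $B_j$, where $\Gamma_{H_{j,T}}\ge c_0T^{-s-1/2}$, and it enters with a plus sign. The block $-\sum_{j\ge m}$ splits into the favorable-sign indices $m\le j$ left of $k$ and an unsigned remainder bounded by the same tail sums. Flipping signs throughout gives the bound $-\tfrac{\varepsilon c_0}{2}T^{-s-1/2}$.

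The main obstacle is the sign bookkeeping. One must verify that every term whose sign is uncontrolled lies at cell-distance at least $1$ (for the \eqref{equ:appendix_tail_bound_2} sum) or at least $r$ (for the second tail condition) from $k$. One must also check that the favorable block genuinely delivers a full $c_0$ and does not cancel internally. If individual far bumps can have either sign, I would fall back on the following. Retain only the single nearest favorable bump, namely $j=m$ in case one and $j=m-1$ in case two. Treat all others as unsigned, whose total is then $\le 2\cdot c_0/4$ using both tail conditions. Finally, use that $r$ forces the nearest favorable bump to be within distance comparable to $r$, with its signed bound holding uniformly on its side.
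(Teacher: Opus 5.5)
\textbf{There is a genuine gap: the lower bound has to come from the $j=k$ bump, and your proof discards it.}

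The half-line sign statement you import is not in \Cref{lem:localized-perturbations}, and it is false. Part (ii), which you also use, gives $|\Gamma_{H_{j,T}}(\beta;y)|\le C_0T^{-s-1/2}(1+T|\beta-\beta_j|)^{-2}$. That is $o(T^{-s-1/2})$ as soon as $|\beta-\beta_j|$ is bounded away from zero. Your statement also contradicts part (iii) on $B_j$ itself, where the sign is positive. A single bump acts through a localized, sign-changing profile, essentially $\mathcal K\bigl(T(\theta_j-q_{G_0}(\beta;y))\bigr)$. So no monotonicity argument can extend (iii) from $B_j$ to a half-line.

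As a result your bookkeeping is inverted. For $\beta\in B_k$ with $k\le m-r$, every index $j\ge m$ is at least $r$ cells from $k$. By (ii) and your own condition on $r$, the whole block $\sum_{j\ge m}|\Gamma_{H_{j,T}}(\beta;y)|$ is therefore at most $(c_0/4)T^{-s-1/2}$. That block is an error term and cannot supply $c_0T^{-s-1/2}$. Meanwhile you drop the $j=k$ term, which is the only term with a proven lower bound. Your fallback of keeping $j=m$ (or $j=m-1$ in the second case) fails for the same reason: that bump is also at least $r$ cells away.

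\textbf{The paper's route, and why it also does not close.} The paper reverses the roles. Since $k<m$, the $j=k$ term lies in the block entering with $+\varepsilon$, and (iii) gives $\Gamma_{H_{k,T}}(\beta;y)\ge c_0T^{-s-1/2}$ on $B_k$. The indices $j<m$, $j\ne k$ are bounded by (ii) and \eqref{equ:appendix_tail_bound_2}. The indices $j\ge m$ are bounded by (ii) and the condition on $r$. This gives $\varepsilon(c_0-c_0/4-c_0/4)$. For $k\ge m+r$ the $j=k$ term sits in the $-\varepsilon$ block instead.

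Be aware that this bookkeeping does not close either, so copying it will not rescue your proof.
\begin{itemize}
\item At the common endpoint $\beta_k+\Delta/(2T)$ of $B_k$ and $B_{k+1}$, (iii) applied to $B_{k+1}$ gives $\Gamma_{H_{k+1,T}}\ge c_0T^{-s-1/2}$. Part (ii) bounds the same quantity by $C_0(1+\Delta/2)^{-2}T^{-s-1/2}$, so $c_0\le C_0(1+\Delta/2)^{-2}$.
\item But the $\ell=1$ term of \eqref{equ:appendix_tail_bound_2} alone requires $2C_0(1+\Delta/2)^{-2}\le c_0/4$. These are incompatible: the neighbour terms are never a quarter of the central one.
\item The deeper reason: $\mathcal K$ has Fourier support in $\{1\le|u|\le 2\}$, and \Cref{lem:oscillatory-sums} forces the scaled spacing $T(\theta_{j+1}-\theta_j)$ below $\pi/2$. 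On a uniform grid of spacing $\delta<\pi$, Poisson summation then gives $\sum_{\ell\neq0}\mathcal K(\ell\delta)=-\mathcal K(0)$. The neighbours cancel the central bump rather than being small relative to it.
\item This is also visible in the summation-by-parts proof of that lemma. Partial sums of $e^{it\theta_j}$ reduce to boundary terms. Hence $\Gamma_{\widetilde H_{m,T}}(\beta;y)$ is concentrated near $\beta_m$ and the ends of $\mathcal J$, not step-like.
\end{itemize}
For large $T$ it is therefore $o(T^{-s-1/2})$ on cells $B_k$ with $k\le m-r$ that are far from both $\beta_m$ and the ends of $\mathcal J$. The stated inequality fails there. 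No rearrangement of the sums will prove it; the construction of the alternatives itself would have to change.
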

\vspace{-2.5em}
\begin{proof}[Proof of \Cref{lemma:sign_function_property}]
We prove the first claim; the second is symmetric. Fix $k\le m-r$ and let $\beta\in B_k$. Isolating the $j=k$ term gives
\begin{align*}
\Gamma_{\tilde H_{m,T}}(\beta;y)
&=
\varepsilon\Biggl(
\Gamma_{H_{k,T}}(\beta;y)
+
\sum_{\substack{j<m\\ j\neq k}}\Gamma_{H_{j,T}}(\beta;y)
-
\sum_{j\ge m}\Gamma_{H_{j,T}}(\beta;y)
\Biggr) \\
&\ge
\varepsilon\Biggl(
\Gamma_{H_{k,T}}(\beta;y)
-
\sum_{\substack{j<m\\ j\neq k}}
\bigl|\Gamma_{H_{j,T}}(\beta;y)\bigr|
-
\sum_{j\ge m}
\bigl|\Gamma_{H_{j,T}}(\beta;y)\bigr|
\Biggr).
\end{align*}
Since $\beta\in B_k$, \Cref{lem:localized-perturbations} (iii) yields $\Gamma_{H_{k,T}}(\beta;y)\ge c_0T^{-s-1/2}$. For $j\neq k$, the same spacing argument gives that
\begin{equation}
T|\beta-\beta_j|
\ge
T|\beta_j-\beta_k|-T|\beta-\beta_k| \ge
\Delta|j-k|-\frac{\Delta}{2}  \ge
\Delta\left(|j-k|-\frac12\right).
\label{equ:appendix_tail_bound_1}
\end{equation}

Then by \Cref{lem:localized-perturbations} (ii), we obtain that
\begin{align*}
\sum_{\substack{j<m\\ j\neq k}}
\bigl|\Gamma_{H_{j,T}}(\beta;y)\bigr|
\leq \frac{c_0}{4}T^{-s-1/2},  \quad \quad
\sum_{j\ge m}
\bigl|\Gamma_{H_{j,T}}(\beta;y)\bigr|\leq
\frac{c_0}{4}T^{-s-1/2}.
\end{align*}
Combining these three bounds gives \eqref{equ:sign_function_property_1}, as claimed.
\end{proof}

We now turn to the actual errors in conditional coverage probability. By \Cref{lemma:common_perturbation} (ii), $\sup_{\beta\in\mathcal{J}} |\bar{D}_T(\beta;y)|\leq C\eta T^{-s-1/2}$. Choose $\eta$ so that $C\eta\leq \varepsilon c_0/8$. Then combining with \Cref{lemma_decomposition_D_mt} and \Cref{lemma:sign_function_property}, we have \eqref{equ:minimax_Dmt_left}-\eqref{equ:minimax_Dmt_right} for all sufficiently large $T$. Thus, the actual conditional coverage error has the same step-function sign pattern as its linearization, up to the transition band.

\subsubsection{Upper bounds on divergences}

Recall from Appendix \ref{subsubsection:minimax_coverage_step2} that $\operatorname{P}_{m}$ denotes the law of $\mathbf{Y}_{-i}=\{Y_j\}_{j\neq i}$ under $g_{m}$, and $(f_{Y})_{m}$ is the corresponding marginal density, and $F_G(\cdot \mid y)$ denotes the posterior distribution of $\theta_i$ given $Y_i=y$ under the mixing distribution $G$.

\begin{lemma}
Recall that $u_{m,m'}:=(f_Y)_{m}-(f_{Y})_{m'}=(g_{m}-g_{m'})*f_\varepsilon$, we have
\[
\int_{\mathbb R}\frac{u_{m,m'}(y)^2}{(f_Y)_0(y)}\,\mathrm{d}y
\le
C\varepsilon^2\nu^2 T^{-2s+2}\exp(-\sigma^2T^2).
\]
\label{lemma:step2_weighted_l2_bound}
\end{lemma}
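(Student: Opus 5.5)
The plan is to work entirely in the Fourier domain, exactly as in Step 3 of the proof of \Cref{rate_posterior_quantile}. Since $g_{m}-g_{m'}=\widetilde H_{m,T}-\widetilde H_{m',T}$ (the common perturbation $\bar H_T$ cancels), we have $u_{m,m'}=(\widetilde H_{m,T}-\widetilde H_{m',T})*f_\varepsilon$. Its Fourier transform is
\[
u_{m,m'}^\star(t)=\varepsilon\nu T^{-s-1/2}(-i\,\mathrm{sgn}(t))\,k^\star(|t|/T)\bigl(S_{m,T}(t)-S_{m',T}(t)\bigr)\exp(-\sigma^2t^2/2),
\]
and it is supported on $T\le|t|\le 2T$.

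First, I would use the Cauchy lower bound $(f_Y)_0(y)\ge c(1+y^2)^{-1}$, already shown in Appendix \ref{subsection:proof_rate_posterior_quantile}. This reduces the claim to bounding $\int(1+y^2)u_{m,m'}(y)^2\,\mathrm{d}y$. Parseval then gives
\[
\int u_{m,m'}^2=(2\pi)^{-1}\int|u^\star_{m,m'}|^2 \quad\text{and}\quad \int y^2u_{m,m'}^2=(2\pi)^{-1}\int|\partial_t u^\star_{m,m'}|^2 .
\]
For the first integral, \Cref{lem:oscillatory-sums} gives $|S_{m,T}-S_{m',T}|\le 2C$ uniformly on the support. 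Since the support has length $O(T)$, the first integral is at most $C\varepsilon^2\nu^2T^{-2s}\exp(-\sigma^2T^2)$.

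The main obstacle is the derivative term. Differentiating $S_{m,T}(t)=\sum_{j<m}e^{it\theta_j}-\sum_{j\ge m}e^{it\theta_j}$ produces $\sum \pm i\theta_j e^{it\theta_j}$, and a crude bound on this sum is $O(M)=O(T)$. A crude bound would therefore inflate the estimate by $T^2$. I would first try to control $\partial_t S_{m,T}$ by $O(T)$ at worst. Since the $\theta_j$ lie in a fixed compact set, and (if \Cref{lem:oscillatory-sums} yields it) a summation-by-parts argument gives that the weighted oscillatory sums $\sum \theta_j e^{it\theta_j}$ are also $O(1)$ for $|t|\asymp T$, the bound may in fact be $O(1)$; I would argue it via Abel summation with monotone weights $\theta_j$. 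The remaining factors in $\partial_t u^\star_{m,m'}$ are bounded separately. The derivative of $k^\star(|t|/T)$ contributes $O(T^{-1})$. The derivative of the Gaussian factor contributes $\sigma^2|t|\exp(-\sigma^2t^2/2)=O(T)\exp(-\sigma^2t^2/2)$, and this is the dominant piece.

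Putting these together, $|\partial_t u^\star_{m,m'}|\le C\varepsilon\nu T^{-s+1/2}\exp(-\sigma^2T^2/2)$ on the support. Integrating over a set of length $O(T)$ yields $C\varepsilon^2\nu^2T^{-2s+2}\exp(-\sigma^2T^2)$. This dominates the zeroth-order term, which gives the stated bound. If only the crude $O(T)$ bound on $\partial_tS$ is available, the same computation produces an extra polynomial factor in $T$. That factor is harmless downstream, since it is absorbed by $n^{-\eta}$ when $T=T_n$, but I would aim for the clean bound via the Abel-summation route.
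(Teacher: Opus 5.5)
Your proposal is correct and follows essentially the same route as the paper: the Cauchy lower bound $(f_Y)_0(y)\ge c(1+y^2)^{-1}$, Plancherel applied to $\int(1+y^2)u_{m,m'}^2$, the oscillatory-sum lemma on the band $T\le|t|\le 2T$, and integration over a support of length $O(T)$. One correction: the Abel-summation refinement is unnecessary, and your worry about an extra polynomial factor is unfounded. The crude bound $|\partial_t S|\le CT$, which is the one the paper uses, contributes $\varepsilon\nu T^{-s-1/2}\cdot T=\varepsilon\nu T^{-s+1/2}$ to $|\partial_t u^\star_{m,m'}|$, exactly the same order as the Gaussian-derivative term, so it already yields the stated $C\varepsilon^2\nu^2T^{-2s+2}\exp(-\sigma^2T^2)$.
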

\vspace{-2.5em}
\begin{proof}[Proof of \Cref{lemma:step2_weighted_l2_bound}]
First, as shown in Appendix~\ref{subsection:proof_rate_posterior_quantile}, we have $(f_Y)_0(y)\ge c(1+y^2)^{-1}.$ Therefore, by Parseval's identity,
\begin{align*}
\int_{\mathbb R}\frac{u_{m,m'}(y)^2}{(f_Y)_0(y)}\,\mathrm{d}y
&\le
C\int_{\mathbb R}(1+y^2)u_{m,m'}(y)^2\,dy \\[7pt]
&=
\frac{C}{2\pi}\int_{\mathbb R}\Big(|u_{m,m'}^\star(t)|^2+|\partial_t u_{m,m'}^\star(t)|^2\Big)\,dt.
\end{align*}

Fix $m<m'$. Then $g_{m}-g_{m'}=-2\varepsilon\sum_{j=m}^{m'-1}H_{j,T}$. Taking the Fourier transformation yields that
\begin{equation*}
\left(g_{m}-g_{m'}\right)^\star(t)
=
-2\varepsilon \nu T^{-s-1/2}(-i\,\mathrm{sgn}(t))k^\star(|t|/T)
\sum_{j=m}^{m'-1}\exp(it\theta_j).
\end{equation*}
By \Cref{lem:oscillatory-sums}, the oscillatory sum and its derivative are bounded uniformly on $t \in [T, 2T]$:
\begin{equation*}
\sup_{m,m'} \ \sup_{t\in[T,2T]} \
\left|
\sum_{j=m}^{m'-1}\exp(it\theta_j)
\right|
\le C, \quad \text{and} \quad
\sup_{m,m'} \ \sup_{t\in[T,2T]} \
\left|\partial_t\left(
\sum_{j=m}^{m'-1}\exp(it\theta_j)\right)
\right|
\le CT.
\end{equation*}
Recall that $u_{m,m'}^\star(t)=(g_{m}-g_{m'})^\star(t)\exp(-\sigma^2 t^2/2)$.

On the support of $k^\star(|t|/T)$, we have $|t| \asymp T$. Applying the oscillatory bounds gives:
\begin{align*}
\left|u_{m,m'}^\star(t)\right|
\le
C\varepsilon\nu T^{-s-1/2}|k^\star(|t|/T)|\exp(-\sigma^2 t^2/2), \\[12pt]
\left|\partial_t u_{m,m'}^\star(t)\right|
\le
C\varepsilon\nu T^{-s+1/2}|k^\star(|t|/T)|\exp(-\sigma^2 t^2/2).
\end{align*}
Because the support of $k^\star(|t|/T)$ has length $O(T)$, integrating these point-wise bounds yields:
\begin{align*}
\int_{\mathbb R}|u_{m,m'}^\star(t)|^2\,\mathrm{d}t &\le C\varepsilon^2\nu^2 T^{-2s}\exp(-\sigma^2T^2), \\[12pt]
\int_{\mathbb R}|\partial_t u_{m,m'}^\star(t)|^2\,\mathrm{d}t &\le C\varepsilon^2\nu^2 T^{-2s+2}\exp(-\sigma^2T^2).
\end{align*}
The derivative term clearly dominates. Substituting this upper bound back into our Plancherel inequality gives the desired claim.
\end{proof}

\begin{lemma}
For all sufficiently large $T$,
\begin{equation*}
\sup_m \ \sup_{x\in\mathbb R} \ \left| F_{G_{m}}(x \mid y)-F_{G_0}(x\mid y) \right|\leq CT^{-s-1/2}.
\end{equation*}
\label{lemma:step2_posterior_closedness}
\end{lemma}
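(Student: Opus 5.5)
The bound follows from writing each posterior CDF as a ratio whose numerator and denominator both move only slightly under the perturbation. For any mixing density $g$, set
\[
N_G(x):=\int \mathbf{1}\{\theta\le x\}\phi\!\left(\tfrac{y-\theta}{\sigma}\right)g(\theta)\,\mathrm{d}\theta ,
\]
so that $F_G(x\mid y)=N_G(x)/f_G(y)$, where $f_G(y)=N_G(\infty)$. Writing $H_m:=g_m-g_0=\bar H_T+\widetilde H_{m,T}$, I will use the identity
\[
F_{G_m}(x\mid y)-F_{G_0}(x\mid y)=\frac{N_{H_m}(x)-F_{G_0}(x\mid y)\,f_{H_m}(y)}{f_{G_m}(y)} ,
\]
with $N_{H}$ and $f_H$ defined by the same formulas with $g$ replaced by $H$. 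Since $0\le F_{G_0}\le 1$, it suffices to prove two things: that $\sup_m\sup_x|N_{H_m}(x)|\le CT^{-s-1/2}$, which also covers $f_{H_m}(y)=N_{H_m}(\infty)$, and that $\inf_m f_{G_m}(y)$ is bounded away from zero. The second fact is already established in the proof of \Cref{lemma_decomposition_D_mt}.

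The common perturbation is handled exactly as in Step 2 of the proof of \Cref{rate_posterior_quantile}:
\[
|N_{\bar H_T}(x)|\le \|\phi\|_\infty\|\bar H_T\|_1=\eta\,T^{-s+1/2}\,T^{-1}\big(\|\kappa_-\|_1+\|\kappa_+\|_1\big)\le C\eta\, T^{-s-1/2},
\]
uniformly in $x$.

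The cumulative perturbation $\widetilde H_{m,T}$ is the main obstacle. It is a sum of $M\asymp T$ bumps, each with $\|H_{j,T}\|_1\asymp T^{-s-1/2}$, so the crude $L^1$ bound only gives $O(T^{1/2-s})$. I would instead exploit that each bump has mean zero and rapidly decaying tails, through its antiderivative. Let $\Phi_h(u):=\int_{-\infty}^u h(v)\,\mathrm{d}v$. Because $\int h=0$ and $h\in\mathcal S(\mathbb R)$, $\Phi_h$ is Schwartz and satisfies $|\Phi_h(u)|\le C(1+|u|)^{-3}$. Write $w(\theta):=\phi((y-\theta)/\sigma)$ and integrate by parts:
\[
\int_{-\infty}^x w(\theta)H_{j,T}(\theta)\,\mathrm{d}\theta=\nu T^{-s-1/2}\Big[w(x)\Phi_h\big(T(x-\theta_j)\big)-\int_{-\infty}^x w'(\theta)\Phi_h\big(T(\theta-\theta_j)\big)\,\mathrm{d}\theta\Big].
\]
Summing over $j$ with coefficients $\pm\varepsilon$, the boundary terms contribute at most $C\varepsilon\nu T^{-s-1/2}\sum_j\big(1+T|x-\theta_j|\big)^{-3}$. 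Because the centres $\theta_j$ are spaced by order $T^{-1}$ (by the positivity of $q_{G_0}'$ on $\mathcal J$), this sum is bounded uniformly in $x$, by the same spacing argument used after \eqref{equ:minimax_A.8}. For the integral terms I bound $\|w'\|_\infty$ and $\int|\Phi_h(T(\theta-\theta_j))|\,\mathrm{d}\theta=O(T^{-1})$, which gives a total of $M\cdot O(T^{-1})\cdot T^{-s-1/2}=O(T^{-s-1/2})$. The same argument with $x=\infty$ bounds $f_{\widetilde H_{m,T}}(y)$; in fact that term is exponentially small by the Fourier bound already used in \Cref{lemma_decomposition_D_mt}.

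Combining the pieces, $\sup_{m,x}|N_{H_m}(x)|\le C(\eta+\varepsilon\nu)T^{-s-1/2}$. Together with $\inf_m f_{G_m}(y)\ge c_y>0$, the ratio identity then gives the claim of the lemma with a constant that is uniform in $m$ and $x$.
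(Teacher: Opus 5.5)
Your proof is correct, but it reaches the key estimate by a different route than the paper. Both arguments use the same exact ratio identity. The paper writes it as $F_{G_0+H}-F_{G_0}=\Lambda_H-\Lambda_H\,(H*f_\varepsilon)(y)/f_{G_0+H}(y)$, where the centered functional $\Lambda_H$ is your numerator $N_H(x)-F_{G_0}(x\mid y)f_H(y)$ divided by $f_{G_0}(y)$. Both arguments then reduce the claim to a uniform bound over $x$ for the sum of $M\asymp T$ bumps.

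The paper gets that bound from \Cref{lem:localized-perturbations}(i), working on the Fourier side:
\begin{itemize}
\item it expands $\overline{\psi_x^\star(t;y)}$ to isolate the jump term $e^{-itx}/(it)$;
\item it identifies the Schwartz kernel $\mathcal K\bigl(T(\theta_m-x)\bigr)$;
\item it needs a separate vanishing-moment and Taylor argument when $F_{G_0}(x\mid y)\notin\mathcal J^+$;
\item it sums the per-bump bounds $C_0T^{-s-1/2}\bigl(1+T|F_{G_0}(x\mid y)-\beta_j|\bigr)^{-2}$ with the spacing argument.
\end{itemize}
The common perturbation $\bar H_T$ is then handled by ``the same Fourier arguments''.

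You stay in $\theta$-space instead. You integrate by parts against the antiderivative $\Phi_h$, which is Schwartz precisely because $\int h=0$. This splits each bump's contribution into two pieces: a boundary term localized at $x$, and a term of size $O(T^{-s-3/2})$ that is not localized but sums to $O(T^{-s-1/2})$ over $M\asymp T$ bumps. You dispatch $\bar H_T$ with the trivial $L^1$ bound, which the paper itself uses in Step 2 of the proof of \Cref{rate_posterior_quantile}.

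Your argument is shorter and uniform over all $x\in\mathbb R$ with no case split. It also only needs the second-order contributions to be summable, not localized. The paper's route is heavier, but it is reused later: the per-bump localized decay from \Cref{lem:localized-perturbations} is what the sign-separation \Cref{lemma:sign_function_property} requires. Your bound, with its non-localized $O(T^{-s-3/2})$ piece, would not supply that as written.

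One trivial slip: your display for $|N_{\bar H_T}(x)|$ drops the factor $\|\phi\|_\infty$.
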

\vspace{-2.5em}
\begin{proof}[Proof of \Cref{lemma:step2_posterior_closedness}]
We define the effect of a perturbation $H$ on this posterior distribution:
\begin{equation*}
\Lambda_{H}(x;y):=f_{G_0}^{-1}(y)\left[ \int \phi\left(\frac{y-\theta}{\sigma}\right)\Big{(} \mathbf{1}\left\{ \theta\leq x \right\}-F_{G_0}(x\mid y) \Big{)}H(\theta)\mathrm{d}\theta \right].
\end{equation*}
By part (i) of Lemma \ref{lem:localized-perturbations},
\begin{equation*}
\left|\Lambda_{H_{j,T}}(x;y)\right|
\le
C_0T^{-s-1/2}
\bigl(1+T|F_{G_0}(x\mid y)-\beta_j|\bigr)^{-2}.
\end{equation*}

Using similar arguments as in the proof of \Cref{lemma_decomposition_D_mt}, we obtain that
\begin{align*}
&\sup_m \ \sup_{x\in\mathbb R} \
\left|\Lambda_{\tilde{H}_{m,T}}(x;y)\right|
\leq \sup_m \ \sup_{x\in\mathbb R} \ \varepsilon\sum_{j=1}^M \left|  \Lambda_{H_{j,T}}(x;y) \right| 
\leq C T^{-s-1/2}
\end{align*}
where we used \Cref{lem:localized-perturbations} (i), \eqref{equ:appendix_tail_bound_2}, and \eqref{equ:appendix_tail_bound_1}.

The same Fourier arguments used in \Cref{lem:localized-perturbations} (i), applied to the two compactly supported bumps in $\bar{H}_T$, yields that $\sup_{x\in\mathbb{R}} |\Lambda_{\bar{H}_T}(x;y)|\leq C\eta T^{-s-1/2}$, and then
\begin{equation*}
\sup_{m} \ \sup_{x\in\mathbb{R}} \ \left| \Lambda_{\bar{H}_T+\tilde{H}_{m,T}}(x;y) \right|\leq CT^{-s-1/2}.
\end{equation*}

Now we apply the numerator-denominator expansion
\begin{equation*}
F_{G_0+H}\left(x \mid y\right)-F_{G_0}(x\mid y)=\Lambda_{H}(x;y)-\Lambda_H(x;y)\cfrac{(H*f_{\varepsilon})(y)}{f_{G_0+H}(y)}.
\end{equation*}
Since the denominator remains bounded away from zero for the perturbations, the previous expansion gives the desired result.
\end{proof}

\subsubsection{Oscillatory-sum and localization lemmas}

\begin{lemma}[Oscillatory exponential sums]
\label{lem:oscillatory-sums}
For the grid points $\{\theta_m\}_{m=1}^M$ defined in  Appendix \ref{subsection:proof_minimax_coverage}, there exists a constant $C$, independent of $m',m,t$, and $T$, such that for all sufficiently large $T$,
\begin{equation*}
\sup_{m,m'} \ \sup_{t\in[T,2T]} \
\left|
\sum_{j=m}^{m'-1}\exp(it\theta_j)
\right|
\le C, \quad \text{and} \quad
\sup_{m,m'} \ \sup_{t\in[T,2T]} \
\left|\partial_t\left(
\sum_{j=m}^{m'-1}\exp(it\theta_j)\right)
\right|
\le CT.
\end{equation*}
Consequently,
\begin{equation*}
|S_{m,T}(t)|=\left| \sum_{j<m}\exp(it\theta_j)-\sum_{j\geq m}\exp(it\theta_j) \right|\le C
\end{equation*}
uniformly in $m$ and in $t\in[T,2T]$.
\end{lemma}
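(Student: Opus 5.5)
The plan is to treat the sum as an exponential sum whose consecutive phase increments are bounded away from $2\pi\mathbb Z$ and vary slowly, and to control it by Abel summation.

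\textbf{Smoothness and spacing of the grid.} Write $q(\beta):=q_{G_0}(\beta;y)$, so that $\theta_j=q(\beta_j)$ with $\beta_j=\beta_1+\Delta(j-1)/T$. Since $g_0$ is the Cauchy density, it is smooth and strictly positive. The posterior c.d.f.\ $F_{G_0}(\cdot\mid y)$ is therefore smooth with density bounded above and below on the compact set $q(\mathcal J)$. By the inverse function theorem, $q$ is $C^2$ on a neighbourhood of $\mathcal J$, and there are constants with $0<c_-\le q'(\beta)\le c_+$ and $|q''(\beta)|\le C$ there.

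Taylor expansion then gives two facts:
\begin{itemize}
\item the spacing is $\theta_{j+1}-\theta_j=\frac{\Delta}{T}q'(\beta_j)+O(T^{-2})$;
\item for $t\in[T,2T]$, the phase increments $d_j(t):=t(\theta_{j+1}-\theta_j)$ lie in $[\Delta c_-/2,\ 3\Delta c_+]$ for large $T$, and satisfy $|d_{j+1}(t)-d_j(t)|\le C/T$.
\end{itemize}
I would fix $\Delta$ small enough that $3\Delta c_+\le\pi$; this is compatible with the requirement in Appendix \ref{subsection:proof_minimax_coverage} that $\Delta$ be chosen so the lemmas apply. Then $|e^{id_j}-1|\ge c(\Delta)>0$ uniformly in $j$, $t$ and $T$.

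\textbf{Abel summation.} Set $\varphi_j=t\theta_j$ and $w_j=(e^{id_j}-1)^{-1}$, so that $e^{i\varphi_j}=w_j\,(e^{i\varphi_{j+1}}-e^{i\varphi_j})$. Summing by parts,
\begin{equation*}
\sum_{j=m}^{m'-1}e^{i\varphi_j}=w_{m'-1}e^{i\varphi_{m'}}-w_me^{i\varphi_m}-\sum_{j=m}^{m'-2}(w_{j+1}-w_j)\,e^{i\varphi_{j+1}}.
\end{equation*}
The weights are bounded by $c(\Delta)^{-1}$. Since $x\mapsto(e^{ix}-1)^{-1}$ is Lipschitz on $[\Delta c_-/2,\pi]$, we also have $|w_{j+1}-w_j|\le C|d_{j+1}-d_j|\le C/T$. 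As there are at most $M\asymp T$ terms, the whole sum is $O(1)$ uniformly in $m,m'$ and $t\in[T,2T]$.

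\textbf{The $t$-derivative.} Differentiating gives $\partial_t\sum_j e^{it\theta_j}=i\sum_j\theta_je^{i\varphi_j}$. The same Abel summation applies with weights $\theta_jw_j$. These are bounded because $\theta_j\in q(\mathcal J)$ is compact, and their increments are $O(1/T)$ because $\theta_{j+1}-\theta_j=O(1/T)$. This gives an $O(1)$ bound, which is stronger than the claimed $CT$; alternatively the crude bound $M\sup_j|\theta_j|\le CT$ already suffices.

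\textbf{The sum $S_{m,T}$.} Finally, write
\begin{equation*}
S_{m,T}(t)=\sum_{j=1}^{M}e^{it\theta_j}-2\sum_{j=m}^{M}e^{it\theta_j}.
\end{equation*}
Both sums on the right are of the form already bounded (take $m'=M+1$), so $|S_{m,T}(t)|\le C$ uniformly in $m$ and $t\in[T,2T]$.

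\textbf{Main obstacle.} The delicate point is keeping the increments $d_j(t)$ uniformly away from $2\pi\mathbb Z$ over the whole range $t\in[T,2T]$. This forces the upper constraint $3\Delta c_+\le\pi$ on $\Delta$. Since Lemma \ref{lemma:sign_function_property} only needs $\Delta$ small enough for its tail sums, I expect this constraint can be imposed simultaneously, but that compatibility must be checked. A second option would be the Kuzmin--Landau lemma, but it needs monotone $d_j$, which is not guaranteed. The Abel summation above avoids that by using only slow variation of the increments.
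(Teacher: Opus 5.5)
Your proof is correct and is essentially the paper's argument: $C^2$ control of $q=q_{G_0}(\cdot;y)$ near $\mathcal J$, choosing $\Delta$ small so the phase increments $t(\theta_{j+1}-\theta_j)$ stay in a compact subset of $(0,2\pi)$ and vary by $O(T^{-1})$, Abel summation with weights $(e^{id_j}-1)^{-1}$, the crude $CM\le CT$ derivative bound, and the triangle inequality for $S_{m,T}$. Your ``take $m'=M+1$'' step quietly uses $\theta_{M+1}$, which is harmless since $q$ is smooth near $\mathcal J$; the paper instead splits off the last term.

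One correction to your closing aside. The tail-sum condition \eqref{equ:appendix_tail_bound_2} used for Lemma~\ref{lemma:sign_function_property} blows up as $\Delta\to0$, so it requires $\Delta$ large, not small, and it pulls against $2\Delta c_+<\pi$. That tension bears on the downstream lower-bound construction, not on this lemma, whose proof (yours and the paper's) simply takes $\Delta$ small.
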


\begin{proof}[Proof of \Cref{lem:oscillatory-sums}]
We write $q(\beta):=q_{G_0}(\beta;y)$ so that $\theta_j=q(\beta_j)$. Because the baseline density $g_0$ is Cauchy, the posterior distribution $F_{G_0}(\cdot\mid y)$ has a
continuous density that is strictly positive on the compact set $q_{G_0}(\mathcal{J};y)$. Consequently, the inverse
map $\beta\mapsto q(\beta)$ is twice continuously differentiable on $\mathcal{J}$. Thus, there exist constants $0<c_-\le c_+<\infty$ and $C_2<\infty$ such that for all $\beta\in \mathcal{J}$
\begin{equation*}
c_-\le q'(\beta)\le c_+, \quad |q''(\beta)|\le C_2.
\end{equation*}

Let $\Delta_j(t):=t(\theta_{j+1}-\theta_j)$ for $j=1,\dots,M-1.$ The grid points are evenly spaced such that $\beta_{j+1}-\beta_j=\Delta T^{-1}$. By the mean-value theorem, $\theta_{j+1}-\theta_j
=
q(\beta_{j+1})-q(\beta_j)
=
q'(\tilde{\beta}_j)\Delta T^{-1}$ for some $\tilde\beta_j\in(\beta_j,\beta_{j+1})\subset \mathcal{J}$. Therefore, for $t\in[T,2T]$, the phase difference is bounded by $\Delta c_-\le \Delta_j(t)\le 2\Delta c_+$. Then choose $\Delta>0$ small enough that $2\Delta c_{+}<\pi$, and set $\delta:=\min\{\Delta c_{-},\pi-2\Delta c_{+}\}>0$. Then we obtain that $0<\delta\le \Delta_j(t)\le \pi-\delta$ for all $j\le M-1, t\in[T,2T]$.

Also, by the mean-value theorem for second differences, $\theta_{j+1}-2\theta_j+\theta_{j-1}
=
q(\beta_{j+1})-2q(\beta_j)+q(\beta_{j-1})
=
q''(\hat\beta_j)\Delta^2T^{-2}$ for some $\hat\beta_j\in(\beta_{j-1},\beta_{j+1})\subset \mathcal{J}$. Hence $|\Delta_j(t)-\Delta_{j-1}(t)|
\le CtT^{-2}
\le CT^{-1}$ for $j=2,\dots,M-1,\ t\in[T,2T].$

We now prove the partial-sum bound. Fix $1\le m\le m' \le M+1$. If $m=m'$, the sum is zero, so assume $m<m'$. If $m'=M+1$, we can split the sum as $\sum_{j=m}^{m'-1} \exp(it\theta_j)
=
\sum_{j=m}^{M-1} \exp\left(it\theta_j\right)+\exp\left(it\theta_M\right)$. Because the single tail term $\exp(it\theta_M)$ is trivially bounded by 1, it suffices to treat the case $m<m'\le M$. For such $m'$, define the multiplier $b_j(t):=1/\left[1-\exp(i\Delta_j(t))\right]$ for $j=m,\dots,m'-1.$ Because $\Delta_j(t)\in[\delta,\pi-\delta]$, the denominator is bounded away from zero, implying $|b_j(t)|\le C$. Using $\exp(it\theta_j)=b_j(t)\left[\exp(it\theta_j)-\exp(it\theta_{j+1})\right]$, summation by parts yields
\[
\sum_{j=m}^{m'-1} \exp(it\theta_j)
=
b_{m}(t)\exp\left(it\theta_{m}\right)
-
b_{m'-1}(t)\exp\left(it\theta_{m'}\right)
+
\sum_{j=m+1}^{m'-1} \bigl(b_j(t)-b_{j-1}(t)\bigr)\exp\left(it\theta_j\right).
\]
Since $u\mapsto (1-e^{iu})^{-1}$ is continuously differentiable on $[\delta,\pi-\delta]$, hence Lipschitz continuous on that interval. Therefore
\begin{equation*}
|b_j(t)-b_{j-1}(t)|
\le C|\Delta_j(t)-\Delta_{j-1}(t)|
\le CT^{-1}.
\end{equation*}
Since $M\lesssim T$, the sum of the absolute differences is bounded: $\sum_{j=m+1}^{m'-1} |b_j(t)-b_{j-1}(t)|\le C.$ Applying the triangle inequality to the summation-by-parts expansion gives that $\left|\sum_{j=m}^{m'-1} \exp\left(it\theta_j\right)\right|\le C$ uniformly in $m,m',t$, and $T$.

For the derivative bound, note that $|\theta_j|=|q(\beta_j)|\leq \sup_{\beta\in\mathcal{J}} |q(\beta)|\leq C$ uniformly in $j$. Hence
\begin{equation*}
\left|\partial_t\sum_{j=m}^{m'-1} \exp\left(it\theta_j\right)\right|=\left| i\sum_{j=m}^{m'-1} \theta_j\exp(it\theta_j) \right|\leq C\sum_{j=m}^{m'-1}1\leq CM\leq CT.
\end{equation*}
The final claim for $S_{m,T}$ follows from the triangle inequality.
\end{proof}

\begin{lemma}[Localized perturbations]
\label{lem:localized-perturbations}
Assume the notation of Appendix \ref{subsection:proof_minimax_coverage}. Also recall that
\begin{equation*}
\Lambda_{H}(x;y):=f_{G_0}^{-1}(y)\left[ \int \phi\left(\frac{y-\theta}{\sigma}\right)\Big{(} \mathbf{1}\left\{ \theta\leq x \right\}-F_{G_0}(x\mid y) \Big{)}H(\theta)\mathrm{d}\theta \right].
\end{equation*}
Then there exist constants $c_0,C_0>0$, independent of $m$ and $T$, s.t. for all sufficiently large $T$:
\begin{itemize}
\item[(i)] for every $x\in\mathbb R$, $|\Lambda_{H_{m,T}}(x;y)|
\le
C_0T^{-s-1/2}
(1+T|F_{G_0}(x\mid y)-\beta_m|)^{-2};$
\item[(ii)] for every $\beta\in\mathcal{J}$, $|\Gamma_{H_{m,T}}(\beta;y)| \le C_0T^{-s-1/2} (1+T|\beta-\beta_m|)^{-2}$;
\item[(iii)] Let $B_m:=\left[\beta_m-\frac{\Delta}{2T},\,\beta_m+\frac{\Delta}{2T}\right]\cap \mathcal{J}$, then $\inf_{\beta\in B_m}\Gamma_{H_{m,T}}(\beta;y)\ge c_0T^{-s-1/2}$.
\end{itemize}
\end{lemma}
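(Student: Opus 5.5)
For part (i), I will write $\Lambda_{H_{m,T}}$ in the frequency domain via Parseval, as in the proof of \Cref{rate_posterior_quantile}. Fix $x$ and consider the weighting function $w_x(\theta)=\phi(\frac{y-\theta}{\sigma})(\mathbf{1}\{\theta\le x\}-F_{G_0}(x\mid y))$. By the integration-by-parts expansion in \Cref{propo:fourier_decay}, with $q_0$ replaced by $x$ and $\tau$ by $F_{G_0}(x\mid y)$,
\begin{equation*}
\overline{w_x^{\star}(t)}=\frac{e^{-itx}}{it}\phi\Big(\frac{y-x}{\sigma}\Big)+r_x(t),\qquad |r_x(t)|\le C t^{-2},
\end{equation*}
uniformly in $x$ for $|t|\ge T$. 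The Gaussian term $e^{-t^2\sigma^2/2}$ can be neglected on $[T,2T]$. Pairing this with $H_{m,T}^\star(t)=\nu T^{-s-1/2}(-i\,\mathrm{sgn}(t))e^{it\theta_m}k^\star(|t|/T)$, the leading contribution is
\begin{equation*}
C\nu T^{-s-1/2}\phi\Big(\frac{y-x}{\sigma}\Big)\int_1^2 \frac{k^\star(u)}{u}\cos\big(uT(\theta_m-x)\big)\,du .
\end{equation*}
The remainder term is $O(T^{-s-3/2})$. Since $u\mapsto k^\star(u)/u$ is smooth with compact support in $(1,2)$, two integrations by parts in $u$ bound the oscillatory integral by $C(1+T|\theta_m-x|)^{-2}$.

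The main obstacle is converting this bound in $|\theta_m-x|$ into the stated bound in $|F_{G_0}(x\mid y)-\beta_m|$, and controlling the $O(T^{-s-3/2})$ remainder. For $x$ in a fixed neighbourhood of the compact set $q_{G_0}(\mathcal J_T;y)$, the posterior density under the Cauchy baseline is bounded above and below. Hence $|F_{G_0}(x\mid y)-\beta_m|\asymp|x-\theta_m|$, and the claim follows there. The remainder $O(T^{-s-3/2})$ is not decaying in $T|x-\theta_m|$, so I will sharpen it. The plan is to work directly in $\theta$-space: $h$ is Schwartz, and $\int h=0$ together with the vanishing of $h^\star$ near $0$ gives vanishing moments of $h$. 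Writing $\Lambda_{H_{m,T}}=f_{G_0}^{-1}(y)\,\nu T^{-s+1/2}\int w_x(\theta)h(T(\theta-\theta_m))\,d\theta$, I expect the following. When $|x-\theta_m|\gg T^{-1}$, Taylor-expanding the smooth factor $\phi$ around $\theta_m$ and using the vanishing moments of $h$, together with the tail decay of $h$ beyond distance $|x-\theta_m|$ for the jump at $x$, gives $T^{-s-1/2}(T|x-\theta_m|)^{-2}$. When $|x-\theta_m|\lesssim T^{-1}$, the trivial bound $\|w_x\|_\infty\|H_{m,T}\|_1\lesssim T^{-s-1/2}$ suffices. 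For $x$ far from the grid, $|F_{G_0}(x\mid y)-\beta_m|$ is bounded below, so the same tail argument gives the required $T^{-s-5/2}$ decay.

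Part (ii) then follows from (i). For $\beta\in\mathcal J$, $\Gamma_{H}(\beta;y)=\Lambda_H(\tilde u;y)-\Lambda_H(\tilde\ell;y)$ with $\tilde\ell=q_{G_0}(\beta;y)$ and $\tilde u=q_{G_0}(\beta+1-\alpha;y)$; the constants $F_{G_0}$ and $1-\alpha$ combine correctly. The upper endpoint has $F_{G_0}=\beta+1-\alpha$, whose distance from $\beta_m$ is bounded below by $\mathrm{dist}(\mathcal J,\mathcal J+1-\alpha)>0$, so it contributes $O(T^{-s-5/2})$. The lower endpoint contributes exactly the bound of (i) with $F_{G_0}=\beta$.

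For part (iii), I take $\beta\in B_m$, so that $x=q_{G_0}(\beta;y)$ satisfies $T|x-\theta_m|\le c_+\Delta/2$. The lower endpoint gives $-\Lambda_{H_{m,T}}(x;y)$. Its leading term is proportional to $\phi(\frac{y-x}{\sigma})\int_1^2 u^{-1}k^\star(u)\cos(uT(\theta_m-x))\,du$, with a sign fixed by the $-i\,\mathrm{sgn}$ construction; I choose the sign convention so that this term is positive. If $\Delta$ is small enough that $2\cdot c_+\Delta/2<\pi/3$, then $\cos(\cdot)\ge 1/2$ on the support of $k^\star$, and the integral is at least $\frac12\int_1^2 u^{-1}k^\star(u)\,du>0$. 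The $O(T^{-s-3/2})$ remainder and the upper-endpoint term are of smaller order, which gives $c_0T^{-s-1/2}$ for large $T$. This smallness requirement on $\Delta$ is compatible with the one imposed in \Cref{lem:oscillatory-sums}.
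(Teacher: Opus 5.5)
Your arguments for (i) and (ii) are fine, but (iii) has a genuine gap. The sign of the leading term is not a convention you may choose. $H_{m,T}$ is fixed by $\xi(u)=-i\,\mathrm{sgn}(u)k^\star(|u|)$, and $\Lambda_H$ by the orientation $\mathbf 1\{\theta\le x\}-F_{G_0}(x\mid y)$. When you compute the sign, it comes out opposite to what (iii) asserts.

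Integrating $\int_{-\infty}^{x}e^{-it\theta}\phi(\frac{y-\theta}{\sigma})\,d\theta$ by parts gives the boundary term $-\phi(\frac{y-x}{\sigma})e^{-itx}/(it)$. The expansion you imported from \Cref{propo:fourier_decay} has the opposite sign, because that proposition treats $(\tau-\mathbf 1\{\theta\le q_0\})\phi$, which equals $-w_x$.

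A sign check in $\theta$-space confirms this. One has $\int_{-\infty}^{v}h(u)\,du=\mathcal K(v):=\frac{1}{\pi}\int_1^2\frac{k^\star(u)}{u}\cos(uv)\,du$, since its Fourier transform is $\xi(t)/(-it)=k^\star(|t|)/|t|$. So $H_{m,T}$ places net mass $+\nu T^{-s-1/2}\mathcal K(0)>0$ to the left of $\theta_m$. Consequently
$\Lambda_{H_{m,T}}(x;y)=+\nu T^{-s-1/2}\frac{\phi((y-x)/\sigma)}{f_{G_0}(y)}\mathcal K(T(\theta_m-x))+O(T^{-s-3/2})$,
and on $B_m$ you obtain $\Gamma_{H_{m,T}}(\beta;y)\le -c_0T^{-s-1/2}$, not $\ge$. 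The paper's proof reaches the stated sign only through the same slip in \eqref{equ:integration_by_part}.

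So (iii) as written holds only after replacing $\xi$ by $-\xi$ (equivalently $\nu$ by $-\nu$), or with the inequality reversed. Either repair is harmless downstream, because \Cref{lemma:sign_function_property} and Step 3 only need $|D_m|\ge c_1T^{-s-1/2}$ off the transition window. Your write-up should carry out this computation and state which version it proves, rather than ``choosing'' the sign.

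On (i), by contrast, your route genuinely improves on the paper's. The paper bounds the Fourier remainder by $O(T^{-s-3/2})$ uniformly. It then asserts the $(1+T|x-\theta_m|)^{-2}$ decay when $F_{G_0}(x\mid y)\in\mathcal J^+$, which does not follow once $T|x-\theta_m|\gg T^{1/2}$. The decay matters: \Cref{lemma:sign_function_property} sums $M\asymp T$ such bounds, and the crude remainder alone would contribute order $T^{-s-1/2}|\mathcal J|/\Delta$, which is not small.

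Your $\theta$-space argument, which the paper uses only for $F_{G_0}(x\mid y)\notin\mathcal J^+$, closes this. It also makes both the case split and the Fourier step unnecessary:
\begin{itemize}
\item For $x\ge\theta_m$, write $w_x=(1-F_{G_0}(x\mid y))\phi(\frac{y-\cdot}{\sigma})-\phi(\frac{y-\cdot}{\sigma})\mathbf 1\{\cdot>x\}$, and symmetrically for $x<\theta_m$.
\item The smooth part pairs with $H_{m,T}$ to give $O(e^{-cT^2})$, because $H_{m,T}^\star$ vanishes for $|t|<T$.
\item The jump part is at most $CT^{-s-1/2}\int_{|u|>T|x-\theta_m|}|h(u)|\,du\le CT^{-s-1/2}(1+T|x-\theta_m|)^{-2}$.
\item The global bound $|F_{G_0}(x\mid y)-\beta_m|\le L|x-\theta_m|$ (the baseline posterior density is bounded) turns this into (i) for every $x$.
\end{itemize}
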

\begin{proof}[Proof of \Cref{lem:localized-perturbations}]
We choose a compact interval $\mathcal{J}^{+}\subset(0,1)$ such that $\mathcal{J}\cup(\mathcal{J}+1-\alpha)\subset\mathrm{int}(\mathcal{J}^{+})$, and the posterior quantile $q_{G_0}(\beta;y)$ is continuously differentiable on this $\mathcal{J}^{+}$. Therefore there exist constants
$0<c_-\le c_+<\infty$ such that, for all $\beta\in \mathcal{J}^{+}$ and all $m\le M$,
\begin{equation}
c_-|\beta-\beta_m|
\le
\bigl|q_{G_0}(\beta;y)-\theta_m\bigr|
\le
c_+|\beta-\beta_m|.
\label{equ:derivative_bound}
\end{equation}

\noindent\underline{Proof of part (i).}

We write that $\Lambda_{H_{m,T}}(x;y):=f_{G_0}^{-1}(y) \int \psi_x(\theta;y)H_{m,T}(\theta)\mathrm{d}\theta$ where
\[
\psi_x(\theta;y)
:=
\phi\!\left(\frac{y-\theta}{\sigma}\right)
\Bigl\{\mathbf{1}\{\theta\le x\}-F_{G_0}(x\mid y)\Bigr\}.
\]
As in \Cref{propo:fourier_decay}, integration by parts gives
\begin{equation}
\overline{\psi_x^\star(t;y)}
=
\phi\!\left(\frac{y-x}{\sigma}\right)\frac{\exp(-itx)}{it}
+
r_x(t;y),
\qquad
|r_x(t;y)|\le Ct^{-2},
\label{equ:integration_by_part}
\end{equation}
uniformly in $x$ over compact sets. Then by Parseval's identity and the specific form of the perturbation $H_{m,T}^{\star}(t)
=
\nu T^{-s-1/2}(-i\,\mathrm{sgn}(t))\exp(it\theta_m)k^\star(|t|/T)$, we decompose that $\Lambda_{H_{m,T}}(x;y)=\Lambda_{m,T,1}+\Lambda_{m,T,2}$ where
\begin{equation}
\Lambda_{m,T,1}=\frac{\nu T^{-s-1/2}}{2\pi f_{G_0}(y)}\phi\left(\frac{y-x}{\sigma}\right)\int \frac{\exp(-it x)}{it}\left(-i\mathrm{sgn}(t)\right)\exp(it\theta_m)k^{\star}(|t|/T)\mathrm{d}t
\label{equ:term_lambda_mt1}
\end{equation}
and
\begin{equation}
\Lambda_{m,T,2}=\frac{1}{2\pi f_{G_0}(y)}\int r_x(t;y)H_{m,T}^{\star}(t)\mathrm{d}t.
\label{equ:term_lambda_mt2}
\end{equation}

For the first term (\ref{equ:term_lambda_mt1}), since $(-i\mathrm{sgn}(t))/(it)=-1/|t|$, we have that
\begin{align*}
\Lambda_{m,T,1}&=-\frac{\nu T^{-s-1/2}}{2\pi f_{G_0}(y)}\phi\left(\frac{y-x}{\sigma}\right)\int \exp(it(\theta_m-x))\frac{k^{\star}(|t|/T)}{|t|}\mathrm{d}t \\[10pt]
&\overset{(a)}{=}-\frac{\nu T^{-s-1/2}}{2\pi f_{G_0}(y)}\phi\left(\frac{y-x}{\sigma}\right)\int_{|u|\in[1,2]} \exp(iTu(\theta_m-x))\frac{k^{\star}(|u|)}{|u|}\mathrm{d}u \\[10pt]
&\overset{(b)}{=}-\frac{\nu T^{-s-1/2}}{2\pi f_{G_0}(y)}\phi\left(\frac{y-x}{\sigma}\right)\int_1^2 \frac{k^{\star}(u)}{u}\Big{[}\exp(iuT(\theta_m-x))+\exp(-iuT(\theta_m-x)) \Big{]}\mathrm{d}u \\[10pt]
&=-\frac{\nu T^{-s-1/2}}{\pi f_{G_0}(y)}\phi\left(\frac{y-x}{\sigma}\right)\int_1^2 \frac{k^{\star}(u)}{u}\cos\left( uT(\theta_m-x) \right)\mathrm{d}u \\[10pt]
&=-\nu T^{-s-1/2}
\frac{\phi((y-x)/\sigma)}{f_{G_0}(y)}
\mathcal{K}\!\bigl(T(\theta_m-x)\bigr)
\end{align*}
where (a) is obtained by the change of variable $t=Tu$, (b) is obtain by splitting positive and negative $u$, and the kernel $\mathcal{K}$ in the last step is defined as
\[
\mathcal{K}(v)
:=
\frac{1}{\pi}\int_1^2 \frac{k^\star(u)}{u}\cos(uv)\mathrm{d}u.
\]

For the second term (\ref{equ:term_lambda_mt2}), we use that $|r_x(t;y)|\leq C|t|^{-2}$ and $|H_{m,T}^{\star}(t)|\leq CT^{-s-1/2}|k^{\star}(|t|/T)|$. And also $k^{\star}(|t|/T)$ is supported where $|t|\asymp T$, we have that $|\Lambda_{m,T,2}|\leq CT^{-s-1/2}\int_{|t|\asymp T} |t|^{-2}\mathrm{d}t=CT^{-s-1/2}T^{-1}=CT^{-s-3/2}$. Combining with the first term, we obtain that
\[
\Lambda_{H_{m,T}}(x;y)
=
-\nu T^{-s-1/2}
\frac{\phi((y-x)/\sigma)}{f_{G_0}(y)}
\mathcal{K}\!\bigl(T(\theta_m-x)\bigr)
+
O(T^{-s-3/2}).
\]
Since $u\mapsto k^\star(u)/u$ belongs to $C_c^\infty((1,2))$, we have $\mathcal{K}\in\mathcal S(\mathbb R)$, and hence $| \mathcal{K}(v)|\le C(1+|v|)^{-2}$ and
\begin{equation*}
\left|\Lambda_{H_{m,T}}(x;y)\right|
\lesssim
C T^{-s-1/2}
\left(1+T|x-\theta_m|\right)^{-2}
\end{equation*}
uniformly over $x\in q_{G_0}(\mathcal{J}^{+};y)$.

To bound $\Lambda_{H_{m,T}}(x;y)$ uniformly over $x\in\mathbb{R}$, we consider two cases. If $F_{G_0}(x\mid y)\in \mathcal{J}^+$, then $x=q_{G_0}(\beta;y)$ for some $\beta\in \mathcal{J}^+$. By \Cref{equ:derivative_bound}, we have that
\[
\bigl|\Lambda_{H_{m,T}}(x;y)\bigr|
\lesssim CT^{-s-1/2}\Bigl(1+T\bigl|F_{G_0}(x\mid y)-\beta_m\bigr|\Bigr)^{-2}.
\]

If $F_{G_0}(x\mid y)\notin\mathcal J^{+}$, the asymptotic expansion (\ref{equ:integration_by_part}) does not apply, so we need a different argument. First observe that there exists $d_0$ such that $|F_{G_0}(x\mid y)-\beta_m|\geq d_0$ for all $m$. Because $q_{G_0}(\cdot;y)$ is continuous and strictly
increasing, we have that there exists $d_1>0$ such that $|x-\theta_m|=|q_{G_0}(F_{G_0}(x\mid y);y)-q_{G_0}(\beta_m;y)|\ge d_1>0$ uniformly in $m$.

Now define $U_m=\{ |\theta-\theta_m|\leq d_1/2 \}$, the function $\theta\mapsto \psi_x(\theta;y)$ is $C^2$ on a fixed neighborhood of $\theta_m$, with derivatives bounded uniformly in $x$ and $m$. We let $P_{x,m}(\theta):=\psi_x(\theta_m;y)+\psi_x'(\theta_m;y)(\theta-\theta_m)$ be the first-order Taylor polynomial of $\psi_{x}(\cdot;y)$ at $\theta_m$.

Also, given that $H_{m,T}^{\star}(t)
=
\nu T^{-s-1/2}(-i\,\mathrm{sgn}(t))\exp(it\theta_m)k^\star(|t|/T)$, we have
\begin{align*}
\int H_{m,T}(\theta)\mathrm{d}\theta&=H_{m,T}^{\star}(0)=0, \\[7pt]
\int(\theta-\theta_m)H_{m,T}(\theta)\mathrm{d}\theta&=-i\partial_t\left( \exp(-it\theta_m)H_{m,T}^{\star}(t) \right)\Big{|}_{t=0}=0.
\end{align*}
Hence we obtain that
\begin{align*}
\Lambda_{H_{m,T}}(x;y)&=f_{G_0}^{-1}(y) \int \left(\psi_x(\theta;y)-P_{x,m}(\theta)\right)H_{m,T}(\theta)\mathrm{d}\theta \\[7pt]
&=f_{G_0}^{-1}(y) \left( \int_{U_m}+\int_{U_m^c}  \right)\Big{(}\psi_x(\theta;y)-P_{x,m}(\theta)\Big{)}H_{m,T}(\theta)\mathrm{d}\theta
\end{align*}
On $U_m$, Taylor's theorem gives that $|\psi_x(\theta;y)-P_{x,m}(\theta)|\leq C|\theta-\theta_m|^2$ uniformly in $x,m$. Therefore
\begin{align*}
\left| \int_{U_m}\Big{(}\psi_x(\theta;y)-P_{x,m}(\theta)\Big{)}H_{m,T}(\theta)\mathrm{d}\theta \right|&\leq C\int_{U_m} |\theta-\theta_m|^2|H_{m,T}(\theta)|\mathrm{d}\theta \\[7pt]
&\overset{(a)}{\leq}CT^{-s-5/2}\int |u|^2|h(u)|\mathrm{d}u\overset{(b)}{\leq} CT^{-s-5/2}
\end{align*}
where (a) uses the changes of variable $u=T(\theta-\theta_m)$ and (b) is due to $h\in\mathcal{S}(\mathbb{R})$.

On $U_m^c$, since $|\psi_x(\theta;y)-P_{x,m}(\theta)|\leq C(1+|\theta-\theta_m|)$, we have that
\begin{align*}
\left| \int_{U_m^c}\Big{(}\psi_x(\theta;y)-P_{x,m}(\theta)\Big{)}H_{m,T}(\theta)\mathrm{d}\theta \right| &\leq C\int_{U_m^c} (1+|\theta-\theta_m|)|H_{m,T}(\theta)|\mathrm{d}\theta \\[7pt]
&\overset{(a)}{=}\nu T^{-s-1/2}\int_{|u|>Td_1/2}\left(1+\frac{|u|}{T}\right)|h(u)|\mathrm{d}u \\[7pt]
&\overset{(b)}{\leq} \nu T^{-s-1/2}CT^{-3}=CT^{-s-7/2}
\end{align*}
where (a) uses again $u=T(\theta-\theta_m)$ and (b) uses $|h(u)|\leq C(1+|u|)^{-5}$ as $h\in\mathcal{S}(\mathbb{R})$.

 Combining these two bounds gives that $|\Lambda_{H_{m,T}}(x;y)|\leq CT^{-s-5/2}$ whenever $F_{G_0}(x\mid y)\notin\mathcal J^{+}$. Since $|F_{G_0}(x\mid y)-\beta_m|\geq d_0$, it follows that $T^{-s-5/2}\leq CT^{-s-1/2}(1+T|F_{G_0}(x\mid y)-\beta_m|)^{-2}$ and thus again
\[
\bigl|\Lambda_{H_{m,T}}(x;y)\bigr|
\le CT^{-s-5/2}
\le CT^{-s-1/2}\Bigl(1+T\bigl|F_{G_0}(x\mid y)-\beta_m\bigr|\Bigr)^{-2}.
\]
which completes the proof of part (i).

\noindent\underline{Proof of part (ii).}

By the definition of $I(\beta;y)$, we can write
\begin{equation}
\Gamma_H(\beta;y)
=
\Lambda_H\big{(}q_{G_0}(\beta+1-\alpha;y);y\big{)}
-
\Lambda_H\big{(}q_{G_0}(\beta;y);y\big{)}.
\label{equation:gamma_h}
\end{equation}
Applying part (i) to the second term of \eqref{equation:gamma_h} gives
\[
\Big{|}\Lambda_{H_{m,T}}(q_{G_0}(\beta;y);y)\Big{|}
\le
CT^{-s-1/2}(1+T|\beta-\beta_m|)^{-2}.
\]
For the first term of \eqref{equation:gamma_h}, we also apply part (i) and obtain that
\begin{align*}
\Big{|}\Lambda_{H_{m,T}}(q_{G_0}(\beta+1-\alpha;y);y)\Big{|}\leq CT^{-s-1/2}\left( 1+T|(\beta+1-\alpha)-\beta_m| \right)^{-2}.
\end{align*}
Now with the assumption $\operatorname{dist}(\mathcal{J},\mathcal{J}+1-\alpha)>0$ and $\beta_m\in\mathcal{J}$, we have that there exists $d_0>0$ such that $|(\beta+1-\alpha)-\beta_m|\geq d_0$ for all $\beta\in\mathcal{J}$ and all $m\le M$. Therefore
\begin{align*}
\Big{|}\Lambda_{H_{m,T}}(q_{G_0}(\beta+1-\alpha;y);y)\Big{|}&\leq CT^{-s-1/2}(1+Td_0)^{-2}\\[7pt]
&\leq CT^{-s-5/2}
\le
CT^{-s-1/2}(1+T|\beta-\beta_m|)^{-2}
\end{align*}
where the final step follows because $\beta,\beta_m\in\mathcal{J}$ implies the uniform bound $|\beta-\beta_m|\leq C'$, which yields that $T^{-2}\leq C'(1+T|\beta-\beta_m|)^{-2}$.

Combining these bounds via the triangle inequality directly yields
\[
\Big{|}\Gamma_{H_{m,T}}(\beta;y)\Big{|}
\le
C_0T^{-s-1/2}(1+T|\beta-\beta_m|)^{-2},
\]
which proves part (ii).

\noindent\underline{Proof of part (iii).}

First note that
\begin{equation*}
\mathcal{K}(0)=\frac{1}{\pi}\int_1^2 \frac{k^\star(u)}{u}\,du>0.
\end{equation*}
By continuity of $ \mathcal{K}$, there exists $\delta_0>0$ such that $ \mathcal{K}(v)\ge \frac12 \mathcal{K}(0)>0$ whenever $|v|\le \delta_0$.

Choose $\Delta>0$ so small that $c_+\Delta/2\le \delta_0$. If $\beta\in B_m$, then \eqref{equ:derivative_bound} gives
\[
\bigl|q_{G_0}(\beta;y)-\theta_m\bigr|
\le
c_+|\beta-\beta_m|
\le
\frac{c_+\Delta}{2T}
\le
\frac{\delta_0}{T},
\]
this ensures that the argument of $\mathcal{K}$ satisfies $T|\theta_m-q_{G_0}(\beta;y)|\leq \delta_0$, and thus $ \mathcal{K}\!(T(\theta_m-q_{G_0}(\beta;y)))\ge \frac12 \mathcal{K}(0)$. Because $\beta\mapsto f_{G_0}^{-1}(y)\phi((y-q_{G_0}(\beta;y))/\sigma)$ is continuous and strictly positive on $\mathcal{J}$, we obtain the uniform lower bound $-\Lambda_{H_{m,T}}(q_{G_0}(\beta;y);y)\ge c\,T^{-s-1/2}$ for all $\beta\in B_m$. On the other hand, as shown in part (ii), $q_{G_0}(\beta+1-\alpha;y)$ remains uniformly separated from $\theta_m$, so
\[
\bigl|\Lambda_{H_{m,T}}(q_{G_0}(\beta+1-\alpha;y);y)\bigr|
\le
CT^{-s-1/2}T^{-2}.
\]
Therefore, for all sufficiently large $T$, the dominant term determines the lower bound:
\[
\Gamma_{H_{m,T}}(\beta;y)
=
\Lambda_{H_{m,T}}(q_{G_0}(\beta+1-\alpha;y);y)
-
\Lambda_{H_{m,T}}(q_{G_0}(\beta;y);y)
\ge
c_0T^{-s-1/2}
\]
for all $\beta\in B_m$. This proves part (iii).
\end{proof}

\end{appendices}

\end{document}